\def\dOi{12(1:8)2016}
\keywords{Expressiveness, Pattern matching, Type systems, Theorem proving, pi-calculus, Nominal sets}
\renewcommand{\vec}[1]{\tilde{#1}}
\newcommand{\one}{\mathbf{1}} 
\newcommand{\parop}{\;|\;}      
\newcommand{\sdot}{\, . \,}     
\newcommand{\goesto}[1]{\@@transition\rightarrowfill{#1}}
\newcommand{\Goesto}[1]{\@@transition\Rightarrowfill{#1}}
\newcommand{\Goestoplus}[1]{\@@transition\Rightarrowplusfill{#1}}
\newcommand{\Goestotrad}[1]{\@@transition\Hookarrowfill{#1}}
\newcommand{\sgoesto}[1]{\@@transition\mapstofill{#1}}
\newcommand{\Sgoesto}[1]{\@@transition\Mapstofill{#1}}
\newbox\@transbox
\newbox\@arrowbox
\def\Rightarrowfill{$\m@th\mathord=\mkern-6mu%
  \cleaders\hbox{$\mkern-2mu\mathord=\mkern-2mu$}\hfill
  \mkern-6mu\mathord=\mkern-6mu\mathord\Rightarrow$}
\def\Hookarrowfill{$\m@th\mathord\lhook\mathord=\mkern-16mu%
  \cleaders\hbox{$\mkern-2mu\mathord=\mkern-2mu$}\hfill
  \mkern-6mu\mathord=\mkern-6mu\mathord\Rightarrow$}
\def\Rightarrowplusfill{$\m@th\mathord=\mkern-6mu%
  \cleaders\hbox{$\mkern-2mu\mathord=\mkern-2mu$}\hfill
  \mkern-6mu\mathord=\mkern-6mu\mathord\Rightarrow_+$}
\def\mapstofill{$\m@th\mathord\mapstochar\mathord-\mkern-6mu%
  \cleaders\hbox{$\mkern-2mu\mathord-\mkern-2mu$}\hfill
  \mkern-6mu\mathord\rightarrow$}
\def\Mapstofill{$\m@th\mathord|\mkern-3mu\mathord=\mkern-6mu%
  \cleaders\hbox{$\mkern-2mu\mathord=\mkern-2mu$}\hfill
  \mkern-6mu\mathord=\mkern-6mu\mathord\Rightarrow$}
\def\hookarrowfill{$\m@th\mathord\lhook\mathord-\mkern-16mu%
  \cleaders\hbox{$\mkern-2mu\mathord-\mkern-2mu$}\hfill
  \mkern-6mu\mathord\rightarrow$}
\def\nhookarrowfill{$\m@th\mathord\lhook\mathord-\mkern-16mu%
  \cleaders\hbox{$\mkern-2mu\mathord-\mkern-2mu$}\hfill
  \mkern-6mu\mathord\not\mathord-\mkern-16mu
  \cleaders\hbox{$\mkern-2mu\mathord-\mkern-2mu$}\hfill
  \mkern-6mu\mathord\rightarrow$}
\def\nrightarrowfill{$\m@th\mathord-\mkern-6mu%
  \cleaders\hbox{$\mkern-2mu\mathord-\mkern-2mu$}\hfill
  \mkern-6mu\mathord\not\mathord-\mkern-6mu
  \cleaders\hbox{$\mkern-2mu\mathord-\mkern-2mu$}\hfill
  \mkern-6mu\mathord\rightarrow$}
\def\nmapstofill{$\m@th\mathord\mapstochar\mathord-\mkern-6mu%
  \cleaders\hbox{$\mkern-2mu\mathord-\mkern-2mu$}\hfill
  \mkern-6mu\mathord\not\mathord-\mkern-6mu
  \cleaders\hbox{$\mkern-2mu\mathord-\mkern-2mu$}\hfill
  \mkern-6mu\mathord\rightarrow$}
\def\@@transition#1#2%
\wd\@transbox{#1}
\@transbox\hbox{$\mathop{\box\@arrowbox}\limits^{\box\@transbox}$}
\newcommand{\gt}[1]{\goesto{#1}}
\newcommand{\gtt}{\gt{\tau}}
\newcommand{\bisim}{\stackrel{\mbox{\bf .}}{\sim}} 
\newcommand{\wbisim}{\stackrel{\mbox{\bf .}}{\approx}} 
\newcommand{\names}[1]{{\rm n}(#1)}
\newcommand{\bn}[1]{\mbox{\rm bn($#1$)}}
\newcommand{\fn}[1]{\mbox{\rm fn($#1$)}}
\newcommand{\dom}[1]{\operatorname{dom}(#1)}
\newcommand{\isdef}{\stackrel{\mbox{\scriptsize\it def}}{=}} 
\newcommand{\tinfrule}{\rule[-0.5ex]{0ex}{3ex}} 
\newcommand{\atabrule}[2]   
{\frac{\tinfrule\displaystyle #1}{\tinfrule\displaystyle#2}}
\newcommand{\pic}{pi-calculus}
\definecolor{sgreen}{rgb}{0,0.3,0}
\newcommand{\ve}[1]{\widetilde{#1}}
\newcommand{\apitransarrow}[1]{\goesto{#1}}
\newcommand{\trans}[3]{#1 \; \apitransarrow{#2} \; #3}
\newcommand{\transPrim}[4][e]{#2 \mathrel{{\apitransarrow{#3}}{}^{#1}} #4}
\newcommand{\transw}[3]{#1 \; \stackrel{#2}{\Longrightarrow} \; #3}
\newcommand{\out}[2]{\overline{#1}\; #2 }
\newcommand{\inn}[2]{\underline{#1}\; #2}
\newcommand{\innn}[2]{\underline{#1} (#2)}
\newcommand{\inlabel}[2]{\underline{#1}\;#2}
\newcommand{\lin}[2]{\underline{#1}(\lambda #2)}
\newcommand{\boutlabel}[3]{\overline{#1}\:({\nu}#2)\:{#3}}
\newcommand{\noboutlabel}[2]{\overline{#1}\langle #2 \rangle}
\newcommand{\frameof}[1]{\mathcal{F}(#1)}
\newcommand{\fr}[1]{\mathcal{F}(#1)}
\newcommand{\frames}{\,\rhd\,}
\newcommand{\res}[1]{({\nu}#1)}
\newcommand{\freshin}{\#}
\newcommand{\ifthen}[2]{\mbox{\rm\ensuremath{\textbf{if } #1 \textbf{ then } #2}}}
\newcommand{\ci}[2]{#1:#2} 
\newcommand{\boldword}[1]{\mbox{\rm\ensuremath{\textbf{#1}}}}
\newcommand{\caseonly}{\boldword{case}}
\newcommand{\casesep}{\mathrel{[\hspace{-0.1ex}]}} 
\newcommand{\casesprod}[1]{\caseonly\casesep #1}
\newcommand{\pll}{\:|\:}
\newcommand{\vect}[1]{\langle #1 \rangle}
\newcommand{\nil}{\mathbf{0}}
\newcommand{\R}{\mathcal{R}}
\newcommand{\simplies}{\leq}
\newcommand{\sequivalent}{\simeq}
\newcommand{\subst}[2]{\{{}^{#1}\!\!/\!{}_{#2}\}}
\newcommand{\lsubst}[2]{[#2 := #1]}
\newcommand{\n}{\mbox{{\rm n}}}
\newcommand{\vars}{\mathit{v}}
\newcommand{\true}{\rm true}
\newcommand{\ftimes}{{\otimes}}
\newcommand{\emptyframe}{{\bf 1}}
\newcommand{\sch}{\stackrel{{\hspace{.05ex}\hbox{\ensuremath{\mathbf{.}}}}}{\leftrightarrow}}
\newcommand{\framedtransempty}[4]{ #1 \frames \trans{#2}{#3}{#4}}
\newcommand{\frnames}[1]{\ve{b_{#1}}}
\newcommand{\frass}[1]{\Psi_{#1}}
\newcommand{\nameset}[1][]{\mathcal{N}_{{#1}}}
\newcommand{\pass}[1]{\llparenthesis #1 \rrparenthesis}
\newcommand{\framepair}[2]{(\nu #1)#2}
\newcommand{\trms}{{\rm\bf T}}
\newcommand{\conditions}{{\rm\bf C}}
\newcommand{\assertions}{{\rm\bf A}}
\newcommand{\pats}{{\rm\bf X}} 
\newcommand{\processes}{{\rm\bf P}}
\newcommand{\CanSend}{\mathrel{\overline{\propto}}}
\newcommand{\CanReceive}{\mathrel{\underline{\propto}}}
\newcommand{\CanSubstitute}{\mathrel{\Yleft}}
\newcommand{\Restrictionsorts}{\sortset_\nu}
\newcommand{\Set}[1]{\{#1\}}
\newcommand{\Match}{\textsc{match}}
\newcommand{\SORT}{\textsc{sort}}
\newcommand{\sort}[1]{\SORT(#1)}
\newcommand{\sortset}{\mathcal{S}}
\newcommand{\namesorts}{\mathcal{S_{N}}}
\newcommand{\pto}{\rightharpoonup}
\newcommand{\kw}[1]{\ensuremath{\mathtt{#1}}}
\renewcommand{\vars}{\textsc{vars}}
\newcommand{\Pow}{\mathcal{P}}
\newcommand{\PowFin}{\mathcal{P}_{\mathrm{fin}}}
\newcommand{\Pwb}{\textsc{Pwb}\xspace}
\newcommand{\abs}[1]{\lvert {#1}\rvert}
\newcommand{\semb}[1]{\llbracket #1 \rrbracket}
\newcommand{\instancename}[1]{\textbf{#1}}
\newcommand{\INSTANCE}[2]{
  \[\boxed{
     \begin{array}{c}
        \instancename{#1} \\\hline\\[-2.2em]\\
          {#2}
      \end{array}
     }
   \]
  }
\newcommand{\ARRAYOF}[1]{\begin{array}[t]{l} #1 \end{array}}
\newcommand{\instance}[2]{\INSTANCE{#1}{\ARRAYOF{#2}}}
\newcommand{\instanceFrom}[3]{\INSTANCE{#1}{
    \begin{array}{l}
      \text{Everything as in \instancename{#2} except:}\\
      #3
    \end{array}}}
\newcommand{\instanceTwo}[4][\quad]{\INSTANCE{#2}{\ARRAYOF{#3}#1\ARRAYOF{#4}}}
\newcommand{\instanceTwoFrom}[5][\quad]{\instanceFrom{#2}{#3}{\ARRAYOF{#4}#1\ARRAYOF{#5}}}
\newcommand{\RelR}{\mathrel{\mathcal{R}}}
\newcommand{\bmes}[1]{\overline{#1}}
\newcommand{\ErrChoice}{\talloblong}
\newcommand{\GAP}{0.8em}
\newcommand{\BnfOr}{\;\;|\;\;}
\newcommand{\BnfDef}{\mathrel{::=}}
\theoremstyle{definition}
\begin{document}
\title[A Sorted Semantic Framework for Applied Process Calculi]{A Sorted Semantic Framework\\ for Applied Process Calculi}
\author[J.~Borgstr{\"o}m]{Johannes Borgstr{\"o}m}
\address{Computing Science Division
Department of Information Technology
Uppsala University}
\email{\{johannes.borgstrom, ramunas.gutkovas, Joachim.Parrow,  Bjorn.Victor,  	johannes.aman-pohjola\}@it.uu.se}
\thanks{This project is financially supported by the Swedish Foundation for Strategic Research}
\author[R.~Gutkovas]{Ram{\=u}nas Gutkovas}
\address{\vspace{-18 pt}}
\author[J.~Parrow]{Joachim Parrow}
\address{\vspace{-18 pt}}
\author[B.~Victor]{Bj{\"o}rn Victor}
\address{\vspace{-18 pt}}
\author[J.~{\AA}man Pohjola]{Johannes {\AA}man Pohjola}
\address{\vspace{-18 pt}}
%

\begin{abstract}
Applied process calculi include advanced programming constructs such as type systems, communication with pattern matching, encryption primitives, concurrent constraints, nondeterminism, process creation, and dynamic connection topologies. 
Several such formalisms, e.g.~the applied pi calculus,  are extensions of the the pi-calculus; 
a growing number is geared towards particular applications or computational paradigms. 

Our goal is a unified framework to represent different process calculi and notions of computation.
To this end, we extend our previous work on psi-calculi with novel abstract patterns and pattern matching, 
and add sorts to the data term language, 
giving sufficient criteria for subject reduction to hold.
Our framework can directly represent several existing process calculi; 
the resulting transition systems are isomorphic to the originals up to strong bisimulation. 
We also demonstrate  different notions of computation on data terms,
including cryptographic primitives and a lambda-calculus with erratic choice.
Finally, we prove standard congruence and structural properties of bisimulation;
the proof has been machine-checked using Nominal Isabelle in the case
of a single name sort. 
\end{abstract}

\maketitle

\section{Introduction}
\label{sec:introduction}
There is today a growing number of high-level constructs in the area of concurrency. Examples include type systems, communication with pattern matching, encryption primitives, concurrent constraints, nondeterminism, %
and dynamic connection topologies. Combinations of such constructs are included in a variety of application oriented process calculi. For each such calculus its internal consistency, in terms of congruence results and algebraic laws, must be established independently. Our aim is a framework where many such calculi fit and where such results are derived once and for all, eliminating the need for individual proofs about each calculus.

Our effort in this direction is the framework of psi-calculi~\cite{LMCS11.PsiCalculi}, which
provides machine-checked proofs that important meta-theoretical
properties, such as compositionality of bisimulation, hold in all instances of the
framework.
We claim that the theoretical development is more robust than that of other calculi of comparable complexity, 
since we use a structural operational semantics given by a single inductive definition, 
and since we have checked most results in the interactive theorem prover Nominal Isabelle~\cite{U07:NominalTechniquesInIsabelleHOL}.

In this paper we introduce a novel generalization of pattern matching, decoupled from the definition of substitution, 
and add sorts for data terms and names. 
The generalized pattern matching is a new contribution that holds general interest; 
here it allows us to directly capture computation on data in advanced process calculi, without elaborate encodings. 

We evaluate our framework by providing instances that correspond to standard calculi, 
and instances that use several different notions of computation.  
We define strong criteria for a psi-calculus to \emph{represent} another process calculus, meaning that they are for all practical purposes one and the same. Representation is stronger than the standard \emph{encoding} correspondences e.g.~by Gorla~\cite{Gorla:encoding}, which define criteria for one language to encode the behaviour of another.
The representations that we provide of other standard calculi 
 advance our previous work, where we had to resort to nontrivial
encodings with an unclear formal correspondence to the source calculus.

An extended abstract~\cite{borgstrom13:sorted} of the present paper
has previously been published.

\subsection{Background: Psi-calculi}
In the following we assume the reader to be acquainted with the basic ideas of process algebras based on the pi-calculus, 
and explain psi-calculi by a few simple examples. 
Full definitions can be found in the references above, 
and for a reader not acquainted with our work we recommend the first few sections of~\cite{LMCS11.PsiCalculi} for an introduction.

A psi-calculus has a notion of data terms, ranged over by $K,L,M,N$, 
and we write $\out{M}N \sdot P$ to represent an agent sending the term $N$ along the channel $M$ (which is also a data term), continuing as the agent $P$. 
We write $\lin{K}{\ve{x}}X \sdot Q$ to represent an agent that can input along the channel $K$, 
receiving some object matching the pattern $X$, where $\ve{x}$ are the variables bound by the prefix.
These two agents can interact under two conditions. 
First, the two channels must be \emph{channel equivalent}, as defined by the channel equivalence predicate $M \sch K$. Second, $N$ must match the pattern $X$. 

Formally, a \emph{transition} is of kind $\framedtransempty{\Psi}{P}{\alpha}{P'}$, 
meaning that in an environment represented by the \emph{assertion} $\Psi$ 
the agent $P$ can do an action $\alpha$ to become $P'$. 
An assertion embodies a collection of facts used to infer \emph{conditions} such as the channel equivalence predicate $\sch$. 
To continue the example, if $N=X\lsubst{\ve{L}}{\ve{x}}$ we will have
$\framedtransempty{\Psi}{\out{M}N \sdot P\parop \lin{K}{\ve{x}}X \sdot Q}{\tau}{P \parop Q\lsubst{\ve{L}}{\ve{x}} }$
when additionally $\Psi \vdash M \sch K$, i.e.~when the assertion $\Psi$ entails that $M$ and $K$ represent the same channel.  
In this way we may introduce a parametrised equational theory over a data structure for channels. 
Conditions, ranged over by $\varphi$, can be tested in the \textbf{if} construct: we have that
$\framedtransempty{\Psi}{\ifthen{\varphi}{P}}{\alpha}{P'}$ when $\Psi \vdash \varphi$ and $\framedtransempty{\Psi}{P}{\alpha}{P'}$. 
In order to represent concurrent constraints and local knowledge, assertions can be used as agents: 
$\pass{\Psi}$ stands for an agent that asserts $\Psi$ to its environment. 
Assertions may contain names and these can be scoped; 
for example, in $P \parop (\nu a)(\pass{\Psi} \parop Q)$ the agent $Q$ uses all entailments provided by $\Psi$, 
while $P$ only uses those that do not contain the name $a$.

Assertions and conditions can, in general, form any logical theory. Also the data terms can be drawn from an arbitrary set.  
One of our major contributions has been to pinpoint the precise requirements on the data terms and logic for a calculus to be useful in the sense that the natural formulation of bisimulation satisfies the expected algebraic laws (see Section~\ref{sec:definitions}).
It turns out that it is necessary to view the terms and logics as \emph{nominal}~\cite{PittsAM:nomlfo-jv}. This means that there is a distinguished set of names, and for each term a well defined notion of \emph{support}, intuitively corresponding to the names occurring in the term.  Functions and relations must be \emph{equivariant}, meaning that they treat all names equally. 
In addition, we impose straight-forward requirements on the combination of assertions, on channel equivalence, and on substitution.
Our requirements are quite general, and therefore our framework accommodates a wide variety of applied process calculi.

\subsection{Extension: Generalized pattern matching}
\label{sec:patterns-intro}

In our original definition of psi-calculi (\cite{LMCS11.PsiCalculi}, called ``the original psi-calculi'' below), patterns are just terms and pattern matching is defined by substitution in the usual way: the output object $N$ matches the pattern $X$ with binders $\ve{x}$ iff $N=X\lsubst{\ve{L}}{\ve{x}}$.
In order to increase the generality we now introduce a function $\Match$ which takes a term $N$, a sequence of names $\ve{x}$ and a pattern $X$, returning a set of sequences of terms; the intuition is that if  $\ve{L}$ is in $\Match(N,\ve{x},X)$ then the term $N$ matches the pattern $X$ by instantiating $\ve{x}$ to $\ve{L}$.
The receiving agent $\lin{K}{\ve{x}}X \sdot Q$ then continues as $Q\lsubst{\ve{L}}{\ve{x}}$. 

As an example we consider a term algebra with two function symbols: $\kw{enc}$ of arity three and $\kw{dec}$ of arity two. Here $\kw{enc}(N,n,k)$ means encrypting $N$ with the key $k$ and a random nonce $n$ and
and $\kw{dec}(N,k)$ represents symmetric key decryption, discarding the nonce. Suppose an agent sends an encryption, as in $\out{M}{\kw{enc}(N,n,k)} \sdot P$.
If we allow all terms to act as patterns, a receiving agent can use $\kw{enc}(x,y,z)$ as a pattern, as in $\lin{c}{x,y,z}\kw{enc}(x,y,z) \sdot Q$, and in this way decompose the encryption and extract the message and key. Using the encryption function as a destructor in this way is clearly not the intention of a cryptographic model.
With the new general form of pattern matching, we can simply 
limit the patterns to not bind names in terms at key position. 
Together with the separation between patterns and terms, this allows to directly represent dialects of the spi-calculus 
as in Sections~\ref{sec:DYalg} and~\ref{sec:pmSpi}.

Moreover, the generalization makes it possible to safely use rewrite rules such as $\kw{dec}(\kw{enc}(M,N,K),K)\to M$. 
In the psi-calculi framework such evaluation is not a primitive concept, but it can be part of the substitution function, 
with the idea that with each substitution all data terms are normalized according to rewrite rules.
Such evaluating substitutions are dangerous for two reasons. First, in the original psi-calculi they can introduce ill-formed input prefixes.
The input prefix $\lin{M}{\ve{x}}N$ is well-formed when $\ve{x}\subseteq\n(N)$, i.e.~the names $\ve{x}$ must all occur in $N$; a rewrite of the well-formed $\lin{M}{y}{\kw{dec}(\kw{enc}(N,y,k),k) \sdot P}$ to $\lin{M}{y}{N \sdot P}$ yields an ill-formed agent when $y$ does not appear in $N$.
Such ill-formed agents could also arise from input transitions in some original psi-calculi; with the current generalization preservation of well-formedness is guaranteed.

Second, in the original psi-calculi there is a requirement that 
substituting $\ve{L}$ for $\ve{x}$ in $M$ must yield a term containing all names in $\ve{L}$ whenever $\ve{x}\subseteq\n(M)$. 
The reason is explained at length in~\cite{LMCS11.PsiCalculi}; briefly put, without this requirement the scope extension law is unsound. 
If rewrites such as $\kw{dec}(\kw{enc}(M,N,K),K)\to M$ are performed by substitutions this requirement is not fulfilled, since a substitution may then erase the names in $N$ and~$K$. However, a closer examination reveals that this requirement is only necessary for some uses of substitution. In the transition
\[\trans{\lin{M}{\ve{x}}{N}.P}{\inn{K}{N\lsubst{\ve{L}}{\ve{x}}}}{P\lsubst{\ve{L}}{\ve{x}}}\]
the non-erasing criterion is important for the substitution above the arrow ($N\lsubst{\ve{L}}{\ve{x}}$) but unimportant for the substitution after the arrow ($P\lsubst{\ve{L}}{\ve{x}}$). In the present paper, we replace the former of these uses by the $\Match$ function, where a similar non-erasing criterion applies. All other substitutions may safely use arbitrary rewrites, even erasing ones.

In this paper, we address these three issues by introducing explicit notions of patterns, pattern variables and matching.
This allows us to control precisely which parts of messages can be bound by pattern-matching and how messages can be deconstructed,
admit computations such as $\kw{dec}(\kw{enc}(M,N,K),K)\to M$.
We obtain criteria that ensure that well-formedness is preserved by transitions, 
and apply these to the original psi-calculi~\cite{LMCS11.PsiCalculi} (Theorem~\ref{thm:origpats}) 
and to pattern-matching spi calculus~\cite{haack.jeffrey:pattern-matching-spi} (Lemma~\ref{lem:dypres}).

\subsection{Extension: Sorting}
\label{sec:sorting-intro}
Applied process calculi often make use of a sort system.
The applied pi-calculus~\cite{abadi.fournet:mobile-values} has a name sort and a data sort; 
terms of name sort must not appear as subterms of terms of data sort.
It also makes a distinction between input-bound variables (which may be substituted) 
and restriction-bound names (which may not).
The pattern-matching spi-calculus~\cite{haack.jeffrey:pattern-matching-spi} 
uses a sort of patterns and a sort of implementable 
terms; every implementable term can also be used as a pattern.

To represent such calculi, we admit a user-defined sort system on names, terms and patterns. 
Substitutions are only well-defined if they conform to the sorting discipline.
To specify which terms can be used as channels, and which values can be received on them, 
we use compatibility predicates on the sorts of the subject and the object in input and output prefixes.
The conditions for preservation of sorting by transitions (subject reduction) are very weak,
allowing for great flexibility when defining instances.

The restriction to well-sorted substitution also allows to avoid ``junk'': terms that exist solely to make substitutions total. A prime example is representing the polyadic pi-calculus as a psi-calculus. The terms that can be transmitted between agents are tuples of names. Since a tuple is a term it can be substituted for a name, even if that name is already part of a tuple. The result is that the terms must admit nested tuples of names, which do not occur in the original calculus. 
Such anomalies disappear when introducing an appropriate sort system; cf.~Section~\ref{sec:polyPi}.

\subsection{Related work.}
Pattern-matching is in common use in functional programming languages. 
Scala admits pattern-matching of objects~\cite{Odersky.ECOOP07.matching.objects} 
using a method \texttt{unapply} that turns the receiving object into a matchable value (e.g.~a tuple).
F\# admits the definition of pattern cases independently of the type that they should match~\cite{Syme.ICFP07.active.patterns}, 
facilitating interaction with third-party and foreign-language code.
Turning to message-passing systems,
LINDA~\cite{Gelernter.1985.LINDA} uses pattern-matching when receiving from a tuple space.
Similarly, in Erlang, message reception from a mailbox is guarded by a pattern.

These notions of patterns, with or without computation, are easily supported by the \textsc{match} construct.
The standard first-match policy can be encoded by extending the pattern language
with mismatching and conjunction~\cite{Krishnaswami:POPL09:focusing.pattern.matching}.

\subsubsection*{Pattern matching in process calculi}
The pattern-matching spi-calculus~\cite{haack.jeffrey:pattern-matching-spi}
limits which variables may be binding in a pattern in order to 
match encrypted messages without binding unknown keys (cf.~Section~\ref{sec:pmSpi}).
The Kell calculus~\cite{GC04Kell} also uses pattern languages equipped with a match function.
However, in the Kell calculus the channels are single names and appear as part of the pattern in the input prefix,
patterns may match multiple communications simultaneously (\`a la join calculus), 
and first-order pattern variables only match names (not composite messages) 
which reduces expressiveness~\cite{given-wilson.express14.intensional}.

The applied pi-calculus~\cite{abadi.fournet:mobile-values} models deterministic computation 
by using for data language a term algebra modulo an equational logic.
ProVerif~\cite{BlanchetBook09} is a specialised tool for security
protocol verification in an extension of applied pi, including a pattern matching construct.
Its implementation allows pattern matching of tagged tuples modulo a user-defined
rewrite system; this is strictly less general than the psi-calculus 
pattern matching described in this paper (cf.~Section~\ref{sec:nform}).

Other tools for process calculi extended with datatypes include
mCRL2~\cite{mCRL2.TACAS13} for ACP, which allows higher order sorted
term algebras and equational logic, and
PAT3~\cite{dblp:conf/issre/liusd11} which includes a CSP$\sharp$
\cite{Sun:2009:ISP:1607726.1608426} module where actions built over
types like booleans and integers are extended with C$\sharp$-like
programs. %
In all these cases, the pattern matching is defined by substitution in the usual way.

\subsubsection*{Sort systems for mobile processes}
Sorts for the pi-calculus were first described by Milner~\cite{milner:polyadic-tutorial},
and were developed in order to remove nonsensical processes using polyadic communication,
similar to the motivation for the present work.

In contrast, H{\"u}ttel's dependently typed psi-calculi~\cite{Hyttel.CONCUR11.TypedPsi,hyttel.tgc13.resources}
is intended for a more fine-grained control of the behaviour of processes, 
and is capable of capturing a wide range of earlier type systems for pi-like
calculi formulated as instances of psi-calculi. 
In H{\"u}ttel's typed psi-caluli
the term language is a free term algebra (without name binders),
using the standard notions of substitution and matching, 
and not admitting any computation on terms.

In contrast, in our sorted psi-calculi terms and substitution are general. 
A given term always has a fixed sort, not dependent on any term or value and independent of its context.
We also have important meta-theoretical results, with machine-checked proofs for the case of a single name sort, 
including congruence results and structural equivalence laws for well-sorted bisimulation, 
and the preservation of well-sortedness under structural equivalence;
no such results exist for H{\"u}ttel's typed psi-calculi.
Indeed, our sorted psi-calculi can be seen as a foundation for H{\"u}ttel's typed psi-calculi: 
we give a formal account of the separation between variables and names used in H{\"u}ttel's typed psi-calculi, 
and substantiate H{\"u}ttel's claim that ``the set of well-[sorted] terms is closed under well-[sorted] substitutions, which suffices'' (Theorem~\ref{thm:sortstructcong}).

The state-of-the art report~\cite{BETTY13.stateoftheart} of WG1 of the BETTY project (EU COST Action IC1201) 
is a comprehensive guide to behavioural types for process calculi. 

Fournet et al.~\cite{FGM05:ATypeDisciplineForAuthiorizationPolicies}
add type-checking for a general authentication logic to a process calculus with destructor matching;
there the authentication logic is only used to specify program correctness,
and does not influence the operational semantics in any way.

\subsection{Results and outline}
\label{sec:outline}
In Section~\ref{sec:definitions} we define psi-calculi with the above extensions and prove preservation of well-formedness. 
In Section~\ref{sec:meta-theory} we prove the usual algebraic properties of bisimilarity. The proof is in two steps: a machine-checked proof for calculi with a single name sort, followed by manual proof based on the translation of a multi-sorted psi calculus instance to a corresponding single-sorted instance.
We demonstrate the expressiveness of our generalization in Section~\ref{sec:process-calculi-examples} where we directly represent standard calculi, and in Section~\ref{sec:advanc-data-struct} where we give examples of calculi with advanced data structures and computations on them, even nondeterministic reductions.

\section{Definitions}\label{sec:definitions}
Psi-calculi are based on nominal data types. 
A nominal data type is similar to a traditional data type, 
but can also contain binders and identify alpha-variants of terms. 
Formally, %
the only requirements are related to the treatment of the atomic symbols called names as explained below.
In this paper, we consider sorted nominal datatypes, where names and
members of the data type may have different sorts.

We assume a set of sorts $\sortset$.
Given a countable set of sorts for names
$\namesorts\subseteq\sortset$, we assume countably infinite
pair-wise disjoint sets of atomic \emph{names} $\nameset[s]$, where $s\in \namesorts$. 
The set of all names, $\nameset=\cup_s\nameset[s]$, is ranged over by $a,b,\ldots,x,y,z$.
We write $\ve{x}$ for a tuple of names $x_1,\dots,x_n$ and similarly for other tuples,
and $\ve{x}$ also stands for the set of names $\Set{x_1,\dots,x_n}$ if used where a set is expected.
We let $\pi$ range over permutations of tuples of names: 
$\pi\cdot\ve{x}$ is a tuple of names of the same length as $\ve{x}$,
containing the same names with the same multiplicities.

A sorted \emph{nominal set}~\cite{PittsAM:nomlfo-jv,Gabbay01anew} is a set equipped with \emph{name swapping} functions written $(a\;b)$, for any sort $s$ and names $a,b\in \nameset[s]$, 
i.e.~name swappings must respect sorting.
An intuition is that for any member~$T$ of a nominal set we have that $(a\;b)\cdot T$ is $T$ with  $a$ replaced by $b$ and  $b$ replaced by $a$.
The support of a term, written $\n(T)$, is intuitively the set of names that can be be affected by name swappings on $T$.
This definition of support coincides with the usual definition of free names for abstract syntax trees that may contain binders.
We write $a\freshin T$ for $a\not\in\n(T)$, and extend this to finite sets and tuples by conjunction.
A function $f$ is \emph{equivariant} if $(a\;b)\cdot(f(T))=f((a\;b)\cdot T)$ always holds;
a relation $\mathcal R$ is equivariant if $x\;\mathcal{R}\;y$ implies that $(a\;b)\cdot x\;\mathcal{R}\;(a\;b)\cdot y$ holds;
and a constant symbol $C$ is equivariant if $(a\;b)\cdot C = C$.
In particular, we require that all sorts $s\in\sortset$ are equivariant.
A \emph{nominal data type} is a nominal set together with some equivariant functions on it, for instance a substitution function. 

\subsection{Original Psi-calculi Parameters}\label{sec:psi-defs}
Sorted psi-calculi is an extension of the original psi-calculi framework~\cite{LMCS11.PsiCalculi},
which are given by three nominal datatypes (data terms, conditions and
assertions) as discussed in the introduction.
\begin{defi}[Original psi-calculus parameters]
\label{def:parameters1}
The psi-calculus parameters from the original psi-calculus are the following nominal data types:
(data) terms $M,N\in\trms$,
conditions $\varphi\in \conditions$, and
assertions $\Psi\in\assertions$;
equipped with the following four equivariant operators:
channel equivalence ${\sch} :\trms \times \trms \to \conditions$,
assertion composition $\ftimes: \assertions \times \assertions \to \assertions$,
the unit assertion $\emptyframe\in\assertions$, and
the entailment relation ${\vdash}\subseteq \assertions \times \conditions$.
\end{defi}
The binary functions $\sch$ and $\ftimes$ and the relation $\vdash$ above will be used in infix form. 
Two assertions are said to be equivalent, written $\Psi \sequivalent \Psi'$, 
if they entail the same conditions, i.e.~for all $\varphi$ we have that $\Psi \vdash \varphi \Leftrightarrow\Psi' \vdash \varphi$.  

We impose certain requisites on the sets and operators. In brief,
channel equivalence must be symmetric and transitive modulo entailment, 
the assertions with $(\ftimes,\one)$ must form an abelian monoid modulo $\sequivalent$,
and $\ftimes$ must be compositional w.r.t.~$\sequivalent$ 
(i.e.~$\Psi_1\sequivalent\Psi_2\implies\Psi\otimes\Psi_1\sequivalent\Psi\otimes\Psi_2$).
(For details see~\cite{LMCS11.PsiCalculi}, 
and for examples of machine-checked valid instantiations of the parameters see~\cite{pohjola10:verifyingPsi}.)
In examples in this paper, we usually consider the trivial assertion
monoid $\assertions=\{\one\}$, and let channel equivalence be term equality (i.e.
$\one\vdash M\sch N$ iff $M=N$).

\subsection{New parameters for generalized pattern-matching}
\label{sec:new-parameters}
To the parameters of the original psi-calculi we add patterns $X,Y$, that are used in input prefixes; 
a function $\vars$ which yields the possible combinations of binding names in the pattern,
and a pattern-matching function $\Match$, which is used when the input takes place.
Intuitively, an input pattern $(\lambda\ve{x})X$ matches a message $N$ if 
there are $\ve{L}\in\Match(N,\ve{x},X)$; the receiving agent then continues after substituting $\ve{L}$ for $\ve{x}$.
If $\Match(N,\ve{x},X)=\emptyset$ then $(\lambda\ve{x})X$ does not match $N$;
if $\abs{\Match(N,\ve{x},X)}>1$ then one of the matches will be
non-deterministically chosen.  
Below, we use ``variable'' for names that can be bound in a pattern.
\begin{defi}[Psi-calculus parameters for pattern-matching]
\label{def:parameters2}
The psi-calculus parameters for pattern-matching include the nominal data type
$\pats$ of (input) patterns, ranged over by $X,Y$,
and the two equivariant operators
\[\begin{array}{rcll}
\Match&:&\trms\times\nameset^*\times\pats\to\PowFin(\trms^*)
 & \mbox{Pattern matching}\\
\vars&:&\pats\to\PowFin(\PowFin(\nameset))
 & \mbox{Pattern variables}\\
\end{array}
\]
\end{defi}

The $\vars$ operator gives the possible (finite) sets of names in a pattern which are bound by an input prefix. 
For example, we may want an input prefix with a pairing pattern
$\langle x,y\rangle$ to be able to bind both $x$ and $y$, 
only one of them, or none, and so we define $\vars(\langle x,y\rangle)=\Set{\Set{x,y},\Set{x},\Set{y},\Set{}}$. 
This way, we can let the input prefix $\lin{c}{x}\langle x,y\rangle$ only match pairs where the second argument is the name $y$.
To model a calculus where input patterns cannot be selective in this way,
we may instead define $\vars(\langle x,y\rangle)=\Set{\Set{x,y}}$.
This ensures that input prefixes that use the pattern $\langle x,y\rangle$ 
must be of the form $\lin{M}{x,y}\langle x,y\rangle$, where both $x$
and $y$ are bound.
Another use for $\vars$ is to exclude the binding of terms in certain
positions, such as the keys of cryptographic messages (cf.~Section~\ref{sec:pmSpi}).

Requisites on $\vars$ and $\Match$ are given below in Definition~\ref{def:pattern-match}.
Note that the four data types \trms, \conditions, \assertions\ and \pats\ are not required to be disjoint. 
In most of the examples in this paper the patterns \pats\ is a subset of the terms \trms.

\subsection{New parameters for sorting}
\label{sec:parameters-sorting}
To the parameters defined above we add a sorting function and four sort compatibility predicates.
\begin{defi}[Psi-calculus parameters for sorting]
\label{def:parameters3}
The psi-calculus parameters for sorting include the equivariant sorting function
$\SORT:\nameset\uplus\trms\uplus\pats\to\sortset$,
 and the four compatibility predicates
\[\begin{array}{rclll}
{\CanReceive}   &\subseteq& \sortset\times\sortset &\quad& \mbox{can be used to  receive,} \\
{\CanSend}      &\subseteq& \sortset\times\sortset && \mbox{can be used to send,} \\
{\CanSubstitute}&\subseteq& \sortset\times\sortset && \mbox{can be substituted by,} \\
\Restrictionsorts &\subseteq&\namesorts && \mbox{can be bound by name restriction.} 
\end{array}
\]
\end{defi}
The $\SORT$ operator gives the sort of a name, term or pattern; 
on names we require that $\sort a = s$ iff $a\in\nameset[s]$.  
This is similar to Church-style lambda-calculi, where each well-formed
term has a unique type. 

The sort compatibility predicates are used to restrict where terms and names of certain sorts may appear in processes.
Terms of sort $s$ can be used to send values of sort $t$ if $s\CanSend t$. 
Dually, a term of sort $s$ can be used to receive with a pattern of sort $t$ if $s\CanReceive t$. 
A name $a$ can be used in a restriction $(\nu{a})$ if $\sort{a}\in\Restrictionsorts$.
If $\sort{a}\CanSubstitute\sort{M}$ we can substitute the term $M$ for the name $a$.
In most of our examples, $\CanSubstitute$ is a subset of the equality relation.
These predicates can be chosen freely, although the set of well-formed substitutions
depends on $\CanSubstitute$, as detailed in Definition~\ref{def:subst} below.
\subsection{Substitution and Matching}
\label{sec:subst-match}
We require that each datatype is equipped with an equivariant substitution function, which intuitively substitutes terms for  names.  
The requisites on substitution differ from the original psi-calculi as indicated in the Introduction. %
Substitutions must preserve or refine sorts, 
and bound pattern variables must not be removed by substitutions.

We define two usage preorders $\le_{\trms}$ and $\le_{\pats}$ on $\sortset$.  
Intuitively, $s_1\le_{\trms} s_2$ 
  if terms of sort~$s_1$ can be used as a channel or message whenever~$s_2$ can be,
and $s_1\le_{\pats} s_2$ 
  if patterns of sort~$s_1$ can be used whenever~$s_2$ can be.
Formally~$s_1\le_{\trms} s_2$ iff  $\forall t\in\sortset. 
  (s_2\CanReceive t \Rightarrow s_1\CanReceive t) \land
  (s_2\CanSend t \Rightarrow s_1\CanSend t)\land 
  (t\CanSend s_2 \Rightarrow t\CanSend s_1)$.
Similarly, we define~$s_1\le_{\pats} s_2$ iff  $\forall t\in\sortset. 
  (t\CanReceive s_2 \Rightarrow t\CanReceive s_1)$.

Intuitively, substitutions must map every term of sort $s$ to a
term of some sort $s'$ with $s'\le_{\trms} s$ and similarly for patterns,
or else a sort compatibility predicate may be violated.
The usage preorders compare the sorts of terms (resp.~patterns), 
and so do not have any formal relationship to $\CanSubstitute$ 
(which relates the sort of a name to the sort of a term). 
In particular, $\CanSubstitute$ is not used in the definition of usage
preorders.
\begin{defi}[Requisites on substitution]\label{def:subst}
   If $\ve{a}$ is a sequence of distinct names 
   and~$\ve{N}$ is an equally long sequence of terms 
   such that $\sort{a_i}\CanSubstitute\sort{N_i}$ for all $i$, 
   we say that $\lsubst{\ve{N}}{\ve{a}}$ is a \emph{sub\-sti\-tution}. 
   Substitutions are ranged over by $\sigma$.

   For each data type among $\trms,\assertions,\conditions$ 
   we define an equivariant substitution operation on members $T$ of that data type as follows: 
   we require 
    that $T\sigma$ is an member of the same data type, and 
    that if $(\ve a\ \ve b)$ is a (bijective) name swapping 
    such that $\ve{b}\freshin T,\ve{a}$ 
    then $T\lsubst{\ve{N}}{\ve{a}}= ((\ve a\ \ve b)\cdot T)\lsubst{\ve{N}}{\ve{b}}$ 
    (alpha-renaming of substituted variables).
    For terms %
    we additionally require that 
    $\sort{M\sigma}\le_{\trms}\sort{M}$.

    For patterns $X\in\pats$, 
    we require that substitution is equivariant, that $X\sigma\in\pats$,
    and that if $\ve{x}\in\vars(X)$ and~$\ve{x}\freshin\sigma$ 
    then $\sort{X\sigma}\le_{\pats}\sort{X}$ and $\ve{x}\in\vars(X\sigma)$
    and alpha-renaming of substituted variables (as above) holds for $\sigma$ and $X$.

\end{defi}

Intuitively, the requirements on substitutions on patterns ensure that a
substitution on a pattern with binders 
$((\lambda\ve{x})X)\sigma$ with $\ve{x}\in\vars(X)$ and $\ve{x}\freshin\sigma$
yields a pattern $(\lambda\ve{x})Y$ with $\ve{x}\in\vars(Y)$.
As an example, consider the pair patterns discussed above with 
$\pats=\{\langle x,y\rangle\,:\,x\neq y\}$ and 
$\vars(\langle x,y\rangle)=\Set{\Set{x,y}}$.
We can let $\langle x,y\rangle\sigma=\langle x,y\rangle$ 
when $x,y\freshin\sigma$.
Since $\vars(\langle x,y\rangle)=\Set{\Set{x,y}}$ the pattern $\langle x,y\rangle$ in a well-formed agent will always occur directly under the binder $(\lambda x,y)$, i.e. as $(\lambda x,y) \langle x,y\rangle$, and here a substitution for $x$ or $y$ will have no effect. It therefore does not matter what e.g. $\langle x,y\rangle[x:=M]$ is, since it will never occur in derivations of transitions of well-formed agents.
We could think of substitutions as partial functions which are undefined in such cases; formally,
since substitutions are total, the result of this substitution can be assigned an arbitrary value.

In the original psi-calculi 
there is no requirement that substitution preserves names that are
used as input variables (i.e., $\n(N\sigma)\supseteq\n(N)\setminus\n(\sigma)$).
As seen in the introduction, this means that the original psi semantics does not always preserve the well-formedness of agents
(an input prefix $\lin{M}{\ve{x}}N\sdot P$ is well-formed when $\ve{x}\subseteq\n(N)$)
although this is assumed by the operational semantics~\cite{LMCS11.PsiCalculi}.
In pattern-matching psi-calculi, substitution on patterns is required to
preserve variables, and the operational semantics does
preserve well-formedness as shown below in Theorem~\ref{thm:wfPres}.

Matching must be invariant under renaming of pattern variables, 
and the substitution resulting from a match can only mention names 
that are from the matched term or the pattern.
\begin{defi}[Requisites on pattern matching]
\label{def:pattern-match}
  For the function $\Match$ we require that 
  if $\ve{x}\in\vars(X)$ are distinct 
  and $\ve{N}\in\Match(M,\ve{x},X)$ 
  then it must hold that 
    $\lsubst{\ve{N}}{\ve{x}}$ is a substitution,  that
    $\n(\ve{N})\subseteq\n(M)\cup(\n(X)\setminus\ve{x})$, and that
    for all name swappings $(\ve x\ \ve y)$ with $\ve y \freshin X$
        we have $\ve{N}\in\Match(M,\ve{y},(\ve x\ \ve y)\cdot X)$ 
      (alpha-renaming of matching).
\end{defi}

In many process calculi, and also in the symbolic semantics of psi~\cite{JLAP12.SymbolicPsi}, the input construct binds a single variable. 
This is a trivial instance of pattern matching where the pattern is a single bound variable, matching any term.

\begin{exa} \label{ex:symsempats}
Given values for the other requisites, 
we can take $\pats = \nameset$ with $\vars(a)=\Set{a}$, 
meaning that the pattern variable must always occur bound, 
and $\Match(M,a,a)=\Set{M}$ if $\sort a\CanSubstitute\sort M$.
On patterns we define substitution as $a\sigma=a$.
\end{exa}

When all substitutions on terms preserve names, 
we can recover the pattern matching of the original psi-calculi. 
Such psi-calculi also enjoy well-formedness preservation (Theorem~\ref{thm:wfPres}).
\begin{thm}\label{thm:origpats}
  Suppose $(\trms, \conditions, \assertions)$ is an original psi-calculus~\cite{LMCS11.PsiCalculi}
  where $\n(N\sigma)\supseteq\n(N)\setminus\n(\sigma)$ for all $N$, $\sigma$.
  Let $\pats=\trms$ and
  $\vars(X)=\Pow(\n(X))$ and
  $\Match(M,\ve{x},X)=\Set{\ve{L}: M=X\lsubst{\ve{L}}{\ve{x}}}$ and
  $\sortset=\namesorts=\Restrictionsorts = \Set{s}$ and
  ${\CanReceive} = {\CanSend} = {\CanSubstitute} = \Set{(s,s)}$
  and $\SORT:\nameset\uplus\trms\uplus\pats\to\Set s$;
  then $(\trms, \pats, \conditions, \assertions)$ is a sorted psi-calculus.
\end{thm}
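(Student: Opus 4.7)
The plan is to verify each requisite of a sorted psi-calculus (Definitions~\ref{def:parameters1}--\ref{def:pattern-match}) for the given instantiation. Since $\sortset = \{s\}$ is a singleton, every sort-related condition reduces to a triviality: the usage preorders $\le_\trms, \le_\pats$ are universal on $\{s\}\times\{s\}$, the compatibility predicates $\CanReceive,\CanSend,\CanSubstitute$ are reflexive, and $\Restrictionsorts = \namesorts$ permits every name to be restricted. The original psi-calculus requisites on $\sch$, $\ftimes$, $\emptyframe$, $\vdash$ are inherited unchanged. What remains is to verify the new requisites on the pattern datatype, on the substitution operation on patterns, and on $\Match$.

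For substitution on patterns (Definition~\ref{def:subst}): since $\pats=\trms$, I will reuse the term substitution. Equivariance, closure ($X\sigma\in\pats$) and alpha-renaming of substituted variables are inherited from terms. The nontrivial point is that when $\ve{x}\in\vars(X)$ and $\ve{x}\freshin\sigma$, we must have $\ve{x}\in\vars(X\sigma) = \Pow(\n(X\sigma))$, i.e.\ $\ve{x}\subseteq\n(X\sigma)$. Here I would invoke the added hypothesis $\n(N\sigma)\supseteq\n(N)\setminus\n(\sigma)$: since $\ve{x}\subseteq\n(X)$ (as $\vars(X)=\Pow(\n(X))$) and $\ve{x}\cap\n(\sigma)=\emptyset$, we obtain $\ve{x}\subseteq\n(X)\setminus\n(\sigma)\subseteq\n(X\sigma)$. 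The sort refinement clause $\sort{X\sigma}\le_\pats\sort{X}$ is trivial.

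For $\Match$ (Definition~\ref{def:pattern-match}): fix $\ve{x}\in\vars(X)$ distinct and $\ve{N}\in\Match(M,\ve{x},X)$, so by definition $M=X\lsubst{\ve{N}}{\ve{x}}$. That $\lsubst{\ve{N}}{\ve{x}}$ is a substitution is immediate since $\CanSubstitute$ holds for every pair of sorts. Alpha-renaming of $\Match$ follows from the alpha-renaming clause of the inherited term substitution: for a swapping $(\ve{x}\,\ve{y})$ with $\ve{y}\freshin X$ one has $X\lsubst{\ve{N}}{\ve{x}}=((\ve{x}\,\ve{y})\cdot X)\lsubst{\ve{N}}{\ve{y}}$, so $\ve{N}\in\Match(M,\ve{y},(\ve{x}\,\ve{y})\cdot X)$.

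The step I expect to be the crux is the inclusion $\n(\ve{N})\subseteq\n(M)\cup(\n(X)\setminus\ve{x})$, which in fact we can sharpen to $\n(\ve{N})\subseteq\n(M)$. This is where I would invoke the original psi-calculus requirement recalled in Section~\ref{sec:patterns-intro}: if $\ve{x}\subseteq\n(X)$ then $\n(X\lsubst{\ve{N}}{\ve{x}})\supseteq\n(\ve{N})$. Since $\ve{x}\in\vars(X)=\Pow(\n(X))$ gives precisely $\ve{x}\subseteq\n(X)$, applying this requirement to $X$ and $\ve{N}$ yields $\n(\ve{N})\subseteq\n(M)$, which is more than enough. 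Putting the pieces together establishes that the stated parameters form a sorted psi-calculus. The only subtle point is keeping straight that the non-erasure property of the original framework is used for the $\Match$ clause while the newly imposed hypothesis $\n(N\sigma)\supseteq\n(N)\setminus\n(\sigma)$ is used to preserve pattern variables under substitution; both are genuinely needed.
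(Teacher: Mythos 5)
Your proof is correct and takes the same route as the paper, which simply states the result is ``straightforward'' and defers to the Isabelle formalisation: a direct check of each requisite, where everything sort-related trivialises over the singleton $\Set{s}$. You also correctly isolate the only two non-trivial points and match each to its hypothesis --- the theorem's explicit assumption $\n(N\sigma)\supseteq\n(N)\setminus\n(\sigma)$ discharges preservation of pattern variables ($\ve{x}\in\vars(X\sigma)$), while the original psi-calculi's non-erasure requirement on substitution yields $\n(\ve{N})\subseteq\n(M)$ for the $\Match$ clause --- exactly the division of labour the paper describes informally in Section~\ref{sec:patterns-intro}.
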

\proof Straightforward; this result has been checked in Isabelle. \qed

\subsection{Agents}
\newcommand{\discard}[1]{}
\label{sec:agents}
\begin{defi}[Agents]\label{def:agents}
The \emph{agents}, ranged over by $P,Q,\ldots$,  are of the following forms.
{\rm
\[
\begin{array}{ll}

\out{M}N .P                   & \mbox{Output} \\
\lin{M}{\ve{x}}X.P          & \mbox{Input}\\
\caseonly\;{\ci{\varphi_1}{P_1}\casesep\cdots\casesep\ci{\varphi_n}{P_n}}
&\mbox{Case} \\
(\nu a)P                      & \mbox{Restriction}\\
P \pll Q                      & \mbox{Parallel}\\
! P                           & \mbox{Replication} \\
\pass{\Psi}                        & \mbox{Assertion}
\end{array}\]
}

In the Input all names in $\ve{x}$ bind their occurrences in both $X$ and~$P$, and in the Restriction $a$ binds in P.
Substitution on agents is defined inductively on their structure, using the substitution function of each datatype based on syntactic position, avoiding name capture.
\end{defi}
The output prefix $\out{M}N .P$ sends $N$ on a channel that is equivalent to $M$.
Dually, $\lin{M}{\ve{x}}X.P$ receives a message matching the pattern $X$ from a channel equivalent to $M$.
A non-deterministic case statement  $\caseonly\;{\ci{\varphi_1}{P_1}\casesep\cdots\casesep\ci{\varphi_n}{P_n}}$
executes one of the branches $P_i$ where the corresponding condition $\varphi_i$ holds, discarding the other branches.
Restriction $(\nu a)P$ scopes the name $a$ in $P$; the scope of $a$ may be extruded if $P$ communicates a data term containing $a$.
A parallel composition $P \pll Q$ denotes $P$ and $Q$ running in parallel; they may proceed independently or communicate.
A replication $! P$ models an unbounded number of copies of the process~$P$.
The assertion $\pass{\Psi}$ contributes $\Psi$ to its environment.
We often write $\ifthen{\varphi}{P}$ for $\caseonly\;\varphi:P$, and nothing or $\nil$ for the empty case statement $\caseonly$.

In comparison to~\cite{LMCS11.PsiCalculi}  we additionally restrict the syntax of well-formed agents by imposing requirements on sorts: 
the subjects and objects of prefixes must have compatible sorts, and restrictions may only bind names of a sort in $\Restrictionsorts$.
\begin{defi}
\label{def:sorted-psi-well-formedness}
An occurrence of an assertion is \emph{unguarded} if it is not a subterm of an Input or
Output. An agent is \emph{well-formed} if, for all its subterms, 
\begin{enumerate}
\item in a replication $!P$ there are no unguarded assertions in $P$; and
\item in $\caseonly\;{\ci{\varphi_1}{P_1}\casesep\cdots\casesep\ci{\varphi_n}{P_n}}$
  there is no unguarded assertion in any $P_i$; and
\item\label{item:out} in an Output $\out{M}N.P$
  we require that $\sort M\CanSend \sort N$; and
\item\label{item:in} in an Input
  $\lin{M}{\ve{x}}X.P$ we require that
  \begin{enumerate}
  \item\label{item:invars} $\ve{x}\in\vars(X)$ is a tuple of distinct names and
  \item\label{item:insort} $\sort M\CanReceive\sort X$; and
  \end{enumerate}
\item\label{item:res} in a Restriction
  $(\nu a)P$ we require that $\sort{a}\in\Restrictionsorts$.
\end{enumerate}
\end{defi}
Requirements~\ref{item:out},~\ref{item:insort} and~\ref{item:res} are new for sorted psi-calculi.

\subsection{Frames and transitions}
\label{sec:frames-transitions}
Each agent affects other agents that are in parallel with it via its frame, 
which may be thought of as the collection of all top-level assertions of the agent.
A \emph{frame}  $F$ is an assertion with local names, written $\framepair{\frnames{}}{\Psi}$ where $\frnames{}$ is a sequence of names that bind into the assertion~$\Psi$. We use $F,G$ to range over frames, and identify alpha-equivalent frames.
We overload 
$\ftimes$ to frame composition defined by $\framepair{\frnames{1}}{\Psi_1} \ftimes \framepair{\frnames{2}}{\Psi_2} = 
\framepair{\frnames{1} \frnames{2}}{(\Psi_1 \ftimes \Psi_2)}$ where 
$\frnames{1}\freshin\frnames{2},\Psi_2$ and vice versa. 
We write
$\Psi \ftimes F$ to mean $\framepair{\epsilon}{\Psi} \ftimes F$, and
$(\nu c)(\framepair{\frnames{}}{\Psi})$ for~$\framepair{c\frnames{}}{\Psi}$.

Intuitively a condition is entailed by a frame if it is entailed by
the assertion and does not contain any names bound by the frame, and
two frames are equivalent if they entail the same conditions.
Formally, we define $F \vdash \varphi$ to mean that there exists an alpha variant  $\framepair{\frnames{}}{\Psi}$ of $F$ such that  $\frnames{} \freshin \varphi$ and $\Psi \vdash \varphi$. We also define~\mbox{$F\sequivalent G$} to mean that for all $\varphi$ it holds that $ F \vdash \varphi$ iff $ G \vdash \varphi$.

\begin{defi}[Frames and Transitions]
\label{def:transitions}
 The \emph{frame $\frameof{P}$ of an agent} P is defined inductively as follows:
\[\begin{array}{c}
\frameof{\pass{\Psi}} = \framepair{\epsilon}{\Psi} \qquad\qquad
\frameof{P \pll Q} = \frameof{P} \ftimes \frameof{Q}\qquad\qquad
\frameof{\res{b}P} = \res{b}\frameof{P}\\[.3em]
\frameof{\lin{M}{\ve{x}}{N} \sdot P} = \frameof{\out{M}{N} \sdot P} = 
\frameof{\caseonly\;{\ci{\ve{\varphi}}{\ve{P}}}} = 
\frameof{!P} = \one
 \end{array}\]

\noindent The \emph{actions} ranged over by $\alpha, \beta$ are of the following three kinds:
Output $\boutlabel{M}{\ve{a}}{N}$ where $\tilde{a} \subseteq \names{N}$,
Input $\inlabel{M}{N}$, and Silent $\tau$. Here we refer to $M$ as the {\em
subject} and $N$ as the \emph{object}. We define 
$\bn{\boutlabel{M}{\tilde{a}}{N}} = \tilde{a}$, and $\bn{\alpha}=\emptyset$
if $\alpha$ is an input or $\tau$.
We also define $\names{\tau} = \emptyset$ and $\names{\alpha} = \names{M} \cup \names{N}$ for the input and output actions.
We write $\noboutlabel MN$ for $\boutlabel{M}{\varepsilon}N$.

A \emph{transition} is written \mbox{$\framedtransempty{\Psi}{P}{\alpha}{P'}$},
meaning that in the environment $\Psi$ the well-formed agent $P$
can do an $\alpha$ to become $P'$.  The transitions are defined inductively in 
Table~\ref{table:struct-free-labeled-operational-semantics}. We write $\trans{P}{\alpha}{P'}$ without an assertion to mean $\emptyframe \frames \trans{P}{\alpha}{P'}$.
\end{defi}
\begin{table*}[tb]

\begin{mathpar}

\inferrule*[Left=\textsc{In}]
    {\Psi \vdash M \sch K \quad \ve{L}\in\Match(N,\ve y, X)}
    {\Psi\frames\trans{\lin{M}{\ve{y}}{X}.P}{\inn{K}{N}}{P\lsubst{\ve{L}}{\ve{y}}}}

\inferrule*[left=\textsc{Out}]
    {\Psi \vdash M \sch K }
    {\Psi\frames\trans{\out{M}{N}.P}{\noboutlabel{K}{N}}{P}}

\inferrule*[left=\textsc{Com}, right={$\ve{a} \freshin Q$}]
 {\frass{Q} \ftimes \Psi\frames\trans{P}{\boutlabel{M}{\ve{a}}{N}}{P'} \\
  \frass{P} \ftimes \Psi\frames\trans{Q}{\inn{K}{N}}{Q'} \\
  \Psi \ftimes \frass{P} \ftimes \frass{Q} \vdash M \sch K
  }
       {\Psi\frames\trans{P \pll Q}{\tau}{(\nu \ve{a})(P' \pll Q')}}

\inferrule*[left=\textsc{Par},  right={$\bn{\alpha} \freshin Q$
}]
{\frass{Q} \ftimes \Psi\frames\trans{P} {\alpha}{P'}}
{\Psi\frames\trans{P\mid Q}{\alpha}{P' \mid Q}}

\inferrule*[left={\textsc{Case}}]
    {\Psi\frames\trans{P_i}{\alpha}{P'} \\ \Psi \vdash \varphi_i}
    {\Psi\frames\trans{\caseonly\;{\ci{\ve{\varphi}}{\ve{P}}}}{\alpha}{P'}}

\inferrule*[left=\textsc{Rep}]
   {\Psi\frames\trans{P \pll !P}{\alpha}{P'}}
   {\Psi\frames\trans{!P}{\alpha}{P'}}

\inferrule*[left=\textsc{Scope}, right={$b \freshin \alpha,\Psi$}]
    {\Psi\frames\trans{P}{\alpha}{P'}}
    {\Psi\frames\trans{(\nu b)P}{\alpha}{(\nu b)P'}}

\inferrule*[left=\textsc{Open}, right={$\inferrule{}{b \freshin \ve{a},\Psi,M\\\\
b \in \n(N)}$}]
    {\Psi\frames\trans{P}{\boutlabel{M}{\ve{a}}{N}}{P'}}
    {\Psi\frames\trans{(\nu b)P}{\boutlabel{M}{\ve{a} \cup \{b\}}{N}}{P'}}

\end{mathpar}\smallskip
{Symmetric versions of \textsc{Com}
and \textsc{Par} are elided. In the rule $\textsc{Com}$ we assume that $\fr{P} =
\framepair{\frnames{P}}{\frass{P}}$ and   $\fr{Q} =
\framepair{\frnames{Q}}{\frass{Q}}$ where $\frnames{P}$ is fresh for all of 
$\Psi, \frnames{Q}, Q, M$ and $P$, and that $\frnames{Q}$ is correspondingly
fresh. In the rule
\textsc{Par} we assume that $\fr{Q} = \framepair{\frnames{Q}}{\frass{Q}}$
where $\frnames{Q}$ is fresh for
$\Psi, P$ and $\alpha$. 
In $\textsc{Open}$ the expression $\nu\tilde{a} \cup \{b\}$ means the sequence $\tilde{a}$ with $b$ inserted anywhere.
}
\caption{Operational semantics.}  
\label{table:struct-free-labeled-operational-semantics}
\end{table*}

The operational semantics, defined in Table~\ref{table:struct-free-labeled-operational-semantics}, 
is the same as for the original psi-calculi, except for the use of $\Match$ in rule \textsc{In}.
We identify alpha-equivalent agents and transitions (see~\cite{LMCS11.PsiCalculi} for details). 
In a transition the names in $\bn{\alpha}$ bind into both the action object and the derivative, therefore $\bn{\alpha}$ is in the support of $\alpha$ but not in the support of the transition.
This means that the bound names can be chosen fresh, substituting each
occurrence in both the action and the derivative.

As shown in the introduction, well-formedness is not preserved by transitions in the original psi-calculi.
However, in sorted psi-calculi the usual well-formedness preservation result holds. 
\label{sec:wfPres}
\begin{thm}[Preservation of well-formedness]\label{thm:wfPres}
  If $P$ is well-formed, then
  \begin{enumerate}
  \item $P\sigma$ is well-formed; and
  \item if
    $\Psi\frames\trans{P}{\alpha}{P'}$ then $P'$ is well-formed.
  \end{enumerate}
\end{thm}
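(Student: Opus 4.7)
My plan is to prove the two clauses in order, since clause~(2) will invoke clause~(1) in the Input case.

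For clause~(1), I would proceed by structural induction on the well-formed agent $P$. The assertion, case, parallel, replication, and nil cases are routine since substitution does not introduce unguarded assertions and the inductive hypothesis handles subterms. For Restriction $(\nu a)P$, I would alpha-rename so that $a \freshin \sigma$, giving $((\nu a)P)\sigma = (\nu a)(P\sigma)$; since $a$ is unchanged, $\sort{a}\in\Restrictionsorts$ still holds. The two interesting cases are the prefixes, and they reduce to showing that the sort compatibility predicates survive substitution. For Output $\out{M}N.P$, I have $\sort{M}\CanSend\sort{N}$ and must derive $\sort{M\sigma}\CanSend\sort{N\sigma}$; the first direction follows from $\sort{M\sigma}\le_{\trms}\sort{M}$ applied to the clause $s_2\CanSend t \Rightarrow s_1\CanSend t$ (yielding $\sort{M\sigma}\CanSend \sort{N}$), and the second from $\sort{N\sigma}\le_{\trms}\sort{N}$ applied to the clause $t\CanSend s_2 \Rightarrow t\CanSend s_1$. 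For Input $\lin{M}{\ve{x}}X.P$, I alpha-rename so that $\ve{x}\freshin\sigma$; then Definition~\ref{def:subst} guarantees $X\sigma\in\pats$, $\ve{x}\in\vars(X\sigma)$, and $\sort{X\sigma}\le_{\pats}\sort{X}$. Combining $\sort{M}\CanReceive\sort{X}$ with $\sort{M\sigma}\le_{\trms}\sort{M}$ gives $\sort{M\sigma}\CanReceive\sort{X}$, and then $\sort{X\sigma}\le_{\pats}\sort{X}$ (via the defining clause $t\CanReceive s_2 \Rightarrow t\CanReceive s_1$) gives $\sort{M\sigma}\CanReceive\sort{X\sigma}$, as required.

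For clause~(2), I would proceed by rule induction on the derivation of $\Psi\frames\trans{P}{\alpha}{P'}$. Most rules are immediate: in \textsc{Out}, \textsc{Case}, \textsc{Par}, \textsc{Scope}, \textsc{Open}, and \textsc{Rep}, the residual $P'$ is built from well-formed subterms of $P$, plus an inductively well-formed derivative, possibly under the same binders (for \textsc{Scope} and \textsc{Open}, the restriction names were already present in $P$, so the $\Restrictionsorts$ side-condition is inherited). For \textsc{Com}, the residual $(\nu\ve{a})(P' \pll Q')$ is built from the inductively well-formed $P'$ and $Q'$; the names $\ve{a}$ were extruded from $P$ by \textsc{Open}, so by the previous case their sorts lie in $\Restrictionsorts$. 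The key case is \textsc{In}: the residual is $P\lsubst{\ve{L}}{\ve{y}}$ where $\ve{L}\in\Match(N,\ve{y},X)$. Definition~\ref{def:pattern-match} guarantees that $\lsubst{\ve{L}}{\ve{y}}$ is a substitution in the sense of Definition~\ref{def:subst}, so I can apply clause~(1) to the well-formed subterm $P$ to conclude that $P\lsubst{\ve{L}}{\ve{y}}$ is well-formed.

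The main obstacle will be the Input case of clause~(1): one must combine the alpha-renaming of bound pattern variables, the requirement that substitution on patterns preserves variables and refines sorts (only assumed when $\ve{x}\freshin\sigma$), and the two-step chaining through the usage preorder $\le_{\pats}$ to move compatibility past the substituted subject and pattern. This is exactly the place where the new parameters on patterns (introduced in Section~\ref{sec:new-parameters}) and the usage preorder machinery pay off; once this lemma is in place, the rest of the induction is mechanical.
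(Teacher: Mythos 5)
Your proof is correct and follows essentially the same route as the paper's: structural induction on $P$ for clause~(1), with the prefix cases discharged by chaining the sort-preservation requirement $\sort{M\sigma}\le_{\trms}\sort{M}$ and the pattern requirements ($\ve{x}\in\vars(X\sigma)$, $\sort{X\sigma}\le_{\pats}\sort{X}$) through the usage preorders, followed by rule induction on the transition for clause~(2), invoking clause~(1) in the \textsc{In} case via the requirement that $\Match$ only produces legal substitutions. Your write-up is in fact more detailed than the paper's, which treats the output case and all of clause~(2) in a single sentence each.
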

\proof
The first part is by induction on $P$.  The output prefix case uses the sort preservation property of substitution on terms (Definition~\ref{def:subst}). 
    The interesting case is input prefix $\lin{M}{\ve{x}}X.Q$: 
    assume that $Q$ is well-formed, that $\ve{x}\in\vars(X)$, 
    that $\sort M\CanReceive\sort X$ and that $\ve{x}\freshin\sigma$.
    By induction $Q\sigma$ is well-formed.
    By sort preservation we get $\sort{M\sigma}\le\sort M$, 
    so $\sort{M\sigma}\CanReceive\sort{X}$.
    By preservation of patterns by non-capturing substitutions we have that
    $\ve{x}\in\vars(X\sigma)$ and $\sort{X\sigma}\le\sort X$, so
    $\sort{M\sigma}\CanReceive\sort{X\sigma}$.

The second part is by induction on the transition rules, using part 1 in the \textsc{In} rule.\qed

\noindent Since well-formedness is preserved by transitions and substitutions, 
from this point on we only consider well-formed agents.

\section{Meta-theory}
\label{sec:meta-theory}
As usual, the labelled operational semantics gives rise to notions of
labelled bisimilarity.  Similarly to the applied
pi-calculus~\cite{abadi.fournet:mobile-values}, the standard
definition of bisimilarity needs to be adapted to take assertions into
account.  In this section, we show that both strong and weak
bisimilarity satisfy the expected structural congruence laws and the
standard congruence properties of name-passing process calculi.
We first prove these results for calculi with a single name sort
(Theorem~\ref{thm:allresults}) supported by Nominal Isabelle.
We then extend the results to all sorted psi-caluli
(Theorems~\ref{thm:sortstructcong}, \ref{thm:congruence-bisim}, and~\ref{thm:congruence-congruence}) by manual proofs.

\subsection{Recollection}
\label{sec:recollection}
We start by recollecting the required definitions, beginning with the
definition of strong labelled bisimulation on well-formed agents by
Bengtson et al.~\cite{LMCS11.PsiCalculi}, to which we refer for
examples and more intuitions.

\begin{defi}[Strong bisimulation]\label{def:strongbisim}
A \emph{strong bisimulation}
 $\mathcal R$ is a ternary relation on assertions and pairs of agents such that
 ${\mathcal R}(\Psi,P,Q)$ implies the following four statements.
 \begin{enumerate}%
 \item \label{item:static} Static equivalence:
  $\Psi \ftimes \frameof{P} \sequivalent \Psi \ftimes \frameof{Q}$.
 \item
   Symmetry: ${\mathcal R}(\Psi,Q,P)$.
 \item \label{item:extension}
 Extension with arbitrary assertion: for all $\Psi'$ it holds that ${\mathcal R}(\Psi \ftimes
\Psi',P,Q)$.
 \item   Simulation:
for all $\alpha, P'$ such that
$\bn{\alpha}\freshin \Psi,Q$
and $\Psi\frames \trans{P}{\alpha}{P'}$,\\ there exists $Q'$ such
that $\Psi\frames\trans{Q}{\alpha}{Q'}$ and ${\mathcal R}(\Psi , P', Q')$.
\end{enumerate}
 \label{def:bisim}
We define \emph{bisimilarity} $P \bisim_\Psi Q$ to mean that there is a bisimulation ${\mathcal R}$ such that
${\mathcal R}(\Psi,P,Q)$, and write $\bisim$ for $\bisim_\emptyframe$. %
\end{defi}
Above, (\ref{item:static}) corresponds to the capability of a parallel
observer to test the truth of a condition using $\caseonly$, 
while (\ref{item:extension}) models an observer taking a step and adding
a new assertion $\Psi'$ to the current environment.

We close strong bisimulation under substitutions to obtain a congruence.
\begin{defi}[Strong bisimulation congruence]\label{def:strongcongruence}
$P\sim_\Psi Q$ means that for all sequences $\ve\sigma$ of substitutions 
it holds that
$P\ve\sigma \bisim_\Psi Q\ve\sigma$.  We write $P \sim Q$ for $P \sim_\emptyframe Q$.
\end{defi}
To illustrate the definitions of bisimulation and bisimulation congruence, 
we here prove a result about the $\caseonly$ statement, 
to be used in Section~\ref{sec:process-calculi-examples}.

\begin{lem}[Flatten Case]
\label{lemma:flatten-case}
Suppose that there exists a condition
$\top\in \conditions$ such that $\Psi\vdash\top\ve{\sigma}$ for all $\Psi$ and substitution sequences $\ve\sigma$. 
Let $R =\caseonly\;\ci{\top}{(\caseonly\;\ci{\ve\varphi}{\ve{P}})}\casesep\ci{\ve{\phi}}{\ve{Q}}$ and $R' = \caseonly\;\ci{\ve\varphi}{\ve{P}}\casesep\ci{\ve\phi}{\ve{Q}}$; then $R\sim R'$.
\end{lem}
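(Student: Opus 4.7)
The plan is to exhibit an explicit candidate strong bisimulation and verify the four clauses of Definition~\ref{def:strongbisim}, then lift to $\sim$ by noting that substitution instances preserve the shape of both sides. Concretely, since
\[
  R\ve\sigma = \caseonly\;\ci{\top\ve\sigma}{(\caseonly\;\ci{\ve\varphi\ve\sigma}{\ve{P}\ve\sigma})}\casesep\ci{\ve\phi\ve\sigma}{\ve{Q}\ve\sigma},
  \qquad
  R'\ve\sigma = \caseonly\;\ci{\ve\varphi\ve\sigma}{\ve{P}\ve\sigma}\casesep\ci{\ve\phi\ve\sigma}{\ve{Q}\ve\sigma},
\]
and the hypothesis on $\top$ is closed under arbitrary substitution sequences, it suffices to prove $R\bisim_\Psi R'$ uniformly for every $\Psi$; the identical argument then applies to each $\ve\sigma$, giving $\sim$.

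I would take as candidate the relation
\[
  \mathcal{R} = \{(\Psi,R,R'),(\Psi,R',R) : \Psi\in\assertions\} \;\cup\; \{(\Psi,S,S) : S\ \text{well-formed},\,\Psi\in\assertions\}.
\]
Static equivalence is immediate because both $R$ and $R'$ are \textsc{Case} processes, so by the frame definition $\frameof{R}=\frameof{R'}=\one$. Symmetry and closure under extension with $\Psi'$ hold by construction (the pairs are indexed over all assertions, and the diagonal component is a congruence).

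For the simulation clause I would analyse transitions of $R$ using only rule \textsc{Case}. A move $\Psi\frames R\gta T$ arises from either (i) the first branch, whence $\Psi\vdash\top$ and $\Psi\frames \caseonly\;\ci{\ve\varphi}{\ve P}\gta T$; inverting once more, there is an $i$ with $\Psi\vdash\varphi_i$ and $\Psi\frames P_i\gta T$, so an application of \textsc{Case} to $R'$ at its $\varphi_i$ branch yields $\Psi\frames R'\gta T$; or (ii) a branch $\ci{\phi_j}{Q_j}$, for which $R'$ has the very same branch, so it can match directly. In both cases the derivative pair $(\Psi,T,T)$ lies in the diagonal part of $\mathcal{R}$. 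The converse direction mirrors this: a transition of $R'$ at a $\varphi_i$ branch is matched by $R$ using \textsc{Case} twice (first on the $\top$ branch, then on $\varphi_i$), exploiting $\Psi\vdash\top$; a transition at $\phi_j$ is matched directly.

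No step looks especially deep; the only mild subtlety is the very first observation, namely that the hypothesis ``$\Psi\vdash\top\ve\sigma$ for all $\Psi$ and $\ve\sigma$'' is exactly what is needed to ensure that the outer $\top$-guard of $R$ remains entailed after any substitution, so that the case analysis for $\sim$ can be performed uniformly over substitution instances rather than pair-by-pair.
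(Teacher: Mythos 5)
Your proposal is correct and follows essentially the same route as the paper: the same candidate relation (the specific pair in both orders, union the identity), the same trivial verification of static equivalence, symmetry and assertion extension, and the same \textsc{Case}-inversion argument for simulation, with derivatives landing in the diagonal. The only cosmetic difference is that the paper builds a single relation covering all substitution instances at once, whereas you re-run the argument per $\ve\sigma$ after observing that substitution preserves the shape of both sides; both are fine.
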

\proof
We let $\mathcal{I}:=\bigcup_{\Psi,P}\Set{(\Psi,P,P)}$ be the identity relation, and
\[
\mathcal{S} := \bigcup_{\Psi,\ve{P},\ve{Q},\ve{\phi},\ve{\varphi}}
\begin{array}[t]{rr}\{(\Psi,\caseonly\;\ci{\varphi_\top}{(\caseonly\;\ci{\ve\varphi}{\ve{P}})}\casesep\ci{\ve{\phi}}{\ve{Q}},\caseonly\;\ci{\ve\varphi}{\ve{P}}\casesep\ci{\ve\phi}{\ve{Q}}) :{} \\
\varphi_\top\in\conditions\land \forall \Psi'\in\assertions.\; \Psi' \vdash \varphi_\top\}.
\end{array}
\]
We prove that $\mathcal{T}:={\mathcal{S}}\, \cup\, {\mathcal{S}^{-1}}\, \cup\, \mathcal{I}$ is a bisimulation, 
where $\mathcal{S}^{-1} := \{(\Psi,Q,P) : (\Psi,P,Q) \in \mathcal{S}\}$.
Then, $\mathcal{T}(\emptyframe,R\ve\sigma,R'\ve\sigma)$ for all $\ve\sigma$, 
so $R\sim R'$ by the definition of $\sim$.
The proof that $\mathcal{T}$ is a bisimulation is straightforward:

\begin{description}
\item[Static equivalence]
The frame of a $\mathbf{case}$ agent is always $\emptyframe$, 
hence static equivalence follows by reflexivity of $\simeq$.
\item[Symmetry] Follows by definition of $\mathcal{T}$.
\item[Extension with arbitrary assertion]
Trivial by the choice of candidate relation, since the $\Psi$ in $\mathcal{S}$ and $\mathcal{I}$ are universally quantified.
\item[Simulation] Trivially, any process $P$ simulates itself.
  Fix $(\Psi,R,R') \in \mathcal{S}$, such that 
  $R=\caseonly\;\ci{\varphi_\top}{(\caseonly\;\ci{\ve\varphi}{\ve{P}})}\casesep\ci{\ve{\phi}}{\ve{Q}}$
  and $R'=\caseonly\;\ci{\ve\varphi}{\ve{P}}\casesep\ci{\ve\phi}{\ve{Q}}$. 
  Here $\Psi\vdash\varphi_\top$ follows by definition of $\mathcal{S}$. 
  Since $\mathcal{T}$ includes both $\mathcal{S}$ and $\mathcal{S}^{-1}$, 
  we must follow transitions from both $R$ and $R'$.
\begin{itemize}
\item 
A transition from $R$ via $P_i$ can be derived as follows:
\begin{mathpar}
\inferrule*[Left=Case]
    {\inferrule*[Left=Case]
        {\Psi\frames\trans{P_i}{\alpha}{P'_i} \and \Psi\vdash\varphi_i}
        {\Psi\frames\trans{\caseonly\;\ci{\ve\varphi}{\ve{P}}}{\alpha}{P'_i\and \Psi\vdash\varphi_\top}}}
    {\Psi\frames\trans{\caseonly\;\ci{\varphi_\top}{(\caseonly\;\ci{\ve\varphi}{\ve{P}})}\casesep\ci{\ve\phi}{\ve{Q}}}{\alpha}{P'_i}}
\end{mathpar}
Then $R'$ can simulate this with the following derivation:
\begin{mathpar}
\inferrule*[Left=Case]
    {\Psi\frames\trans{P_i}{\alpha}{P'_i} \and \Psi\vdash\varphi_i}
    {\Psi\frames\trans{\caseonly\;\ci{\ve\varphi}{\ve{P}}\casesep\ci{\ve\phi}{\ve{Q}}}{\alpha}{P'_i}}
\end{mathpar}
Since $\mathcal{I}(\Psi,P'_i,P'_i)$ and $\mathcal{I} \subseteq \mathcal{T}$ we have $\mathcal{T}(\Psi,P'_i,P'_i)$.

\item A transition from $R'$ via $Q_i$  can be derived as follows:
\begin{mathpar}
\inferrule*[Left=Case]
    {\Psi\frames\trans{Q_i}{\alpha}{Q'_i} \and \Psi\vdash\phi_i}
    {\Psi\frames\trans{\caseonly\;\ci{\ve\varphi}{\ve{P}}\casesep\ci{\ve\phi}{\ve{Q}}}{\alpha}{Q'_i}}
\end{mathpar}
The process $R$ can simulate this with the following derivation: 
\begin{mathpar}
\inferrule*[Left=Case]
    {\Psi\frames\trans{Q_i}{\alpha}{Q'_i} \and \Psi\vdash\phi_i}
    {\Psi\frames\trans{\caseonly\;\ci{\varphi_\top}{(\caseonly\;\ci{\ve\varphi}{\ve{P}})}\casesep\ci{\ve\phi}{\ve{Q}}}{\alpha}{Q'_i}}
\end{mathpar}
Since $\mathcal{I}(\Psi,Q'_i,Q'_i)$ and $\mathcal{I} \subseteq \mathcal{T}$ we have $\mathcal{T}(\Psi,Q'_i,Q'_i)$.

\item Symmetrically, $R'$ can simulate transitions derived from $R$ via $Q_i$, 
  and $R$ can simulate transitions derived from $R'$ via $P_i$. \qed
\end{itemize}
\end{description}

Psi-calculi are also equipped with a notion of weak bisimilarity
($\wbisim$) where $\tau$-transitions cannot be observed, introduced by
Bengtson et al.~\cite{LICS10.WeakPsi}. We here restate its definition,
but refer to the original publication for examples and more
motivation.

The definition of weak transitions is standard.
\begin{defi}[Weak transitions] \label{def:weakTrans}
$\Psi \frames \transw{P}{}{P'}$ is defined inductively by the rules:
\begin{enumerate}
\item $\Psi \frames \transw{P}{}{P}$
\item If $\Psi \frames \trans{P}{\tau}{P''}$ and $\Psi \frames \transw{P''}{}{P'}$, then $\Psi \frames \transw{P}{}{P'}$
\end{enumerate}
\end{defi}
For weak bisimulation we use static implication (rather than static equivalence) to compare the frames of the process pair under consideration.
\begin{defi}[Static implication]\label{def:staticimpl}
$P$ \emph{statically implies} $Q$ in the environmental assertion $\Psi$, written $P \simplies_\Psi Q$, if
\[\forall \varphi. \; \Psi \ftimes \frameof{P} \vdash \varphi \;\Rightarrow \;
\Psi \ftimes \frameof{Q} \vdash \varphi\]
\end{defi}

\begin{defi}[Weak bisimulation]\label{def:weak-bisim}
A \emph{weak bisimulation}
 $\mathcal{R}$ is a ternary relation between assertions and pairs of agents such that
 ${\mathcal{R}}(\Psi,P,Q)$ implies all of
 \begin{enumerate}
 \item \label{item:wstatic}
   Weak static implication: for all $\Psi'$ there exist $Q', Q''$ such that
   \[ \Psi \frames \transw{Q}{}{Q'}
   \quad\wedge\quad
   \Psi \ftimes \Psi' \frames \transw{Q'}{}{Q''}  
   \quad\wedge\quad 
   P \simplies_\Psi Q' 
   \quad\wedge\quad 
   {\mathcal R}(\Psi\ftimes\Psi',P,Q'')
   \]
  \item
   Symmetry: ${\mathcal{R}}(\Psi,Q,P)$
 \item
 Extension of arbitrary assertion: for all $\Psi'$ it holds that
 ${\mathcal{R}}(\Psi \ftimes \Psi',P,Q)$
 \item \label{item:wsim}  Weak simulation: for all $P'$, if $\framedtransempty{\Psi}{P}{\alpha}{P'}$ then
   \begin{enumerate}
   \item \label{item:wsimtau} if $\alpha = \tau$ then $\exists Q' .
       \;\Psi \frames \transw{Q}{}{Q'} \wedge {\mathcal
         R}(\Psi,P',Q')$; and
     \item \label{item:wsimvis} if $\alpha\neq\tau$ and $\bn{\alpha}\freshin \Psi,Q$, then there exists $Q',Q'', Q'''$ such that
     \[\begin{array}{rlcl}
       & 
       \Psi \frames  \transw{Q}{}{Q'}
       &\land&
       \Psi \frames \trans{Q'}{\alpha}{Q''} 
       \quad\wedge\quad
       \Psi \ftimes \Psi' \frames \transw{Q''}{}{Q'''} 
       \\ &
       {}\land\quad
       P \simplies_\Psi Q' 
       &\wedge&
       {\mathcal{R}}(\Psi\ftimes\Psi',P',Q''') 
     \end{array}\]
   \end{enumerate}
\end{enumerate}
\label{def:wbisim}
We define $P \wbisim Q$ to mean that there exists a weak bisimulation ${\mathcal{R}}$
such that ${\mathcal{R}}(\emptyframe,P,Q)$ and we write $P \wbisim_\Psi Q$
when 
there exists a weak bisimulation $\R$ such that $\R(\Psi,P,Q)$.
\end{defi}
Above, (\ref{item:wstatic}) allows $Q$ to take $\tau$-transitions before
and after enabling at least those conditions that hold in the frame of $P$, as per Definition~\ref{def:staticimpl}.
Moreover, when testing these conditions, the observer may also add an 
assertion $\Psi'$ to the environment.
In (\ref{item:wsimvis}), the observer may test the validity of conditions
when matching a visible transition, and may also add an assertion as above.

To obtain a congruence from weak bisimulation, we must require that
every $\tau$-transition is simulated by a weak transition containing at least one $\tau$-transition.

\begin{defi}\label{def:weakcongruence}
  A \emph{weak $\tau$-bisimulation}
  $\mathcal{R}$ is a ternary relation between assertions and pairs of agents such that
  ${\mathcal{R}}(\Psi,P,Q)$ implies all conditions of
  a weak bisimulation (Definition~\ref{def:wbisim}) with \ref{item:wsimtau} replaced
  by
  \[ \text{(\ref{item:wsimtau}${}'$) if }\alpha=\tau\text{ then }\exists Q',Q'' .
       \;\Psi \frames \trans{Q}{\tau}{Q'} \land \Psi \frames
       \transw{Q'}{}{Q''} \wedge P' \wbisim_\Psi Q''. \]
  We then let $P\approx_\Psi Q$ mean that for all sequences $\ve\sigma$ of substitutions 
  there is a weak $\tau$-bisimulation
  $\mathcal{R}$ such that $\mathcal{R}(\Psi,P\ve\sigma, Q\ve\sigma)$.
  We write $P \approx Q$ for $P \approx_\emptyframe Q$.
\end{defi}

\begin{lem}[Comparing bisimulations]\label{lem:comp-bisim}
  For all relations $\mathcal{R} \subseteq \assertions \times
  \processes \times \processes$,
  \begin{itemize}
  \item if $\mathcal{R}$ is a strong bisimulation then $\mathcal{R}$ is a weak $\tau$-bisimulation.
  \item if $\mathcal{R}$ is a weak $\tau$-bisimulation then $\mathcal{R}$ is a weak bisimulation.
  \end{itemize}
\end{lem}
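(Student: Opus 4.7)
For the first bullet, I plan to verify each clause of weak $\tau$-bisimulation directly for the given strong bisimulation $\mathcal{R}$. Symmetry and extension with arbitrary assertion transfer verbatim. Weak static implication (\ref{item:wstatic}) is obtained by choosing the trivial zero-step weak transitions $Q' = Q'' = Q$, using the fact that the strong-bisim static equivalence $\Psi \ftimes \frameof{P} \sequivalent \Psi \ftimes \frameof{Q}$ is strictly stronger than the required static implication, together with closure under extension. For weak simulation, the visible subcase (\ref{item:wsimvis}) follows from strong simulation wrapped in trivial zero-step weak transitions on both sides, with extension providing the after-state relation under $\Psi \ftimes \Psi'$. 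The $\tau$-subcase (\ref{item:wsimtau}${}'$) is handled by using strong simulation to obtain $\Psi \frames Q \gtt Q'$ with $\mathcal{R}(\Psi, P', Q')$, then setting $Q'' := Q'$. The remaining obligation $P' \wbisim_\Psi Q'$ is discharged via $\mathcal{R} \subseteq \bisim \subseteq \wbisim$; the second inclusion is an auxiliary fact proved by direct inspection that every strong bisimulation satisfies all clauses of weak bisimulation (static equivalence subsumes static implication, and strong simulation subsumes weak simulation via trivial padding around the single strong step).

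For the second bullet, the clauses (\ref{item:wstatic}), symmetry, extension, and the visible simulation subcase (\ref{item:wsimvis}) are literally identical in both definitions and so transfer verbatim. The only clause that requires genuine work is the $\tau$-subcase (\ref{item:wsimtau}): given $\Psi \frames P \gtt P'$, I must exhibit some $Q'$ with $\Psi \frames Q \Rightarrow Q'$ and $\mathcal{R}(\Psi, P', Q')$. The weak $\tau$-bisimulation clause (\ref{item:wsimtau}${}'$) supplies $Q_1, Q_2$ with $\Psi \frames Q \gtt Q_1$, $\Psi \frames Q_1 \Rightarrow Q_2$, and $P' \wbisim_\Psi Q_2$; composing the transitions yields $\Psi \frames Q \Rightarrow Q_2$, so $Q' := Q_2$ is the natural candidate and handles the transition half of the obligation.

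The main obstacle lies in the relation half of that case: we have $P' \wbisim_\Psi Q_2$ but need $\mathcal{R}(\Psi, P', Q_2)$, and in general $\wbisim$ is not contained in $\mathcal{R}$. I plan to close this gap by a union enlargement: define $\mathcal{R}^+ \isdef \mathcal{R} \cup \wbisim$ and verify that $\mathcal{R}^+$ is a weak bisimulation. Pairs already in $\wbisim$ inherit every clause since $\wbisim$ is itself a weak bisimulation. Pairs drawn from $\mathcal{R}$ place their after-states either in $\mathcal{R} \subseteq \mathcal{R}^+$ (via the clauses that transfer verbatim) or in $\wbisim \subseteq \mathcal{R}^+$ (via the $\tau$-clause, whose after-state comes from (\ref{item:wsimtau}${}'$)). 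Maximality of $\wbisim$ then forces $\mathcal{R}^+ \subseteq \wbisim$, and hence $\mathcal{R} \subseteq \wbisim$, so that every $\mathcal{R}$-related pair lies in the weak bisimulation $\mathcal{R}^+$; the after-state obligation in (\ref{item:wsimtau}) is therefore met, with $\mathcal{R}^+$ serving as the explicit weak bisimulation extending $\mathcal{R}$ that witnesses the claim.
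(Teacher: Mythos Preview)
The paper does not supply a proof for this lemma, so there is nothing to compare your argument against directly; I assess it on its own.

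Your treatment of the first bullet is correct and is the expected argument: static equivalence subsumes static implication, strong simulation steps are wrapped in zero-length $\Rightarrow$-segments, extension supplies the $\Psi\ftimes\Psi'$ obligations, and the residual $P'\wbisim_\Psi Q'$ in clause~(\ref{item:wsimtau}${}'$) is discharged via the auxiliary fact that every strong bisimulation is already a weak bisimulation.

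For the second bullet there is a real gap. You correctly isolate clause~(\ref{item:wsimtau}) as the only nontrivial obligation and correctly observe that (\ref{item:wsimtau}${}'$) only yields $P'\wbisim_\Psi Q_2$ rather than $\mathcal{R}(\Psi,P',Q_2)$. But your fix---passing to $\mathcal{R}^+ := \mathcal{R}\cup{\wbisim}$ and checking that $\mathcal{R}^+$ is a weak bisimulation---establishes only that $\mathcal{R}\subseteq{\wbisim}$, i.e.\ that every $\mathcal{R}$-triple lies in \emph{some} weak bisimulation. It does \emph{not} show that $\mathcal{R}$ itself satisfies clause~(\ref{item:wsimtau}), which demands that the after-pair land back in $\mathcal{R}$, not in $\mathcal{R}^+$. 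Your final sentence conflates ``$\mathcal{R}$ is contained in a weak bisimulation'' with ``$\mathcal{R}$ is a weak bisimulation''.

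Indeed the second bullet, read literally, is false. With trivial assertions, take $P\gtt P'$ and $Q\gtt Q'$ where $P',Q'$ are deadlocked and hence weakly bisimilar, and set $\mathcal{R}=\{(\emptyframe,P,Q),(\emptyframe,Q,P)\}$. All clauses of Definition~\ref{def:weakcongruence} hold (in (\ref{item:wsimtau}${}'$) the after-pair lands in $\wbisim$, which is all that is required), so $\mathcal{R}$ is a weak $\tau$-bisimulation; but $\mathcal{R}$ fails clause~(\ref{item:wsimtau}) of Definition~\ref{def:wbisim} since $(\emptyframe,P',\cdot)\notin\mathcal{R}$. What your $\mathcal{R}^+$ argument \emph{does} establish is the weaker---and presumably intended---statement that every weak $\tau$-bisimulation is contained in $\wbisim$. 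That is also sufficient for all downstream uses: Corollary~\ref{cor:comp-cong} relies only on the first bullet.
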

\begin{cor}[Comparing congruences]\label{cor:comp-cong}
  If $P\sim_\Psi Q$ then $P\approx_\Psi Q$.
\end{cor}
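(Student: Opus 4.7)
The plan is to unfold the definitions of $\sim_\Psi$ (Definition~\ref{def:strongcongruence}) and $\approx_\Psi$ (Definition~\ref{def:weakcongruence}) and then appeal directly to Lemma~\ref{lem:comp-bisim}. Both are defined by closing the corresponding bisimilarities under sequences of substitutions, so the proof reduces to lifting a pointwise implication between bisimilarities through the universal quantifier over $\ve\sigma$.

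Concretely, I would first fix an arbitrary sequence of substitutions $\ve\sigma$ and use the assumption $P\sim_\Psi Q$ to obtain $P\ve\sigma \bisim_\Psi Q\ve\sigma$. This means there exists a strong bisimulation $\mathcal R$ with $\mathcal R(\Psi, P\ve\sigma, Q\ve\sigma)$. By the first clause of Lemma~\ref{lem:comp-bisim}, the very same relation $\mathcal R$ is a weak $\tau$-bisimulation. Hence $\mathcal R$ witnesses the condition required by Definition~\ref{def:weakcongruence} at $\ve\sigma$. Since $\ve\sigma$ was arbitrary, $P\approx_\Psi Q$.

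There is no real obstacle here: the corollary is an immediate consequence of Lemma~\ref{lem:comp-bisim}, because the congruence closures on both sides are defined by the exact same universal quantification over substitution sequences and the same assertion parameter $\Psi$. The only thing to note is that one does not need the second clause of Lemma~\ref{lem:comp-bisim} (weak $\tau$-bisimulation implies weak bisimulation); one uses the clause that matches the weak $\tau$-bisimulation notion underlying $\approx_\Psi$.
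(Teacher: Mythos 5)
Your proof is correct and matches the paper's (implicit) argument: the corollary is stated as an immediate consequence of Lemma~\ref{lem:comp-bisim}, obtained exactly as you describe by fixing $\ve\sigma$, extracting a strong bisimulation witnessing $P\ve\sigma \bisim_\Psi Q\ve\sigma$, and noting via the first clause of the lemma that the same relation is a weak $\tau$-bisimulation as required by Definition~\ref{def:weakcongruence}. Your observation that only the first clause of the lemma is needed is also accurate.
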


We seek to establish the following standard congruence and structural properties properties of strong and weak bisimulation:
\begin{defi}[Congruence relation] \label{defi:congr}
A relation $\mathcal{R} \subseteq \assertions \times \processes \times \processes$, 
where $(\Psi,P,Q) \in \mathcal{R}$ is written $P\;\mathcal{R}_\Psi\;Q$,
is a \emph{congruence} iff for all $\Psi$,
$\mathcal{R}_\Psi$ is an equivalence relation,
and the following implications hold.
\[
\begin{array}{lrcll}
   \textsc{CPar}&P \;\mathcal{R}_\Psi \; Q &\;\Longrightarrow\;& (P \pll R) \;\mathcal{R}_\Psi \; (Q \pll R)
   \\ \textsc{CRes}&a\freshin\Psi \land P \;\mathcal{R}_\Psi \; Q &\;\Longrightarrow\;& \res{a}P \;\mathcal{R}_\Psi \; \res{a}Q
   \\ \textsc{CBang}&P \;\mathcal{R}_\Psi \; Q &\;\Longrightarrow\;& !P \;\mathcal{R}_\Psi\, !Q &
   \\ \textsc{CCase}&\forall i.P_i \;\mathcal{R}_\Psi \; Q_i  &\Longrightarrow&
   \casesprod{\ci{\ve{\varphi}}{\ve{P}}} \;{\;\mathcal{R}_\Psi}\;
   \casesprod{\ci{\ve{\varphi}}{\ve{Q}}}
   \\ \textsc{COut}&P \;\mathcal{R}_\Psi \; Q  &\Longrightarrow& \out{M}{N} \sdot P \;\mathcal{R}_\Psi\;
   \out{M}{N} \sdot Q
   \\\textsc{CIn}&P \;\mathcal{R}_\Psi \; Q &\Longrightarrow &\lin{M}{\ve{x}}{X} \sdot P  \;\mathcal{R}_\Psi \; \lin{M}{\ve{x}}{X} \sdot Q
\end{array}
\]
A \emph{\textsc{CCase}-pseudo-congruence} is defined like a
congruence, except that \textsc{CIn} is substituted by the following rule \textsc{CIn-2}.
\[
\begin{array}{lrcll}
  \textsc{CIn-2} & (\forall \ve{L}.\; P\lsubst{\ve{L}}{\ve{x}} \; \mathcal{R}_\Psi \; Q\lsubst{\ve{L}}{\ve{x}})
   &\Longrightarrow &
\lin{M}{\ve{x}}{X} \sdot P  \; \mathcal{R}_\Psi \; \lin{M}{\ve{x}}{X} \sdot Q
\end{array}
\]
A \emph{pseudo-congruence} is defined like a \textsc{CCase}-pseudo-congruence, but without rule \textsc{CCase}.
\end{defi}

\begin{defi}[Structural congruence]\label{def:structcong}
  \emph{Structural congruence}, denoted ${\equiv} \in \processes \times \processes$,
  is the smallest relation such that $\Set{(\one,P,Q) \; : \; P \equiv Q}$ is a congruence relation,
  and that satisfies the following clauses whenever $a\freshin Q,\ve{x},M,N,X,\ve{\varphi}$.
  \begin{mathpar}
    \begin{array}{rclcrcl}
      \casesprod{\ci{\ve{\varphi}}{(\nu a)\ve{P}}} & \equiv & 
      (\nu a)\casesprod{\ci{\ve{\varphi}}{\ve{P}}}
      &\qquad&
      !P  &\equiv &P \pll !P\\
      \lin{M}{\ve{x}}{X} \sdot \res{a}P &\equiv&
      \res{a}\lin{M}{\ve{x}}{X} \sdot P
      &\quad&
      P \pll (Q \pll R)  &\equiv& (P \pll Q) \pll R 
      \\
      \out{M}{N} \sdot \res{a}P  &\equiv& \res{a}\out{M}{N} \sdot P 
      &\quad&
      P \pll Q  &\equiv& Q \pll P\\
      Q \pll (\nu a)P & \equiv& (\nu a)(Q \pll P)
      &\quad&
      P &\equiv& P \pll \nil \\
      \res{b}\res{a}P  &\equiv& \res{a}\res{b}P
      &\quad&
      (\nu a)\nil &\equiv& \nil
    \end{array}
  \end{mathpar}

  A relation $\mathcal{R} \subseteq \processes \times \processes$ is
  \emph{complete with respect to structual congruence} 
  if ${\equiv} \subseteq \mathcal{R}$.
\end{defi}

\noindent Our goal is to establish that for all $\Psi$ the relations
$\bisim_\Psi$, $\sim_\Psi$, $\wbisim_\Psi$ and $\approx_\Psi$ are complete with respect to structural congruence; 
that $\bisim$ is a \textsc{CCase}-pseudo-congruence; 
that $\sim$ is a congruence; 
that $\wbisim$ is a pseudo-congruence; and
that $\approx$ is a congruence. 
\subsection{Psi-calculi with a single name sort}
\label{sec:trivally-sorted-calculi}

To prove the desired algebraic properties of strong and weak
bisimilarity and their induced congruences,
we first adapt the Isabelle proofs for the original psi-calculi 
to sorted psi-calculi with a single name sort, 
and then manually lift the results to arbitrary sorted psi-calculi. 
The reason for this approach is the lack of support in Nominal Isabelle for data types that are parametric in the sorts of names.
\begin{thm}
\label{thm:allresults}
If $\lvert \namesorts\rvert=\lvert \Restrictionsorts\rvert =1$, 
then $\bisim_\Psi$, $\sim_\Psi$, $\wbisim_\Psi$ and $\approx_\Psi$ are complete wrt. structural congruence for all $\Psi$, 
$\bisim$ is a \textsc{CCase}-pseudo-congruence, $\sim$ is a congruence, $\wbisim$ is a
pseudo-congruence, and $\approx$ is a congruence.
\end{thm}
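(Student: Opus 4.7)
The plan is to port the existing Nominal Isabelle formalisation of Bengtson et al.\ for the original psi-calculi, and for the weak version from their LICS 2010 paper, to the present setting. With $\lvert\namesorts\rvert=\lvert\Restrictionsorts\rvert=1$ the name-level sorting constraints collapse (every name has the unique sort, every name may be bound by restriction, and $\CanSubstitute$ imposes at most the trivial constraint on substitutions of names by terms). What actually needs re-doing in Isabelle is therefore the treatment of the new parameter $\pats$ with $\vars$ and $\Match$, since the Input rule no longer matches by substitution but via the axiomatised $\Match$ function, and patterns are a separate datatype with their own substitution operation.

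First I would re-establish the infrastructural lemmas: nominal reasoning about the new datatype $\pats$; the properties of $\Match$ and $\vars$ from Definition~\ref{def:pattern-match}, in particular alpha-renaming of matching; preservation of patterns and $\vars$ under substitutions from Definition~\ref{def:subst}; and Theorem~\ref{thm:wfPres}, which is needed wherever transitions are applied to well-formed agents. Next I would prove completeness of $\bisim_\Psi$ with respect to each clause of Definition~\ref{def:structcong} by exhibiting a candidate relation and checking the four bisimulation clauses, exactly as in the original development. The laws that push restriction under an input, and under a $\caseonly$, are the only places where the pattern-specific part matters; here one uses alpha-renaming of $\Match$ to freshen $\ve{x}$ away from the extruded name $a$ before firing an Input transition on either side.

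For congruence of $\sim$ I would take as candidate relation $\bigcup_{\Psi}\Set{(\Psi,\mathcal{C}[P],\mathcal{C}[Q]) : P\sim_\Psi Q}$ closed under all one-hole contexts and all substitution sequences, and verify each of \textsc{CPar}, \textsc{CRes}, \textsc{CBang}, \textsc{CCase}, \textsc{COut}, \textsc{CIn}. The \textsc{CIn} case is the only one whose proof differs meaningfully from the original: given $P\sim_\Psi Q$ and a substitution sequence $\ve\sigma$, an input transition of $(\lin{M}{\ve{x}}{X}\sdot P)\ve\sigma$ produces a match $\ve L\in\Match(N,\ve{x},X\ve\sigma)$ and continues as $P\ve\sigma\lsubst{\ve L}{\ve x}$; then $Q\ve\sigma\lsubst{\ve L}{\ve x}$ is obtained by the same $\Match$ on the $Q$-side, and closure of $\sim$ under the composed substitution $\ve\sigma\cdot\lsubst{\ve L}{\ve x}$ gives the required bisimilarity. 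Strong bisimilarity $\bisim$ being only a \textsc{CCase}-pseudo-congruence (failing \textsc{CIn}) is inherited directly from the unsorted case. The weak versions $\wbisim_\Psi$, $\approx_\Psi$ are handled analogously, replacing strong with weak simulation and using static implication as in Definition~\ref{def:staticimpl}; Lemma~\ref{lem:comp-bisim} and Corollary~\ref{cor:comp-cong} then cut down the amount of redoing needed, in that congruence properties for $\approx$ can share most of the candidate-relation machinery developed for $\sim$.

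The main obstacle I expect is the \textsc{CIn} case, together with all lemmas that previously used ``the pattern is a term and matching is substitution'' reasoning and now must be redone using only the axiomatised properties of $\Match$. In particular, the alpha-conversion clause for transitions, which in the original proofs is obtained by straightforward equivariance of substitution, now requires the alpha-renaming clause of $\Match$ and the preservation of $\vars$ under non-capturing substitutions; these have to be threaded through every proof that opens up an Input transition. Once this plumbing is in place, however, each individual congruence or structural law reduces to exactly the same shape of argument as in the existing Isabelle scripts, so the port is mechanical even if tedious.
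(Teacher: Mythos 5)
Your proposal matches the paper's approach: the published proof is exactly an adaptation of Bengtson's Isabelle formalisation of the original psi-calculi (and the LICS~2010 weak theory), with the changes confined to the input cases of inductive proofs, a more detailed treatment of structural congruence, and the addition of sorts and compatibility relations, which is what you describe. The only nuance is that the paper reports the bulk of the effort went into one technical lemma stating that transitions do not invent new names under the new matching construct (the $\n(\ve{N})\subseteq\n(M)\cup(\n(X)\setminus\ve{x})$ clause of Definition~\ref{def:pattern-match} threaded through the semantics), which you subsume under your infrastructural lemmas but do not single out.
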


These results have all been machine-checked in Isabelle~\cite{sortedProofsurl}.
The proof scripts are adapted from Bengtson's formalisation of psi calculi~\cite{Bengtson10.PhD}. The same technical lemmas hold and the proof scripts are essentially identical, save for the input cases of inductive proofs, a more detailed treatment of structural congruence, and the addition of sorts and compatibility relations.
We have also machine-checked 
Theorem~\ref{thm:origpats} (relationship to original psi-calculi)  and 
Theorem~\ref{thm:wfPres} (preservation of well-formedness) in this setting.
These developments comprise 31909 lines of Isabelle code; Bengtson's code is 28414 lines.
This represents no more than four days of work, with the bulk of the effort going towards proving a crucial technical lemma stating that transitions do not invent new names with the new matching construct.

Isabelle is an LCF-style theorem prover, where the only trusted component
is a small kernel that implements the inference rules of the logic and 
checks that they are correctly applied. All proofs must be fed through the
kernel. Hence the results are highly trustworthy.

As indicated these proof scripts apply only to calculi with a single name sort.
This restriction is a consequence of technicalities in Nominal Isabelle: 
it requires every name sort to be declared individually, and
there are no facilities to reason parametrically over the set of name sorts.

Huffman and Urban have developed a new foundation for Nominal Isabelle that 
lifts the requirement to declare every name sort individually~\cite{Huffman10:NFN:2176728.2176735}.
Unfortunately, the proof automation for reasoning about syntax 
quotiented by alpha-equivalence still assumes individually declared
name sorts. Working around this with manually constructed quotients is possible
in principle, but in practice this approach does not scale well enough to make
the endeavour feasible given the size of our formalisation.
A further difficulty is that Huffman and Urban's new foundation
is still alpha-ware and is not backwards-compatible.

\subsection{Trivially name-sorted psi-calculi}
\label{sec:triv-name-sort}
 A \emph{trivially name-sorted} psi-calculus is one where
 $\Restrictionsorts=\namesorts$ and there is $S\subseteq\sortset$ such that
${\CanSubstitute}=\namesorts\times S$, i.e., the sorts of names do not
affect how they can be used for restriction and substitution.

When generalising the result for single name-sorted calculi above, the main discrepancy is that
the mechanisation works with a single sort of names and thus would allow for ill-sorted alpha-renamings in the case of multiple name sorts. 
This is only a technicality, since every use of alpha-renaming in the formal proofs 
is to ensure that the bound names in patterns and substitutions avoid other bound names---thus, 
whenever we may work with an ill-sorted renaming, there would be a well-sorted renaming that suffices for the task.
\begin{thm}
  In trivially name-sorted calculi,
$\bisim_\Psi$, $\sim_\Psi$, $\wbisim_\Psi$ and $\approx_\Psi$ are complete wrt. structural congruence for all $\Psi$, 
$\bisim$ is a \textsc{CCase}-pseudo-congruence, $\sim$ is a congruence, $\wbisim$ is a
pseudo-congruence, and $\approx$ is a congruence.
\end{thm}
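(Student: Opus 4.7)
\proof[Proof sketch]
The plan is to reduce to Theorem~\ref{thm:allresults} by a sort-collapsing translation. Given a trivially name-sorted psi-calculus $\mathcal{I} = (\trms,\pats,\conditions,\assertions,\ldots)$ with name sorts $\namesorts$, I would construct a single-name-sorted instance $\mathcal{I}^\flat$ by picking a fresh singleton $\namesorts^\flat = \{\star\}$, taking $\nameset^\flat = \biguplus_{s\in\namesorts}\nameset[s]$ as a single set of names (all of sort $\star$), and keeping $\trms,\pats,\conditions,\assertions$ together with $\Match$, $\vars$, substitution, channel equivalence, composition, and $\ftimes,\one,\vdash$ unchanged. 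The compatibility predicates $\CanSend,\CanReceive$ are kept; $\CanSubstitute$ becomes $\{\star\}\times S$, which is consistent with the original $\namesorts\times S$ since all names are now of the unique sort $\star$, and $\Restrictionsorts^\flat=\namesorts^\flat$ since originally $\Restrictionsorts=\namesorts$. One must verify that $\mathcal{I}^\flat$ satisfies the requisites on nominal sets, substitution (Definition~\ref{def:subst}), and pattern matching (Definition~\ref{def:pattern-match}); this is immediate because the requirements only ever mention sorts through $\le_\trms$, $\le_\pats$, and $\CanSubstitute$, all of which are preserved by the collapse.

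Next, I would establish a bijection between well-formed agents of $\mathcal{I}$ and $\mathcal{I}^\flat$. The only way the sort system of $\mathcal{I}$ constrains a well-formed agent beyond what $\mathcal{I}^\flat$ does is through Restriction (requirement~\ref{item:res} in Definition~\ref{def:sorted-psi-well-formedness}), but since $\Restrictionsorts = \namesorts$ this is vacuous. Hence well-formedness coincides. Because the operational semantics in Table~\ref{table:struct-free-labeled-operational-semantics} refers only to substitution, $\Match$, entailment, and framing—none of which depend on $\namesorts$—I would show that $\Psi\frames\trans{P}{\alpha}{P'}$ holds in $\mathcal{I}$ iff it holds in $\mathcal{I}^\flat$, by straightforward induction on derivations. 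The one delicate point is that derivations and alpha-conversions of bound names in $\mathcal{I}^\flat$ may pick a fresh name without regard to its original sort. This is the obstacle flagged in the paper's remark, and it is also the main technical step: I would resolve it by showing that every $\mathcal{I}^\flat$-derivation can be re-sorted to an $\mathcal{I}$-derivation, using that each $\nameset[s]$ is countably infinite, so for any name introduced by the bound-name-freshness side conditions of \textsc{Scope}, \textsc{Open}, \textsc{Par}, \textsc{Com}, \textsc{In}, or by alpha-conversion, a name of the required original sort can be chosen. Since all such rules only demand freshness (and never mention sort), the re-sorting is always possible.

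With operational correspondence in place, bisimulation and bisimulation congruence in the two calculi coincide on well-formed agents: a relation $\mathcal{R}$ is a strong (respectively weak, weak $\tau$-) bisimulation for $\mathcal{I}$ iff it is one for $\mathcal{I}^\flat$, because the clauses of Definitions~\ref{def:strongbisim}, \ref{def:wbisim}, and~\ref{def:weakcongruence} refer only to frames, transitions, and assertion composition, all preserved by the collapse. Similarly, structural congruence (Definition~\ref{def:structcong}) is syntactic and thus preserved. Hence each statement of Theorem~\ref{thm:allresults} transports verbatim: for any $\Psi$, $\bisim_\Psi,\sim_\Psi,\wbisim_\Psi,\approx_\Psi$ are complete with respect to $\equiv$, and the required (pseudo-)congruence closure properties hold in $\mathcal{I}$ because they hold in $\mathcal{I}^\flat$.

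The one place where I expect the argument to require care is the treatment of $\sigma$-closure in $\sim$ and $\approx$, since well-sorted substitutions in $\mathcal{I}$ form a proper subset of those in $\mathcal{I}^\flat$. For the forward direction (from $\mathcal{I}^\flat$ to $\mathcal{I}$) this is harmless: closure under more substitutions implies closure under fewer. For any converse or direct argument I may need, I would invoke the fact that $\CanSubstitute = \namesorts\times S$ makes substitutability depend only on the sort of the substituted term, never on the sort of the name, so every $\mathcal{I}^\flat$-substitution arises (up to bijective renaming on each $\nameset[s]$) from a well-sorted $\mathcal{I}$-substitution. This lets me conclude that $\sim$ and $\approx$ in $\mathcal{I}$ are exactly the restrictions of the corresponding relations in $\mathcal{I}^\flat$ to well-formed $\mathcal{I}$-agents, finishing the proof.\qed
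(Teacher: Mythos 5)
Your proposal takes a genuinely different route from the paper. The paper does not construct a translated instance at all: its proof of this theorem is a direct re-inspection of the (machine-checked) proof of Theorem~\ref{thm:allresults}, observing that the only sort-sensitive steps in those proof scripts are alpha-renamings used to achieve freshness, and that each such renaming can be replaced by a well-sorted one because every $\nameset[s]$ is countably infinite. Your ``re-sorting of derivations'' observation is essentially the same insight, but you deploy it inside a formal reduction via a sort-collapsing instance $\mathcal{I}^\flat$, and that reduction introduces a problem the paper's argument never has to face.

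The gap is in the very first step: the claim that $\mathcal{I}^\flat$ ``immediately'' satisfies the requisites on nominal sets and hence is a single-name-sorted psi-calculus to which Theorem~\ref{thm:allresults} applies. The parameters $\trms,\pats,\conditions,\assertions$ are \emph{sorted} nominal sets: they only carry (and the operators are only required to be equivariant under) name swappings $(a\;b)$ with $a,b$ of the \emph{same} name sort. Declaring all names to have sort $\star$ obliges these datatypes to support cross-sort transpositions and obliges every operator --- including $\SORT$ itself on terms and patterns, $\Match$, $\vars$, substitution, $\sch$ and $\vdash$ --- to be equivariant under them. For an arbitrary trivially name-sorted calculus this can fail: if $\trms$ is a sorted term algebra in which a constructor only accepts names of sort $s_1$, then $(a\;b)\cdot f(a)=f(b)\notin\trms$ for $b$ of sort $s_2$; and if $\sort{a}$ as a term depends on $a$'s name sort, then equivariance of $\SORT$ under $(a\;b)$ forces $\sort{a}=\sort{b}$, contradicting ``keeping $\trms$ and its sorting unchanged.'' (Collapsing the term sorts instead would alter $\CanSend$, $\CanReceive$ and hence well-formedness, breaking your back-translation.) Trivial name-sorting constrains only $\CanSubstitute$ and $\Restrictionsorts$, so nothing rules out such instances. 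A smaller slip: the well-sorted substitutions of $\mathcal{I}$ and $\mathcal{I}^\flat$ actually coincide (that is exactly what ${\CanSubstitute}=\namesorts\times S$ buys you), so your worry about $\sigma$-closure in $\sim$ and $\approx$ is moot --- but the instance-validity problem above is not, and it is why the paper argues by re-examining the proof rather than by translation.
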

\proof By manually checking that all uses of alpha-equivalence in the proof of Theorem~\ref{thm:allresults} admit a well-sorted alpha-renaming.
\qed

\subsection{Arbitrary sorted psi-calculi}
\label{sec:extending-bisim}
\newcommand{\NOK}{\mathtt{fail}}
We here extend the results of Theorem~\ref{thm:allresults} to arbitrary sorted psi-calculi. 
The idea is to encode arbitrary sorted psi-calculi in trivially name-sorted psi-calculi by
introducing an explicit error element $\bot$, resulting from application of ill-sorted substitutions. 
For technical reasons we must also include one extra condition $\NOK$ 
  (cf.~Example~\ref{ex:extracondition}) 
and in the patterns we need different error elements with different support 
  (cf.~Example~\ref{ex:differenterrors}).

Let $I$ be a sorted psi-calculus with datatype parameters $\trms_I, \pats_I, \conditions_I, \assertions_I$.
We construct a trivially name-sorted psi-calculus $U(I)$ with one extra sort, $\kw{error}$,
and constant symbols~$\bot$ and~$\NOK$ with empty support of sort $\kw{error}$, where
$\bot$ is not a channel, never entailed, matches nothing and entails nothing but $\NOK$.

The parameters of $U(I)$ are defined by
$U(I) = (\trms_I \cup \Set{\bot}, \pats_I \cup \Set{(\bot,A)\;:\;A\subset_{\mathrm{fin}}\nameset}$, $\conditions_I \cup \Set{\bot, \NOK}, \assertions_I \cup \Set{\bot})$. 
We define $\Psi \ftimes \bot = \bot \ftimes \Psi = \bot$ for all $\Psi$, 
and otherwise $\ftimes$ is as in~$I$.
$\Match$ is the same in $U(I)$ as in $I$, 
plus $\Match(M,\ve{x},(\bot,S))=\Match(\bot,\ve{x},X)=\emptyset$. 
Channel equivalence~$\sch$ is the same in $U(I)$ as in $I$, 
plus $M \sch \bot = \bot \sch M = \bot$.
For $\Psi\in\assertions_{I}$ we let~$\Psi\vdash\varphi$ in $U(I)$ iff $\varphi\in\conditions_{I}$ and $\Psi\vdash\varphi$ in $I$, 
and we let $\bot\vdash\varphi$ iff $\varphi=\NOK$.
Substitution is then defined in $U(I)$ as follows: 
\[\begin{array}{rl}
  T\lsubst{\ve N}{\ve a}_{U(I)} := 
  &\;\; \left\{
  \begin{array}{l l}
    T\lsubst{\ve N}{\ve a}_I & \quad 
        \begin{array}[t]{@{}l}
             \text{if $\sort{a_i}\CanSubstitute_I\sort{N_i}$ and}\\
               \;\text{ $N_i \neq \bot$ for all $i$, and $T \neq (\bot,A)$}
         \end{array}\\
    (\bot, S\setminus \ve{a}) & \quad \text{if $T = (\bot,S)$ is a pattern} \\
    (\bot,\bigcup\vars(T)) & \quad \text{otherwise, if $T$ is a pattern}\\
    \bot & \quad \text{otherwise}\\
  \end{array} \right.
  \end{array}
\]
We define ${\bowtie}=(\sortset\times\{\kw{error}\})\cup(\{\kw{error}\}\times \sortset)$,
and the compatibility predicates of $U(I)$ as
$\CanReceive{=}\CanReceive_{I}\cup\bowtie$ and $\CanSend{=}\CanReceive_{I}\cup\bowtie$ and
${\CanSubstitute}=\namesorts\times\{s\in\sortset\;:\;\exists s'\in\namesorts.s'\CanSubstitute_{I}s\}$ and $\Restrictionsorts=\namesorts$.

\begin{lem}
  $U(I)$ as defined above is a trivially name-sorted psi-calculus, and
  any well-formed process $P$ in $I$ is well-formed in $U(I)$.
\end{lem}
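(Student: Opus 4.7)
The plan is to verify each requisite for a trivially name-sorted psi-calculus on $U(I)$ in turn, and then to establish well-formedness preservation. Being trivially name-sorted is immediate by construction: $\Restrictionsorts = \namesorts$ and $\CanSubstitute = \namesorts \times S$ for $S = \Set{s \in \sortset : \exists s' \in \namesorts.\; s' \CanSubstitute_I s}$. The augmented datatypes inherit their nominal structure from $I$; the constants $\bot$ and $\NOK$ have empty support, each error pattern $(\bot, A)$ has support $A$, and equivariance of the extended operators is immediate because the new cases are defined via the sort partition and equivariant set operations on $A$.

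First I would check the algebraic requisites on Definition~\ref{def:parameters1}: channel equivalence remains symmetric and transitive modulo entailment (the new cases involving $\bot$ on either side of $\sch$ yield the condition $\bot$, which is entailed by nothing); the assertions form an abelian monoid modulo $\simeq$ with $\bot$ absorbing and $\emptyframe$ as unit; and $\ftimes$ is compositional. Compositionality is where $\NOK$ earns its keep: because $\bot$ is the unique assertion entailing $\NOK$ and $\NOK \notin \conditions_I$, we have $\bot \not\simeq \Psi$ for any $\Psi \in \assertions_I$, even one that entails nothing. Hence $\simeq$ on $\assertions_{U(I)}$ either matches $\bot$ with $\bot$ (in which case $\bot \ftimes \Psi = \bot$ preserves the relation trivially) or reduces to $\simeq$ on $I$-assertions, and compositionality lifts from $I$.

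Next I would verify the substitution requisites (Definition~\ref{def:subst}) by case analysis on whether a substitution is well-sorted in $I$. If so, the $I$-requisites carry over. Otherwise the result collapses to $\bot$ or $(\bot, S\setminus\ve{a})$; sort preservation then follows because $\CanReceive$ and $\CanSend$ contain $\bowtie$, yielding $\kw{error} \le_{\trms} s$ and $\kw{error} \le_{\pats} s$ for every $s \in \sortset$. Alpha-renaming on an error pattern $(\bot, S)$ reduces to the set identity $((\ve a\ \ve b) \cdot S)\setminus \ve b = S\setminus \ve a$ whenever $\ve b \freshin S, \ve a$, which is routine. Pattern-variable preservation under non-capturing substitutions is handled analogously, once $\vars$ on error patterns is fixed compatibly with the substitution clause. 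The requisites on $\Match$ (Definition~\ref{def:pattern-match}) are either inherited from $I$ or vacuously satisfied, since $\Match$ returns $\emptyset$ on any error input.

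For the second clause, observe that the compatibility predicates of $U(I)$ strictly extend those of $I$: $\CanSend$ and $\CanReceive$ grow by $\bowtie$, every $I$-name sort lies in $\Restrictionsorts = \namesorts$, and $\vars$ on $I$-patterns coincides with its $I$-counterpart. A routine structural induction on $P$ then shows that every obligation from Definition~\ref{def:sorted-psi-well-formedness} discharged in $I$ is also discharged in $U(I)$; the unguarded-assertion conditions on replication and case are purely structural and thus unaffected. The main obstacle will be the compositionality of $\ftimes$ w.r.t.~$\simeq$, which rests on a careful audit of the extended entailment to confirm that $\NOK$ truly separates $\bot$ from every $\Psi \in \assertions_I$; a secondary subtlety is fixing $\vars$ on $(\bot, A)$ so that the substitution clause for error patterns satisfies both alpha-renaming and variable preservation.
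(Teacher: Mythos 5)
Your proposal is correct and follows the only available route: the paper's own proof is literally the single sentence ``A straight-forward application of the definitions,'' and your write-up is a faithful, detailed unpacking of exactly that verification, correctly identifying the two genuine subtleties (the role of $\NOK$ in preserving compositionality of $\ftimes$, and the need to fix $\vars$ on the error patterns $(\bot,A)$ so that variable preservation survives ill-sorted substitutions) that the paper only motivates via Examples~\ref{ex:extracondition} and~\ref{ex:differenterrors}.
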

\proof A straight-forward application of the definitions. \qed

The addition of $\NOK$ is in order to ensure the compositionality of $\ftimes$.
\begin{exa}
\label{ex:extracondition}
Let $\assertions=\{1,0\}$ and $\conditions=\{\varphi\}$ such that 
${\vdash}=\{(1,\varphi)\}$ and $1\otimes0=1$. 
Now add an assertion $\bot$ such that $1\otimes\bot = \bot$, and keep $\vdash$ unchanged.
Compositionality no longer holds, since $0\simeq\bot$, but $1\otimes0=1\not\simeq\bot=1\otimes\bot$.
\end{exa}

No variables can bind into equivariant patterns, 
so we need different error patterns with different support 
  to ensure the preservation of pattern variables under substitution.
\begin{exa}
\label{ex:differenterrors}
Assume that the pattern $X$ is equivariant. Then $\vars(X)\subseteq\{\emptyset\}$.
\end{exa}

Processes in $I$ have the same transitions in $U(I)$.
\begin{lem}
  \label{lemma:sametransitions}
  If $P$ is well-formed in $I$ and $\Psi\neq\bot$, 
  then $\Psi\frames \trans{P}{\alpha}{P'}$ in $U(I)$ iff
  $\Psi\frames \trans{P}{\alpha}{P'}$ in $I$.
\end{lem}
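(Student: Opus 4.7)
The plan is to prove both directions of the biconditional by structural induction on the transition derivation. Since $U(I)$ is built by adding error elements and one extra condition to $I$, the forward direction should be routine, while the backward direction requires an invariance argument ensuring that derivations of well-formed $I$-processes never touch the newly introduced error behaviour.

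For the forward direction, I would verify that every operational rule in $I$ has a directly corresponding rule in $U(I)$ with the same conclusion. The parameters of $U(I)$ are defined so that the relations $\ftimes$, $\sch$, $\vdash$, $\Match$ and substitution agree with those of $I$ on $I$-valued arguments (using that substitutions arising during the derivation are well-sorted in $I$ by well-formedness and Theorem~\ref{thm:wfPres}). Hence each premise of an $I$-rule lifts immediately to the corresponding premise in $U(I)$, and the same derivation tree is valid in $U(I)$.

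For the backward direction, the key is the invariant that every term, pattern, condition, assertion, and frame appearing in the $U(I)$-derivation of $\Psi \frames \trans{P}{\alpha}{P'}$ belongs to the corresponding component of $I$. This invariant follows by induction on the derivation. In the \textsc{In} case, if $\Psi \vdash M \sch K$ holds in $U(I)$ with $\Psi \in \assertions_I \setminus \Set{\bot}$ and $M \in \trms_I$, then $M \sch K$ cannot be $\bot$ (because $\bot$ as a condition is only entailed by $\bot$ as an assertion), so $K \in \trms_I$; moreover $\Match(N,\ve{y},X)$ is non-empty only when $N \neq \bot$ and $X$ is not of the form $(\bot,S)$, and in that case agrees with $\Match_I$, so the matched $\ve{L}$ lie in $\trms_I^*$ and $P\lsubst{\ve{L}}{\ve{y}}$ stays a well-formed $I$-process. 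In the \textsc{Com} and \textsc{Par} cases, the frame of a well-formed $I$-process consists of $I$-assertions combined via $\ftimes$ and $\emptyframe$, so by the definition of $\ftimes$ in $U(I)$ such compositions never yield $\bot$; thus the environmental assertion used in sub-derivations is again in $\assertions_I \setminus \Set{\bot}$ and the induction hypothesis applies. The remaining rules introduce no new terms or assertions and are immediate.

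The main obstacle will be a clean statement and inductive proof of this invariant, in particular tracking that frame composition along a derivation never drifts into $\bot$, since a single $\bot$-contaminated environment would make $\NOK$ derivable and potentially break the structure. Once the invariant is established, every rule application in the $U(I)$-derivation is simultaneously a valid rule application in $I$ with the same conclusion, and the $I$-derivation is obtained by reinterpreting the same tree; this yields $\Psi \frames \trans{P}{\alpha}{P'}$ in $I$ and completes the proof.
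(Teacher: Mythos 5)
Your proposal is correct and follows essentially the same route as the paper: induction on the transition derivation, using the facts that $\Match$, $\vdash$ and $\sch$ of $U(I)$ agree with those of $I$ on $I$-valued arguments and that well-sorted $I$-substitutions have the same effect in $U(I)$. The invariant you make explicit for the backward direction (that frames of well-formed $I$-processes never compose to $\bot$, so the environment stays in $\assertions_I$) is exactly what the paper's terse proof leaves implicit.
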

\proof
  By induction on the derivation of the transitions.
  The cases \textsc{In}, \textsc{Out}, \textsc{Case} and \textsc{Com} use the fact that 
  $\Match$, $\vdash$ and $\sch$ are the same in $I$ and $U(I)$, 
  and that substitutions in $I$ 
  have the same effect when considered as substitutions in $U(I)$.
\qed

Bisimulation in $U(I)$ coincides with bisimulation in $I$ for processes in $I$.
\begin{lem}\label{lem:correspondence}
  Assume that %
  $P$ and $Q$ are well-formed processes in $I$. Then
  $P\bisim_\Psi Q$ in $I$ iff $P\bisim_\Psi Q$ in $U(I)$, 
  and $P\wbisim_\Psi Q$ in $I$ iff $P\wbisim_\Psi Q$ in $U(I)$.
\end{lem}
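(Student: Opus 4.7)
\proof[Proof plan]
The plan is to prove each of the four implications by an explicit candidate bisimulation argument, leveraging Lemma~\ref{lemma:sametransitions} as the key bridge.

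For the forward direction of the strong-bisimilarity claim, I would take as candidate the relation
\[
\mathcal{R} \;=\; \{(\Psi,P,Q) : \Psi \in \assertions_I,\ P,Q \text{ well-formed in } I,\ P \bisim_\Psi Q \text{ in } I\}
\;\cup\; \{(\bot,P,Q) : P,Q \text{ well-formed in } I\}
\]
and verify the four clauses of Definition~\ref{def:strongbisim} in $U(I)$. Static equivalence in the first summand follows from static equivalence in $I$ together with the observation that an $I$-frame entails none of the new conditions $\bot,\NOK$; in the $\bot$-summand, both composed frames are equivalent to $\bot$ (since $\Psi \ftimes \bot = \bot$). Extension by any $U(I)$-assertion either stays in $\assertions_I$ (where the $I$-extension clause applies) or yields $\bot$ (moving into the second summand), and once in the second summand any extension is absorbed. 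Simulation in the first summand uses Lemma~\ref{lemma:sametransitions} in both directions: a $U(I)$-transition of $P$ is an $I$-transition, $Q$ matches it in $I$, and the matching transition lifts back to $U(I)$; preservation of well-formedness (Theorem~\ref{thm:wfPres}) ensures the derivatives stay in $\mathcal{R}$.

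For the backward direction of strong bisimilarity, I would restrict any given $U(I)$-bisimulation to triples $(\Psi,P,Q)$ with $\Psi \in \assertions_I$ and $P,Q$ well-formed in $I$, and verify this is an $I$-bisimulation. Static equivalence restricts from $\conditions_{U(I)}$ to $\conditions_I$. Extension remains inside $\assertions_I$ since $\ftimes$ on $I$-assertions agrees in $I$ and $U(I)$. Simulation again goes through Lemma~\ref{lemma:sametransitions}: an $I$-transition of $P$ is a $U(I)$-transition, which is matched in $U(I)$, and by Lemma~\ref{lemma:sametransitions} again the matching transition exists in $I$. The weak-bisimilarity claim is proved by the same two candidate constructions, with the weak-simulation and static-implication clauses replacing their strong counterparts; here too the new conditions $\bot,\NOK$ do not alter static implication on $I$-frames.

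The main subtlety is that the candidate relation for the forward direction must be closed under extension by $\bot$, forcing us to include the triples $(\bot,P,Q)$ and to verify the simulation clause for them. The key observation that makes this trivially work is that \emph{no well-formed $I$-process has any transition under the environment $\bot$}: every operational rule that introduces a transition at a leaf (\textsc{In}, \textsc{Out}, \textsc{Case}, and hence \textsc{Com}) requires the environment to entail some condition in $\conditions_I$ (a channel-equivalence condition $M \sch K$ or a case-branch condition $\varphi_i$), whereas $\bot$ entails only $\NOK \notin \conditions_I$, and the inductive rules \textsc{Par}, \textsc{Scope}, \textsc{Open}, \textsc{Rep} propagate from premises that must themselves derive a transition under $\bot \ftimes \frass{R}=\bot$. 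A short induction on transition derivations (using Lemma~\ref{lemma:sametransitions}'s underlying argument) formalises this and discharges the simulation obligation in the $\bot$-summand vacuously. This vacuous-simulation observation is where I expect the only real work to lie; everything else is bookkeeping on the definitions. \qed
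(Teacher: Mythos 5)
Your proposal is correct and follows essentially the same route as the paper's proof: both directions hinge on Lemma~\ref{lemma:sametransitions}, the backward direction restricts a given $U(I)$-bisimulation to non-$\bot$ assertions and well-formed $I$-processes, and the forward direction augments the candidate with $(\bot,P,Q)$ triples, discharged by the observation that no transition can be derived under the environment $\bot$ since it entails no channel-equivalence or case conditions. The only immaterial difference is that you use bisimilarity itself as the candidate where the paper works with an arbitrary bisimulation.
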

\proof
We show only the proof for the strong case; the weak case is similar.
Let $\mathcal{R}$ be a bisimulation in $U(I)$. Then $\Set{(\Psi,P',Q')
  \in \mathcal{R}\;:\;\Psi\neq\bot \wedge P',Q' \text{~well-formed in~}I}$ is a bisimulation in $I$: 
the proof is by coinduction, using Lemma~\ref{lemma:sametransitions} and Theorem~\ref{thm:wfPres} in the simulation case.

Symmetrically, let $\mathcal{R}'$ be a bisimulation in $I$, and let
$\mathcal{R}_{\bot}'=\Set{(\bot,P,Q)\;:\;\exists \Psi. (\Psi,P,Q)\in \mathcal{R}'}$. 
Then $\mathcal{R}' \cup \mathcal{R}_{\bot}'$ is a bisimulation in $U(I)$:
simulation steps from $\mathcal{R}'$ lead back to $\mathcal{R}'$ by Lemma~\ref{lemma:sametransitions}. 
From $\mathcal{R}_{\bot}'$ there are no transitions, since $\bot$ entails no channel equivalence clauses. 
The other parts of Definition~\ref{def:bisim} are straightforward; 
when applying clause 3 with $\Psi'=\bot$ the resulting triple is in $\mathcal{R}_{\bot}'$.
\qed

With Lemma \ref{lem:correspondence}, we can lift the structural congruence results for trivially name-sorted psi-calculi to arbitrary sorted calculi:

\begin{thm}
  \label{thm:sortstructcong}
    For all sorted psi-calculi, 
    $\bisim_\Psi$, $\sim_\Psi$, $\wbisim_\Psi$ and $\approx_\Psi$ are complete wrt. structural congruence for all $\Psi$.
\end{thm}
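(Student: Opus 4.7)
The plan is to transfer the structural congruence laws through the reduction $I \mapsto U(I)$ that was set up just before the statement, combining the trivially-name-sorted result from Section~\ref{sec:triv-name-sort} with the correspondence of Lemma~\ref{lem:correspondence}. Fix a sorted psi-calculus $I$ and a pair $P \equiv Q$ of well-formed $I$-processes.

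First, I would observe that the axioms listed in Definition~\ref{def:structcong}, together with the congruence and equivalence rules generating $\equiv$, are purely syntactic and make no reference to the sort system: each clause simply requires freshness conditions on names, never on sorts. Hence the very same derivation of $P \equiv Q$ in $I$ is a derivation in $U(I)$, using the fact that any well-formed $I$-process is a well-formed $U(I)$-process (noted right after the definition of $U(I)$). Thus $P \equiv Q$ holds in $U(I)$.

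Next I would invoke the trivially-name-sorted theorem on $U(I)$. Since $U(I)$ is trivially name-sorted by construction, all four relations $\bisim_\Psi$, $\sim_\Psi$, $\wbisim_\Psi$, $\approx_\Psi$ are complete with respect to structural congruence in $U(I)$, so $P \bisim_\Psi Q$ (and the corresponding weak/congruence statements) hold in $U(I)$ for every $\Psi$. For $\Psi \in \assertions_I$ we have $\Psi \neq \bot$, and both $P$, $Q$ remain well-formed, so Lemma~\ref{lem:correspondence} transfers $\bisim_\Psi$ and $\wbisim_\Psi$ straight back to $I$.

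For $\sim_\Psi$ and $\approx_\Psi$ a small extra step is needed because their definitions quantify over substitution sequences. I would check that every substitution $\sigma$ of $I$ is also a substitution of $U(I)$: the sorting side-condition $\sort{a_i}\CanSubstitute_I\sort{N_i}$ lifts by the definition of $\CanSubstitute$ in $U(I)$, and no $N_i$ is $\bot$, so the first clause of the $U(I)$-substitution definition applies and yields exactly the $I$-result on $I$-terms. Therefore $P \sim_\Psi Q$ in $U(I)$ specialises to $P\ve\sigma \bisim_\Psi Q\ve\sigma$ in $U(I)$ for every $I$-substitution sequence $\ve\sigma$; Theorem~\ref{thm:wfPres} keeps the instances well-formed, so Lemma~\ref{lem:correspondence} yields $P\ve\sigma \bisim_\Psi Q\ve\sigma$ in $I$, and thus $P \sim_\Psi Q$ in $I$. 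The argument for $\approx_\Psi$ is identical, using the weak half of Lemma~\ref{lem:correspondence}.

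The only genuinely delicate point, and the place where I would spend proof effort, is the coincidence of substitution: I would state and verify explicitly that $T\lsubst{\ve{N}}{\ve{a}}_I = T\lsubst{\ve{N}}{\ve{a}}_{U(I)}$ whenever $T$ is an $I$-term (or pattern) and $\ve{N}$ are $I$-terms, and then check that this propagates through agent syntax so that $P\sigma$ computed in either calculus is the same well-formed $I$-process. Once that bookkeeping is done, the theorem reduces to a direct citation of the trivially-name-sorted theorem together with Lemma~\ref{lem:correspondence}, with no further case analysis.
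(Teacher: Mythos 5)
Your proposal follows essentially the same route as the paper: reduce to $U(I)$, invoke the completeness result for trivially name-sorted calculi, use Theorem~\ref{thm:wfPres} to keep substitution instances well-formed, and transfer back to $I$ via Lemma~\ref{lem:correspondence}; the paper additionally spells out (for the representative case of commutativity of $\pll$) exactly the substitution-closure bookkeeping you identify as the delicate point. The one place your argument does not quite close as stated is $\approx_\Psi$: you propose to conclude via ``the weak half of Lemma~\ref{lem:correspondence}'', but that lemma transfers weak bisimulations, whereas $\approx_\Psi$ (Definition~\ref{def:weakcongruence}) demands a weak \emph{$\tau$-}bisimulation for each substitution instance, which is strictly stronger than $\wbisim_\Psi$. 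The paper avoids this by deriving $\approx_\Psi$ from the already-established $\sim_\Psi$ via Corollary~\ref{cor:comp-cong} (every strong bisimulation is a weak $\tau$-bisimulation); with that one substitution of citations your proof is complete.
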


\proof 
  Fix a sorted psi-calculus $I$. 
  For strong and weak bisimilarity, we show only the proof for commutativity of the parallel operator. 
  The other cases are analogous.  

  Let $P$ and $Q$ be well-formed in $I$ and $\Psi\neq\bot$. 
    By Theorem~\ref{thm:allresults}, $P \pll Q \sim_\Psi Q \pll P$ holds in $U(I)$.
    By Definition \ref{def:bisim}, $(P \pll Q)\ve\sigma \bisim_\Psi (Q \pll P)\ve\sigma$ in $U(I)$ for all $\ve\sigma$. 
    By Theorem~\ref{thm:wfPres}, when $\ve\sigma$ is well-sorted then $(P \pll Q)\ve\sigma$ and $(Q \pll P)\ve\sigma$ are well-formed. 
    By Lemma~\ref{lem:correspondence}, $(P \pll Q)\ve\sigma \bisim_\Psi (Q \pll P)\ve\sigma$ in $I$ for all well-sorted $\ve\sigma$. 
    $P \pll Q \sim_\Psi Q \pll P$ in $I$ follows by definition.
    $P \pll Q \approx_\Psi Q \pll P$ in $I$ follows by Corollary~\ref{cor:comp-cong}.
\qed
Using Lemma~\ref{lem:correspondence},
we can also lift the congruence properties of strong and weak
bisimilarity. 
\begin{thm}
\label{thm:congruence-bisim}  
In all sorted psi-calculi, $\bisim$ is a \textsc{CCase}-pseudo-congruence and
$\wbisim$ is a pseudo-congruence.
\end{thm}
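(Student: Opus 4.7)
The plan follows the template of Theorem~\ref{thm:sortstructcong}: for each clause of \textsc{CCase}-pseudo-congruence (for $\bisim$) and pseudo-congruence (for $\wbisim$), I would lift the corresponding property from the trivially name-sorted calculus $U(I)$ via Lemma~\ref{lem:correspondence} and Theorem~\ref{thm:allresults}. Fix a sorted psi-calculus $I$.

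For the clauses whose hypotheses do not involve substitution---namely \textsc{CPar}, \textsc{CRes}, \textsc{CBang}, \textsc{CCase} (only needed for $\bisim$), and \textsc{COut}, as well as reflexivity, symmetry and transitivity---the lifting is uniform. Given the hypothesis in $I$ on processes well-formed in $I$, first apply Lemma~\ref{lem:correspondence} to transfer the hypothesis into $U(I)$; invoke Theorem~\ref{thm:allresults} to obtain the conclusion in $U(I)$; and, noting that the composite processes (such as $P \pll R$, $(\nu a)P$, $!P$) remain well-formed in $I$ by Theorem~\ref{thm:wfPres}, apply Lemma~\ref{lem:correspondence} a second time to transfer the conclusion back to $I$. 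The weak cases follow the same template with $\wbisim$ in place of $\bisim$.

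The main obstacle lies in the \textsc{CIn-2} clause, because its hypothesis universally quantifies over substitutions $\lsubst{\ve{L}}{\ve{x}}$ valid in $I$, whereas the analogous property in $U(I)$ would quantify over the strictly larger set of substitutions valid in $U(I)$. For $\ve{L}$ valid in $U(I)$ but ill-sorted in $I$, the $U(I)$ substitution collapses every affected term, condition and assertion to $\bot$ (and every pattern to an error pattern), so $P\lsubst{\ve{L}}{\ve{x}}$ and $Q\lsubst{\ve{L}}{\ve{x}}$ can fail to be bisimilar in $U(I)$---for example, when $P$ has an unguarded assertion but $Q$ does not, their frames become statically inequivalent. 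The $U(I)$-lifting therefore does not apply here.

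To circumvent this, I would verify \textsc{CIn-2} by a direct bisimulation argument in $I$. Consider
\[
\mathcal{R} \;=\; {\bisim} \,\cup\, \Set{\,(\Psi,\ \lin{M}{\ve{x}}{X} \sdot P,\ \lin{M}{\ve{x}}{X} \sdot Q)\ :\ \forall \ve{L}.\ P\lsubst{\ve{L}}{\ve{x}} \bisim_\Psi Q\lsubst{\ve{L}}{\ve{x}}\,}
\]
closed under symmetry. Static equivalence holds since both input-prefixed agents have frame $\emptyframe$; closure under extension with any $\Psi'$ is inherited from the extension clause of the witnessing bisimulations; and any transition from $\lin{M}{\ve{x}}{X} \sdot P$ must be derived by rule \textsc{In}, producing $P\lsubst{\ve{L}}{\ve{x}}$ with $\ve{L} \in \Match(N,\ve{x},X)$, which by Definition~\ref{def:pattern-match} is an $I$-valid substitution, so the hypothesis applies and $\lin{M}{\ve{x}}{X} \sdot Q$ matches via the same rule. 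The weak case is analogous: an input prefix has no $\tau$-transitions, hence weak static implication is trivially satisfied and weak simulation of an \textsc{In}-transition reduces to a single strong matching step.
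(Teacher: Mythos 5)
Your proof is correct, and for most clauses it coincides with the paper's own argument: the paper likewise fixes $I$, transfers the hypothesis into $U(I)$ via Lemma~\ref{lem:correspondence}, invokes Theorem~\ref{thm:allresults}, and transfers the conclusion back --- although it writes out only the \textsc{CPar} case and declares the remaining clauses ``analogous''. Where you genuinely diverge is on \textsc{CIn-2}, and your divergence is well motivated. The hypothesis of \textsc{CIn-2}, read in $U(I)$, quantifies over all $U(I)$-substitutions, which in general include substitutions that are ill-sorted in $I$ and hence collapse terms, conditions and assertions to $\bot$; the $I$-hypothesis gives no control over these instances (your $\pass{\emptyframe}$ versus $\nil$ example is exactly right: an ill-sorted instance yields $\pass{\bot}\not\bisim\nil$ because only $\bot$ entails $\NOK$). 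This is precisely the phenomenon the paper itself appeals to when explaining why Theorem~\ref{thm:congruence-congruence} requires a separate hand proof, so the blanket ``analogous'' in the published proof is too quick for the input clause. Your direct candidate-relation argument repairs this cleanly: input prefixes have frame $\emptyframe$, their only transitions arise from rule \textsc{In}, and Definition~\ref{def:pattern-match} guarantees that $\Match$ produces only $I$-well-sorted $\ve{L}$, so the $I$-hypothesis is all that is ever needed; the extension clause is inherited from the witnessing bisimulations, and the weak case degenerates to a single strong matching step. In short, the paper buys brevity at the cost of a clause its lifting does not actually reach, while you buy completeness at the cost of one extra (routine) direct bisimulation proof --- arguably the treatment the paper should have included.
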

\proof
  Fix a sorted psi-calculus $I$. 
  We show only the proof that $\bisim$ is a congruence with respect to parallel operator, the other cases are analogous.  

  Assume $P \bisim_\Psi Q$ holds in $I$. 
    By Lemma~\ref{lem:correspondence}, $P \bisim_\Psi Q$ holds in $U(I)$. 
    Theorem~\ref{thm:allresults} thus yields $P \pll R \bisim_\Psi Q \pll R$ in $U(I)$, 
    and Lemma~\ref{lem:correspondence} yields the same in $I$.  
\qed

Unfortunately, the approach of Theorems~\ref{thm:sortstructcong} and
\ref{thm:congruence-bisim} does not work for proving congruence properties for $\sim$ or $\approx$,
since the closure of bisimilarity under well-sorted substitutions does not imply its closure under ill-sorted substitutions:
consider a sorted psi-calculus $I$ such that $\nil \sim
\pass{\emptyframe}$. Here $\emptyframe\sigma = \bot$ if $\sigma$ is ill-sorted, but $\nil \bisim \pass{\bot}$ does not hold since only~$\bot$ entails $\NOK$.
We have instead performed a direct hand proof.
\begin{thm}
\label{thm:congruence-congruence}  
In all sorted psi-calculi, $\sim$ is a congruence and $\approx$ is a congruence.
\end{thm}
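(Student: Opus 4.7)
The plan is to prove each congruence rule of Definition~\ref{defi:congr} for $\sim$ and $\approx$ by combining the pseudo-congruence results of Theorem~\ref{thm:congruence-bisim} with the substitution closure built into their definitions. The $U(I)$ translation cannot be used here: as the preceding discussion shows, closure under well-sorted substitutions in $I$ does not coincide with closure under all substitutions in $U(I)$. The proof is therefore carried out directly in $I$.

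For $\sim$, the rules \textsc{CPar}, \textsc{CRes}, \textsc{CBang}, \textsc{CCase}, and \textsc{COut} reduce immediately to the \textsc{CCase}-pseudo-congruence of $\bisim$. Given any well-sorted substitution sequence $\ve\sigma$, one alpha-renames bound names of the constructor fresh for $\ve\sigma$, pushes $\ve\sigma$ inside, and applies the relevant rule to the substituted arguments, which are $\bisim_\Psi$-related by the definition of $\sim_\Psi$. The \textsc{CIn} case is the crux: given $P \sim_\Psi Q$ and $\ve\sigma$ with $\ve{x}\freshin\ve\sigma$, one must show $\lin{M\ve\sigma}{\ve{x}}X\ve\sigma \sdot P\ve\sigma \bisim_\Psi \lin{M\ve\sigma}{\ve{x}}X\ve\sigma \sdot Q\ve\sigma$. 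Rule \textsc{CIn-2} reduces this to showing $(P\ve\sigma)\lsubst{\ve{L}}{\ve{x}} \bisim_\Psi (Q\ve\sigma)\lsubst{\ve{L}}{\ve{x}}$ for every well-sorted $\ve{L}$, which follows from $P \sim_\Psi Q$ instantiated with the extended sequence $\ve\sigma, \lsubst{\ve{L}}{\ve{x}}$. Equivalence properties of $\sim_\Psi$ are inherited pointwise from $\bisim_\Psi$.

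For $\approx$, the rules \textsc{CPar}, \textsc{CRes}, \textsc{CBang}, \textsc{COut}, and \textsc{CIn} follow by the analogous argument, using the pseudo-congruence of $\wbisim$ in place of $\bisim$. The obstacle is \textsc{CCase}, which is absent from the pseudo-congruence of $\wbisim$ because plain weak bisimilarity classically fails to be compositional with respect to choice. Here one must exploit the fact that $\approx$ is the substitution closure of weak-$\tau$-bisimilarity rather than of weak bisimilarity: the strict-$\tau$ clause~(\ref{item:wsimtau}${}'$) of Definition~\ref{def:weakcongruence} supplies the rootedness needed to handle case contexts. Assuming pointwise $P_i \approx_\Psi Q_i$, we construct a candidate weak-$\tau$-bisimulation relating the substituted case statements, with reducts related by the appropriate underlying weak-$\tau$-bisimulations; verification proceeds by case analysis on transitions, using that any transition of a case statement arises from one \textsc{Case} rule application committing to some branch.

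The principal technical difficulty lies in this \textsc{CCase} case for $\approx$, since it does not reduce to an already-established rule but requires building a fresh weak-$\tau$-bisimulation. The construction follows the standard treatment of rooted weak bisimilarity for psi-calculi of Bengtson et al.~\cite{LICS10.WeakPsi}, but one must verify carefully that the strict-$\tau$ condition survives matching across the case context, and that sorts and assertion compositions align throughout.
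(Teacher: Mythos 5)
Your proposal is correct and follows essentially the same route as the paper: the paper's proof states that the argument is identical, line by line, to the proofs for trivially name-sorted psi-calculi, invoking Theorem~\ref{thm:congruence-bisim} in every case, and you correctly identify both that the $U(I)$ translation is unavailable here and that the pseudo-congruence results are the key ingredient. Your elaborations --- deriving \textsc{CIn} for $\sim$ from \textsc{CIn-2} by extending the substitution sequence, and handling \textsc{CCase} for $\approx$ via a rooted weak-$\tau$-bisimulation in the style of Bengtson et al.~\cite{LICS10.WeakPsi} --- are exactly the content of those line-by-line proofs that the paper leaves implicit.
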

\proof
The proofs are identical, line by line, to the proofs for trivially name-sorted psi-calculi.
Theorem~\ref{thm:congruence-bisim} is used in every case.
\qed

\section{Representing Standard Process Calculi}\label{sec:process-calculi-examples}
We here consider psi-calculi corresponding to some variants of popular process calculi. 
One main point of our work is that we can represent other calculi directly as psi-calculi, 
 without elaborate coding schemes. 
In the original psi-calculi we could in this way directly represent the monadic pi-calculus, 
  but for the other calculi presented below 
    a corresponding unsorted psi-calculus would contain terms with no counterpart in the represented calculus, 
      as explained in Section~\ref{sec:sorting-intro}. 
We establish that 
our formulations enjoy a strong operational correspondence with the original calculus,
under trivial mappings that merely specialise the original concrete syntax (e.g., 
the pi-calculus prefix $a(x)$ maps to $\lin{a}{{x}}{x}$ in psi).

Because of the simplicity of the mapping and the strength of the correspondence
we say that psi-calculi \emph{represent} other process calculi, 
in contrast to \emph{encoding} them.
A representation is significantly stronger than standard correspondences, 
such as the approach to encodability proposed by Gorla~\cite{Gorla:encoding}. 
Gorla's criteria aim to capture the property that one language can encode the behaviour of another using some (possibly elaborate) protocol, 
while our criteria aim to capture the property that a language for all practical purposes is a sub-language of another. 
\begin{defi}
  A \emph{context} $C$ of arity $k$ is a psi-calculus process term with $k$
  occurrences of $\nil$ replaced by a hole $[]$.  
  We consider contexts as raw terms, i.e., no name occurrences are binding. 
  The instantiation $C[P_{1},\dots,P_{k}]$ of a context $C$ of arity
  $k$ is the psi-calculus process resulting from the replacement of
  the leftmost occurrence of $[]$ with $P_{1}$, the second leftmost
  occurrence of $[]$ with $P_{2}$, and so on.

  A psi-calculus is a \emph{representation} of a process calculus with processes
  $P\in\mathcal{P}$ and labelled transition system 
  ${\to}\subseteq\mathcal{P}\times\mathcal{A}\times\mathcal{P}$,
  if there exist
    an equivariant map $\semb\cdot$ from $\mathcal{P}$ to psi-calculus processes
    and an equivariant relation ${\approxeq}$ between $\mathcal{A}$ and psi-calculus actions 
  such that
  \begin{enumerate}
  \item $\semb\cdot$ is a simple homomorphism, i.e., for each process
    constructor $f$ of $\mathcal{P}$ there is an equivariant psi-calculus context
    $C$ such that $\semb{f(P_1,\dots,P_n)}=C[\semb{P_1},\dots,\semb{P_n}]$.
  \item $\semb\cdot$ is a strong operational correspondence 
    (modulo structural equivalence), i.e.,
    \begin{enumerate}
    \item whenever $\trans P\beta P'$ then there exist $\alpha,Q$ such that $\trans{\semb{P}}\alpha{Q}$
      and $\semb{P'} \equiv Q$ and $\beta\approxeq\alpha$; and
    \item whenever $\trans{\semb P}\alpha{Q}$ then there exist
      $\beta,P'$ such that $\trans P\beta P'$
      and $\semb{P'} \equiv Q$ and $\beta\approxeq\alpha$.
    \end{enumerate}
  \end{enumerate}
  A representation is \emph{complete} if it additionally satisfies
  \begin{enumerate}
  \item[(3)] $\semb\cdot$ is surjective modulo strong bisimulation congruence, i.e.,
    for each psi process $P$ there is $Q\in\mathcal{P}$ such that $P\sim\semb Q$.
  \end{enumerate}
\end{defi}

Any representation is a valid encoding in the sense of Gorla, but the
converse is not necessarily true.
\begin{itemize}
\item In Gorla's approach, the contexts that process constructors are translated to may fix certain names, 
or translate one name into several names, in accordance with a renaming policy.
We require equivariance, which admits no such special treatment of names.
\item Gorla uses three criteria for semantic correspondence:
weak operational correspondence modulo some equivalence for silent transitions,
that the translation does not introduce divergence,
and that reducibility to a success process in the source and target processes coincides.
Clearly strong operational correspondence modulo structural equivalence implies all of these criteria.
\end{itemize}

\noindent Our use of structural equivalence in the operational correspondence
allows to admit representations of calculi that use a structural
congruence rule to define a labelled semantics (cf.~Section~\ref{sec:polySynchPi}).

\medskip

Below, we use the standard notion of simultaneous substitution.
Since the calculi we represent do not use environments, 
we let the assertions be the singleton $\{\emptyframe\}$ in all examples, 
with $\emptyframe\vdash\top$ and $\emptyframe \not\vdash \bot$.
Proofs of lemmas and theorems can be found in Appendix~\ref{sec:examples-proofs}.

\subsection{Unsorted Polyadic pi-calculus}\label{sec:polyPi}

In the polyadic pi-calculus~\cite{milner:polyadic-tutorial} the only values that can be transmitted between agents are tuples of names. Tuples cannot be nested.
The processes are defined as follows.
\[
\boxed{
\begin{array}{rcl}
P, Q &\BnfDef&
\nil \BnfOr
x(\vec{y}).P \BnfOr
\overline{x}\langle \vec{y} \rangle.P \BnfOr
[a=b]P \BnfOr
\nu x\, P \BnfOr
!P \BnfOr
P \pll Q \BnfOr
P + Q 
\end{array}
}
\]
  An input binds a tuple of distinct names and can only communicate with an output of equal length, resulting in a simultaneous substitution of all names. In the unsorted polyadic pi-calculus there are no further requirements on agents, in particular $a(x).P \parop \overline{a}\vect{ y,z}.Q$ is a valid agent. This agent has no communication action since the lengths of the tuples mismatch.

We now present the psi-calculus \textbf{PPI}, 
which we will show represents the polyadic pi-calculus.
\instanceTwo{PPI}{
\trms=\nameset\cup\{\vect{\ve{a}}:\ve{a}\in\nameset^*\} \\
\conditions=\{\top \} \cup \{a=b\mid a,b\in \nameset\} \\
\pats=\{\vect{\ve{a}}:\ve{a}\in\nameset^* \wedge \ve{a} \text{ distinct}\} \\
{\sch} = \mbox{identity on names} \\
\emptyframe \vdash a=a \\
\vars(\vect{\ve{a}})=\Set{\ve{a}}\\
\Match(\vect{\ve{a}},\ve{x},\vect{\ve{y}})=\Set{\ve{c}} \text{~if~}\{\ve{x}\}=\{\ve{y}\}\text{ and }\vect{\ve{y}}\lsubst{\ve{c}}{\ve{x}}=\vect{\ve{a}}\\
\Match(M,\ve{x},\vect{\ve{y}})=\emptyset\text{~otherwise}
}{%
\sortset=\Set{\kw{chan}, \kw{tup}}\\
\namesorts=\Set{\kw{chan}} \\
\sort{a} = \kw{chan} \\
\sort{\vect{\ve{a}}} = \kw{tup} \\
\Restrictionsorts =\{\kw{chan}\} \\
{\CanSubstitute} = \{(\kw{chan}, \kw{chan})\}\\
{\CanSend}=    {\CanReceive} =\Set{(\kw{chan},\kw{tup})}\quad
}
This being our first substantial example, we give a detailed explanation of the new instance parameters.
Patterns $\pats$ are finite vectors of distinct names.
The sorts $\sortset$ are $\kw{chan}$ for channels and $\kw{tup}$ for
tuples (of names); 
the only sort of names $\namesorts$ is channels, as is the sort of restricted names. 
The only sort of substitutions ($\CanSubstitute$) are channels for channels; 
the only sort of sending ($\CanSend$) and receiving ($\CanReceive$) is tuples over channels.
In an input prefix all names in the tuple must be bound ($\vars$) and 
a vector of names $\ve{a}$ matches a pattern $\ve{y}$ if the lengths
match and all names in the pattern are bound (in some arbitrary
order).
 
As an example the agent $\lin{a}{x,y}{\vect{ x,y}}\sdot \out{a}{\vect{ y}}\sdot\nil$ is well-formed, since $\kw{chan}\CanReceive \kw{tup}$ and $\kw{chan}\CanSend \kw{tup}$, with $\vars(\vect{ x,y})=\Set{\Set{x,y}}$. 
This demonstrates that \textbf{PPI} disallows anomalies such as nested tuples but does not enforce a sorting discipline to guarantee that names communicate tuples of the same length.

To prove that \textbf{PPI} is a psi-calculus, 
we need to check the requisites on the parameters (data types and operations) defined above. 
Clearly the parameters are all equivariant, 
  since no names appear free in their definitions.
For the original psi-calculus parameters (Definition~\ref{def:parameters1}), 
  the requisites are symmetry and transitivity of channel equivalence, 
    which hold because of the same properties of (entailment of) name equality, 
  and abelian monoid laws and compositionality for assertion composition, 
    which trivially hold since $\assertions=\{\mathbf{1}\}$. 
The standard notion of simultaneous substitution of names for names preserves sorts, 
  and also satisfies the other requirements of Definition~\ref{def:subst}.
To check the requisites on pattern matching (Definition~\ref{def:pattern-match}), 
  it is easy to see that $\Match$ generates only well-sorted substitutions (of names for names), 
  and that $\n(\ve{b})=\n(\vect{\ve{a}})$ whenever $\ve{b}\in\Match(\vect{\ve{a}},\ve{x},\vect{\ve{y}})$
  Finally, for all name swappings $(\ve x\ \ve y)$ 
        we have $\Match(\vect{\ve{a}},\ve{x},\vect{\ve{z}})=
        \Match(\vect{\ve{a}},\ve{y},(\ve x\ \ve y)\cdot \vect{\ve{z}})$.

\textbf{PPI} is a representation of the polyadic pi-calculus as presented by Sangiorgi~\cite{sangiorgi:expressing-mobility}
(with replication instead of process constants).
\begin{defi}[Polyadic Pi-Calculus to \textbf{PPI}]$ $\\
Let $\llbracket \cdot \rrbracket$ be the function that maps the polyadic pi-calculus to
\textbf{PPI} processes as follows. 
The function $\semb\cdot$ is homomorphic for $\nil$, restriction, replication and parallel composition, and is otherwise defined as follows:
\[
\begin{array}{rcl}
\semb{P + Q} &=& \caseonly\;\ci{\top}{\semb{P}}\casesep\ci{\top}{\semb{Q}} \\
\semb{[x=y]P} &=& \caseonly\;\ci{x = y}{\semb{P}} \\
\semb{x(\vec{y}).P} &=& \lin{x}{\ve{y}}\vect{\ve{y}}.\semb{P} \\
\semb{\overline{x}\vect{ \vec{y} }.P} &=& \overline{x}\vect{ \vec{y} }.\semb{P}\\
\end{array}
\]
Similarly, we also translate the actions of polyadic pi-calculus. 
Here each action corresponds to a set of psi actions, since in a pi-calculus output label 
``the order of the bound names is immaterial''~\cite[p.~129]{sangiorgi.walker:theory-mobile},
which is not the case in psi-calculi.
\[\begin{array}{rcl}
\semb{(\nu \vec{y})\overline{x}\vect{\vec{z}}} &=&
                                                   \Set{\boutlabel{x}{\ve{y}'}{\vect{\ve{z}}}\;:\;\ve{y}'\text{~is
                                                   a permutation of~} \ve{y}} \\
\semb{x\vect{\vec{z}}} &=& \Set{\inn{x}{\vect{\vec{z}}}}\\
\semb{\tau} &=& \Set{\tau}
\end{array} \]
\end{defi}\smallskip

\noindent Although the binders in bound output actions are ordered in
psi-calculi, they can be arbitrarily reordered.
\begin{lem}\label{lem:bnReordering}
  If $\Psi\frames \trans P{\boutlabel{M}{\ve{a}}{N}}Q$ and $\ve{c}$ is
  a permutation of $\ve{a}$ then $\Psi\frames \trans P{\boutlabel{M}{\ve{c}}{N}}Q$.
\end{lem}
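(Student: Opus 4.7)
The plan is to proceed by induction on the derivation of $\Psi \frames P \goesto{\boutlabel{M}{\ve{a}}{N}} Q$. The crucial observation is that, among the rules in Table~\ref{table:struct-free-labeled-operational-semantics}, only \textsc{Open} actually contributes binders to a bound-output action: \textsc{Out} is the base case producing an action with an empty binder sequence, \textsc{Com} yields $\tau$ (and thus never occurs at the root of a derivation for a bound output), while \textsc{Par}, \textsc{Scope}, \textsc{Case} and \textsc{Rep} merely propagate an existing action without touching its binders. So the induction reduces essentially to handling \textsc{Out} and \textsc{Open}, with the other cases resolved by straightforwardly invoking the induction hypothesis on the smaller derivation.

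The base case \textsc{Out} is trivial since then $\ve{a}=\varepsilon$ and the only permutation is $\ve{a}$ itself. For the key case, suppose the final rule is \textsc{Open}, so $P=(\nu b)P_0$ and the transition is obtained from a premise $\Psi\frames P_0 \goesto{\boutlabel{M}{\ve{a}_0}{N}} Q$ with $b\freshin\ve{a}_0,\Psi,M$ and $b\in\n(N)$, where $\ve{a}$ is $\ve{a}_0$ with $b$ inserted at some position. Given any permutation $\ve{c}$ of $\ve{a}$, let $\ve{c}_0$ be $\ve{c}$ with the single occurrence of $b$ removed; then $\ve{c}_0$ is a permutation of $\ve{a}_0$. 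By the induction hypothesis, $\Psi\frames P_0 \goesto{\boutlabel{M}{\ve{c}_0}{N}} Q$. Re-applying \textsc{Open}, using the same freshness side conditions (which depend only on the set of binders, already guaranteed by the original derivation), and inserting $b$ into $\ve{c}_0$ at its position in $\ve{c}$, we obtain $\Psi\frames (\nu b)P_0 \goesto{\boutlabel{M}{\ve{c}}{N}} Q$ as required.

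For the propagation rules \textsc{Par}, \textsc{Scope}, \textsc{Case} and \textsc{Rep}, the premise has exactly the same binder sequence $\ve{a}$ as the conclusion, so the induction hypothesis applied to the premise immediately yields a premise with binder sequence $\ve{c}$, from which the same rule derives the desired conclusion; freshness side conditions (e.g.\ $b\freshin\alpha$ in \textsc{Scope}, $\bn\alpha\freshin Q$ in \textsc{Par}) depend only on the set of binders, which is unchanged by permutation.

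The only mild technical obstacle is to check that the \textsc{Open} side condition ``$b$ inserted anywhere'' really does give enough flexibility to realise an arbitrary target position for $b$; but this is immediate from the statement of the rule. Equally, one must verify that the freshness premises $b\freshin\ve{a},\Psi,M$ and $\bn\alpha\freshin Q,\Psi$ are permutation-invariant, which holds because they concern the \emph{set} of bound names rather than the sequence. With these observations the induction goes through cleanly.
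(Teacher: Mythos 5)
Your proof is correct and follows essentially the same route as the paper's: induction on the derivation, with the \textsc{Open} case handled by using the induction hypothesis to permute the binders in the premise and then inserting the opened name at the desired position, all other cases being trivial propagation. You simply spell out the details (e.g.\ permutation-invariance of the freshness side conditions) that the paper's proof sketch leaves implicit.
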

\proof
By induction on the derivation of the transition. The base case is trivial.
In the \textsc{Open} rule, we use the induction hypothesis to reorder
the bound names in the premise as desired; 
we can then add the opened name at the appropriate position in the action in the conclusion of the rule.
The other induction cases are trivial. \qed
We can now show that $\semb\cdot$ is a strong operational correspondence.
\begin{thm}
\label{thm:ppi-strong-operational-corresp.}
If $P$ and $Q$ are polyadic pi-calculus processes, then:
\begin{enumerate}
\item\label{item:comp} If $\trans{P}{\beta}{P'}$ then for all
  $\alpha\in\semb\beta$ we have $\trans{\semb{P}}{\alpha}{\semb{P'}}$; and
\item If $\trans{\semb{P}}{\alpha}{P''}$ then there is $\beta$ such that $\trans{P}{\beta}{P'}$
  and $\alpha\in\semb{\beta }$ and $\semb{P' } = P''$.
\end{enumerate}
\end{thm}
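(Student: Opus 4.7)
The plan is to prove both directions by rule induction on the derivation of the transition in question. Before starting the induction, I would establish a simple substitution lemma stating that $\semb{P\lsubst{\ve{z}}{\ve{y}}}=\semb{P}\lsubst{\ve{z}}{\ve{y}}$ for all polyadic pi processes $P$; this follows by structural induction on $P$ because $\semb\cdot$ is defined homomorphically on all binding constructs and the translation clauses for prefixes only insert tuples of names into pattern/subject positions, so capture-avoiding substitution commutes with $\semb\cdot$. I would also note that any polyadic pi process $P$ satisfies $\semb P$ well-formed in \textbf{PPI}, which follows from $\sort{x}=\kw{chan}$, $\sort{\vect{\ve{y}}}=\kw{tup}$, $\kw{chan}\CanReceive\kw{tup}$, $\kw{chan}\CanSend\kw{tup}$, and $\vars(\vect{\ve{y}})=\{\ve{y}\}$ for distinct $\ve{y}$.

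For part~\ref{item:comp} (soundness), I would proceed by induction on the derivation $\trans{P}{\beta}{P'}$. The input base case requires that when $\trans{x(\ve{y}).P}{x\vect{\ve{z}}}{P\lsubst{\ve{z}}{\ve{y}}}$ (which in polyadic pi requires $|\ve{y}|=|\ve{z}|$), the rule \textsc{In} applies in \textbf{PPI} using $\emptyframe\vdash x\sch x$ together with $\ve{z}\in\Match(\vect{\ve{z}},\ve{y},\vect{\ve{y}})=\{\ve{z}\}$; here the match succeeds precisely because $\{\ve{y}\}=\{\ve{y}\}$ and $\vect{\ve{y}}\lsubst{\ve{z}}{\ve{y}}=\vect{\ve{z}}$, the latter encoding exactly the arity-matching constraint. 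The output base case is immediate. The sum case $P+Q$ translates to $\caseonly\;\ci{\top}{\semb{P}}\casesep\ci{\top}{\semb{Q}}$ and is handled by \textsc{Case} since $\emptyframe\vdash\top$; similarly for match. Restriction, parallel and replication cases follow directly by the IH using \textsc{Scope}, \textsc{Par} and \textsc{Rep}. Communication uses \textsc{Com} and the substitution lemma. For the open case, the polyadic pi action $(\nu\ve{y})\overline{x}\vect{\ve{z}}$ translates to the set of all permutations $\boutlabel{x}{\ve{y}'}{\vect{\ve{z}}}$, so after applying \textsc{Open} iteratively I would invoke Lemma~\ref{lem:bnReordering} to realise any chosen permutation $\ve{y}'\in\semb{\beta}$.

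For part~2 (completeness), I would again induct on the derivation $\trans{\semb{P}}{\alpha}{P''}$, performing a case analysis on the outermost constructor of $P$ that fixes the form of $\semb{P}$ and thus the applicable psi rule. The crucial observation is that the sort system forces the relevant shapes: an \textsc{In} transition from $\semb{x(\ve{y}).P}$ must have subject $K$ with $K\sch x$ entailed (hence $K=x$) and object of sort $\kw{tup}$ matching the pattern $\vect{\ve{y}}$, so the received term must be a tuple $\vect{\ve{z}}$ with $|\ve{z}|=|\ve{y}|$ by the definition of $\Match$. Analogously for output, where the object must be a tuple. The \textsc{Case} transitions from $\semb{P+Q}$ and $\semb{[a=b]P}$ correspond bijectively to the pi rules for sum and match. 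The inductive structural cases combine the IH with the substitution lemma, and the open case is again handled by Lemma~\ref{lem:bnReordering} to align the chosen permutation in the psi action with the unordered polyadic pi binder set.

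The main obstacle, I expect, will be bookkeeping in the open/close cases: the psi-calculus \textsc{Open} rule produces ordered binders whereas polyadic pi treats them as a set, so the correspondence is intrinsically many-to-one. Lemma~\ref{lem:bnReordering} handles this cleanly, but one must be careful to quantify correctly over permutations in both directions and to preserve the freshness side-conditions of \textsc{Scope}, \textsc{Open} and \textsc{Com} when threading the induction through parallel and restriction contexts. A secondary subtlety is verifying that structural equivalence is in fact not needed here (the statement gives syntactic equality $\semb{P'}=P''$); this requires checking that the translation commutes with the substitutions applied by the operational semantics, which is precisely what the substitution lemma secures.
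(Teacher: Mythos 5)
Your proposal matches the paper's proof essentially step for step: the paper likewise first proves the substitution lemma $\semb{P}\lsubst{\ve z}{\ve y}=\semb{P\{\ve z/\ve y\}}$ by structural induction, then establishes both directions by rule induction on the transition derivation, using that lemma in the input and communication cases and Lemma~\ref{lem:bnReordering} to reconcile the ordered psi-calculus binders with the unordered polyadic pi bound-output binders in the \textsc{Open} cases. Your observations about $\Match$ encoding the arity constraint and about syntactic (rather than structural) equality being secured by the substitution lemma are exactly the points the paper's proof relies on, so the approach is the same and correct.
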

\proof By induction on the derivation of the transitions, using
Lemma~\ref{lem:bnReordering} in the \textbf{OPEN} case of~(\ref{item:comp}).
\qed
We have now shown that the polyadic pi-calculus can be embedded in \textbf{PPI},
with an embedding $\semb\cdot$ that is a strong operational correspondence.

In order to investigate surjectivity properties of the embedding $\semb\cdot$, 
we also define a translation $\overline{P}$ in the other direction.
\begin{defi}[\textbf{PPi} to Polyadic Pi-Calculus]
The translation~$\bmes\cdot$ is homomorphic for $\nil$, restriction, replication and parallel composition, and is otherwise defined as follows:
\[\begin{array}{rcl}
\bmes{\pass{\emptyframe}} &=& \nil \\
\bmes{\caseonly\;\ci{\varphi_1}{P_1}\casesep\dots\casesep\ci{\varphi_n}{P_n}} &=& \bmes{\ci{\varphi_1}{P_1}} + \dots + \bmes{\ci{\varphi_n}{P_n}} \\
\bmes{\lin{x}{\ve{y}}\vect{\ve{z}}.P} &=& x(\vec{z}).\bmes{P} \\
\bmes{\overline{x}\vect{ \vec{y} }.P} &=& \overline{x}\vect{ \vec{y} }.\bmes{P}\\
\end{array}\]
where condition-guarded processes are translated as 
\[\begin{array}{rcll}
\bmes{\ci{x = y}P} &=& [x=y]\bmes{P} \\
\bmes{\ci\top P} &=& \bmes{P}.
\end{array}\]
\end{defi}

Above, note that the order of the binders in input prefixes is ignored.
To show that the reverse translation is an inverse of $\semb{\cdot}$ modulo bisimilarity, 
we need to prove that their order does not matter.
\begin{lem}\label{lem:inputreorder}
 In \textnormal{\textbf{PPI}}, ${\lin{x}{\ve{y}}\vect{\ve{z}}.P}\sim{\lin{x}{\ve{z}}\vect{\ve{z}}.P}$.
\end{lem}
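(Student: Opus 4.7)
The plan is to exhibit a candidate strong-bisimulation relation $\mathcal{R}$ that pairs the two input prefixes under every assertion and every substitution, and then verify the four clauses of Definition~\ref{def:strongbisim}. Well-formedness of $\lin{x}{\ve{y}}\vect{\ve{z}}.P$ forces $\ve{y}\in\vars(\vect{\ve{z}})=\{\{\ve{z}\}\}$ with $\ve{y}$ a distinct-names tuple, so $\ve{y}$ must be a permutation of $\ve{z}$; write $y_i=z_{\pi(i)}$ for a permutation $\pi$. I would take
\[
\mathcal{R}\;=\;\bigcup_{\Psi,\sigma,\pi}\Set{(\Psi, (\lin{x}{\ve{y}}\vect{\ve{z}}.P)\sigma,\; (\lin{x}{\ve{z}}\vect{\ve{z}}.P)\sigma)}\;\cup\;\mathcal{I}\;\cup\;\mathcal{R}^{-1},
\]
where $\mathcal{I}$ is identity. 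Static equivalence and extension with an arbitrary assertion are immediate, since both processes are guarded input prefixes with frame $\emptyframe$ and the assertion monoid is $\{\emptyframe\}$; symmetry is built in.

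The only nontrivial clause is the simulation case, and it reduces to analysing one application of rule \textsc{In}. Suppose $\Psi\frames\trans{\lin{x}{\ve{y}}\vect{\ve{z}}.P}{\inn{x}{\vect{\ve{a}}}}{P\lsubst{\ve{c}}{\ve{y}}}$; by the definition of $\Match$ in \textbf{PPI} the vector $\ve{c}$ is uniquely determined by $\vect{\ve{z}}\lsubst{\ve{c}}{\ve{y}}=\vect{\ve{a}}$, which with $y_i=z_{\pi(i)}$ forces $c_i=a_{\pi(i)}$. The continuation $P\lsubst{(c_1,\dots,c_n)}{(y_1,\dots,y_n)}$ is the simultaneous substitution sending $z_{\pi(i)}\mapsto a_{\pi(i)}$; reindexing by $j=\pi(i)$ gives exactly $P\lsubst{\ve{a}}{\ve{z}}$. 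On the right-hand side, $\lin{x}{\ve{z}}\vect{\ve{z}}.P$ admits the same input label with the unique match $\ve{c}'=\ve{a}$, and continues as $P\lsubst{\ve{a}}{\ve{z}}$. The derivatives therefore coincide syntactically, so $\mathcal{I}\subseteq\mathcal{R}$ closes the diagram. Conversely, every transition of the right-hand side is mirrored by the left-hand side via the same reasoning, using $\mathcal{R}^{-1}$.

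For the upgrade from $\bisim$ to $\sim$, I would observe that $\mathcal{R}$ is already closed under substitutions: after $\alpha$-converting the binder tuple so that $\ve{y}$ (equivalently $\ve{z}$) is fresh for a given $\sigma$, the substitution simply acts homomorphically on the prefix, preserving the property that one side's binder list is a permutation of the other's pattern. Hence $(\lin{x}{\ve{y}}\vect{\ve{z}}.P)\ve\sigma\bisim(\lin{x}{\ve{z}}\vect{\ve{z}}.P)\ve\sigma$ for every sequence of substitutions $\ve\sigma$, and $\sim$ follows by Definition~\ref{def:strongcongruence}. The only place requiring real care is the bookkeeping that turns $\lsubst{\ve{c}}{\ve{y}}$ into $\lsubst{\ve{a}}{\ve{z}}$ via the permutation $\pi$; once that identity is spelled out, everything else is routine.
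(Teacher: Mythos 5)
Your proof is correct and is exactly the elaboration of what the paper leaves implicit: the paper's own proof is just the one-line remark ``Straightforward from the definitions of $\Match$ and substitution on patterns,'' and your permutation bookkeeping ($c_i=a_{\pi(i)}$, hence $P\lsubst{\ve{c}}{\ve{y}}=P\lsubst{\ve{a}}{\ve{z}}$) is precisely the content behind that remark. No discrepancy in approach; you have simply written out the details the authors chose to omit.
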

\proof Straightforward from the definitions of $\Match$ and substitution on patterns.\qed
We now show that the embeddings $\bmes\cdot$ and $\semb\cdot$
are inverses, modulo bisimilarity.
\begin{thm}\label{thm:polypiabs} $ $
    If $P$ is a \textnormal{\textbf{PPI}} process, then 
    $P\sim\llbracket \overline{P}\rrbracket$.
\end{thm}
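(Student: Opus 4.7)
The plan is to prove $P \sim \semb{\bmes{P}}$ by structural induction on $P$, using the induction hypothesis together with the congruence properties of $\sim$ established in Theorem~\ref{thm:congruence-congruence}. For the homomorphic constructors ($\nil$, restriction, replication, parallel composition, output), both translations act componentwise, so the result is immediate from the IH and rules \textsc{CPar}, \textsc{CRes}, \textsc{CBang}, \textsc{COut}. The assertion case $\pass{\emptyframe}$ translates via $\nil$ to $\nil$; since $\framepair{\epsilon}{\emptyframe} \sequivalent \emptyframe$ and neither process has any transitions, $\pass{\emptyframe} \sim \nil$ holds.

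The input prefix case requires Lemma~\ref{lem:inputreorder}. For $\lin{x}{\ve{y}}\vect{\ve{z}}.P$, the round-trip yields $\lin{x}{\ve{z}}\vect{\ve{z}}.\semb{\bmes{P}}$, forgetting the order of binders. Combining the IH with \textsc{CIn} gives $\lin{x}{\ve{y}}\vect{\ve{z}}.P \sim \lin{x}{\ve{y}}\vect{\ve{z}}.\semb{\bmes{P}}$, and Lemma~\ref{lem:inputreorder} together with transitivity of $\sim$ yield the desired equivalence.

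The case statement is where the work lies. For $\caseonly\;\ci{\varphi_1}{P_1}\casesep\cdots\casesep\ci{\varphi_n}{P_n}$, the back-translation produces an $n$-ary sum whose re-translation is a nested, left-associated \textbf{case} structure in which every branch is guarded by an extraneous $\top$, and branches with $\varphi_i \equiv (x=y)$ are expanded as $\ci{\top}{(\caseonly\;\ci{x=y}{\semb{\bmes{P_i}}})}$. The tool for cleaning this up is Lemma~\ref{lemma:flatten-case} (Flatten Case), used in two ways: repeated applications peel off the outer layers of nesting introduced by the binary associativity of $+$, and a further application on each $(x=y)$-branch rewrites $\ci{\top}{(\caseonly\;\ci{x=y}{\semb{\bmes{P_i}}})}$ into $\ci{x=y}{\semb{\bmes{P_i}}}$. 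After this normalization, \textsc{CCase} and the IH close the case.

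The main obstacle is bookkeeping in the case statement: Flatten Case must be applied a number of times that grows with $n$, and the precise intermediate forms depend on the associativity convention chosen for the $n$-ary sum $+$ in polyadic pi. I expect the cleanest route is to first prove a generalized Flatten Case lemma stating that any finite, left-associated nest of $\top$-guarded case-of-case combinations is bisimilar to the flat case obtained by concatenating all branches, and similarly that a $\ci{\top}{\caseonly\,\ci{\varphi}{P}}$ branch can be replaced by $\ci{\varphi}{P}$ inside any enclosing case. With these two derived principles, the case constructor is handled uniformly and the induction goes through.
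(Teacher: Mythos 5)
Your proposal is correct and follows essentially the same route as the paper: structural induction on $P$, with the congruence properties of $\sim$ handling the homomorphic constructors, Lemma~\ref{lem:inputreorder} handling the input prefix, and the \textbf{case} constructor handled by an inner induction on the number of branches with Lemma~\ref{lemma:flatten-case} applied in the step (your ``generalized Flatten Case'' is exactly that inner induction packaged as a lemma, including the separate treatment of $\top$- versus $(x{=}y)$-guarded branches).
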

\proof By structural induction on $P$. The input case uses Lemma~\ref{lem:inputreorder}.
For $\caseonly$ agents, we use an inner induction on the number of branches, 
with Lemma~\ref{lemma:flatten-case} applied in the induction case.\qed

Let the relation $\sim_e^c$ be early congruence of polyadic pi-calculus
agents as defined in \cite{sangiorgi:expressing-mobility}. Then we have

\begin{cor}\label{cor:ppi-sembinvertible}
  If $P$ is a polyadic pi-calculus process, then $P\sim_e^c\overline{\semb{P}}$.
\end{cor}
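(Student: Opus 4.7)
The plan is to prove the corollary by structural induction on the polyadic pi-calculus process $P$, establishing the stronger syntactic equality $\bmes{\semb{P}} = P$, from which the corollary follows immediately by reflexivity of early congruence $\sim_e^c$. The base case $P = \nil$ and the cases of restriction, replication, and parallel composition are immediate from the induction hypothesis, since both $\semb{\cdot}$ and $\bmes{\cdot}$ are homomorphic on these constructors.

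For the prefix cases I would unfold the definitions directly. For an input $x(\ve y).P'$, one has $\semb{x(\ve y).P'} = \lin{x}{\ve y}\vect{\ve y}.\semb{P'}$; applying $\bmes{\cdot}$ to the right-hand side triggers the clause $\bmes{\lin{x}{\ve y}\vect{\ve z}.Q} = x(\ve z).\bmes{Q}$ with $\ve z = \ve y$, yielding $x(\ve y).\bmes{\semb{P'}}$, which the induction hypothesis equates with $x(\ve y).P'$. The output case is analogous. For a match $[x=y]P'$, the translation yields the single-branch case $\caseonly\;\ci{x=y}{\semb{P'}}$, and $\bmes{\cdot}$ reverses this to $[x=y]\bmes{\semb{P'}}$. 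For a sum $P_1 + P_2$, we get the two-branch case $\caseonly\;\ci{\top}{\semb{P_1}}\casesep\ci{\top}{\semb{P_2}}$, and since $\bmes{\ci{\top}{Q}} = \bmes{Q}$ by definition, $\bmes{\cdot}$ produces $\bmes{\semb{P_1}} + \bmes{\semb{P_2}}$, which is $P_1 + P_2$ by the induction hypothesis.

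There is no substantive obstacle: the translation clauses are set up so that $\bmes{\cdot}$ acts as a left inverse of $\semb{\cdot}$ on every polyadic pi constructor. The only point that deserves a sentence of comment is that $\semb{\cdot}$ preserves the ordering of input binders (mapping $x(\ve y).P'$ to a pattern $\vect{\ve y}$ with the same ordering), so the binder tuple recovered by $\bmes{\cdot}$ is literally $\ve y$ and no appeal to Lemma~\ref{lem:inputreorder} is needed here. An alternative route, avoiding the syntactic equality, would be to apply Theorem~\ref{thm:polypiabs} to $\semb{P}$ to obtain $\semb{P} \sim \semb{\bmes{\semb{P}}}$ in \textnormal{\textbf{PPI}}, and then transfer the bisimilarity back to the polyadic pi-calculus via the strong operational correspondence of Theorem~\ref{thm:ppi-strong-operational-corresp.}; however, given the direct syntactic identity, the inductive proof above is the cleanest.
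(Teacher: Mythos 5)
Your proof is correct, but it takes a different route from the one the paper intends. The paper positions this statement as a corollary of Theorem~\ref{thm:polypiabs}: instantiating that theorem with the \textbf{PPI} process $\semb{P}$ gives $\semb{P}\sim\semb{\bmes{\semb{P}}}$ in \textbf{PPI}, and the full-abstraction result ($P\sim_e^c Q$ iff $\semb{P}\sim\semb{Q}$, proved in the appendix from Theorem~\ref{thm:ppi-strong-operational-corresp.}) then transfers this to $P\sim_e^c\bmes{\semb{P}}$ --- exactly the alternative you sketch in your last sentence. Your primary argument instead establishes the stronger syntactic identity $\bmes{\semb{P}}=P$ by structural induction, and this does go through: every clause of $\bmes{\cdot}$ is a literal left inverse of the corresponding clause of $\semb{\cdot}$ (including nested sums, since $\bmes{\ci{\top}{Q}}=\bmes{Q}$ unfolds inner \textbf{case} statements back into nested $+$), and your observation that the binder tuple recovered from an input in the image of $\semb{\cdot}$ is literally $\ve{y}$ --- so that Lemma~\ref{lem:inputreorder} is not needed in this direction, unlike in Theorem~\ref{thm:polypiabs} --- is accurate. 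What each approach buys: yours is more elementary and self-contained, needing only reflexivity of $\sim_e^c$ (modulo the routine observation that both calculi identify alpha-variants, so the choice of representative for $\semb{P}$ is immaterial); the paper's route reuses machinery it must develop anyway for the surjectivity and full-abstraction results, at the cost of invoking two nontrivial theorems to prove something that is, on the nose, a syntactic identity.
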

We also have
\begin{cor}
    If $P$ and $Q$ are polyadic pi-calculus process, 
    then $P \sim_e^c Q$ %
    iff $\llbracket P\rrbracket \sim \llbracket Q\rrbracket$.
\end{cor}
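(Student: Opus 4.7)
The plan is to prove the two directions of the biconditional separately by constructing candidate (early) bisimulations in each case, using the strong operational correspondence of Theorem~\ref{thm:ppi-strong-operational-corresp.} as the main tool. A direct construction is preferable to chaining through Theorem~\ref{thm:polypiabs} and Corollary~\ref{cor:ppi-sembinvertible}, because the congruence $\sim_e^c$ and $\sim$ are both defined by closure under substitutions, and $\semb{\cdot}$ is homomorphic, hence commutes with substitution in the sense that $\semb{P}\sigma = \semb{P\sigma}$.

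For the forward direction ($\Rightarrow$), I would assume $P \sim_e^c Q$, so that $P\sigma \sim_e Q\sigma$ for every substitution $\sigma$. By the commutation just noted and Definition~\ref{def:strongcongruence}, it suffices to show that $P' \sim_e Q'$ implies $\semb{P'} \bisim \semb{Q'}$. I would take $\mathcal{R} = \{(\emptyframe, \semb{P'}, \semb{Q'}) : P' \sim_e Q'\}$ and verify the four clauses of Definition~\ref{def:strongbisim}. Static equivalence and extension are immediate since every image $\semb{P'}$ has frame $\emptyframe$ (by inspection of the clauses of $\semb{\cdot}$) and the only assertion is $\emptyframe$; symmetry is inherited from $\sim_e$. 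For simulation, given $\semb{P'} \gta P''$, part (2) of Theorem~\ref{thm:ppi-strong-operational-corresp.} yields $\trans{P'}{\beta}{P_1}$ with $\alpha \in \semb{\beta}$ and $\semb{P_1} = P''$; the early simulation from $P' \sim_e Q'$ provides $\trans{Q'}{\beta}{Q_1}$ with $P_1 \sim_e Q_1$, and part (1) lifts this to $\semb{Q'} \gta \semb{Q_1}$, placing $(\emptyframe, P'', \semb{Q_1})$ back in $\mathcal{R}$.

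For the backward direction ($\Leftarrow$), I would assume $\semb{P} \sim \semb{Q}$, whence $\semb{P\sigma} \bisim \semb{Q\sigma}$ for every $\sigma$. It then suffices to show that $\mathcal{R}' = \{(P',Q') : \semb{P'} \bisim \semb{Q'}\}$ is an early bisimulation in the polyadic pi-calculus, since then $P\sigma \;\mathcal{R}'\; Q\sigma$ for all $\sigma$, giving $P \sim_e^c Q$. Given $\trans{P'}{\beta}{P_1}$, choosing any $\alpha \in \semb{\beta}$ and applying part (1) of the theorem yields $\trans{\semb{P'}}{\alpha}{\semb{P_1}}$; the psi bisimulation produces $\trans{\semb{Q'}}{\alpha}{Q''}$ with $\semb{P_1} \bisim Q''$, and part (2) then returns $\trans{Q'}{\beta'}{Q_1}$ with $\alpha \in \semb{\beta'}$ and $\semb{Q_1} = Q''$.

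The main obstacle is to verify that the action $\beta'$ recovered in the backward direction actually equals $\beta$ as a polyadic pi action, so that the matching transition is a genuine simulation step. For input and silent actions this is immediate since $\semb\cdot$ is single-valued on them. For bound outputs, $\semb{(\nu \vec y)\overline{x}\vec z}$ is the full set of permutations of $\vec y$ prefixing $\overline{x}\vect{\vec z}$, so two polyadic pi actions whose translations share a common element must coincide as polyadic pi actions (where binders in bound outputs are unordered); Lemma~\ref{lem:bnReordering} is what guarantees that this freedom to permute binders is indeed available on the psi side, and the bisimulation step goes through with $(P_1,Q_1) \in \mathcal{R}'$.
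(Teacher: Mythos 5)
Your proposal is correct and follows essentially the same route as the paper's own proof (given in the appendix): in each direction one exhibits a candidate relation --- $\{(\emptyframe,\semb{P'},\semb{Q'}) : P' \sim_e Q'\}$ for $\Rightarrow$ and $\{(P',Q') : \semb{P'}\bisim\semb{Q'}\}$ for $\Leftarrow$ --- verifies the bisimulation clauses via Theorem~\ref{thm:ppi-strong-operational-corresp.}, and handles closure under substitutions using the commutation $\semb{P}\sigma=\semb{P\sigma}$ (the paper's Lemma~\ref{lem:ppi-substitution}). Your explicit check that the recovered action $\beta'$ coincides with $\beta$ in the backward direction (via the permutation-closure of $\semb{\beta}$ on bound outputs) is a detail the paper's proof leaves implicit, and is a welcome addition.
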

\proof Follows from the strong operational correspondence of Theorem~\ref{thm:ppi-strong-operational-corresp.}, and $\semb\cdot$ commuting with substitutions.\qed
This shows that every \textbf{PPI} process corresponds to a polyadic pi-calculus process, 
modulo strong bisimulation congruence,
since $\bmes\cdot$ is surjective on the 
bisimulation classes of polyadic pi-calculus, and the inverse of $\semb\cdot$.
In other words, \textbf{PPI} is a \emph{complete representation}.
\begin{thm}
  \textnormal{\textbf{PPI}} is a complete representation of the polyadic pi-calculus.
\end{thm}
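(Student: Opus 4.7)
The plan is to verify the three clauses of the definition of complete representation by combining the results already established in this subsection with the syntactic form of the translations. I take the action correspondence $\approxeq$ to be the equivariant relation defined by $\beta \approxeq \alpha$ iff $\alpha \in \semb{\beta}$, precisely as used in Theorem~\ref{thm:ppi-strong-operational-corresp.}. Equivariance of $\semb{\cdot}$ on processes and actions, and of $\approxeq$, is immediate from the inductive definitions, since no free names appear in any defining clause.

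For the first clause (simple homomorphism), each defining equation of $\semb{\cdot}$ directly supplies the required psi-calculus context. The constructors $\nil$, restriction, replication, and parallel composition are treated homomorphically by fiat; for sum the context is $\caseonly\;\ci{\top}{[]}\casesep\ci{\top}{[]}$, for match it is $\caseonly\;\ci{x=y}{[]}$, and input and output are handled by the evident one-hole contexts obtained from the right-hand sides of their defining equations. For the second clause (strong operational correspondence modulo $\equiv$), Theorem~\ref{thm:ppi-strong-operational-corresp.} already establishes the stronger statement that $\semb{P'}$ is literally \emph{equal} to the psi-derivative (up to the assumed identification of alpha-variants), together with $\alpha \in \semb{\beta}$, so both directions hold a fortiori modulo the reflexive relation $\equiv$.

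For the third clause (surjectivity modulo strong bisimulation congruence), given any \textbf{PPI} process $P$ I take $Q = \bmes{P}$; Theorem~\ref{thm:polypiabs} then yields $P \sim \semb{\bmes{P}} = \semb{Q}$, discharging the obligation. The proof is therefore essentially bookkeeping combining the definitions with Theorems~\ref{thm:ppi-strong-operational-corresp.} and~\ref{thm:polypiabs}, so there is no substantive obstacle. The only point that merits a line of commentary is that the ``modulo $\equiv$'' condition in the definition is satisfied automatically by the stronger up-to-alpha equality established in Theorem~\ref{thm:ppi-strong-operational-corresp.}; no appeal to any non-trivial structural congruence law is needed here.
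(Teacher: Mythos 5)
Your proposal is correct and follows exactly the paper's own argument: the same choice of $\approxeq$ ($\beta\approxeq\alpha$ iff $\alpha\in\semb\beta$), clause (1) by definition of $\semb\cdot$, clause (2) by Theorem~\ref{thm:ppi-strong-operational-corresp.}, and clause (3) by Theorem~\ref{thm:polypiabs} applied to $\bmes{P}$. The extra remarks on equivariance and on the literal equality subsuming the ``modulo $\equiv$'' requirement are accurate but not needed beyond what the paper states.
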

\begin{proof} We let $\beta\approxeq\alpha$ iff $\alpha\in\semb\beta$.
  \begin{enumerate}
  \item $\semb\cdot$ is a simple homomorphism by definition.
  \item $\semb\cdot$ is a strong operational correspondence by Theorem~\ref{thm:ppi-strong-operational-corresp.}. 
  \item $\semb\cdot$ is surjective modulo strong bisimulation congruence by Theorem~\ref{thm:polypiabs}.\qedhere
  \end{enumerate}
\end{proof}

\subsection{LINDA~\cite{Gelernter.1985.LINDA}}\label{sec:linda}
A process calculus with LINDA-like pattern matching can easily be obtained from the \textbf{PPI} calculus,
by modifying the possible binding names in patterns.
 \instanceFrom{LINDA}{PPI}{
\pats=\{\vect{\ve{a}}:\ve{a}\subset_{\mathrm{fin}}\nameset \}\\
\vars(\vect{\ve{a}})=\Pow(\ve{a})\\
\Match(\vect{\ve{a}},\ve{x},\vect{\ve{y}})=\Set{\ve{c}} \text{~if~}\{\ve{x}\}\subseteq\{\ve{y}\}\text{ and }\vect{\ve{y}}\lsubst{\ve{c}}{\ve{x}}=\vect{\ve{a}}\\
}
Here, any subset of the names occurring in a pattern may be bound in the input prefix; 
this allows to only receive messages with particular values at certain positions (sometimes called ``structured names''~\cite{Gelernter.1985.LINDA})
We also do not require patterns to be linear, i.e., 
the same variable may occur more than once in a pattern, 
and the pattern only matches a tuple if each occurrence of the variable corresponds to the same name in the tuple.

As an example,
$
\lin{a}{x}\vect{ x,x,z}.P \parop \overline{a}\vect{ c,c,z}.Q\gtt P\lsubst{c}{x}\parop Q
$
while the agent
$
\lin{a}{x}\vect{ x,x,z}.P \parop \overline{a}\vect{ c,d,z}.Q
$ has no $\tau$ transition.

To prove that \textbf{LINDA} is a psi-calculus, 
the interesting case is the preservation of variables of substitution on patterns in Definition~\ref{def:subst}, i.e., 
that~$\ve{x}\in\vars(\vect{\ve{y}})$ and~$\ve{x}\freshin\sigma$ implies $\ve{x}\in\vars(\vect{\ve{y}}\sigma)$.  
This holds because standard substitution preserves names and structure: 
there is $\ve{z}$ such that $\vect{\ve{y}}\sigma=\vect{\ve{z}}$, and if $x\in\ve{y}$ and~$x\freshin\sigma$, 
then $x\in\ve{z}$.  

\subsection{Sorted polyadic pi-calculus}\label{sec:sortedPolyPi}
\newcommand{\basesorts}{\mbox{S}}
Milner's classic sorting~\cite{milner:polyadic-tutorial} regime for the polyadic pi-calculus 
ensures that pattern matching in inputs always succeeds, 
by enforcing that the length of the pattern is the same as the length of the received tuple.
This is achieved as follows. Milner assumes a countable set of subject sorts $\basesorts$ ascribed to names, and 
 a partial function $\kw{ob}:\basesorts\pto \basesorts^*$, assigning a sequence of object sorts to each sort in its domain. 
The intuition is that if $a$ has sort $s$ then any communication along $a$ must be a tuple of sort $\kw{ob}(s)$. 
An agent is \emph{well-sorted} if for any input prefix $a(b_1,\ldots b_n)$ it holds that $a$ has some sort $s$ where $\kw{ob}(s)$ is the sequence of sorts of $b_1,\ldots,b_n$ and similarly for output prefixes.
 \instanceFrom{SORTEDPPI}{PPI}{

\namesorts=\Restrictionsorts =\basesorts \qquad\qquad \qquad\;\;
\sortset=\basesorts\cup\Set{\vect{\ve{s}}:\ve{s}\in\basesorts^*}\\
{\CanSubstitute} = \{(s,s):s \in \basesorts\}\qquad \qquad
{\CanSend}=    {\CanReceive} =\{(s,\vect{\kw{ob}(s)}): s \in \basesorts\}
\\
\sort{\vect{a_1,\dots,a_n}}=\vect{\sort{a_1},\dots,\sort{a_n}}\\
\Match(\vect{\ve{a}},\ve{x},\vect{\ve{y}})=\Set{\pi\cdot\ve{a}} \quad \mbox{if}\; 
  \ve{x}=\pi\cdot\ve{y}\text{ and }\sort{\vect{\ve{a}}}=\sort{\vect{\ve{y}}}
}
We need to show that $\Match$ always generates well-sorted substitutions:
this holds since whenever
 $\ve{c}\in \Match(\vect{\ve{a}},\ve{x},\vect{\ve{y}})$ we have that $\lsubst{\ve{c}}{\ve{x}}=\lsubst{\pi\cdot\ve{a}}{\pi\cdot\ve{y}}$ 
and $\sort{y_i}=\sort{a_i}$ for all $i$.
 
  \newcommand{\nat}{\mathbb{N}} 
As an example, let $\sort{a}=s$ with $\kw{ob}(s)=t_1,t_2$ and $\sort{x}=t_1$ with $\kw{ob}(t_1)=t_2$ and $\sort{y}=t_2$ then the agent $\lin{a}{x,y}(x,y)\sdot \out{x}{y}\sdot\nil$ is well-formed, since $s\CanReceive t_1,t_2$ and $t_1\CanSend t_2$, with $\vars(x,y)=\Set{\Set{x,y}}$.

A formal comparison with the system in~\cite{milner:polyadic-tutorial} is complicated by the fact that Milner uses so called concretions and abstractions as agents. Restricting attention to agents in the normal sense we have the following result, where $\llbracket \cdot \rrbracket$ is the function from the previous example.
\begin{thm}
$P$ is well-sorted iff $\llbracket P \rrbracket$ is well-formed.
\end{thm}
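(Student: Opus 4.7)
The plan is to proceed by structural induction on the polyadic pi-calculus process $P$, checking for each syntactic construct that Milner's well-sortedness condition coincides exactly with the SORTEDPPI well-formedness condition after translation. Since the translation $\semb\cdot$ is a simple homomorphism on $\nil$, restriction, replication and parallel composition, and reshapes summation, matching, input and output into \caseonly\ or input/output prefixes, the inductive hypothesis handles most cases trivially; the only constructs imposing sort constraints are input and output prefixes (and restriction, which is trivially well-formed in SORTEDPPI because $\Restrictionsorts=\basesorts$ and every name has a sort in $\basesorts$).

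First I would dispatch the structurally trivial cases. For $\nil$, parallel composition, restriction, replication, match, and sum, the only sorting constraints are those inherited from subprocesses, and the definitions of $\semb\cdot$ for \textbf{SORTEDPPI} introduce no new sorting obligations (all \caseonly\ branches are well-formed whenever the underlying processes are, and $(\nu a)$ is always well-formed since $\sort{a}\in\namesorts=\Restrictionsorts$). The inductive hypothesis thus closes these cases immediately.

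The heart of the proof lies in the input and output cases. For the output $\overline{x}\vect{\vec{y}}.P$, we have $\semb{\overline{x}\vect{\vec{y}}.P} = \overline{x}\vect{\vec{y}}.\semb{P}$, which is well-formed in SORTEDPPI iff $\sort{x}\CanSend\sort{\vect{\vec{y}}}$ and $\semb{P}$ is well-formed. By the definition of $\CanSend$ in SORTEDPPI, this holds iff $\sort{\vect{\vec{y}}} = \vect{\kw{ob}(\sort{x})}$, which by the definition of $\SORT$ on tuples unfolds to $\sort{y_i}=\kw{ob}(\sort{x})_i$ for each $i$ --- precisely Milner's well-sortedness condition on the output prefix. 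The induction hypothesis takes care of $P$. The input case $x(\vec{y}).P$ is essentially symmetric: $\semb{x(\vec{y}).P} = \lin{x}{\vec{y}}\vect{\vec{y}}.\semb{P}$ is well-formed iff $\vec{y}\in\vars(\vect{\vec{y}}) = \Set{\vec{y}}$ (which holds because the $y_i$ are distinct bound names in Milner's syntax), and $\sort{x}\CanReceive\sort{\vect{\vec{y}}}$, which reduces exactly as in the output case to Milner's sorting condition.

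I do not anticipate a real obstacle: the sort compatibility predicates of SORTEDPPI were engineered so that $\CanReceive$ and $\CanSend$ encode exactly $\kw{ob}$, and the $\vars$ function requires that pattern binders enumerate all pattern names, which matches the requirement in Milner's syntax that the input tuple $(b_1,\dots,b_n)$ consists of distinct bound names. The only mildly delicate point is to note that the alpha-renaming conventions in the two formalisms agree on which occurrences of names are bound, so the translation commutes with the assignment of sorts to subterms; this is immediate from the clause-by-clause definition of $\semb\cdot$.
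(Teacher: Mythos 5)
Your proof is correct and takes essentially the same approach as the paper, which simply states that the result follows by ``a trivial induction over the structure of $P$, observing that the requirements are identical''; you have merely spelled out the case analysis that the paper leaves implicit.
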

  \proof
    A trivial induction over the structure of $P$, observing that the
    requirements are identical.
  \qed

\begin{thm}
  \textnormal{\textbf{SORTEDPPI}} is a complete representation of the sorted polyadic pi-calculus.
\end{thm}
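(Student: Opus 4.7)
The plan is to mirror the three-step proof used for \textbf{PPI}, reusing as much infrastructure as possible. Define the translation $\semb\cdot$ by the same clauses as in the unsorted case (it is syntactically identical, since the process grammars coincide), and define the action relation $\beta \approxeq \alpha$ iff $\alpha \in \semb\beta$ using the same enumeration of bound-name orderings. The homomorphism clause (1) then holds by definition of $\semb\cdot$, and the previous theorem already gives us that $\semb P$ is well-formed in \textbf{SORTEDPPI} iff $P$ is well-sorted; this is the bridge that lets the rest of the argument go through.

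For strong operational correspondence (2), the plan is to re-run the induction used for Theorem~\ref{thm:ppi-strong-operational-corresp.}, relying on Lemma~\ref{lem:bnReordering} for the \textsc{Open} case. The new content is that each application of \textsc{In} in a \textbf{SORTEDPPI} derivation demands $\sort{\vect{\ve{a}}} = \sort{\vect{\ve{y}}}$ before producing a match, and that the match produced is a permutation $\pi\cdot\ve{a}$ rather than a literal substitution argument. The first point is not an obstacle because well-sortedness of the source process guarantees that any communication between $\overline{x}\vect{\ve{a}}$ and $x(\ve{y})$ has $\ve{a}$ and $\ve{y}$ of the same tuple sort via $\kw{ob}(\sort{x})$; preservation of well-sortedness by polyadic reductions is standard. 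The second point is handled by observing that when $\ve{x}=\pi\cdot\ve{y}$, the simultaneous substitution $\lsubst{\pi\cdot\ve{a}}{\ve{x}}$ is the same function from $\{\ve{y}\}$ to names as the usual simultaneous substitution $\lsubst{\ve{a}}{\ve{y}}$ used in the polyadic semantics, so the residuals agree on the nose.

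For surjectivity (3), the plan is to reuse the reverse translation $\bmes\cdot$ from Section~\ref{sec:polyPi}, restricted to the well-formed subset of \textbf{SORTEDPPI}. We must verify that $\bmes{P}$ is always well-sorted in Milner's sense; this is immediate because in a well-formed input $\lin{M}{\ve{y}}\vect{\ve{z}}.P$ we have $\sort M \CanReceive \sort{\vect{\ve{z}}}$, which by the definition of $\CanReceive$ in \textbf{SORTEDPPI} forces $\sort{\vect{\ve{z}}} = \vect{\kw{ob}(\sort M)}$, matching Milner's sorting condition exactly (and similarly for outputs). Then Theorem~\ref{thm:polypiabs} and Lemma~\ref{lem:inputreorder} transfer verbatim, since their proofs only use properties of $\Match$ that also hold here (tuples of equal sort admit reordering of binders).

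The main obstacle I anticipate is purely bookkeeping: the reordering argument underlying Lemma~\ref{lem:inputreorder} has to be restated for \textbf{SORTEDPPI}'s permutation-based $\Match$, and one has to check that the $\sort{\vect{\ve{a}}} = \sort{\vect{\ve{y}}}$ side condition is preserved under the relevant permutations (it is, because sorts of tuples are defined componentwise and permutations preserve multisets of sorts only up to reordering, which is exactly what the permutation $\pi$ captures). Once these technical checks are done, the three clauses of complete representation follow as in the \textbf{PPI} case, and the theorem is concluded by the same \texttt{enumerate} structure as the proof of the \textbf{PPI} representation theorem.
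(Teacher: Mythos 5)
Your proposal is correct and follows essentially the same route as the paper: the paper's proof likewise observes that the operational correspondence of Theorem~\ref{thm:ppi-strong-operational-corresp.} still holds when restricted to well-formed agents, and that the inverse translation $\bmes\cdot$ sends well-formed \textbf{SORTEDPPI} agents to well-sorted processes so that the surjectivity result of Theorem~\ref{thm:polypiabs} carries over. You simply make explicit the bookkeeping (the $\sort{\vect{\ve a}}=\sort{\vect{\ve y}}$ side condition on $\Match$ and the permutation-based match result) that the paper leaves implicit.
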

\proof
  The operational correspondence in Theorem~\ref{thm:ppi-strong-operational-corresp.} 
  still holds when restricted to well-formed agents.
  The inverse translation $\bmes\cdot$ maps well-formed agents to well-sorted processes, 
  so the surjectivity result in Theorem \ref{thm:polypiabs} still applies.
\qed

\subsection{Polyadic synchronisation pi-calculus}
\label{sec:polySynchPi}

Carbone and
Maffeis~\cite{carbone.maffeis:expressive-power} explore the so called pi-calculus
with polyadic synchronisation, ${}^e\pi$, which can be thought of as a dual to the polyadic pi-calculus.
Here action subjects are tuples of names, while the objects transmitted are just single names.
It is
demonstrated that this allows a gradual
enabling of communication by opening the scope of names in a subject,
results in simple encodings of localities and cryptography, and  gives a
strictly greater expressiveness than standard \pic{}.
The processes of ${}^e\pi$ are defined as follows.
\[
\boxed{
\begin{array}{rcl}
P,Q &\BnfDef& \nil  \BnfOr \Sigma_i \alpha_i.P_i \BnfOr P\pll Q \BnfOr (\nu a)P \BnfOr !P \\
\alpha &\BnfDef& \vec{a}(x) \BnfOr \vec{a}\langle b \rangle
\end{array}
}\smallskip
\]
In order to represent ${}^e\pi$,
only minor modifications to the representation of
the polyadic pi-calculus in Section~\ref{sec:polyPi} are necessary.
To allow tuples in subject position but not in object position,
we invert the relations ${\CanSend}$ and $\CanReceive$.
Moreover, ${}^e\pi$ does not have name matching conditions $a=b$, 
since they can be encoded (see~\cite{carbone.maffeis:expressive-power}).
 \instanceTwoFrom[\qquad]{PSPI}{PPI}{
  \conditions=\{\top,\bot\} \\
  \pats=\nameset \\
  {\CanSend}=    {\CanReceive} =\{(\kw{tup},\kw{chan})\}
}{
  \mbox{$\ve{a} \sch \ve{b}$ is $\top$ if $\ve{a}=\ve{b}$, and $\bot$ otherwise}\\
  \vars(x)=\Set{\Set{x}} 
  \\
  \Match(a, x, x) = \{a\}
  }\smallskip

\noindent To obtain a representation, we consider a dialect of ${}^e\pi$ without the $\tau$ prefix.
This has no cost in terms of expressiveness
since the $\tau$ prefix can be encoded within ${}^e\pi$ using a communication over a restricted fresh name.
However, the \textbf{PSPI} context $C[]=(\nu\,a)(\out{\vect a}{a}.\nil\mid \lin{\vect{a}}{a}a.[]])$ that encodes the prefix 
is not admissible as part of a representation since it depends on the name $a$ and so is not equivariant. 

The  ${}^e\pi$ calculus also uses an operational semantics with late
input, unlike psi-calculi. In order to yield a representation, 
we consider an early version $\mathrel{{\apitransarrow{}}{}^{e}} $ of the semantics, 
obtained by turning bound input actions into free input actions at top-level.
\[
\inferrule*[left=\textsc{eIn}]
   {\trans{P}{\vec{x}(y)}{P'}}
   {\transPrim{P}{\vec{x}\;z}{P'\{z/y\}}}\qquad
\inferrule*[left=\textsc{Out}]
   {\trans{P}{\vec{x}\vect{ c}}{P'}}
   {\transPrim{P}{\vec{x}\vect{ c}}{P'}}\qquad
\inferrule*[left=\textsc{BOut}]
   {\trans{P}{\vec{x}\vect{ \nu c}}{P'}}
   {\transPrim{P}{\vec{x}\vect{ \nu c}}{P'}}\qquad
\inferrule*[left=\textsc{Tau}]
   {\trans{P}{\tau}{P'}}
   {\transPrim{P}{\tau}{P'}}
\]

\begin{defi}[Polyadic synchronisation pi-calculus to \textbf{PSPI}]
$\semb\cdot$ is homomorphic for~$\nil$, restriction, replication and parallel composition, and is otherwise defined as follows:
\[
\begin{array}{rcl}
\semb{\Sigma_i\alpha_i . P_i} &=& \caseonly\;\ci{\top_i}\semb{\alpha_i.P_i} \\
\semb{\ve{x}\vect{ y}.P} &=& \out{\vect{\ve{x}}}{y}.\semb{P}\\
\semb{\ve{x}(y).P} &=& \lin{\vect{\ve{x}}}{y}y.\semb{P} \\[0.4em]
\end{array}
\]
We translate bound and free output, free input, and tau actions in the following way.
\[\begin{array}{rcl}
\semb{\vec{x}\vect{ \nu c}} &=& \boutlabel{\vect{\vec{x}}}{c}{c} \\
\semb{\vec{x}\vect{ c}} &=& \out{\vect{\vec{x}}}{c} \\
\semb{\vec{x}\;y} &=&  \inlabel{\vect{\ve{x}}}y   \\
\semb{\tau} &=& \tau 
\end{array}\]
\end{defi}

The transition system in ${}^e\pi$ is given up to
structural congruence, i.e.,
for all $\alpha$ we have ${\xrightarrow\alpha}=(\equiv\xrightarrow\alpha\equiv)$.
\begin{defi}
 $\equiv$ is the least congruence satisfying alpha conversion, 
 the commutative monoidal laws with respect to both ($|$,0) and (+,0)
 and the following axioms\footnote{The original definition of $\equiv$~\cite{carbone.maffeis:expressive-power} 
includes an additional axiom $[x=x]P\equiv P$ allowing to contract
successful matches, but this axiom is omitted here
since the ${}^e\pi$ calculus does not include the match construct.
Unusually, the definition of $\equiv$ does not admit commuting
restrictions, i.e., $(\nu x)(\nu y)P\not\equiv(\nu y)(\nu x)P$.}:
 \[
   (\nu x)P \mid Q \equiv (\nu x)(P \mid Q)\text{ if }x\freshin Q
   \qquad \qquad \qquad
   (\nu x)P \equiv P\text{ if }x\freshin P
 \]
\end{defi}

The proofs of operational correspondence are similar to the polyadic pi-calculus case. 
We have the following initial results for late input actions.
\begin{lem}\label{lem:polySynchTransition}$ $
\begin{enumerate}
\item If $\trans{P}{\vec{x}(y)}{P'}$ then for all z, $\trans{\semb{P}}{\rule[-0.5em]{0pt}{1em}\inn{\vect{\vec{x}}}{z}}{P''}$ where $P'' \equiv \semb{P'}\lsubst{z}{y}$.
\item If $\trans{\semb{P}}{\rule[-0.5em]{0pt}{1em}\inn{\vect{\vec{x}}}{z}}{P''}$ then for all $y \freshin P$, $\trans{P}{\vec{x}(y)}{P'}$ where $\semb{P'\{z/y\} } = P''$.
\end{enumerate}
\end{lem}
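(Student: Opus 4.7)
I would prove both directions by induction on the derivation of the respective transition. For part~(1), the base case is the \textsc{In} rule of ${}^e\pi$ applied to an input prefix $\vec{x}(y).Q$: here $\semb{P} = \lin{\vect{\vec{x}}}{y}y.\semb{Q}$, and for any $z$ the psi \textsc{In} rule applies (using $\emptyframe \vdash \vect{\vec{x}} \sch \vect{\vec{x}}$ and $\Match(z, y, y) = \{z\}$), giving $\trans{\semb{P}}{\inn{\vect{\vec{x}}}{z}}{\semb{Q}\lsubst{z}{y}}$; the derivative is literally equal to $\semb{P'}\lsubst{z}{y}$. The inductive cases---sum (mapped to psi \textsc{Case}), parallel (\textsc{Par} with $z$ fresh for the context), restriction (\textsc{Scope}), and replication (\textsc{Rep})---invoke the induction hypothesis on subderivations, after which the corresponding psi rule yields the required transition with matching freshness conditions.

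The closure of ${}^e\pi$ transitions under structural congruence (${\xrightarrow{\alpha}} = {\equiv\xrightarrow{\alpha}\equiv}$) requires care. Each ${}^e\pi$ ${\equiv}$-axiom translates either to a psi ${\equiv}$-axiom (alpha conversion, monoidal laws for $(|,\nil)$, and scope extrusion all have direct counterparts in Definition~\ref{def:structcong}) or, in the case of the commutative monoidal laws for $(+,\nil)$, to a \textbf{case}-reorganisation that is not itself a psi ${\equiv}$-law. I would handle the sum axioms without relying on psi ${\equiv}$-closure: any input transition derivable after sum reorganisation in ${}^e\pi$ can be derived directly in psi by selecting the appropriate branch of the un-reorganised case term using the symmetric \textsc{Case} rule, yielding the very same derivative up to psi ${\equiv}$. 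An auxiliary lemma showing that $\semb{\cdot}$ commutes with substitution and respects alpha equivalence of ${}^e\pi$ processes is invoked throughout.

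For part~(2), I would invert the construction. Since $\semb{\cdot}$ is a simple homomorphism, the outermost constructor of $\semb{P}$ is determined by that of $P$, which constrains the last rule of the psi derivation. The \textsc{In} rule applies only when $\semb{P}$ is a translated input prefix $\lin{\vect{\vec{x}}}{y}y.\semb{Q}$, and $\Match(z, y, y) = \{z\}$ forces $P'' = \semb{Q}\lsubst{z}{y} = \semb{Q\{z/y\}}$, matching the ${}^e\pi$ rules applied as in \textsc{eIn}; other rules propagate structurally via the induction hypothesis. The main obstacle throughout is the mismatch between ${}^e\pi$'s late-binding input (with $y$ bound in the action $\vec{x}(y)$) and psi's free-input action $\inn{\vect{\vec{x}}}{z}$, which commits to $\lsubst{z}{y}$ during matching; aligning substitutions and freshness conditions under restrictions and parallel contexts requires careful bookkeeping, and one must ensure that the choice of $y \freshin P$ in part~(2) can be made without clashing with $z$ or with restricted names encountered in the subderivations.
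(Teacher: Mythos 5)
Your inductive structure matches the paper's for the prefix, sum, parallel, restriction and replication cases, and your inversion argument for part~(2) via the homomorphism property is also what the paper does (with the small caveat that in ${}^e\pi$ an input prefix only occurs under a sum, so the top-level psi rule for an input from $\semb{P}$ is always \textsc{Case} with \textsc{In} applied inside it, never \textsc{In} at top level). The genuine divergence is your treatment of the \textsc{Struct} rule. The paper does not chase the $\equiv$-axioms syntactically: it first proves Lemma~\ref{lemma:polySynchStructCong}, that $P\equiv Q$ implies $\semb{P}\sim\semb{Q}$ (by showing $\Set{(P,Q):\semb{P}\sim\semb{Q}}$ is a process congruence satisfying the axioms), and then in the \textsc{Struct} case simply plays the bisimulation game to transfer the transition from $\semb{Q}$ to $\semb{P}$, concluding by transitivity of $\sim$. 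This buys a short, uniform argument, at the price that the derivative is only related up to $\sim$ rather than $\equiv$ --- which is in fact all the paper's appendix proof establishes, the $\equiv$ in the lemma statement notwithstanding. Your alternative of translating each ${}^e\pi$ axiom into a psi $\equiv$-law (or a direct \textsc{Case}-branch selection for the sum laws) can be made to work, but as sketched it has a hole for the non-sum axioms: knowing $\semb{P}\equiv\semb{Q}$ in psi does not by itself let you transfer a transition with an $\equiv$-related derivative, since psi transitions are not defined up to structural congruence and the paper only provides ${\equiv}\subseteq{\sim}$ (Theorem~\ref{thm:sortstructcong}). So you would either have to prove separately that psi transitions respect $\equiv$ exactly, or fall back on bisimilarity --- at which point you have reconstructed the paper's argument with extra bookkeeping, including an induction over the derivation of ${\equiv}$ that your sketch omits. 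Also, there is no ``symmetric \textsc{Case} rule'' to invoke; \textsc{Case} already selects an arbitrary branch, which is all you need for the sum reorganisation.
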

\proof
By induction on the derivation of the transitions.%
\qed

This in turn yields the desired operational correpondence.
\begin{thm}\label{thm:polySynchTransition}$ $
\begin{enumerate}
\item If $\transPrim{P}{\alpha}{P'}$, then $\trans{\semb{P}}{\semb{\alpha}}{P''}$ where $P'' \equiv \semb{P'}$. 
\item If $\trans{\semb{P}}{\alpha'}{P''}$, then $\transPrim{P}{\alpha}{P'}$ where $\semb{\alpha } = \alpha'$ and $\semb{P' } = P''$.
\end{enumerate}
\end{thm}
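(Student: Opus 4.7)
The plan is to prove both directions by induction on the derivation of the relevant transition, lifting the earlier-established operational correspondence for late input (Lemma~\ref{lem:polySynchTransition}) to the other action labels, and then handling the fact that the ${}^e\pi$ transition relation is closed under $\equiv$ on both sides.

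First I would establish a preparatory lemma stating that $\semb{\cdot}$ respects structural congruence in the sense that if $P \equiv Q$ in ${}^e\pi$ then $\semb{P} \equiv \semb{Q}$ in \textbf{PSPI}. Since $\semb{\cdot}$ is a simple homomorphism on all constructors ($\nil$, $\mid$, $\nu$, $!$, $+$ via $\caseonly$, and the prefixes), and since the structural laws appearing in Definition~\ref{def:structcong} strictly include scope extrusion $(\nu x)P\mid Q\equiv(\nu x)(P\mid Q)$ for $x\#Q$, garbage collection $(\nu x)P\equiv P$ for $x\#P$ (derivable from $(\nu a)\nil\equiv\nil$ and parallel unit), commutativity of $\mid$, and alpha-conversion, this is a straightforward structural induction; the +-case uses commutativity/associativity and the identity laws for the $\caseonly$ branches (which are valid bisimilarity laws even if not in $\equiv$, so to stay within $\equiv$ one can restrict this preparatory lemma to coincide up to $\sim$, and adjust the final statement to use $\sim$ via Theorem~\ref{thm:sortstructcong}).

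For part~(1), I would induct on the derivation of $\transPrim{P}{\alpha}{P'}$. The four rules \textsc{eIn}, \textsc{Out}, \textsc{BOut}, \textsc{Tau} that define $\mathrel{{\apitransarrow{}}{}^{e}}$ from the underlying $\trans{\cdot}{\cdot}{\cdot}$ translate directly: \textsc{eIn} is exactly Lemma~\ref{lem:polySynchTransition}(1), and the other three reduce to an inner induction on the underlying ${}^e\pi$ transition derivation. In that inner induction, the prefix cases use the \textsc{In}/\textsc{Out} rules of psi with $\ve{a}\sch\ve{a}=\top$; the $+$ case uses \textsc{Case} together with $\emptyframe\vdash\top_i$; parallel composition uses \textsc{Par} (or \textsc{Com} for communication, where the channel equivalence premise $\Psi\vdash\vect{\ve{x}}\sch\vect{\ve{x}}$ follows from reflexivity of $\sch$ on tuples); restriction and scope extrusion use \textsc{Scope} and \textsc{Open}, with freshness conditions matching directly; replication uses \textsc{Rep}. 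Whenever the ${}^e\pi$ derivation invokes the closure $\equiv\xrightarrow\alpha\equiv$, we invoke the preparatory congruence-preservation lemma and assemble the psi transition modulo~$\equiv$ in the conclusion.

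For part~(2), I would induct on the derivation of $\trans{\semb{P}}{\alpha'}{P''}$. Because $\semb{\cdot}$ is defined by cases on the outermost constructor of $P$, every psi-rule whose conclusion matches $\semb{P}$ is forced to come from a specific constructor, so the inversion is uniform: a transition from $\semb{\caseonly\ldots}$ must arise via \textsc{Case} and picks one summand $\alpha_i.P_i$ of $P$, which then transitions by the appropriate prefix rule; a transition from $\semb{P\mid Q}$ arises via \textsc{Par}, symmetric \textsc{Par}, or \textsc{Com}, and the \textsc{Com} premise forces $\vect{\ve{x}}=\vect{\ve{y}}$; a transition from $\semb{(\nu a)P}$ arises via \textsc{Scope} or \textsc{Open}; etc. In each case we construct a witnessing ${}^e\pi$ transition and use Lemma~\ref{lem:polySynchTransition}(2) for the input case. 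The main obstacle I anticipate is a clean bookkeeping of the structural-congruence closure: strictly speaking the conclusion of part~(2) as stated reads $\semb{P'}=P''$ on the nose, which cannot hold whenever $P''$ results from a psi-side scope-extrusion in the derivation. I would therefore weaken the statement uniformly to $\semb{P'}\equiv P''$ (mirroring part~(1)), and observe that the ${}^e\pi$ side can absorb any required $\equiv$-rewriting via its own closure convention; if an exact equality is desired, one instead chooses $P'$ as a canonical representative of its $\equiv$-class, which is routine.
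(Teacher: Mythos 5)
Your proposal matches the paper's proof essentially step for step: the paper establishes the same preparatory lemma (Lemma~\ref{lemma:polySynchStructCong}, stated up to $\sim$ for exactly the reason you identify, namely that the laws needed for the $+$/\textbf{case} branches are not all in $\equiv$), and then proves both directions by induction on the transition derivation, dispatching the input case to Lemma~\ref{lem:polySynchTransition} and matching \textsc{Comm}/\textsc{Close}/\textsc{Open} against psi's \textsc{Com} and \textsc{Open} rules. The one inaccuracy is your worry about part~(2): psi-calculus transitions are generated by the SOS rules alone with no structural-congruence closure, so there is no ``psi-side scope extrusion'', and the on-the-nose equality $\semb{P'}=P''$ does go through by choosing the ${}^e\pi$ derivation that avoids the \textsc{Struct} closure (the \textsc{Close}-rule derivative $(\nu y)(P'\parop Q')$ maps syntactically to the psi \textsc{Com}-rule derivative).
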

\proof
By induction on the derivation of the transitions.%
\qed
Again, these results lead us to say that the polyadic synchronization pi-calculus can be \emph{represented} as a psi-calculus.
\begin{thm}
  \textnormal{\textbf{PSPI}} is a representation of the polyadic synchronization pi-calculus.
\end{thm}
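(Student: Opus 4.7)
The plan is to verify the two conditions of the representation definition, invoking the operational correspondence of Theorem~\ref{thm:polySynchTransition} for the semantic part. I would first define the equivariant relation $\approxeq$ between ${}^e\pi$ actions and psi-calculus actions by $\beta \approxeq \alpha$ iff $\semb\beta = \alpha$; equivariance is immediate since the clauses defining $\semb\cdot$ on actions introduce no fixed names.

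For condition (1), that $\semb\cdot$ is a simple homomorphism, the translation is explicitly homomorphic for $\nil$, restriction, replication, and parallel composition. For each prefixed continuation the translation instantiates an equivariant one-hole context: $\ve{x}(y).P$ comes from $\lin{\vect{\ve{x}}}{y}y.[\,]$ and $\ve{x}\vect{y}.P$ from $\out{\vect{\ve{x}}}{y}.[\,]$. The sum $\Sigma_{i=1}^n \alpha_i.P_i$ is treated as a family of constructors indexed by arity $n$, and for each $n$ the corresponding $n$-hole context $\caseonly\;\ci{\top}{\alpha_1'.[\,]}\casesep\cdots\casesep\ci{\top}{\alpha_n'.[\,]}$, where $\alpha_i'$ denotes the psi prefix obtained from $\alpha_i$, is equivariant and its instantiation agrees with the definition of $\semb\cdot$ on sums.

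For condition (2), the strong operational correspondence modulo structural equivalence follows directly from Theorem~\ref{thm:polySynchTransition}: part (1) supplies the forward direction with $P'' \equiv \semb{P'}$, and part (2) supplies the backward direction with the stronger equality $P'' = \semb{P'}$. In both directions the action correspondence $\beta \approxeq \alpha$ reduces to the equality $\semb\beta = \alpha$ asserted by the theorem.

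I do not anticipate any serious obstacle; the semantic heavy lifting is already discharged by Theorem~\ref{thm:polySynchTransition}, leaving only an administrative check that $\semb\cdot$ fits the homomorphism template. The minor point worth care is the variable-arity sum, which as indicated should be handled by viewing $\Sigma$ as a family of constructors indexed by $n$, each with its own equivariant $n$-hole context. I would not attempt to prove a completeness result of the kind established for \textnormal{\textbf{PPI}} in Section~\ref{sec:polyPi}, since arbitrary $\caseonly$ agents with conditions other than $\top$ have no counterpart in the source calculus and so $\semb\cdot$ cannot be expected to be surjective modulo $\sim$.
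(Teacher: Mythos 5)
Your proposal is correct and follows essentially the same route as the paper: define $\beta\approxeq\alpha$ iff $\alpha=\semb\beta$, note that $\semb\cdot$ is a simple homomorphism by definition, and discharge the strong operational correspondence by appeal to Theorem~\ref{thm:polySynchTransition}. (The paper's own proof cites Theorem~\ref{thm:ppi-strong-operational-corresp.} at this step, which appears to be a copy-paste slip from the \textbf{PPI} case; your citation of Theorem~\ref{thm:polySynchTransition} is the intended one, and your decision not to claim completeness matches the statement, which asserts only a representation.)
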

\begin{proof} We let $\beta\approxeq\alpha$ iff $\alpha=\semb\beta$.
  \begin{enumerate}
  \item $\semb\cdot$ is a simple homomorphism by definition.
  \item $\semb\cdot$ is a strong operational correspondence by Theorem~\ref{thm:ppi-strong-operational-corresp.}. \qedhere
  \end{enumerate}
\end{proof}

To investigate the surjectivity properties of $\semb\cdot$, 
we need to consider the fact that polyadic synchronization pi has only mixed (i.e., prefix-guarded) choice. 
\begin{defi}[Case-guarded]
  A \textbf{PSPI} process is case-guarded if in all its subterms of the
  form
  $\caseonly\;{\ci{\varphi_1}{P_1}\casesep\cdots\casesep\ci{\varphi_n}{P_n}}$,
  for all $i\in\Set{1,\dots,n}$, $\varphi_i=\top$ implies $P_i=\out{M}N .Q$ or
  $P_i=\lin{M}{\ve{x}}X.Q$.
\end{defi}
We define the translation $\overline{R}$ from case-guarded \textbf{PSPI} processes to ${}^e\pi$ as the translation with the same name from \textbf{PPI}, except that $\bot$-guarded branches of $\caseonly$ statements are discarded.
\begin{thm}
\label{thm:full-abstraction-of-PSPi}
For all case-guarded \textnormal{\textbf{PSPI}} processes $R$ we have $R\sim \semb{\overline{R}}$. 
\end{thm}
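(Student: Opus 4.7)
The plan is to proceed by structural induction on the case-guarded process $R$, using that $\sim$ is a congruence in all sorted psi-calculi (Theorem~\ref{thm:congruence-congruence}). For the homomorphic constructors of \textbf{PSPI}---namely $\nil$, parallel composition, restriction, and replication---the equality $\semb{\bmes{f(R_1,\dots,R_k)}} = f(\semb{\bmes{R_1}},\dots,\semb{\bmes{R_k}})$ holds directly from the definitions of the two translations, so the inductive step follows immediately from the induction hypothesis and congruence. For the assertion $\pass{\emptyframe}$, a small direct bisimulation argument gives $\pass{\emptyframe} \sim \nil = \semb{\bmes{\pass{\emptyframe}}}$, since both sides have empty frame and no transitions.

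For the input and output prefix subcases of $R$, I would observe that $\bmes{\cdot}$ sends a \textbf{PSPI} prefix to the syntactically corresponding ${}^e\pi$ prefix and that $\semb{\cdot}$ then sends it back unchanged; the induction hypothesis applied to the continuation, combined with the congruence rules \textsc{COut} or \textsc{CIn}, concludes each case.

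The interesting case is a case statement $R = \caseonly\;{\ci{\varphi_1}{R_1}\casesep\cdots\casesep\ci{\varphi_n}{R_n}}$. Because $\conditions = \{\top,\bot\}$ in \textbf{PSPI}, each $\varphi_i$ is either $\top$ or $\bot$; the case-guarded hypothesis ensures that every $R_i$ with $\varphi_i = \top$ is itself an input or output prefix. I would split the argument into two steps. First, establish $R \sim R'$, where $R'$ is the case statement obtained from $R$ by discarding all $\bot$-guarded branches. Second, using the definition of $\bmes{\cdot}$ for case-guarded processes (which also discards $\bot$-guarded branches, so $\bmes{R} = \bmes{R'}$) and the induction hypothesis on each of the prefix branches of $R'$, apply the \textsc{CCase} clause of the congruence to obtain $R' \sim \semb{\bmes{R'}}$. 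Chaining the two steps gives $R \sim \semb{\bmes{R}}$.

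The main obstacle is the first step of the case-statement argument: justifying that $\bot$-guarded branches may be dropped up to $\sim$. To this end I would exhibit an explicit candidate relation $\mathcal{R}$ containing, at every assertion $\Psi$, all pairs of \textbf{PSPI} case statements that agree after removal of $\bot$-guarded branches, together with the identity and closed under parallel contexts for reuse in simulation steps. Since no assertion entails $\bot$, rule \textsc{Case} can never fire on a $\bot$-guarded branch, so the transitions from the two sides correspond exactly; the frames on both sides are the trivial $\emptyframe$, so static equivalence is immediate; and extension with arbitrary assertion is automatic because $\mathcal{R}$ is parametric in $\Psi$. Because $\bot$ is a constant with empty support, substitution commutes with branch-dropping, and so this bisimulation at every $\Psi$ witnesses the stronger relation $\sim$ rather than just $\bisim$, as required for the congruence-based induction to go through.
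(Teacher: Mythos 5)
Your proof is correct, and its outer skeleton (structural induction on $R$, with the homomorphic constructors and the prefixes dispatched by the congruence properties of $\sim$ from Theorem~\ref{thm:congruence-congruence}) matches the paper's. Where you genuinely diverge is in the $\caseonly$ case: the paper performs an inner induction on the number of branches, peeling off one branch at a time and invoking Lemma~\ref{lemma:flatten-case} to flatten the nested $\caseonly$ that this incremental construction produces (with the $\bot$-guarded branches silently discarded along the way). You instead discard \emph{all} $\bot$-guarded branches in one step via an explicit candidate bisimulation, observe that $\bmes{R}=\bmes{R'}$ and that every surviving branch is $\top$-guarded and prefix-shaped by case-guardedness, and then apply \textsc{CCase} directly to the flat case statement; no inner induction and no appeal to Lemma~\ref{lemma:flatten-case} is needed, precisely because in \textbf{PSPI} the reverse translation of a case-guarded statement is a flat prefix-guarded sum whose re-encoding is again a flat case statement. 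Your route is arguably tidier for this instance, and it makes explicit a fact the paper's detailed proof (the surjectivity argument in the appendix) uses but does not justify, namely that appending a $\bot$-guarded branch preserves $\sim$; your candidate relation, parametric in $\Psi$, with the identity adjoined and stable under substitution because $\top$ and $\bot$ are equivariant constants never entailed respectively always/never, is exactly what is needed there. The only caveat is that your one-shot argument leans on the specific shape of \textbf{PSPI} (prefix-guarded sums, trivial conditions), whereas the flatten-case machinery is what the paper reuses for \textbf{PPI}, where non-trivially guarded and non-prefix branches survive the reverse translation.
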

\proof By structural induction on $R$.
For $\caseonly$ agents, we use an inner induction on the number of branches, 
with Lemma~\ref{lemma:flatten-case} applied in the induction case.\qed
\begin{cor}
  If $P$ is a polyadic synchronization pi-calculus process, then $P\mathrel{\dot\sim}\overline{\semb{P}}$.
\end{cor}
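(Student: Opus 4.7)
The plan is to combine the strong operational correspondence of Theorem~\ref{thm:polySynchTransition} with the near-surjectivity result of Theorem~\ref{thm:full-abstraction-of-PSPi}. The first observation is that for every polyadic synchronisation pi-calculus process $P$, the PSPI process $\semb{P}$ is case-guarded: by inspection of the clauses defining $\semb{\cdot}$, the only $\caseonly$ constructs introduced are those arising from the summation $\Sigma_i \alpha_i . P_i$, and in each such case every branch is guarded by the translation of a prefix, which is either $\out{\vect{\ve{x}}}{y}.\semb{P}$ or $\lin{\vect{\ve{x}}}{y}y.\semb{P}$. Hence Theorem~\ref{thm:full-abstraction-of-PSPi} applies and yields $\semb{P} \sim \semb{\overline{\semb{P}}}$ in \textbf{PSPI}.

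Next I would transfer this bisimulation back to the source calculus. To do so, I consider the candidate relation
\[
\mathcal{S} \;=\; \bigl\{(P_1,P_2) \;:\; \semb{P_1} \sim \semb{P_2}\bigr\}
\]
on ${}^e\pi$ processes, and show that $\mathcal{S}$ is an early bisimulation in ${}^e\pi$. Symmetry is immediate from symmetry of $\sim$. For the simulation clause, assume $(P_1,P_2) \in \mathcal{S}$ and $\transPrim{P_1}{\alpha}{P_1'}$. By Theorem~\ref{thm:polySynchTransition}(1) we obtain $\trans{\semb{P_1}}{\semb{\alpha}}{Q_1}$ with $Q_1 \equiv \semb{P_1'}$. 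Since $\semb{P_1} \sim \semb{P_2}$, there is $Q_2$ with $\trans{\semb{P_2}}{\semb{\alpha}}{Q_2}$ and $Q_1 \sim Q_2$. By Theorem~\ref{thm:polySynchTransition}(2), this transition of $\semb{P_2}$ is the image of a transition $\transPrim{P_2}{\alpha}{P_2'}$ with $\semb{P_2'} = Q_2$. Since $\sim$ respects $\equiv$ (structural congruence is contained in strong bisimulation in psi-calculi), we get $\semb{P_1'} \sim \semb{P_2'}$, i.e.\ $(P_1',P_2') \in \mathcal{S}$. Thus $\mathcal{S} \subseteq \mathrel{\dot\sim}$.

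Combining the two halves, the bisimilarity $\semb{P} \sim \semb{\overline{\semb{P}}}$ obtained from Theorem~\ref{thm:full-abstraction-of-PSPi} gives $(P,\overline{\semb{P}}) \in \mathcal{S}$, and therefore $P \mathrel{\dot\sim} \overline{\semb{P}}$. The main obstacle is the last bit of the middle paragraph: one must ensure that the labels of ${}^e\pi$ actions are in bijective correspondence with the psi-calculus labels produced by $\semb{\cdot}$, so that the back-direction of the operational correspondence really reconstructs an ${}^e\pi$ action of the same shape; in particular, care is needed for input actions where the early presentation has been folded into the labels, and one must check that the choice of fresh names in bound outputs can be handled uniformly. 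These are routine verifications once the case analysis in Theorem~\ref{thm:polySynchTransition} is in hand.
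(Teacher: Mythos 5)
Your proposal is correct and matches the paper's intended derivation: the paper leaves this corollary unproved because it follows from Theorem~\ref{thm:full-abstraction-of-PSPi} (applicable since $\semb{P}$ is case-guarded) together with the ``$\semb{P}\sim\semb{Q}$ implies $P\mathrel{\dot\sim}Q$'' direction of full abstraction, which the appendix proves by showing exactly your relation $\mathcal{S}=\{(P_1,P_2):\semb{P_1}\sim\semb{P_2}\}$ is an early bisimulation via Theorem~\ref{thm:polySynchTransition} and Lemma~\ref{lem:polySynchTransition}. The label-correspondence caveat you flag at the end is handled there by treating input actions separately, just as you anticipate.
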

\begin{cor}
  For all ${}^e\pi$ processes $P$, $Q$, $P \mathrel{\dot\sim} Q$ (i.e., $P$ and $Q$ are early labelled congruent) iff $\semb{P} \sim \semb{Q}$.
\end{cor}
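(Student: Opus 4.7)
The plan is to follow the analogous strategy as for the corresponding corollary for \textbf{PPI}, transferring bisimulations across $\semb{\cdot}$ using the strong operational correspondence of Theorem~\ref{thm:polySynchTransition}. A preliminary step is a commutation lemma stating that $\semb{P}\sigma = \semb{P\sigma}$ for every name-for-name substitution $\sigma$, which I would verify by a routine structural induction, since $\semb{\cdot}$ is homomorphic on all constructors and the prefix cases simply rename the channel tuple and the object.

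For the forward direction $P \mathrel{\dot\sim} Q \Rightarrow \semb{P} \sim \semb{Q}$, I would take as candidate the relation $\mathcal{R} = \{(\Psi, R_1, R_2) : \Psi \in \assertions,\ \exists P', Q'.\ R_1 \equiv \semb{P'} \wedge R_2 \equiv \semb{Q'} \wedge P' \mathrel{\dot\sim} Q'\}$, absorbing $\equiv$-closure into the candidate since $\equiv \subseteq \sim$ by Theorem~\ref{thm:sortstructcong}. Static equivalence, symmetry and extension with arbitrary assertion are immediate because all translated processes have empty frame. For simulation, a transition $\trans{R_1}{\alpha}{R_1'}$ is first shifted to $\trans{\semb{P'}}{\alpha}{R_1'}$ modulo $\equiv$; part~(2) of Theorem~\ref{thm:polySynchTransition} lifts this to $\transPrim{P'}{\beta}{P''}$ with $\semb{\beta}=\alpha$ and $\semb{P''}\equiv R_1'$; the assumption $P' \mathrel{\dot\sim} Q'$ provides a matching $\transPrim{Q'}{\beta}{Q''}$ with $P'' \mathrel{\dot\sim} Q''$; and part~(1) yields $\trans{\semb{Q'}}{\alpha}{R_2'}$ with $\semb{Q''}\equiv R_2'$, closing the diagram inside $\mathcal{R}$. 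Strengthening $\bisim$ to $\sim$, i.e.\ closure under arbitrary substitution sequences, follows from the commutation lemma together with $\mathrel{\dot\sim}$ being a congruence (hence closed under substitution).

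For the reverse direction, I would symmetrically take $\mathcal{S} = \{(P', Q') : \semb{P'} \sim \semb{Q'}\}$ and verify it is an early labelled bisimulation on ${}^e\pi$, transferring each simulation step via parts~(1) and~(2) of Theorem~\ref{thm:polySynchTransition} in the opposite roles. Closure of the result under substitutions (needed for congruence) again uses the commutation lemma together with $\sim$ being closed under substitution. The main obstacle throughout is the up-to-$\equiv$ nature of the operational correspondence: the derivatives obtained by lifting are only structurally congruent to the translations of the source derivatives, so the relations must be insensitive to $\equiv$ on both sides. On the \textbf{PSPI} side this is absorbed by Theorem~\ref{thm:sortstructcong}; on the ${}^e\pi$ side, the transition relation is by definition quotiented by $\equiv$, so no extra work is needed.
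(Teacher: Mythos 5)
Your proposal is correct and follows essentially the same route as the paper: both directions are proved by exhibiting the transferred relation ($\{(P,Q):\semb{P}\sim\semb{Q}\}$ as an early congruence in ${}^e\pi$, and $\{(\emptyframe,\semb{P},\semb{Q}):P\mathrel{\dot\sim}Q\}$ as a bisimulation/congruence in \textbf{PSPI}), closing each simulation square with the two parts of Theorem~\ref{thm:polySynchTransition} and handling substitution closure via the commutation of $\semb{\cdot}$ with substitutions. Your explicit absorption of $\equiv$ into the candidate is a minor presentational difference; the paper handles the same up-to-$\equiv$ slack implicitly via transitivity of $\sim$.
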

\proof By strong operational correspondence~\ref{thm:polySynchTransition}, and $\semb\cdot$ commuting with substitutions.\qed

We thus have that polyadic synchronization pi corresponds
to the case-guarded \textbf{PSPI} processes, modulo strong bisimulation. 

\subsection{Value-passing CCS}\label{sec:vpCCS}

  Value-passing CCS~\cite{milner:cac} is an extension of pure CCS to
    admit arbitrary data from some set \textbf{V} to be sent along
    channels; there is no dynamic connectivity so channel names cannot
    be transmitted.  When a value is received in a communication it
    replaces the input variable everywhere, and where this results in
    a closed expression it is evaluated, so for example $a(x) \sdot
    \overline{c}(x+3)$ can receive 2 along $a$ and become
    $\out{c}5$. There are conditional \textbf{if} constructs that can
    test if a boolean expression evaluates to true, as in $a(x) \sdot
    \ifthen{x>3}{P}$.
Formally, the value-passing CCS processes are defined by the following grammar
with $x,y$ ranging over names, $v$ over values, $b$ over boolean expressions,
and $L$ over sets of names.
\[
\boxed{
\begin{array}{rcl}
P,Q \;\BnfDef\;
x(y).P \BnfOr 
\overline{x} (v) .P \BnfOr
\Sigma_i\;P_i \BnfOr 
\mathbf{if}\;b\;\mathbf{then}\;P \BnfOr 
P\;\backslash\;L \BnfOr 
P \pll Q  \BnfOr
!P  \BnfOr
\nil
\end{array}
}
\]\smallskip

\noindent     To represent this as a psi-calculus we assume an arbitrary set of
    expressions $e\in\textbf{E}$ including at least the values $\mathbf{V}$. 
    A subset of \textbf{E} is the boolean expressions $b\in\mathbf{E}_\mathbf{B}$.  
    Names are either used as channels (and then have the sort
    \kw{chan}) or expression variables (of sort \kw{exp}); only the
    latter can appear in expressions and be substituted by values. An
    expression is closed if it has no name of sort \kw{exp} in its
    support, otherwise it is open.  The values $v\in\textbf{V}$ 
    are closed and have sort \kw{value}; all other expressions have sort \kw{exp}.
    The boolean values are 
    $\textbf{V}_{\mathbf{B}}:=\mathbf{V}\cap\mathbf{E}_{\mathbf{B}}=\{\top,\bot\}$,
and $\emptyframe\vdash\top$ but $\neg(\emptyframe\vdash\bot)$.
    We let $E$ be an evaluation function on expressions, that takes
    each closed expression to a value and leaves open expressions
    unchanged. We write $e\{\ve{V}/\ve{x}\}$ for the result of
    syntactically replacing all $\ve{x}$ simultaneously by $\ve{V}$ in
    the (boolean) expression $e$, and assume that the result is a
    valid (boolean) expression. For example $(x+3)\{2/x\}$ = 2+3, and
    $E(2+3) = 5$.  We define substitution on expressions to use evaluation, i.e.
    $e\lsubst{\ve{V}}{\ve{x}}= E(e\{\ve{V}/\ve{x}\})$. As an example, 
    $(x+3)\lsubst{2}{x}=E((x+3)\{2/x\})=E(2+3)=5$.  We use the single-variable 
    patterns of Example~\ref{ex:symsempats}.

    \instanceTwo{VPCCS}{
      \trms = \nameset\cup\mathbf{E}\\
      \conditions = \mathbf{E}_{\mathbf{B}} \\
      \assertions = \{\emptyframe\} \\
      \pats = \nameset\\
      {a\sch a} = \top \\
      {e\sch e'} = \bot\text{ otherwise} \\
      \vars(a) = \{a\} \\
      \Match(v,a,a)=\Set{v}\;\mbox{if $v \in \mathbf{V}$}\\
      \Match(M,\ve{x},a)=\emptyset\text{~otherwise}}{
      \namesorts=\{\kw{chan},\kw{exp}\} \\
      \sortset=\namesorts \cup \{\kw{value}\} \\
      v \in \mathbf{V}  \Rightarrow \sort{v} = \kw{value} \\
      e \in \mathbf{E}\setminus\textbf{V}  \Rightarrow \sort{e} = \kw{exp} \\
      e \in \mathbf{E}  \Rightarrow e\lsubst{\ve{M}}{\ve{x}} = E(e\{\ve{M}/\ve{x}\}) \\
      {\CanSubstitute} = \{(\kw{exp}, \kw{value})\}\\
      \Restrictionsorts =\{\kw{chan}\} \\{\CanSend}= {\CanReceive}
      =\{(\kw{chan},\kw{exp}), (\kw{chan},\kw{value})\} }

    Closed value-passing CCS processes correspond to \textbf{VPCCS}
    agents $P$ where all free names are of sort $\kw{chan}$.
    To prove that \textbf{VPCCS} is a psi-calculus, 
    the interesting case is when the sort of a term is changed by substitution:
    let~$e$ be an open term, and $\sigma$ a substitution such that 
    $\n(e)\subseteq\dom\sigma$.  
    Here $\sort e=\kw{exp}$ and $\sort{e\sigma}=\kw{value}$;
    this satisfies Definition~\ref{def:subst} since $\kw{value}\le\kw{exp}$ 
    in the subsorting preorder (here $\kw{exp}\le\kw{value}$ also holds,
    but is immaterial since there are no names of sort \kw{value}).

    We show that \textbf{VPCCS} represents value-passing CCS as
    defined by Milner~\cite{milner:cac}, with the following
    modifications: 
    \begin{itemize}
    \item We use replication instead of process constants.
    \item We
      consider only finite sums. Milner allows for infinite sums
      without specifying exactly what infinite sets are allowed and
      how they are represented, making a fully formal comparison
      difficult. Introducing infinite sums naively in psi-calculi
      means that agents might exhibit cofinite support and exhaust the
      set of names, rendering crucial operations such as
      $\alpha$-converting all bound names to fresh names impossible.
    \item We do not consider the relabelling construct $P[f]$ of CCS
	at all. Injective relabelings are redundant in CCS~\cite{DBLP:conf/fossacs/GiambiagiSV04}, and 
      the construct is not included in the psi-calculi
      framework.
    \item We only allow finite sets $L$ in restrictions
      $P\;\backslash\;L$.  With finite sums, this results in no loss
      of expressivity since agents have finite support.
    \end{itemize}
    Milner's restrictions are of sets of names, which we represent as a
    sequence of $\nu$-binders.  To create a unique such sequence from $L$, 
    we assume an injective and support-preserving function 
    $\overrightarrow\cdot :
    {\PowFin}(\nameset[\kw{chan}]) \rightarrow (\nameset[\kw{chan}])^*$.
    For instance, $\overrightarrow L$ may be defined as sorting the names in $L$ 
    according to some total order on $\nameset[\kw{chan}]$, 
    which is always available since $\nameset[\kw{chan}]$ is countable.

    The mapping $\semb{\cdot}$ from value-passing CCS into \textbf{VPCCS} is 
    defined homomorphically on parallel composition, output and~$\nil$, and otherwise as follows.

  \[\begin{array}{rcl}
          \semb{x(y).P} &=& \lin{x}{y}{y}.\semb{P}  \\
  \semb{\Sigma_i\;P_i } &=& \caseonly\;{\ci{\top}{\semb{P_1}}\casesep\cdots\casesep\ci{\top}{{\semb{P_i}}}}\\
  \semb{\mathbf{if}\;b\;\mathbf{then}\;P} &=& \caseonly\;{\ci{b}{\semb{P}}}\\
  \semb{P\;\backslash\;L} &=& (\nu \overrightarrow L)\semb{P}
  \end{array}
  \]

  We translate the value-passing CCS actions as follows
  \[\begin{array}{rcl}
          \semb{x(v)} &=& \inlabel{x}{v} \\
          \semb{\overline{x}(v)} &=& \out{x}{v} \\
          \semb{\tau} &=& \tau
  \end{array}\]

As an example, in a version of \textbf{VPCCS} where the expressions $\mathbf{E}$ include natural numbers and operations on those,
\[
\begin{array}{rcl}
\multicolumn{3}{l}{
\lin{a}{x}{x} \sdot \caseonly\;{\ci{x>3}{\overline{c}(x+3)}}
}\\
\qquad &\gt{\inlabel{a}{4}}& (\caseonly\;{\ci{x>3}{\overline{c}(x+3)}})\lsubst{4}{x}\\
&= & \caseonly\;{\ci{E((x>3)\subst{4}{x})}{\overline{c}(E((x+3)\subst{4}{x}))}}\\
&= & \caseonly\;{\ci{E({4>3})}{\overline{c}(E({4+3}))}}\\
&= & \caseonly\;{\ci{\top}{\overline{c}7}}\\
&\gt{\out{c}{7}} & \nil
\end{array}
\]

In our psi semantics, expressions in processes are evaluated when they
are closed by reception of variables (e.g.~in the first transition
above), while Milner simply identifies closed expressions with their
values~\cite[p55f]{milner:cac}.

\begin{lem}\label{lemma:CCSclosed}
  If $P$ is a closed \textnormal{\textbf{VPCCS}} process and $\trans{P}{\alpha}{P'}$,
  then $P'$ is closed.
\end{lem}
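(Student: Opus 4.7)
The plan is to proceed by induction on the derivation of $\trans{P}{\alpha}{P'}$, verifying in each rule that the derivative $P'$ inherits from $P$ the property that all its free names are of sort $\kw{chan}$ (equivalently, that no free name of sort $\kw{exp}$ appears).

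The crucial case is \textsc{In}, which yields $\lin{M}{y}{y}.P_0 \gt{\inn{K}{N}} P_0\lsubst{N}{y}$ with premise $N \in \Match(N, y, y)$. By the definition of $\Match$ in \textbf{VPCCS}, this premise forces $N \in \mathbf{V}$, so $N$ is a value and hence closed (values have no free name of sort $\kw{exp}$). Since $\lin{M}{y}{y}.P_0$ is closed, the only possible free exp-name of $P_0$ is $y$. Performing $P_0\lsubst{N}{y}$ removes every occurrence of $y$, and each expression $e$ inside $P_0$ affected by the substitution becomes $E(e\{N/y\})$: if $e\{N/y\}$ is closed then $E$ delivers a (closed) value, and otherwise any remaining free exp-name of $e\{N/y\}$ was already bound by some enclosing input prefix in $P_0$, so contributes no free exp-name to the overall process. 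Thus $P_0\lsubst{N}{y}$ is closed.

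The remaining cases are routine. For \textsc{Out}, the derivative $P_0$ is a direct subterm. For \textsc{Case}, the induction hypothesis applies to the transition of the selected branch. For \textsc{Com}, both sub-derivations yield closed derivatives by the induction hypothesis; in particular, the communicated object is forced by the input sub-derivation to be a value, so no fresh exp-names are injected. For \textsc{Par}, \textsc{Scope}, \textsc{Open}, and \textsc{Rep}, straightforward structural arguments suffice; in \textsc{Scope} and \textsc{Open} the restricted name is of sort $\kw{chan}$ by Definition~\ref{def:sorted-psi-well-formedness}(\ref{item:res}), which concerns only channel-name scoping and is orthogonal to exp-closedness. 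The only delicate point throughout is ensuring that evaluation by $E$ never introduces free exp-names, which follows from $E$ mapping closed expressions to (closed) values and leaving open expressions unchanged; I expect this to be the main obstacle to a fully rigorous write-up, though conceptually it is just a consequence of the stated properties of $E$ and the definition of substitution on expressions.
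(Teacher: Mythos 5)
The paper states Lemma~\ref{lemma:CCSclosed} without giving any proof, so there is no argument of the authors' to compare against; your proposal fills a genuine gap. Your proof is correct and is the argument one would expect: induction on the transition derivation, with the only non-trivial case being \textsc{In}, where the definition of $\Match$ in \textbf{VPCCS} forces the received object to be a value of $\mathbf{V}$ (hence closed), so the substitution $P_0\lsubst{N}{y}$ eliminates the one potentially free \kw{exp}-name without introducing new ones; your observations that partially evaluated expressions retain only \emph{bound} \kw{exp}-names and that restricted names are of sort \kw{chan} by Definition~\ref{def:sorted-psi-well-formedness}(\ref{item:res}) dispose of the remaining subtleties. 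You could shorten the \textsc{Open} case by citing Lemma~\ref{lemma:CCSnoBoundOut}, which shows that rule is never applicable in \textbf{VPCCS}, but your direct argument also works.
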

\begin{thm}\label{thm:CCStrans}
  If $P$ and $Q$ are closed value-passing CCS processes, then
  \begin{enumerate}
  \item if $\trans{P}{\alpha}{P'}$ then
    $\trans{\semb{P}}{\semb{\alpha}}{\semb{P'}}$; and
  \item if $\trans{\semb{P}}{\alpha'}{P''}$ then
    $\trans{P}{\alpha}{P'}$ where $\semb{\alpha} = \alpha'$ and
    $\semb{P'} = P''$.
  \end{enumerate}
\end{thm}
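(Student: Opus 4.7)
The plan is to prove both parts by induction on the derivation of the labelled transitions, exploiting the fact that $\semb{\cdot}$ is homomorphic on every constructor except sum, input, conditional, and restriction. Lemma~\ref{lemma:CCSclosed} serves as a standing invariant throughout the induction: every transition encountered originates from a closed process, which guarantees that any expression appearing in an exposed output prefix reduces to a value before being emitted, and that the boolean condition of any exposed conditional is closed.

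Before starting the main induction I would establish an auxiliary commutation lemma: for every value $v \in \mathbf{V}$ and every expression variable $y$,
\[
  \semb{P\{v/y\}} \;=\; \semb{P}\lsubst{v}{y},
\]
where the left-hand side uses Milner's convention of identifying closed expressions with their values, and the right-hand side uses the \textbf{VPCCS} substitution, which performs this identification explicitly via $e\lsubst{v}{y} = E(e\{v/y\})$. Structural induction on $P$ suffices, since $\semb{\cdot}$ is the identity on expressions and boolean conditions and the two notions of substitution coincide on every other syntactic class.

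For part~1 (completeness), the interesting cases are: the input case $\trans{x(y).P}{x(v)}{P\{v/y\}}$ is matched by \textsc{In} using $\emptyframe \vdash x \sch x$ and $\Match(v,y,y) = \{v\}$, with the two derivatives identified via the commutation lemma; the sum case is matched by \textsc{Case} on the firing branch, since $\emptyframe \vdash \top$; the conditional case applies when the closed boolean $b$ evaluates to $\top$, which is exactly the condition needed for \textsc{Case} because $\emptyframe \vdash \top$; the restriction case is discharged by iterated \textsc{Scope} applications over the binders enumerated by $\overrightarrow{L}$; and the remaining cases (output, parallel, replication, communication) are routine applications of \textsc{Out}, \textsc{Par}, \textsc{Rep}, and \textsc{Com}. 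Part~2 (soundness) proceeds symmetrically by induction on the psi-calculus derivation, inverting each rule to recover a matching CCS transition, with the commutation lemma reused in the input case to identify $\semb{P}\lsubst{v}{y}$ with $\semb{P\{v/y\}}$. The main obstacle throughout is the interplay between expression evaluation and substitution, but this is entirely localised to the commutation lemma; once that is in place, the remaining cases reduce to straightforward rule-by-rule bookkeeping.
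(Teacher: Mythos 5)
Your overall strategy --- induction on the derivations in both directions, with the interaction between expression evaluation and substitution isolated in a commutation lemma $\semb{P\{v/y\}}=\semb{P}\lsubst{v}{y}$ --- is essentially the paper's proof; the paper uses that identity implicitly in its input and communication cases, and making it an explicit lemma is a reasonable refinement.

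There is, however, one step that does not go through as stated. In part~2 you claim that the induction on the psi-calculus derivation proceeds by ``inverting each rule to recover a matching CCS transition,'' but the psi-calculus semantics also contains the \textsc{Open} rule, which produces a bound-output action $\boutlabel{M}{\ve{x}}{N}$ with $\ve{x}\neq\varepsilon$. No value-passing CCS action corresponds to such a label (your relation $\semb\cdot$ on actions covers only free input, free output and $\tau$), so this case cannot be inverted --- it must instead be shown to be \emph{impossible}. The paper does this via Lemma~\ref{lemma:CCSnoBoundOut}: in \textbf{VPCCS} only names of sort $\kw{chan}$ may be restricted, while transmitted objects are expressions or values, whose support contains no channel names; hence a restricted name can never occur in the object of an output and \textsc{Open} can never fire. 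The same lemma is also what makes the freshness side conditions $\bn{\alpha}\freshin Q$ in \textsc{Par} and \textsc{Com} vacuous in part~1, a point your ``routine'' cases silently rely on. Adding this observation (as a separate lemma, proved by a short induction with a contradiction in the \textsc{Open} case) closes the gap; the rest of your outline is sound.
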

\proof
  By induction on the derivations of $P'$ and $P''$, respectively.
 The full proof is given in Appendix~\ref{sec:value-passing-ccs-proofs}.
\qed

As before, this yields a representation theorem.
\begin{thm}
  \textnormal{\textbf{VPCCS}} is a representation of the closed agents of value-passing CCS
  (modulo the modifications described above).
\end{thm}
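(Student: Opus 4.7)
The plan is to verify the two defining clauses of representation, using the operational correspondence theorem already stated (Theorem~\ref{thm:CCStrans}) for the bulk of the work.

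First I would fix the equivariant action correspondence by declaring $\beta\approxeq\alpha$ iff $\alpha=\semb{\beta}$, where the (equivariant) action translation is the one given just above the theorem. That this is a relation between value-passing CCS actions and \textbf{VPCCS} actions is immediate, and equivariance follows because each clause of $\semb{\cdot}$ on actions permutes names in lockstep.

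Next I would verify that $\semb{\cdot}$ is a simple homomorphism. For each constructor of value-passing CCS I need to exhibit an equivariant \textbf{VPCCS} context whose instantiation agrees with the defining equation of $\semb{\cdot}$. The constructors $\nil$, parallel composition, replication and output all translate homomorphically, so the required contexts $\nil$, $[]\pll[]$, $![]$ and $\out{x}{v}.[]$ are trivially equivariant. For input $x(y).P$ the context is $\lin{x}{y}{y}.[]$, equivariant as claimed (the binder $y$ is a bound name of the context, not a parameter). For guarded sums $\Sigma_i P_i$ of arity $n$ the context is $\caseonly\;{\ci{\top}{[]}\casesep\cdots\casesep\ci{\top}{[]}}$ with $n$ holes, again equivariant since $\top$ has empty support. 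For $\mathbf{if}\;b\;\mathbf{then}\;P$ the context is $\caseonly\;\ci{b}{[]}$; $b$ is a parameter of the constructor rather than of the context, and the context is equivariant in $b$. The only slightly delicate case is restriction $P\backslash L$: the associated context is $(\nu \overrightarrow{L})[]$, which is equivariant because $\overrightarrow{\cdot}$ is support-preserving and thus commutes with name swappings up to the order of the binders, and restricted binders commute with each other by structural congruence — this is where the ``modulo structural equivalence'' slack in the definition earns its keep.

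Finally, to obtain strong operational correspondence (modulo $\equiv$), I would invoke Theorem~\ref{thm:CCStrans} directly: both directions of the theorem deliver the stronger statement of exact syntactic equality of the derivative with $\semb{P'}$, which in particular implies $\semb{P'}\equiv Q$. The only soundness condition to double-check is the hypothesis that the agents are closed; this is maintained by Lemma~\ref{lemma:CCSclosed}, which guarantees that the representation transports $\semb{\cdot}$-images of closed processes to \textbf{VPCCS} processes whose source is again a closed value-passing CCS process.

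I do not expect a serious obstacle in any of these steps; the only subtlety is ensuring that the restriction context is genuinely equivariant despite $\overrightarrow{L}$ imposing an order on the binders, which is handled by appealing to commutativity of adjacent restrictions in structural congruence. If one prefers to avoid this small argument, one can alternatively regard $\backslash L$ for each fixed finite set $L$ as a distinct unary constructor and observe that name swapping acts on $L$ rather than on the syntactic sequence $\overrightarrow{L}$, so that equivariance of the context follows from support-preservation of $\overrightarrow{\cdot}$.
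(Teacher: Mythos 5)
Your proposal is correct and follows essentially the same route as the paper's proof, which simply sets $\beta\approxeq\alpha$ iff $\alpha=\semb{\beta}$, notes that $\semb{\cdot}$ is a simple homomorphism by definition, and invokes Theorem~\ref{thm:CCStrans} for the operational correspondence. The only difference is that you spell out the equivariance of the contexts — in particular the restriction case $(\nu\overrightarrow{L})[]$, where $\overrightarrow{\cdot}$ defined from a fixed total order is only equivariant up to reordering adjacent binders — a point the paper passes over silently but which your argument handles correctly.
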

\proof We let $\beta\approxeq\alpha$ iff $\alpha=\semb\beta$.
  \begin{enumerate}
  \item $\semb\cdot$ is a simple homomorphism by definition.
  \item $\semb\cdot$ is a strong operational correspondence by Theorem~\ref{thm:CCStrans}. 
    \qed
  \end{enumerate}

To investigate the surjectivity of the encoding, we let
$\mathcal{P}=\{P\;:\;\sort{\n(P)}\subseteq\Set{\kw{chan}}\}$ 
be the \textbf{VPCCS} processes where all fre names are of channel sort.
\begin{lem} \label{lem:CCSsurj}
  If $P\in\mathcal{P}$, then there is a CCS process $Q$ such that
  $P\sim\semb{Q}$.
\end{lem}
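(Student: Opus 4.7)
The plan is to define a reverse translation $\bmes\cdot$ from well-formed \textbf{VPCCS} agents back into value-passing CCS and then prove that $P\sim\semb{\bmes P}$ for every $P\in\mathcal{P}$; the desired $Q$ is then $\bmes P$. The translation $\bmes\cdot$ would invert each clause of $\semb\cdot$: $\pll$, $!$, $(\nu a)$, output prefix and single-variable input prefix map to their CCS counterparts; both $\pass{\emptyframe}$ and the empty $\caseonly$ map to $\nil$; and a general $\caseonly\;\ci{\varphi_1}{P_1}\casesep\cdots\casesep\ci{\varphi_n}{P_n}$ maps to the CCS summation $\sum_{i=1}^n(\mathbf{if}\;\varphi_i\;\mathbf{then}\;\bmes{P_i})$. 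Well-formedness (subject of each prefix is a channel name, input pattern is a single expression-sort variable, restricted names are of channel sort) will ensure that every syntactic form is covered and that $\bmes P$ is a legal CCS process.

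I would then prove $P\sim\semb{\bmes P}$ by structural induction on $P$, relying throughout on the fact that $\sim$ is a congruence (Theorem~\ref{thm:congruence-congruence}). The base cases reduce to $\pass{\emptyframe}\sim\nil$ and $\caseonly\sim\nil$, which hold because both processes have no transitions and have frames equivalent to $\emptyframe$. The inductive cases for $\pll$, $(\nu a)$, $!$ and the two prefix constructors are routine: by the induction hypothesis and the relevant congruence clause (\textsc{CPar}, \textsc{CRes}, \textsc{CBang}, \textsc{COut}, \textsc{CIn}), the outer context preserves the equivalence, and the definitions of $\semb\cdot$ and $\bmes\cdot$ on these constructors are strict homomorphisms so both sides coincide up to the induction hypothesis.

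The only nontrivial step is the $\caseonly$ case. First, by \textsc{CCase} and the induction hypothesis, $\caseonly\;\ci{\ve\varphi}{\ve P}\sim \caseonly\;\ci{\ve\varphi}{\semb{\bmes{\ve P}}}$. Next, unfolding the definitions yields
\[\semb{\bmes{\caseonly\;\ci{\ve\varphi}{\ve P}}}=\caseonly\;\ci{\top}{\caseonly\;\ci{\varphi_1}{\semb{\bmes{P_1}}}}\casesep\cdots\casesep\ci{\top}{\caseonly\;\ci{\varphi_n}{\semb{\bmes{P_n}}}},\]
and I would apply Lemma~\ref{lemma:flatten-case} iteratively, peeling off one redundant $\ci{\top}{\caseonly\cdot}$ wrapper per branch, to obtain a process $\sim$-equivalent to $\caseonly\;\ci{\ve\varphi}{\semb{\bmes{\ve P}}}$. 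Transitivity of $\sim$ then closes the case.

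The main obstacle will be verifying that $\bmes\cdot$ is genuinely well-defined and lands in the CCS grammar on every well-formed \textbf{VPCCS} input---this is where the restriction to $\mathcal{P}$ and the well-formedness side-conditions (channel-sorted subjects, single-variable input patterns, channel-sorted restricted names) are actually used, because otherwise one could encounter an output whose subject is a non-name term or an input whose pattern has no CCS counterpart. Once $\bmes\cdot$ is confirmed syntactically well-defined, the rest is a standard structural-induction congruence argument, whose only genuinely subtle ingredient is the iterated use of Lemma~\ref{lemma:flatten-case} to reconcile the extra $\top$-guarded outer $\caseonly$ introduced by $\semb\cdot\circ\bmes\cdot$ on case statements.
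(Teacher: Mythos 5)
Your proposal is correct and matches the paper's own (very terse) proof: the paper likewise defines an inverse translation $\bmes\cdot$ that is homomorphic except for mapping $\caseonly\;\ci{\ve{b}}{\ve P}$ to $\sum_i(\mathbf{if}\;b_i\;\mathbf{then}\;\bmes{P_i})$, and then establishes $P\sim\semb{\bmes P}$ via Lemma~\ref{lemma:flatten-case}, exactly as you do. Your structural induction with iterated applications of Lemma~\ref{lemma:flatten-case} in the $\caseonly$ case is the same argument the paper carries out in detail for \textbf{PPI} (Theorem~\ref{thm:polypiabs}) and invokes here with ``as before''.
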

\proof
As before, we define an inverse translation $\bmes\cdot$, that is
homomorphic except for 
\[\bmes{\caseonly\;{\ci{b_1}{{P_1}}\casesep\cdots\casesep\ci{b_i}{{{P_i}}}}}
= (\mathbf{if}\;b_1\;\mathbf{then}\;\bmes{P_1}) +\dots + (\mathbf{if}\;b_i\;\mathbf{then}\;\bmes{P_i})\]
Using Lemma~\ref{lemma:flatten-case}, we get $P\sim\semb{\bmes{P}}$.
\qed

\begin{exa}[Value-passing pi-calculus]\label{ex:vpPi}
  To demonstrate the modularity of psi-calculi, assume that we wish a
  variant of the pi-calculus enriched with values in the same way as
  value-passing CCS. This is achieved with only a minor change to {\bf
    VPCCS}: \instance{VPPI}{
    \text{Everything as in \textbf{VPCCS} except:} \\
    \Match(z,a,a)=\Set{z}\;\mbox{if $z \in \mathbf{V} \cup \nameset[ch]$}\\
    \CanSubstitute\; = \{(\kw{exp}, \kw{value}), (\kw{chan},\kw{chan})\}\\
    {\CanSend}= {\CanReceive} =\{(\kw{chan},\kw{exp}),
    (\kw{chan},\kw{value}), (\kw{chan},\kw{chan})\} } 
  Here also channel names can be substituted for other channel names, 
  and they can be sent and received along channel names.
\end{exa}

\section{Advanced Data Structures}\label{sec:advanc-data-struct}
 We here demonstrate that we can accommodate a variety of term structures for data and communication channels; 
in general these can be any kind of data, and substitution can include any kind of computation on these structures. 
This indicates that the word ``substitution'' may be a misnomer --- a better word may be ``effect'' --- though we keep it to conform with our earlier work.
We focus on our new contribution in the patterns and sorts, and
therefore make the following definitions that are common to all the
examples (unless explicitly otherwise defined).
\[\boxed{
     \begin{array}{lll}
       \assertions = \{\emptyframe\} & \emptyframe\otimes\emptyframe=\emptyframe & \conditions = \{\top,\bot\} \\
       {\vdash} = \{(\emptyframe,\top)\} & M\sch M =\top & M\sch N = \bot\text{~if~}M\neq N\\
           {\CanSubstitute} = \{(s,s)\;:\; s\in\sortset\} &
           {\CanReceive}={\CanSend}=\sortset\times\sortset &
           \Restrictionsorts=\namesorts=\sortset
      \end{array}
     }
   \]
If $t$ and $u$ are from some term algebra, we write $t\preceq u$ when $t$ is a (non-strict) subterm of $u$.

\subsection{Convergent rewrite systems on terms}\label{sec:nform}
In Example~\ref{ex:vpPi}, the value language consisted of closed
terms, with an opaque notion of evaluation.  We can instead work with
terms containing names and consider deterministic computations
specified by a convergent rewrite system.
The interesting difference is in which terms are admissible as patterns,
and which choices of 
$\vars(X)$ are valid.
We first give a general definition and then give a concrete instance in Example~\ref{ex:Peano}.

Let $\Sigma$ be a sorted signature with sorts $\sortset$,
and $\cdot\Downarrow$ be normalization with respect to a convergent sort-preserving
rewrite system on the nominal term algebra over $\nameset$ generated
by the signature~$\Sigma$.  We let terms $M$ range over the range of $\Downarrow$, i.e., the normal forms.
  We write $\rho$ for sort-preserving capture-avoiding
  simultaneous substitutions~$\subst{\ve{M}}{\ve{a}}$ where every
  $M_i$ is in normal form; here $n(\rho)=\n(\ve{M},\ve{a})$.
  A term $M$ is stable if for all $\rho$, $M\rho{\Downarrow}=M\rho$.  
  The patterns are all instances of stable terms, i.e., 
  $X=M\rho$ where $M$ is stable. 
  Such a pattern $X$ can bind any combination of names occurring in $M$ but not in $\rho$.
  As an example, any term $M$ is a pattern (since any name $x$ is
  stable and $M=x\subst Mx$) that can be used to match the term $M$
  itself (since $\emptyset\subseteq\n(x)\setminus\n(M,x)=\emptyset$).

\instanceTwo{REWRITE($\Downarrow$)}{
  \trms=\pats=\operatorname{range}(\Downarrow)\\
  M\lsubst{\ve{L}}{\ve{y}}=M\subst{\ve{L}}{\ve{y}}{\Downarrow}}{
  \Match(M,\ve{x},X)=\Set{\ve{L}: M=X\subst{\ve{L}}{\ve{x}}}\\
  \vars(X)=\bigcup\Set{\Pow(\n(M)\setminus\n(\rho)): M\ \text{stable}\land X=M\rho}
}
We need to show that the patterns are closed under substitution,
  including preservation of \textsc{vars} (cf.~Definition~\ref{def:subst}), 
and that matching satisfies the criteria of Definition~\ref{def:pattern-match}.
Since any term is a pattern, the patterns are closed under substitution. 
Since term substitution $\subst{\cdot}{\cdot}$ and normalization $\Downarrow$ are both sort-preserving, 
  term and pattern substitution $\lsubst\cdot\cdot$ is also sort-preserving.

To show preservation of pattern variables, 
  assume that $\ve{x}\in\vars(X)$ is a tuple of distinct names.  
By definition there are $M$ and $\rho$ such that 
  $X=M\rho$ with $M$ stable and $\ve{x}\subseteq\n(M)\setminus \n(\rho)$.  
Assume that $\ve{x}\freshin\sigma$; 
  then $X\sigma=(M\rho)\sigma=M(\sigma\circ\rho)$ 
  with $\ve{x}\freshin\sigma\circ\rho$, so $\ve{x}\in\vars(X\sigma)$.

For the criteria of Definition~\ref{def:pattern-match}, 
  additionally assume that $\ve{L}\in\Match(N,\ve{x},X)$ and let $\sigma= \lsubst{\ve{L}}{\ve{x}}$. 
Since $\subst{\ve{L}}{\ve{x}}$ is well-sorted, so is $\lsubst{\ve{L}}{\ve{x}}$. 
We also immediately have $\n(\ve{L})=\n(N) \cup(\n(X)\setminus\ve{x})$, 
  and alpha-renaming of matching follows from the same property for term substitution.

\begin{exa}[Peano arithmetic]\label{ex:Peano}
As a simple instance of \textbf{REWRITE}($\Downarrow$), 
we may consider Peano arithmetic.
  The rewrite rules for addition (below) induce
  a convergent rewrite system $\Downarrow^{\mathrm{Peano}}$, 
  where the stable terms are those that do not contain any occurrence of $\kw{plus}$.  

\instanceFrom{PEANO}{REWRITE($\Downarrow$)} {
  \sortset=\Set{\kw{nat},\kw{chan}}\\
  \Sigma=\{\kw{zero}:\kw{nat}, \qquad 
  \kw{succ}:\kw{nat}\to\kw{nat}\qquad
  \kw{plus}:\kw{nat}\times\kw{nat}\to\kw{nat}\}\\
  \kw{plus}(K,\kw{zero})\to K\qquad
  \kw{plus}(K,\kw{succ}(M))\to\kw{plus}(\kw{succ}(K),M) \\
  \vars(\kw{succ}^n(a))=\Set{\emptyset,\Set{a}}\qquad
  \vars(M)=\Set{\emptyset}\text{ otherwise}
}
  Writing $i$ for $\kw{succ}^i(\kw{zero})$, the agent 
  $(\nu a)(\out{a}{2}\parop\lin{a}{y}{\kw{succ}(y)}\sdot\out{c}{\kw{plus}(3,y)})$
  of\linebreak \textbf{REWRITE}($\Downarrow^{\mathrm{Peano}}$) has one visible transition, with the label $\out{c}{4}$.
  In particular, the object of the label is $\kw{plus}(3,y)[y:=1]=\kw{plus}(3,y)\subst{1}y{\Downarrow^{\mathrm{Peano}}}=4$.
\end{exa}

\subsection{Symmetric cryptography}\label{sec:DYalg}
We can also consider variants of \textbf{REWRITE}($\Downarrow$), 
such as a simple Dolev-Yao style~\cite{DY83} cryptographic message algebra for symmetric cryptography, 
where we ensure that the encryption keys of received encryptions can not be bound in input patterns,
in agreement with cryptographic intuition.

  The rewrite rule describing decryption $\kw{dec}(\kw{enc}(M,K),K)\to M$ induces a
  convergent rewrite system $\Downarrow^{\mathrm{enc}}$, 
  where the terms not containing $\kw{dec}$ are stable.
  The construction of \textbf{REWRITE}($\Downarrow$)
  yields that $\ve{x}\in\vars(X)$ if $\ve{x}\subseteq\n(X)$ are pair-wise different
  and no $x_i$ occurs as a subterm of a $\kw{dec}$ in $X$. 
  This construction would still permit to bind the keys of an encrypted message upon reception, 
e.g.~$\lin{a}{m,k}{\kw{enc}(m,k)}\sdot P$ would be allowed although it does not make cryptographic sense.
Therefore we further restrict $\vars(X)$ to those sets not containing names that occur in 
  key position in $X$, thus disallowing the binding of $k$ above. 
Below we give the formal definition (recall that $\preceq$ is the subterm preorder).
\instanceFrom{SYMSPI}{\textbf{REWRITE}($\Downarrow^{\mathrm{enc}}$)}{
  \sortset=\{\kw{message}, \kw{key}\}\\
  \Sigma=\Set{\kw{enc}:\kw{message}\times\kw{key}\to\kw{message},\quad
    \kw{dec}:\kw{message}\times\kw{key}\to\kw{message}}\\
  \kw{dec}(\kw{enc}(M,K),K)\to M\\
  \vars(X)=\mathcal{P}(n(X)\setminus\{a:a\preceq\kw{dec}(Y_1,Y_2)\preceq X \lor
(a\preceq Y_2\land\kw{enc}(Y_1,Y_2)\preceq X)\})
}
The proof of the conditions of Definition~\ref{def:subst} and
Definition~\ref{def:pattern-match} for patterns is the same as for
\textbf{REWRITE}($\cdot$) in Section~\ref{sec:nform} above.

 As an example, the agent 
\[
(\nu a,k)(\out{a}{\kw{enc}(\kw{enc}(M,l),k)}\parop\lin{a}{y}{\kw{enc}(y,k)}\sdot\out{c}{\kw{dec}(y,l)})
\]
has a visible transition with label $\out{c}{M}$,
where one of the leaf nodes of the derivation is
\[\lin{a}{y}{\kw{enc}(y,k)}\sdot\out{c}{\kw{dec}(y,l)}\gt{\inlabel{a}{\kw{enc}(\kw{enc}(M,l),k)}} \out{c}{\kw{dec}(y,l)}\lsubst{\kw{enc}(M,l)}{y}\]
 since $\kw{enc}(M,l)\in\Match(\kw{enc}(\kw{enc}(M,l),k),y,\kw{enc}(y,k))$. 
The resulting process is
\[\out{c}{\kw{dec}(y,l)}\lsubst{\kw{enc}(M,l)}{y}
={\out{c}{\kw{dec}(y,l)\subst{\kw{enc}(M,l)}{y}\Downarrow}}
{}={\out{c}{\kw{dec}(\kw{enc}(M,l),l)\Downarrow}}=\out{c}{M}.\]

\subsection{Asymmetric cryptography}\label{sec:pmSpi}
A more advanced version of Section~\ref{sec:DYalg} is the
treatment of data in the
pattern-matching spi-calculus~\cite{haack.jeffrey:pattern-matching-spi},
to which we refer for more examples and motivations of the definitions below.
The calculus uses asymmetric encryption, and 
 includes a non-homomorphic definition of substitution that does not preserve sorts, and a sophisticated way of computing permitted pattern variables.
This example highlights the flexibility of sorted psi-calculi
in that such specialized modelling features can be presented
in a form that is very close to the original.

  We start from the term algebra $T_\Sigma$ over the unsorted signature 
  \[\Sigma=\Set{(),\,(\cdot,\cdot),\,\kw{eKey}(\cdot),\,\kw{dKey}(\cdot),\,\kw{enc}(\cdot,\cdot),\,\kw{enc}^{-1}(\cdot,\cdot)}\]
The $\kw{eKey}(M)$ and $\kw{dKey}(M)$ constructions represent the encryption and decryption parts of the key pair $M$, respectively.
  The operation $\kw{enc}^{-1}(M,N)$ is encryption of $M$ with the inverse of the decryption key $N$,
  which is not an implementable operation but only permitted to occur in patterns.
  We add a sort system on $T_\Sigma$ with sorts $\sortset=\{\kw{impl},
  \kw{pat}, \bot\}$,
  where $\kw{impl}$ denotes implementable terms not containing $\kw{enc}^{-1}$, 
  and $\kw{pat}$ those that may only be used in patterns. 
  The sort $\bot$ denotes ill-formed terms, 
  which do not occur in well-formed processes.
  Names stand for implementable terms, so we let $\namesorts=\{\kw{impl}\}$.
  Substitution is defined homomorphically on the term algebra, 
  except to avoid unimplementable subterms on the form 
  $\kw{enc}^{-1}(M,\kw{dKey}(N))$.

\begin{table*}[tb]
  \begin{mathpar}    
    \inferrule[\textsc{DY True}]
    { \mbox{ }}
    { \ve M \Vdash {} }

\inferrule[\textsc{DY Id}]
    { \ve M, N \Vdash \ve L }
    { \ve M, N \Vdash N, \ve L }

\inferrule[\textsc{DY Copy}]
    { \ve M \Vdash  N, \ve L }
    { \ve M \Vdash  N,  N, \ve L }

\inferrule[\textsc{DY Nil}]
    { \ve M \Vdash \ve L }
    { \ve M \Vdash  (), \ve L }

\inferrule[\textsc{DY Pair}]
    { \ve M \Vdash  N, N', \ve L }
    { \ve M \Vdash  (N,  N'), \ve L }

\inferrule[\textsc{DY Split}]
    { \ve M, N, N'  \Vdash \ve L }
    { \ve M,  (N,  N') \Vdash \ve L }

\inferrule[\textsc{DY Key}]
    { \ve M \Vdash  N, \ve L \quad f\in \{\kw{eKey}, \kw{dKey}\}}
    { \ve M \Vdash  f(N), \ve L }

\inferrule[\textsc{DY Encrypt}]
    { \ve M \Vdash  N, N', \ve L }
    { \ve M \Vdash  \kw{enc}(N,  N'), \ve L }

\inferrule[\textsc{DY Decrypt}]
    { \ve M \Vdash  N' \qquad \ve M,N \Vdash \ve L }
    { \ve M,  \kw{enc}^{-1}(N,  N')  \Vdash \ve L }

\inferrule[\textsc{DY Unencrypt}]
    { \ve M \Vdash  N' \qquad \ve M,N \Vdash \ve L }
    { \ve M, \kw{enc}(N,  \kw{eKey}(N'))  \Vdash \ve L }
  \end{mathpar}

\caption{Dolev-Yao derivability~\cite{haack.jeffrey:pattern-matching-spi}.}  
\label{table:DY-derivability}
\end{table*}

  In order to define $\vars(X)$, 
  we write $\ve{M}\Vdash\ve{N}$ if all $N_i\in\ve{N}$ can be deduced from $\ve{M}$
  in the Dolev-Yao message algebra 
  (i.e., using cryptographic operations such as encryption and decryption).
  For the precise definition, see Table~\ref{table:DY-derivability}.
  The definition of $\vars(X)$ below allows to bind a set $S$ of names 
  only if all names in $S$ can be deduced from the message term $X$ 
  using the other names occurring in $X$.
  This excludes binding an unknown key (cf. Section~\ref{sec:DYalg}).

\instance{PMSPI}{  %
  \trms=\pats=T_\Sigma \qquad \qquad
  \sortset=\Set{\kw{impl},\kw{pat},\bot}\qquad \qquad
  \namesorts=\{\kw{impl}\}  \\
  {\CanSubstitute}={\CanSend}=\Set{(\kw{impl},\kw{impl})}\qquad \qquad
  {\CanReceive}=\Set{(\kw{impl},\kw{impl}),(\kw{impl},\kw{pat})}\\
  \sort{M}=\kw{impl}\text{~if~}\forall N_1,N_2.\;\kw{enc}^{-1}(N_1,N_2)\not\preceq M\\
  \sort{M}=\bot \text{~if~}\exists N_1,N_2.\;\kw{enc}^{-1} (N_1,\kw{dKey}(N_2))\preceq M\\
  \sort{M}=\kw{pat}\text{ otherwise}\\
  \Match(M,\ve{x},X)=\Set{\ve{L}\;:\;M=X\lsubst{\ve{L}}{\ve{x}}}\\
  \vars(X)=\Set{S\subseteq\names{X}\;:\; (\names{X}\setminus S),{X}\Vdash S} \\[\GAP]
{\begin{array}{@{}r@{}l@{}l@{}}  
x\lsubst{\ve{L}}{\ve{y}} ={}& L_i &\text{ if }y_i=x\\
x\lsubst{\ve{L}}{\ve{y}} ={}& x &\text{ otherwise.}\\
\kw{enc}^{-1} (M_1,M_2)\lsubst{\ve{L}}{\ve{y}}={}&\kw{enc}(M_1\lsubst{\ve{L}}{\ve{y}},\kw{eKey}(N)) 
&\text{ when }M_2\lsubst{\ve{L}}{\ve{y}}=\kw{dKey}(N)\\ 
 f(M_1,\dots,M_n)\lsubst{\ve{L}}{\ve{y}}={}&f(M_1\lsubst{\ve{L}}{\ve{y}},\dots,M_n\lsubst{\ve{L}}{\ve{y}}) &\text{ otherwise.}
\end{array}}
} 

As an example, consider the following transitions in \textbf{PMSPI}:

\begin{math}
\begin{array}{l@{}l}
(\nu
a,k,l)&\begin{array}[t]{@{}l@{}l@{}}
	    (&\out{a}{\kw{enc}(\kw{dKey}(l),\kw{eKey}(k))}.\out{a}{\kw{enc}(M,\kw{eKey}(l))}\\
	    \parop&\lin{a}{y}{\kw{enc}(y,\kw{eKey}(k))}\sdot\lin{a}{z}{\kw{enc}^{-1}(z,y)}\sdot\out{c}{z})
	     \end{array}\\
&\gtt
(\nu a,k,l)(
\out{a}{\kw{enc}(M,\kw{eKey}(l))}\parop\lin{a}{z}{\kw{enc}(z,\kw{eKey}(l))}\sdot\out{c}{z})\\
&\gtt
(\nu
a,k,l)\out{c}{M}.
\end{array}
\end{math}

\noindent Note that $\sigma=\lsubst{\kw{dKey}(l)}{y}$ resulting from the first input changed the sort of the second input pattern: 
$\sort{\kw{enc}^{-1}(z,y)}=\kw{pat}$, but $\sort{\kw{enc}^{-1}(z,y)\sigma}= \sort{\kw{enc}(z,\kw{eKey}(l))}=\kw{impl}$.  
However, this is permitted by Definition~\ref{def:subst} (Substitution), since $\kw{impl}\le\kw{pat}$ (implementable terms can be used as channels or messages whenever patterns can be).

Terms (and patterns) are trivially closed under
substitution. All terms in the domain of a well-sorted substitution have sort $\kw{impl}$, so well-sorted substitutions cannot introduce subterms of
the forms $\kw{enc}^{-1} (N_1,N_2)$ or  $\kw{enc}^{-1} (N_1,\kw{dKey}(N_2))$ where none existed; thus $\sort{M\sigma}\le\sort{M}$ as required by Definition~\ref{def:subst}.

To show preservation of pattern variables, we first need some technical results about Dolev-Yao derivability.
\begin{lem}\label{lem:dytech}$ $
  \begin{enumerate}
  \item \label{item:dytech.weak}
    If $\ve{M}\Vdash \ve{N}$, then $\ve{M'}\ve{M}\Vdash \ve{N}$.
  \item \label{item:dytech.sigma}
    If $\ve{M}\Vdash \ve{N}$, then $\ve{M}\sigma\Vdash \ve{N}\sigma$.
  \item \label{item:dytech.impl} 
    If $\sort{N}=\kw{impl}$, then $\names{N}\Vdash N$.
  \item \label{item:dytech.transfer}
    If $\ve{M},N\Vdash \ve{L}$ and $\sort{N}=\kw{impl}$ and $\ve{M}\Vdash N$, then $\ve{M}\Vdash \ve{L}$.
  \end{enumerate}
\end{lem}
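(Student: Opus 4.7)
The plan is to prove each part by induction. Parts~\ref{item:dytech.weak} and~\ref{item:dytech.impl} are routine, Part~\ref{item:dytech.transfer} reduces to a concatenation lemma, and Part~\ref{item:dytech.sigma} is the main technical obstacle.

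For Part~\ref{item:dytech.weak}, I would induct on the derivation of $\ve M\Vdash\ve N$. Every rule in Table~\ref{table:DY-derivability} has its left-hand context in the shape $\ve M,\ldots$, so prepending any vector $\ve{M'}$ commutes with rule applications: apply the IH to each premise in the enlarged context and reapply the same rule. For Part~\ref{item:dytech.impl}, I would induct on the structure of $N$; the base case $N=a$ yields $\names{a}\Vdash a$ by \textsc{DY True} and \textsc{DY Id}. Since $\sort{N}=\kw{impl}$ forbids the $\kw{enc}^{-1}$ constructor, in the inductive step $N$ has outermost constructor in $\{(\cdot,\cdot),\kw{eKey},\kw{dKey},\kw{enc}\}$ with implementable subterms; the IH applied to each subterm, combined with Part~\ref{item:dytech.weak} to weaken to the full name set $\names{N}$, is then closed by the corresponding constructor rule (\textsc{DY Pair}, \textsc{DY Key}, or \textsc{DY Encrypt}).

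For Part~\ref{item:dytech.transfer}, I would induct on the derivation of $\ve M,N\Vdash\ve L$, using the hypothesis $\ve M\Vdash N$ to remove occurrences of $N$ on the left. The delicate case is \textsc{DY Id} when the identified right-hand element is exactly $N$; this reduces to a concatenation lemma stating that $\ve M\Vdash N$ and $\ve M\Vdash\ve L$ together imply $\ve M\Vdash N,\ve L$, itself provable by a separate induction on $\ve L$ using \textsc{DY Copy} and \textsc{DY Id}. The assumption $\sort{N}=\kw{impl}$ rules out $N$ as a valid subject of \textsc{DY Decrypt}, so no deconstruction rule fires on $N$ itself and the remaining cases carry the IH through unchanged.

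Part~\ref{item:dytech.sigma} is where the real work lies. I would induct on the derivation of $\ve M\Vdash\ve N$; cases not involving $\kw{enc}^{-1}$ commute with $\sigma$ by homomorphicity of substitution, and their IH plus a reapplication of the same rule closes them. The hard case is \textsc{DY Decrypt}: when $N'\sigma=\kw{dKey}(N_0)$, the subject $\kw{enc}^{-1}(N,N')\sigma$ collapses to $\kw{enc}(N\sigma,\kw{eKey}(N_0))$, forcing the use of \textsc{DY Unencrypt}, whose key-knowledge premise $\ve M\sigma\Vdash N_0$ is strictly stronger than what the IH gives directly (which only supplies $\ve M\sigma\Vdash\kw{dKey}(N_0)$). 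I would bridge this gap by an inner case analysis on the last rule deriving $\ve M\Vdash N'$: if it is \textsc{DY Key} with $N'=\kw{dKey}(N'')$ and $N''\sigma=N_0$, the IH on its premise delivers $\ve M\sigma\Vdash N_0$; the remaining subcases amount to tracing how a term of shape $\kw{dKey}(N_0)$ can arise in the original derivation. This coordinated analysis, relating substitution-induced collapses on subjects to the shape of the corresponding key-derivations, is the step I expect to be the technical heart of the proof and may require strengthening the induction hypothesis to carry structural information about derived terms alongside the bare derivability claim.
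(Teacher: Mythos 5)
The paper states this lemma without proof, so your argument has to stand on its own. Parts~\ref{item:dytech.weak} and~\ref{item:dytech.impl} are essentially right, though part~\ref{item:dytech.impl} already needs the right-concatenation property you only introduce under part~\ref{item:dytech.transfer} (to merge the separate derivations of the two components of a pair into a single sequent), and that concatenation lemma is not provable ``by induction on $\ve L$'': it requires induction on one of the derivations, with nontrivial cases for the left rules \textsc{DY Split}, \textsc{DY Decrypt} and \textsc{DY Unencrypt}, which modify the context in their premises.

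There are two genuine gaps. First, in part~\ref{item:dytech.transfer} your claim that $\sort{N}=\kw{impl}$ blocks every deconstruction rule on $N$ is false: implementability only excludes \textsc{DY Decrypt}. \textsc{DY Split} fires on an implementable pair and \textsc{DY Unencrypt} on an implementable $\kw{enc}(N',\kw{eKey}(K))$, and in those cases the cut formula changes to the components of $N$, so you need right-inversion lemmas (e.g.\ $\ve M\Vdash (A,B)$ yields $\ve M\Vdash A$ and $\ve M\Vdash B$) and a cut that eliminates several formulas at once, not just $N$. Second, and more seriously, your bridge for the \textsc{Decrypt}-to-\textsc{Unencrypt} case of part~\ref{item:dytech.sigma} does not close: the derivation of $\ve M\Vdash N'$ need not end in \textsc{DY Key} --- $N'$ may be a name $x$ with $x\sigma=\kw{dKey}(N_0)$, or $\kw{dKey}(N_0')$ may sit in $\ve M$ and be obtained by \textsc{DY Id} --- and then no subderivation supplies $\ve M\sigma\Vdash N_0$. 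Indeed, reading \textsc{DY Unencrypt} literally as printed (key premise $\ve M\Vdash N'$ for subject $\kw{enc}(N,\kw{eKey}(N'))$), the derivable judgement $x,\kw{enc}^{-1}(a,x)\Vdash a$ under $\sigma=\lsubst{\kw{dKey}(k)}{x}$ becomes $\kw{dKey}(k),\kw{enc}(a,\kw{eKey}(k))\Vdash a$, which is underivable because $\kw{dKey}(k)\not\Vdash k$. Part~\ref{item:dytech.sigma} only goes through if the key premise of \textsc{DY Unencrypt} is possession of the decryption key $\kw{dKey}(N')$ rather than of the seed $N'$; under that reading the induction hypothesis applied to $\ve M\Vdash N'$ directly supplies the needed premise $\ve M\sigma\Vdash\kw{dKey}(N_0)$ and your inner case analysis becomes unnecessary. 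You need to settle which reading of the rule is intended before the proof of part~\ref{item:dytech.sigma} can be completed.
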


\begin{lem}[Preservation of pattern variables]\label{lem:dypres}$ $ \\
 If $\ve x\freshin\sigma$ and $(\names{X}\setminus \ve{x}),{X}\Vdash \ve{x}$ 
then
$(\names{X\sigma}\setminus \ve{x}),X\sigma\Vdash \ve{x}$.
\end{lem}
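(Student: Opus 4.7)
The plan is to chain together the four parts of Lemma~\ref{lem:dytech}. Starting from the hypothesis $(\names{X}\setminus\ve{x}), X \Vdash \ve{x}$, I would first apply Lemma~\ref{lem:dytech}(\ref{item:dytech.sigma}) with the substitution $\sigma$ to obtain $(\names{X}\setminus\ve{x})\sigma, X\sigma \Vdash \ve{x}\sigma$. Since $\ve{x}\freshin\sigma$, we have $\ve{x}\sigma=\ve{x}$, so this becomes $(\names{X}\setminus\ve{x})\sigma, X\sigma \Vdash \ve{x}$.

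The goal then reduces to eliminating each $y\sigma$ (for $y\in\names{X}\setminus\ve{x}$) from the left side of $\Vdash$ in favour of the smaller tuple $\names{X\sigma}\setminus\ve{x}$. For this I would show, for each such $y$, that $(\names{X\sigma}\setminus\ve{x}), X\sigma \Vdash y\sigma$, and then apply Lemma~\ref{lem:dytech}(\ref{item:dytech.transfer}) once per $y$ (after weakening via (\ref{item:dytech.weak}) so the premise contains $\names{X\sigma}\setminus\ve{x}$ on the left). Each $y\sigma$ has sort $\kw{impl}$: either $y\notin\dom{\sigma}$, in which case $y\sigma=y$ is a name of sort $\kw{impl}$; or $y\in\dom{\sigma}$, in which case $\sort{y\sigma}=\kw{impl}$ by the well-sortedness constraint ${\CanSubstitute}=\Set{(\kw{impl},\kw{impl})}$ of \textbf{PMSPI}. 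So (\ref{item:dytech.transfer}) applies as required.

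It remains to verify the derivability of each $y\sigma$ from $(\names{X\sigma}\setminus\ve{x})$. When $y\notin\dom{\sigma}$, then $y\sigma=y$ and $y\in\names{X\sigma}\setminus\ve{x}$, so derivability is immediate from $\textsc{DY Id}$ and $\textsc{DY True}$. When $y\in\dom{\sigma}$, I use Lemma~\ref{lem:dytech}(\ref{item:dytech.impl}) to get $\names{y\sigma}\Vdash y\sigma$, then Lemma~\ref{lem:dytech}(\ref{item:dytech.weak}) to weaken the premise to $(\names{X\sigma}\setminus\ve{x}), X\sigma$. This weakening is valid provided $\names{y\sigma}\subseteq\names{X\sigma}\setminus\ve{x}$, which follows because $\ve{x}\freshin\sigma$ implies $\ve{x}\freshin y\sigma$, and because $y\in\names{X}$ implies $\names{y\sigma}\subseteq\names{X\sigma}$.

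The one step deserving care is the inclusion $\names{y\sigma}\subseteq\names{X\sigma}$: the substitution rule is homomorphic everywhere except in the $\kw{enc}^{-1}$ case, where $\kw{enc}^{-1}(M_1,M_2)\sigma = \kw{enc}(M_1\sigma,\kw{eKey}(N))$ when $M_2\sigma=\kw{dKey}(N)$. I would handle this by a short structural induction on $X$, observing that in this special case $\names{\kw{eKey}(N)}=\names{N}=\names{\kw{dKey}(N)}=\names{M_2\sigma}$, so no names are dropped. Hence $\names{X\sigma} \supseteq (\names{X}\setminus\dom{\sigma}) \cup \bigcup_{y\in\names{X}\cap\dom{\sigma}}\names{y\sigma}$, which is exactly what is needed to close the argument.
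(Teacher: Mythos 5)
Your proof is correct and follows essentially the same route as the paper's: both chain the four parts of Lemma~\ref{lem:dytech} in the same order (substitution, weakening, implementability of the substituted names, then transfer to discharge them), the only cosmetic difference being that you discharge the terms $y\sigma$ one at a time while the paper handles the whole tuple $(\names{X}\setminus\ve{x})\sigma$ at once via the identity $\names{(\names{X}\setminus\ve{x})\sigma}=\names{X\sigma}\setminus\ve{x}$. Your explicit check that the non-homomorphic $\kw{enc}^{-1}$ clause of substitution does not drop names is a detail the paper leaves implicit in that identity, and is a worthwhile addition rather than a divergence.
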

\proof
Let $\ve{M}=(\names{X}\setminus\ve{x})\sigma$.
By Lemma \ref{lem:dytech}(\ref{item:dytech.sigma}) we get
  $\ve{M},X\sigma\Vdash \ve{x}$, so 
$(\names{X\sigma}\setminus \ve{x}),\ve{M},X\sigma\Vdash \ve{x}$
by Lemma \ref{lem:dytech}(\ref{item:dytech.weak}).
Since $\names{\ve{M}}=(\names{X\sigma}\setminus \ve{x})$, 
 Lemma \ref{lem:dytech}(\ref{item:dytech.impl}) yields that
  $(\names{X\sigma}\setminus\ve{x})\Vdash\ve{M}$.
Finally, by Lemma \ref{lem:dytech}(\ref{item:dytech.transfer}) we get
  $(\names{X\sigma}\setminus \ve{x}),X\sigma\Vdash \ve{x}$.
\qed

The requisites on matching (Definition~\ref{def:pattern-match}) follow from those on substitution.
Lemma \ref{lem:dypres} implies that the set of (well-sorted) processes~\cite{haack.jeffrey:pattern-matching-spi} 
is closed under (well-sorted) substitution, a result which appears not to have been published previously.

\subsection{Nondeterministic computation}\label{sec:lambdaAmb}
The previous examples considered total deterministic notions of computation on the term language.
Here we consider a data term language equipped with
partial non-deterministic evaluation: a lambda calculus extended with the erratic choice operator $\cdot\ErrChoice\cdot$ and the reduction rule $M_1 \ErrChoice M_2\to M_i$ if $i\in\Set{1,2}$. %
Due to non-determinism and partiality, evaluation cannot be part of the substitution function. 
Instead, we define the $\Match$ function to collect all evaluations of the received term, 
which are non-deterministically selected from by the \textsc{In} rule.
This example also highlights the use of object languages with binders, a common application of nominal logic.

We let substitution on terms be the usual capture-avoiding syntactic replacement, 
  and define reduction contexts $\mathcal{R}::=[\,]\mid\mathcal{R}\;M\mid (\mathbold{\lambda} x.M)\;\mathcal{R}$ (we here use the boldface $\mathbold{\lambda}$ rather than the $\lambda$ used in input prefixes).
  Reduction $\to$ is the smallest pre-congruence for reduction contexts 
  that contain the rules for $\beta$-reduction ($\mathbold{\lambda} x.M\;N\to M\lsubst{N}x$) and $\cdot\ErrChoice\cdot$ (see above).
  We use the single-name patterns of Example~\ref{ex:symsempats}, but
  include evaluation in matching. 
\instance{NDLAM}{
\sortset=\Set{s}\qquad\qquad \pats = \nameset\\ %
M ::= a \mid M\;M \mid \mathbold{\lambda} x.M\mid M\ErrChoice M\qquad
\text{where $x$ binds into $M$ in }\mathbold{\lambda} x.M\\
\Match(M,x,x)=\Set{N\;:\;M\to^*N\not\to}\qquad\\
\Match(M,\ve{y},x)=\emptyset\text{~otherwise}}
To avoid confusing the $\lambda$ of the input prefix and the
$\mathbold{\lambda}$ of the term language, we write $\innn{a}{x}$ for $\lin{a}{{x}}{x}$.
As an example, the agent
$P\isdef (\nu a)(\innn{a}{y}\sdot\out{c}{y}\sdot\nil\parop\out{a}{%
((\mathbold{\lambda} x.x\;x) \ErrChoice (\mathbold{\lambda} x.x))}\sdot\nil)$
has the following transitions:
\[
\begin{array}{l}
P\gtt (\nu a)(\out{c}{\mathbold{\lambda}x.x x}\sdot\nil\parop \nil) \gt{\out{c}{\mathbold{\lambda}x.x x}}\nil\\
P\gtt (\nu a)(\out{c}{\mathbold{\lambda}x.x}\sdot\nil\parop \nil) \gt{\out{c}{\mathbold{\lambda}x.x}}\nil.\\
\end{array}
\]
\section{Conclusions and further work}
\label{sec:conclusions}
We have described two features that taken together significantly
improve the precision of applied process calculi:
generalised pattern matching and substitution, 
which allow us to model computations on an arbitrary data term language,
and a sort system which allows us %
   to remove spurious data terms from consideration and %
   to ensure that channels carry data of the appropriate sort.
The well-formedness of processes is thereby guaranteed to be preserved by transitions.
Using these features we have provided representations of other process calculi,
ranging from the simple polyadic pi-calculus to 
the spi-calculus and non-deterministic computations,
in the psi-calculi framework. 
The critera for representation (rather than encoding) are stronger than standard correspondences e.g.~by Gorla, 
and mean that the psi-calculus and the process calculus that it represents are for all practical purposes one and the same.
The meta-theoretic results carry over from the original psi formulations, 
and have been machine-checked in Isabelle for the case of a single name sort
(e.g.~the calculi \textbf{PPI}, \textbf{LINDA} and \textbf{PSPI} in Section~\ref{sec:process-calculi-examples},
and the calculi \textbf{PMSPI} and \textbf{NDLAM} in Section~\ref{sec:advanc-data-struct}).
We have also added sorts to an existing tool for psi-calculi \cite{PWB15:TECS}, 
the Psi-calculi Workbench (\Pwb), which provides an interactive simulator and automatic
bisimulation checker. Users of the tool need only implement the
parameters of their psi-calculus instances, supported by a core library.
In the tool we currently support only tuple patterns, similarly to the
\textbf{PPI} calculus of Section~\ref{sec:polyPi}. 

Future work includes developing a symbolic semantics with more
elaborate pattern matching. For this, a reformulation of the operational semantics 
of Table~\ref{table:struct-free-labeled-operational-semantics}
 in the late style, where input objects are not instantiated until
communication takes place, is necessary.

A comparison of expressiveness to calculi with 
non-binary (e.g., join-calculus~\cite{fournet96.join} or Kell calculus) or 
bidirectional (e.g., dyadic interaction terms~\cite{Honda:1993} 
or the concurrent pattern calculus~\cite{Jay10.ConcurrentPattern}) 
communication primitives would be interesting. 
We here inherit positive results from the pi calculus, such as the encoding of the join-calculus. 

We aim to extend the use of sorts and generalized pattern
matching to other variants of psi-calculi, 
including higher-order psi calculi~\cite{HOPSI} and reliable broadcast
psi-calculi~\cite{pohjola13:Reliable}.
Although assertions and conditions are unsorted, we intend to
investigate adding sorts and
pattern-matching to psi-calculi with non-trivial
assertions~\cite{LMCS11.PsiCalculi}. 

As discussed in Section~\ref{sec:trivally-sorted-calculi}, further 
work is needed for scalable mechanised reasoning about theories that are 
parametric in an arbitrary but fixed name sorting.

\bigskip
\subsubsection*{Acknowledgments. } 
We thank the anonymous reviewers for their helpful comments.


\appendix

\newenvironment{proofcasesdescription}{%
    \begin{description}
}{\end{description}}
\newcommand\proofcase[1]{\vspace{5pt}\item[#1]\hfill\\}
\newcommand{\IH}{\text{IH}}

\section{Full proofs for Section~\ref{sec:process-calculi-examples}}
\label{sec:examples-proofs}

We will assume that the reader is acquainted with the relevant psi-calculi
presented in Section~\ref{sec:process-calculi-examples},
as well as the definitions, notation and terminology of
Sangiorgi~\cite{sangiorgi:expressing-mobility} for polyadic pi-calculus,
Carbone and Maffeis~\cite{carbone.maffeis:expressive-power} for polyadic synchronisation pi-calculus,
and Milner~\cite{milner:cac} for CCS and VPCCS.
We will use their notation except for bound names,
where we will adopt the notation of nominal sets, e.g., we will write $\bn\alpha \freshin Q$
instead of $\bn\alpha \cap \fn Q = \emptyset$.

\subsection{Polyadic Pi-Calculus}

This section contains full proofs of Section~\ref{sec:polyPi} for the polyadic
pi-calculus example.  We use definitions and results of
Sangiorgi~\cite{sangiorgi:expressing-mobility}.  However, we opted to replace
process constants with replication.

For convenience, we repeat definition of the encoding function given in
Example~\ref{sec:polyPi}.

\begin{defi}[Polyadic Pi-Calculus to \textbf{PPi}]$ $\\
Agents:
\[
\begin{array}{rcl}
\semb{P + Q} &=& \caseonly\;\ci{\top}{\semb{P}}\casesep\ci{\top}{\semb{Q}} \\
\semb{[x=y]P} &=& \caseonly\;\ci{x = y}{\semb{P}} \\
\semb{x(\vec{y}).P} &=& \lin{x}{\ve{y}}\langle\ve{y}\rangle.\semb{P} \\
\semb{\overline{x}\langle \vec{y} \rangle.P} &=& \overline{x}\langle \vec{y} \rangle.\semb{P}\\
\semb{0} &=& 0 \\
\semb{P\pll Q} &=& \semb{P}\pll\semb{Q} \\
\semb{\nu x P} &=& (\nu x)\semb{P} \\
\semb{!P} &=& !\semb{P} \\
\end{array}
\]
Actions:
\[\begin{array}{rcl}
\semb{(\nu \vec{y}')\overline{z}\langle\vec{y}\rangle} &=& \boutlabel{z}{\ve{y}'}{\langle\ve{y}\rangle} \\
\semb{x\langle\vec{z}\rangle} &=& \inn{x}{\langle\vec{z}\rangle}\\
\semb{\tau} &=& \tau
\end{array} \]
In the output action $\vec{y}'$ bind into $\ve{y}$ and the residual process, but not into $z$.
\end{defi}

\begin{defi}[{\bf PPi} to Polyadic Pi-Calculus]$ $\\
Process:
\[\begin{array}{rcl}
\bmes{\pass{\emptyframe}} &=& \nil \\
\bmes{\nil} = \bmes{\caseonly} &=& \nil \\
\bmes{\caseonly\;\ci{\varphi_1}{P_1}\casesep\dots\casesep\ci{\varphi_n}{P_n}} &=& \bmes{\ci{\varphi_1}{P_1}} + \dots + \bmes{\ci{\varphi_n}{P_n}} \\
\bmes{!P} &=& !\bmes{P} \\
\bmes{(\nu x)P} &=& \nu x\bmes{P}\\
\bmes{P\pll Q} &=& \bmes{P}\pll\bmes{Q}\\
\bmes{\lin{x}{\ve{y}}\langle\ve{y}\rangle.P} &=& x(\vec{y}).\bmes{P} \\
\bmes{\overline{x}\langle \vec{y} \rangle.P} &=& \overline{x}\langle \vec{y} \rangle.\bmes{P}\\
\end{array}\]
Case clause:
\[\begin{array}{rcl}
\bmes{\ci{x = y}{P}} &=& [x=y]\bmes{P} \\
\bmes{\ci{\top}{P}} &=& \bmes{P}
\end{array}\]
\end{defi}

We prove that the substitution function distributes over the encoding function.

\begin{lem}\label{lem:ppi-substitution}
$\semb{P}\lsubst{\ve z}{\ve y} = \semb{P\{\vec{z}/\vec{y}\} }$
\end{lem}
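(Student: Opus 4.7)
The plan is to prove the lemma by structural induction on $P$, relying on the homomorphic nature of the encoding function combined with the fact that in \textbf{PPI} both the source substitution $\{\vec{z}/\vec{y}\}$ (standard name-for-name simultaneous substitution in polyadic pi) and the target substitution $\lsubst{\ve z}{\ve y}$ (the psi-calculus substitution in \textbf{PPI}) reduce to the same syntactic replacement on names, namely to capture-avoiding renaming inherited from the underlying nominal structure. First I would, as usual, silently alpha-convert $P$ so that all binding occurrences in $P$ (the tuples $\vec{w}$ of input prefixes and the restricted names of $(\nu w)$) are fresh for $\vec z,\vec y$; both substitutions respect alpha-equivalence, and the encoding is equivariant by construction, so this loses no generality.

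For the base case $P=\nil$ both sides equal $\nil$. The non-prefix structural cases ($P_1\pll P_2$, $!P'$, $(\nu w)P'$, $P_1+P_2$, $[x{=}y]P'$) are immediate from the induction hypothesis, using that substitution on \textbf{PPI} agents is defined inductively on syntactic structure and commutes with the context constructors $C$ used in the clauses of $\semb{\cdot}$; in the $+$ and $[x{=}y]$ cases one additionally uses that the condition $\top$ is constant and that $(x=y)\lsubst{\ve z}{\ve y}$ coincides with $x\{\vec z/\vec y\}=y\{\vec z/\vec y\}$, which follows from the standard behaviour of channel-for-channel substitution. For $P=\overline{x}\langle\vec{w}\rangle.P'$ the computation is a direct unfolding: the tuple term $\langle\vec{w}\rangle$ is substituted componentwise (matching $\{\vec z/\vec y\}$ acting on $\vec w$), the subject $x$ is substituted as a name, and the continuation is handled by IH.

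The most delicate case is the input prefix $P=x(\vec{w}).P'$. Here I would compute
\[
\semb{x(\vec{w}).P'}\lsubst{\ve z}{\ve y}
\;=\;(\lin{x}{\vec{w}}\langle\vec w\rangle.\semb{P'})\lsubst{\ve z}{\ve y}
\;=\;\lin{x\lsubst{\ve z}{\ve y}}{\vec{w}}(\langle\vec w\rangle\lsubst{\ve z}{\ve y}).\semb{P'}\lsubst{\ve z}{\ve y},
\]
where the last step uses the definition of substitution on agents (Definition~\ref{def:agents}) together with the freshness assumption $\vec{w}\freshin\vec z,\vec y$ to avoid capture. Since $\vec w\in\vars(\langle\vec w\rangle)$ and $\vec w\freshin\lsubst{\ve z}{\ve y}$, the pattern-substitution requirement of Definition~\ref{def:subst} (together with the concrete definition of substitution on tuple patterns in \textbf{PPI}) gives $\langle\vec w\rangle\lsubst{\ve z}{\ve y}=\langle\vec w\rangle$, so the pattern is preserved; the induction hypothesis handles the continuation. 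On the right-hand side, $\semb{(x(\vec w).P')\{\vec z/\vec y\}}=\semb{x\{\vec z/\vec y\}(\vec w).P'\{\vec z/\vec y\}}=\lin{x\{\vec z/\vec y\}}{\vec w}\langle\vec w\rangle.\semb{P'\{\vec z/\vec y\}}$, and the two expressions coincide since name-for-name substitution in the source and $\lsubst{\cdot}{\cdot}$ in \textbf{PPI} act identically on single names and on tuples of names.

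The main obstacle is purely bookkeeping: making the alpha-convention explicit and invoking the correct clause of Definition~\ref{def:subst} to ensure that bound-name tuples in input patterns are genuinely left alone by $\lsubst{\ve z}{\ve y}$, so that the pattern on the left is literally $\langle\vec w\rangle$ and not some permuted variant. Once this is settled, the rest is a mechanical structural induction with no interesting algebraic content.
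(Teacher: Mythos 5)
Your proof is correct and follows essentially the same route as the paper's: a structural induction on $P$ in which bound names are assumed (by alpha-conversion, or by Sangiorgi's bound-name convention as the paper puts it) to be fresh for the substitution, so that the only point of substance is the input-prefix case, where freshness of $\vec w$ for $\lsubst{\ve z}{\ve y}$ lets the pattern $\langle\vec w\rangle$ pass through unchanged. Your additional remark justifying $\langle\vec w\rangle\lsubst{\ve z}{\ve y}=\langle\vec w\rangle$ via the concrete definition of pattern substitution in \textbf{PPI} is a correct elaboration of a step the paper leaves implicit.
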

\proof
By induction on $P$. We consider only the agents where $\bn P \freshin P\{\vec{z}/\vec{y}\}$
\cite[Definition~2.1.1]{sangiorgi:expressing-mobility}. 
We demonstrate the non-trivial cases of the proof in the following.
\begin{itemize}
    \item case $P = P' + Q$. 
        \[\begin{array}{rcll}
        \semb{P' + Q}\lsubst{\vec{z}}{\vec{y}} 
        &=& \caseonly\;\ci{\top\lsubst{\vec{z}}{\vec{y}}}{\semb{P'}\lsubst{\vec{z}}{\vec{y}}}\casesep\ci{\top\lsubst{\vec{z}}{\vec{y}}}{\semb{Q}\lsubst{\vec{z}}{\vec{y}}} \\
         &=&   \caseonly\;\ci{\top}{\semb{P'}\lsubst{\vec{z}}{\vec{y}}}\casesep\ci{\top}{\semb{Q}\lsubst{\vec{z}}{\vec{y}}} \\
         &=&   \caseonly\;\ci{\top}{\semb{P'\{\vec{z}/\vec{y}\}}}\casesep\ci{\top}{\semb{Q\{\vec{z}/\vec{y}\}}} & \text{(IH)}\\
         &=&   \semb{P'\{\vec{z}/\vec{y}\} + Q\{\vec{z}/\vec{y}\}} \\
         &=&   \semb{(P' + Q)\{\vec{z}/\vec{y}\}} \\
        \end{array}\]

    \item case $P = [x = y]Q$. 
        \[\begin{array}{rcll}
                \semb{[x = y]Q}\lsubst{\vec{z}}{\vec{y}} 
                &=& \caseonly\;\ci{x\lsubst{\vec{z}}{\vec{y}} = y\lsubst{\vec{z}}{\vec{y}}}{\semb{Q}\lsubst{\vec{z}}{\vec{y}}} \\
                &=& \caseonly\;\ci{x\lsubst{\vec{z}}{\vec{y}} = y\lsubst{\vec{z}}{\vec{y}}}{\semb{Q\{\vec{z}/\vec{y}\}}} & \text{(IH)}\\
                &=& [x\{\vec{z}/\vec{y}\} = y\{\vec{z}/\vec{y}\}]{\semb{Q\{\vec{z}/\vec{y}\}}} \\
                &=&   \semb{([x = y]Q)\{\vec{z}/\vec{y}\}} \\
        \end{array}\]

    \item case $P = a(\vec{x}).Q$
        \[\begin{array}{rcll}
                \semb{a(\vec{x}).Q}\lsubst{\vec{z}}{\vec{y}} 
                &=&\lin{a\lsubst{\vec{z}}{\vec{y}}}{\ve x}{\langle\ve{x}\rangle}.\semb{Q}\lsubst{\vec{z}}{\vec{y}} 
                & \text{(From assumption $\vec{x}\freshin\lsubst{\vec{z}}{\vec{y}} $)} \\
                &=&\lin{a\lsubst{\vec{z}}{\vec{y}}}{\ve x}{\langle\ve{x}\rangle}.{\semb{Q\{\vec{z}/\vec{y}\}}} & \text{(IH)}\\
                &=& a\{\vec{z}/\vec{y}\}(\vec{x}).{\semb{Q\{\vec{z}/\vec{y}\}}} \\
                &=&   \semb{(a(\vec{x}).Q)\{\vec{z}/\vec{y}\}} \\
        \end{array}\]
\end{itemize}
\endproof

\noindent The following is the proof of the strong operational correspondence with
respect to the labeled semantics of polyadic pi-calculus \cite[page 30]{sangiorgi:expressing-mobility}.
\proof[Proof of Theorem \ref{thm:ppi-strong-operational-corresp.}]$ $
\begin{enumerate}
\item We show that if $\trans{P}{\beta}{P'}$ then for all
  $\alpha\in\semb\beta$ we have $\trans{\semb{P}}{\alpha}{\semb{P'}}$
  by induction on the derivation of the transition.

\begin{proofcasesdescription}
  \proofcase{\textsc{ALP}} Trivial, since psi-calculi processes are identified up
            to alpha equivalence.

  \proofcase{\textsc{OUT}} Assume
        $\trans{\overline{x}\langle \vec{y}\rangle.P}
               {\overline{x}\langle\vec{y}\rangle}{P}$ and
         $\alpha \in \Set{\out{x}{\langle\vec{y}\rangle}} = \semb{\overline{x}\langle\vec{y}\rangle}$.
    Since $\emptyframe\vdash x \sch x$ and $\semb{\out{x}{\langle\vec y\rangle}.P} = \out{x}{\langle\vec y\rangle}.\semb{P}$
    and $\alpha = {\out{x}{\langle\vec y\rangle}}$, we can derive
    $\trans{\out{x}{\langle\vec y\rangle}.\semb{P}}{\out{x}{\langle\vec y\rangle}}{\semb{P}}$.

    \proofcase{\textsc{INP}} 
    Assume $\trans{x(\vec{y}).P}{x\langle\vec{z}\rangle}{P\{\vec{z}/\vec{y}\}}$,
    and $\vec{z}$ and $\vec{y}$ are of the same arity (in the
    trminology of Sangiorgi, $\vec{z} : \vec{y}$), and also 
    $\alpha \in \semb{\beta} = \Set{\inn{x}{\langle\vec{z}\rangle}}$.
    Note that $\semb{x(\vec{y}).P} = {\lin{x}{\ve y}{\langle\ve{y}\rangle}.\semb{P}}$
    and $\ve{z}\in\Match(\langle \ve{z}\rangle,\ve y, \langle\ve y\rangle)$.
    By using $\emptyframe \vdash x\sch x$, we can derive 
    $\trans{\lin{x}{\ve y}{\langle\ve{y}\rangle}.\semb{P}}{\inn{x}{\langle\vec{z}\rangle}}{\semb{P}\lsubst{\ve z}{\ve y}}$
    with the \textsc{In} rule. By applying Lemma~\ref{lem:ppi-substitution}, we complete this proof case.

    \proofcase{\textsc{SUM}} 
    Assume $\trans{P + Q}{\beta}{P'}$ and $\alpha \in \semb{\beta}$, and also $\trans{P}{\beta}{P'}$. 
    The induction hypothesis is that for every $\alpha\in\semb{\beta}$,
     $\trans{\semb{P}}{\alpha}{\semb{P'}}$. 
     We can then derive $\trans{\caseonly\;{\ci{\top}{\semb{P}}\casesep\ci{\top}{\semb{Q}}}}{\alpha}{\semb{P'}}$
    with the \textsc{Case} rule for every $\alpha\in\semb{\beta}$.

    \proofcase{\textsc{PAR}} 
    Assume $\trans{P \parop Q }{\beta}{P' \parop Q}$ and $\alpha\in\semb\beta$, and
    $\trans{P}{\beta}{P'}$ with $\bn \beta \cap \fn Q = \emptyset$.
    The induction hypothesis is that for every $\alpha\in\semb\beta$, $\trans{\semb{P}}{\alpha}{\semb{P'}}$.
    From the definition of $\semb{\beta}$ we get that $\bn \alpha\freshin \semb{Q}$ for any $\alpha\in\semb\beta$.
    By applying the \textsc{Par} rule, we obtain the required transitions
    $\trans{\semb{P}\parop\semb{Q}}{\alpha}{\semb{P'}\parop\semb{Q}}$.
    
    \proofcase{\textsc{COM}}
    Assume $\trans{P \parop Q}{\tau}{\nu \vec{y}' (P' \parop Q')}$
    with $\vec{y}' \cap \fn Q = \emptyset$.
    Also assume $\trans{P}{(\nu \vec{y'})\overline{x}\langle\vec{y}\rangle}{P'}$ and
    $\trans{Q}{x\langle\vec{y}\rangle}{Q'}$. The induction hypothesis is that for every
    $\alpha' \in \semb{(\nu \vec{y'})\overline{x}\langle\vec{y}\rangle}$ and
    $\alpha'' \in \semb{x\langle\vec{y}\rangle}$, 
    $\trans{\semb{P}}{\alpha'}{\semb{P'}}$ and
    $\trans{\semb{Q}}{\alpha''}{\semb{Q'}}$
    Moreover, we note that $\emptyframe\vdash x \sch x$ and $\vec{y}' \freshin \semb{Q}$.
    We then choose $\alpha'$ and $\alpha''$ and alpha-variants of the
    frames of $\semb{P}$ and $\semb{Q}$ that are sufficiently fresh to allow
    the derivation $\trans{\semb{P}\parop\semb{Q}}{\tau}{(\nu \ve{y'})(\semb{P'}\parop\semb{Q'})}$
    with the \textsc{Com} rule.

    \proofcase{\textsc{MATCH}} 
    Assume $\trans{[x = x]P}{\beta}{P'}$ and $\alpha \in \semb\beta$, as well as
    $\trans{P}{\beta}{P'}$. The induction hypothesis is that 
    $\trans{\semb{P}}{\alpha}{\semb{P'}}$. Since $\emptyframe\vdash x = x$ and 
    $\caseonly\;{\ci{x=x}{\semb{P}}} = \semb{[x=x]P}$,
    we derive $\trans{\caseonly\;{\ci{x=x}{\semb{P}}}}{\alpha}{\semb{P'}}$
    with the \textsc{Case} rule.
    
    \proofcase{\textsc{REP}} 
    Assume $\trans{!P}{\beta}{P'}$ and $\alpha\in\semb\beta$.
    Moreover, assume $\trans{P\parop !P}{\beta}{P'}$ and hence by the induction hypothesis
    $\trans{\semb{P \parop !P}}{\alpha}{\semb{P'}}$. We compute
    $\semb{P} \parop !\semb{P} = \semb{P \parop !P}$ and apply the \textsc{Rep} rule
    to obtain $\trans{!\semb{P}}{\alpha}{\semb{P'}}$.

    \proofcase{\textsc{RES}} 
    Assume $\trans{\nu x P}{\beta}{\nu x P'}$ where $x\not\in n(\beta)$ and
    $\alpha\in\semb\beta$. Also assume $\trans{P}{\beta}{P'}$. The
    induction hypothesis is $\trans{\semb{P}}{\alpha}{\semb{P'}}$.
    Now by obtaining $x\freshin \alpha$ from assumptions and computing
    $\semb{\nu x P} = (\nu x)\semb{P}$, we derive
    $\trans{(\nu x)\semb{P}}{\alpha}{(\nu x)\semb{P'}}$
    with the \textsc{Scope} rule.

    \proofcase{\textsc{OPEN}} 
    Let $\beta = {(\nu x, \vec{y'})\overline{z}\langle \vec{y}\rangle}$.
    Assume $\trans{\nu x P}{\beta}{P'}$
    and $x\not= z, x \in \vec{y} - \vec{y'}$ and
    $\alpha \in \semb{\beta} = \Set{\boutlabel{z}{\ve{y}''}{\langle\vec{y}\rangle}\;:\;\vec{y}''=\pi \cdot x,\vec{y}'} $.
    The induction hypothesis is that for every
    $\alpha' \in \semb{(\nu \vec{y}')\overline{z}\langle\vec{y}\rangle} =
    \Set{\boutlabel{z}{\ve{y}''}{\langle\vec{y}\rangle}\;:\;\vec{y}''=\pi \cdot \vec{y'}}$
    we have $\trans{\semb{P}}{\alpha'}{\semb{P'}}$.
    We choose $\alpha' = \boutlabel{z}{\ve{y}'}{\ve{y}}$ and, by having
    $\semb{\nu x P} = (\nu x)\semb{P}$, we derive
    $\trans{(\nu x)\semb{P}}{\boutlabel{z}{x,\vec{y'}}
    {\langle\vec{y}\rangle}}{\semb{P'}}$
    with the \textsc{Open} rule.
    The side conditions of \textsc{Open} ($x \freshin \vec{y}', z$ and $x \in n(\vec{y})$)
    follow from assumptions.

    From the assumption $\alpha \in \semb{\beta}$, it follows that, for any permutation $\pi$, $\alpha$ is of the form 
    $\boutlabel{z}{\pi\cdot x,\ve{y}'}{\langle\vec{y}\rangle}$.
    By applying Lemma~\ref{lem:bnReordering}, we get the required $\alpha$ and transition
    $\trans{(\nu x)\semb{P}}{\alpha}{\semb{P'}}$. And this concludes this proof case.

\end{proofcasesdescription}

\item We now show that if $\trans{\semb{P}}{\alpha}{P''}$ then
    $\trans{P}{\beta}{P'}$ where $\alpha \in \semb{\beta }$ and $\semb{P' } = P''$.
    We proceed by by induction on the derivation of the transition.
    We show the interesting cases:

\begin{proofcasesdescription}
    \proofcase{\textsc{Case}} 
    Assume $\trans{\semb{P}}{\alpha}{P''}$. 
    By inversion of the \textsc{Case} rule,
    $\semb{P}$ is of the form $\caseonly\;\ci{\vec{\varphi}}{\vec{P}}$.
    Since $P_C = \caseonly\;{\ci{\ve{\varphi}}{\ve{P}}}$ is in the range of
    $\semb{\cdot}$, either $P_C =
    \ci{\top}{\semb{P}}\casesep\ci{\top}{\semb{Q}}$,  $P_C =
    \ci{\top}{\semb{Q}}\casesep\ci{\top}{\semb{P}}$ or $P_C =
    \caseonly\;{\ci{x=y}{\semb{P}}}$. We proceed by case analysis:
    \begin{enumerate}
      \item When $P_C = \ci{\top}{\semb{P}}\casesep\ci{\top}{\semb{Q}}$, we
          note that $\semb{P + Q} = P_C$ and imitate the derivation of $P''$
          from $P_C$ with the derivation $\trans{P + Q}{\beta}{P'}$, using the
          \textbf{SUM} rule and the fact obtained from induction hypothesis $\alpha\in\semb\beta$.
      \item The case when $P_C = \ci{\top}{\semb{Q}}\casesep\ci{\top}{\semb{P}}$ is symmetric to the previous case.
      \item When $P_C = \caseonly\;{\ci{x=y}{\semb{P}}}$, since $\emptyframe \vdash x=y$
          by the induction hypothesis, $x=y$. We note that $\semb{[x=x]P } =
          P_C$ and imitate the derivation of $P''$ from $P_C$ with the
          derivation $\trans{[x=x]P}{\beta}{P'}$, using the \textbf{MATCH}
          rule and the fact obtained from induction hypothesis $\alpha\in\semb\beta$.
    \end{enumerate}

    \proofcase{\textsc{Open}}
    Assume $\trans{\semb{P}}{\boutlabel{z}{\vec{y}\cup\Set{x}}{\langle\vec{y}'\rangle}}{P''}$. 
    Because $P''$ is derived with the \textsc{Open} rule,
    $\semb{P}$ is of the form $(\nu x)R$. Since $(\nu x) R$ is in the range of $\semb{\cdot}$,
    $P = \nu x R'$, where $R = \semb{R'}$.
    From induction hypothesis, we have that 
    $\trans{R}{\boutlabel{z}{\vec{y}}{\langle\vec{y}'\rangle}}{P''}$ and
    ${\boutlabel{z}{\vec{y}}{\langle\vec{y}'\rangle}}\in\semb{\beta'}$ and
    $\trans{R'}{\beta'}{P'}$ and lastly $\semb{P'} = P''$. 
    Thus, we use $\beta' = (\nu \vec{y})\overline{z}\langle\vec{y}'\rangle$ 
    as it gives us ${\boutlabel{z}{\vec{y}}{\langle\vec{y}'\rangle}}\in\semb{\beta'}$
    to derive, by using the rule \textbf{OPEN}, $\trans{\nu x R'}{(\nu x,\vec{y})\overline{z}\langle\vec{y}'\rangle}{P'}$.
    Clearly, ${\boutlabel{z}{\vec{y}\cup\Set{x}}{\langle\vec{y}'\rangle}} \in \semb{(\nu x,\vec{y})\overline{z}\langle\vec{y}'\rangle}$
    for every insertion of $x$. \qed

\end{proofcasesdescription}
\end{enumerate}

\noindent From the strong operational correspondence, we obtain full abstraction. 
We use Sangiorgi's definition of bisimulation and congruence for the polyadic
pi-calculus~\cite[page~42]{sangiorgi:expressing-mobility}.

\begin{thm}
    For polyadic-pi calculus agents $P$ and $Q$ we have
    $P \sim_e^c Q$ iff $\semb{P} \sim \semb{Q}$.
\end{thm}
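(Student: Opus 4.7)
The plan is to prove each direction by constructing an explicit candidate bisimulation and invoking the strong operational correspondence of Theorem~\ref{thm:ppi-strong-operational-corresp.}. The congruence (closure under substitutions) on both sides is handled uniformly via Lemma~\ref{lem:ppi-substitution}, which states $\semb{P}\lsubst{\ve z}{\ve y} = \semb{P\{\vec z/\vec y\}}$, together with the fact that in \textbf{PPI} the only well-sorted substitutions are name-for-name on channel names (since ${\CanSubstitute}=\{(\kw{chan},\kw{chan})\}$), which are precisely the substitutions used in Sangiorgi's early congruence.

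For the forward direction, since both $\sim_e^c$ and $\sim$ are the substitution-closures of $\sim_e$ and $\bisim_\emptyframe$, and these substitution classes correspond exactly via Lemma~\ref{lem:ppi-substitution}, it suffices to show $P \sim_e Q \Rightarrow \semb P \bisim_\emptyframe \semb Q$. I would take as candidate relation
\[
\mathcal{R} \;=\; \{(\emptyframe, \semb{P'}, \semb{Q'}) \,:\, P' \sim_e Q'\}
\]
and check the four clauses of Definition~\ref{def:bisim}. Static equivalence is immediate since $\frameof{\semb{P'}} = \emptyframe = \frameof{\semb{Q'}}$; symmetry is inherited from symmetry of $\sim_e$; extension is vacuous since $\assertions=\{\emptyframe\}$; for simulation, given a transition $\semb{P'}\xrightarrow\alpha P''$, use part~2 of Theorem~\ref{thm:ppi-strong-operational-corresp.} to obtain $\beta, R$ with $P'\xrightarrow\beta R$, $\alpha\in\semb\beta$ and $\semb R = P''$; then use $P' \sim_e Q'$ to get $Q'\xrightarrow\beta R'$ with $R \sim_e R'$; finally apply part~1 to lift this back to $\semb{Q'}\xrightarrow\alpha \semb{R'}$, and $(\semb R, \semb{R'})$ is in $\mathcal R$ by construction.

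For the reverse direction, the cleanest approach is the dual construction: assume $\semb P \bisim_\emptyframe \semb Q$ for all substitution instances (after reducing to the non-congruence case as above) and define, in polyadic pi,
\[
\mathcal{R}' \;=\; \{(P', Q') \,:\, \semb{P'} \bisim_\emptyframe \semb{Q'}\}.
\]
Verify that $\mathcal{R}'$ is an early bisimulation. Given an early transition $P' \xrightarrow\beta R$, part~1 of Theorem~\ref{thm:ppi-strong-operational-corresp.} yields $\semb{P'}\xrightarrow\alpha \semb R$ for each $\alpha\in\semb\beta$; the psi bisimulation produces a matching $\semb{Q'}\xrightarrow\alpha S$; then part~2 recovers a polyadic pi transition $Q'\xrightarrow{\beta'} R'$ with $\alpha\in\semb{\beta'}$ and $\semb{R'}=S$. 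The final technical step is to argue that we may choose $\beta' = \beta$ (the actions agree, not just that they share some common encoding); for $\tau$ and free input actions this is immediate, while for bound output actions it follows from Lemma~\ref{lem:bnReordering} together with the definition $\semb{(\nu\vec y)\overline x\langle\vec z\rangle} = \{\boutlabel{x}{\vec y'}{\langle\vec z\rangle} : \vec y'\text{ a permutation of }\vec y\}$, since reordering the binders in the label does not change the residual.

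The main obstacle is this last matching of labels: polyadic pi-calculus treats the bound names of a bound output as an unordered set, while psi-calculus actions order them, so care is needed to ensure that the $\beta'$ obtained from part~2 of Theorem~\ref{thm:ppi-strong-operational-corresp.} is genuinely the same polyadic pi action as the original $\beta$, rather than merely a permutation thereof; invoking Lemma~\ref{lem:bnReordering} to reorder the bound names of the simulating psi transition before decoding disposes of this subtlety. Everything else is bookkeeping about the trivial monoid $\assertions=\{\emptyframe\}$ and the one-to-one correspondence between name-for-name substitutions in the two calculi.
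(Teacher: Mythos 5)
Your proposal is correct and follows essentially the same route as the paper: both directions are proved by exhibiting the obvious candidate relation ($\{(P,Q):\semb P\sim\semb Q\}$ as an early congruence in one direction, $\{(\emptyframe,\semb P,\semb Q):P\sim_e^c Q\}$ as a psi-congruence in the other), discharging simulation via both parts of Theorem~\ref{thm:ppi-strong-operational-corresp.} and substitution closure via Lemma~\ref{lem:ppi-substitution}, with the static-equivalence and extension clauses trivial because $\assertions=\{\emptyframe\}$. Your explicit treatment of the bound-output label subtlety (that the $\beta'$ recovered from part~2 must equal $\beta$ as a polyadic pi action, which holds since the sets $\semb\beta$ partition the psi actions) is a point the paper leaves implicit, but it does not change the argument.
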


\proof

For direction $\Leftarrow$, assume $\semb{P}\sim\semb{Q}$. We claim that the
relation $\mathcal{R} = \Set{(P, Q): \semb{P} \sim \semb{Q}}$ is an \emph{early
congruence} in the polyadic pi-calculus. 

Firs let us consider the simulation case.  Assume $\trans{P}{\beta}{P'}$. Then,
we need to show that there exists $Q'$ such that $\trans{Q}{\beta}{Q'}$ and $(P',Q')
\in \mathcal{R}$.  By Theorem~\ref{thm:ppi-strong-operational-corresp.} (1), we
get $\trans{\semb{P}}{\alpha}{\semb{P'}}$ for any $\alpha\in\semb{\beta}$.  By
Theorem~\ref{thm:ppi-strong-operational-corresp.} (2) and using the assumption
$\alpha\in\semb{\beta}$ as well as the fact $\semb{P}\sim\semb{Q}$, we derive
$\trans{\semb{Q}}{\alpha}{\semb{Q'}}$. From the simulation clause and that
$\semb{P}$ and $\semb{Q}$ are congruent we get that $\semb{P'}\sim\semb{Q'}$.
Hence, $(P', Q') \in \mathcal{R}$.  The symmetry case follows from the symmetry
of $\sim$. Thus, $\mathcal{R}$ is an early bisimulation.  Since $\mathcal{R}$
is closed under all substitutions by Lemma~\ref{lem:ppi-substitution}, it is
also an early congruence.

Now let us consider the other direction $\Rightarrow$.
First, assume $P\sim_e^cQ$. We claim the relation
$\{(\emptyframe,\semb{P},\semb{Q}): P \sim_e^c Q\}$ is a congruence in \textbf{PPI}.
The static equivalence and extension of arbitrary assertion cases are trivial
since there is unit assertion only. Symmetry follows from symmetry of $\sim_e^c$, and
simulation follows by Theorem~\ref{thm:ppi-strong-operational-corresp.} and the
fact that $\sim_e^c$ is an early congruence.
\qed

\proof[Proof of Theorem \ref{thm:polypiabs}]

By structural induction on $P$. We only consider the $\caseonly$ agent
since the other cases are trivial.

\begin{proofcasesdescription}
\proofcase{$P=\caseonly\;\ci{\varphi_1}{P_1}\casesep\dots\casesep\ci{\varphi_n}{P_n}$}
We have one induction hypothesis $\IH_i$ for every $i\in\{1..n\}$, namely that $P_i \sim \semb{\bmes{P_i}}$.

We proceed by induction on $n$.
    \begin{proofcasesdescription}
    \proofcase{Base case $n=0$} $\semb{\bmes{\caseonly}} = \semb{\nil} = \nil$.
        By reflexivity of $\sim$, $\nil\sim\nil$.

    \proofcase{Induction step $n+1$}
        The $\IH$ for this case is
        \[\semb{\bmes{\caseonly\;\ci{\varphi_1}{P_1}\casesep\dots\casesep\ci{\varphi_n}{P_n}}}
        \sim \caseonly\;\ci{\varphi_1}{P_1}\casesep\dots\casesep\ci{\varphi_n}{P_n} = P'\]

        We need to show that $Q \sim \semb{\bmes{Q}}$ for $Q =
        \caseonly\;\ci{\varphi_1}{P_1}\casesep\dots\casesep\ci{\varphi_n}{P_n}
                      \casesep\ci{\varphi_{n+1}}{P_{n+1}}$.

        We thus compute
        \[\begin{array}{rcl}
        \semb{\bmes{Q}} &=&
            \semb{\bmes{\ci{\varphi_1}{P_1}}+\dots+\bmes{\ci{\varphi_n}{P_n}}
                      + \bmes{\ci{\varphi_{n+1}}{P_{n+1}}}} \\
                        &=& 
            \caseonly\;\ci{\top}{\semb{\bmes{\ci{\varphi_1}{P_1}}}}\casesep\dots
                      \casesep\ci{\top}{\semb{\bmes{\ci{\varphi_n}{P_n}}}}
                      \casesep \ci{\top}{\semb{\bmes{\ci{\varphi_{n+1}}{P_{n+1}}}}} \\

                        &\sim&  \text{(by Lemma~\ref{lemma:flatten-case})} \\
                        &    &
            \caseonly\;
                \ci{\top}{
                    (\caseonly\;\ci{\top}{\semb{\bmes{\ci{\varphi_1}{P_1}}}}\casesep\dots
                      \casesep\ci{\top}{\semb{\bmes{\ci{\varphi_n}{P_n}}}}) }
                      \casesep \ci{\top}{\semb{\bmes{\ci{\varphi_{n+1}}{P_{n+1}}}}} \\

                        &\sim&  \text{(by $\IH$)} \\
                        &    &
            \caseonly\;
                \ci{\top}{
                    (\caseonly\;\ci{\varphi_1}{P_1}\casesep\dots\casesep\ci{\varphi_n}{P_n})}
                      \casesep \ci{\top}{\semb{\bmes{\ci{\varphi_{n+1}}{P_{n+1}}}}} \\
                        &  = &
            \caseonly\;
                \ci{\top}{P'}
                      \casesep \ci{\top}{\semb{\bmes{\ci{\varphi_{n+1}}{P_{n+1}}}}} \\
                        & = & Q'
        \end{array}\]

        We distinguish two cases of $\varphi_{n+1}$:
        \begin{proofcasesdescription}
        \proofcase{Case $\varphi_{n+1} = \top$}
            \[\begin{array}{rcl}
                Q' &=&
                \caseonly\;\ci{\top}{P'}
                      \casesep \ci{\top}{\semb{\bmes{\ci{\top}{P_{n+1}}}}} \\
                   &=&
                \caseonly\;\ci{\top}{P'}
                      \casesep \ci{\top}{\semb{\bmes{P_{n+1}}}} \\
                   &\sim& \text{(by $\IH_{n+1}$)} \\
                & & \caseonly\;\ci{\top}{P'}
                      \casesep \ci{\top}{P_{n+1}} \\
                &\sim& \text{(by Lemma~\ref{lemma:flatten-case})} \\
                &    & \caseonly\;\ci{\varphi_1}{P_1}\casesep\dots\casesep\ci{\varphi_n}{P_n}
                        \casesep\ci{\top}{P_{n+1}} = Q \\
            \end{array}\]
            We conclude this case.

        \proofcase{Case $\varphi_{n+1} = x = y$}
            \[\begin{array}{rcl}
                Q' &=&
                \caseonly\;\ci{\top}{P'}
                      \casesep \ci{\top}{\semb{\bmes{\ci{x = y}{P_{n+1}}}}} \\
                   &=&
                \caseonly\;\ci{\top}{P'}
                      \casesep \ci{\top}{(\caseonly\;\ci{x = y}{\semb{\bmes{P_{n+1}}}})} \\
                   &\sim& \text{(by $\IH_{n+1}$)} \\
                &  &\caseonly\;\ci{\top}{P'}
                      \casesep \ci{\top}{(\caseonly\;\ci{x = y}{P_{n+1}})} \\
                &\sim& \text{(by Lemma~\ref{lemma:flatten-case})} \\
                &  &\caseonly\;\ci{\varphi_1}{P_1}\casesep\dots\casesep\ci{\varphi_n}{P_n}
                      \casesep \ci{\top}{(\caseonly\;\ci{x = y}{P_{n+1}})} \\
                &\sim& \text{(by Lemma~\ref{lemma:flatten-case})} \\
                &    & \caseonly\;\ci{\varphi_1}{P_1}\casesep\dots\casesep\ci{\varphi_n}{P_n}
                        \casesep\ci{x = y}{P_{n+1}} = Q \\
            \end{array}\]
            By concluding this case, we conclude the proof.\qed
        \end{proofcasesdescription}

    \end{proofcasesdescription}

\end{proofcasesdescription}

\begin{lem}
\label{lemma:ppi-to-psi-injective}
$\semb{\cdot}$ is injective, that is, for all $P, Q$,
if $\semb{P} = \semb{Q}$ then $P = Q$.
\end{lem}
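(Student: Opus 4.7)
The plan is to proceed by structural induction on $P$, with case analysis on the outermost constructor of $Q$ in each case. The central observation is that $\semb{\cdot}$ maps each polyadic pi-calculus constructor to a distinctive outermost psi-calculus form: $\nil$, parallel, restriction, replication, and the output prefix map homomorphically to their psi-calculus counterparts; the input prefix $x(\ve{y}).P$ maps to an input prefix $\lin{x}{\ve{y}}\langle\ve{y}\rangle.\semb{P}$ whose pattern is a tuple; the sum $P+Q$ maps to a $\caseonly$ agent with exactly two branches, each guarded by $\top$; and the match $[x=y]P$ maps to a $\caseonly$ agent with exactly one branch, guarded by the condition $x=y$. Since $\top$ is a distinct element of $\conditions$ from any name equation, and since the number of branches of a $\caseonly$ is invariant under equality of psi-calculus terms, the sum and match cases are unambiguously distinguishable both from each other and from all other cases.

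First I would handle the non-binding cases. For instance, if $\semb{P_1+P_2} = \semb{Q}$, the only forms of $Q$ whose translation is a $\caseonly$ term are $Q_1+Q_2$ and $[x=y]Q'$; the latter produces only one branch, so $Q$ must have the form $Q_1+Q_2$. Equality of the two $\caseonly$ agents then forces $\semb{P_i} = \semb{Q_i}$ for $i=1,2$, and the induction hypothesis yields $P_i = Q_i$. The parallel, replication, output, and match cases are entirely analogous.

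For the two binding cases I would invoke alpha-equivalence on both sides. Suppose $\semb{(\nu x)P'} = \semb{Q}$; from the argument above, $Q$ must have the form $(\nu x')Q'$. Since psi-calculus restrictions are identified up to alpha-conversion, we may rename so that $x = x'$ is fresh for both sides, obtaining $\semb{P'} = \semb{Q'}$, whence the IH gives $P' = Q'$ and then $(\nu x)P' = (\nu x)Q'$ in the polyadic pi-calculus (which is also considered modulo alpha). The input case $\semb{x(\ve{y}).P'} = \lin{x}{\ve{y}}\langle\ve{y}\rangle.\semb{P'}$ is similar: any $Q$ with matching translation must be of the form $x(\ve{y}').Q'$, the subject $x$ is free and therefore equal on both sides, and a simultaneous alpha-renaming of $\ve{y}$ in both the binder and the pattern on each side reduces to the case $\ve{y}=\ve{y}'$, whereupon the IH applies.

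The main obstacle, though minor, is bookkeeping the alpha-renaming in the input case, since the bound tuple occurs twice (once in the prefix and once in the pattern) and the renamings in the two calculi must be kept in step. This is not difficult because the pattern $\langle\ve{y}\rangle$ consists of exactly the binders, so any psi-calculus alpha-renaming of the binders automatically descends to a compatible polyadic pi-calculus alpha-renaming of $\ve{y}$. No other step is subtle.
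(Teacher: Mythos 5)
Your proof is correct and follows essentially the same route as the paper, whose entire argument is ``by induction on $P$ and $Q$ while inspecting all possible cases''; you have simply spelled out the case analysis (distinguishing constructors by the outermost psi-form, branch count and guard of $\caseonly$) and the alpha-renaming bookkeeping that the paper leaves implicit.
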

\proof
By induction on $P$ and $Q$ while inspecting all possible cases.
\qed

\begin{lem}
\label{lemma:ppi-to-psi-surjective}
$\semb{\cdot}$ is surjective up to $\sim$, that is, for every $P$
there is a $Q$ such that $\semb{Q} \sim P$.
\end{lem}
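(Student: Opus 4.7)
The plan is to observe that this result is essentially an immediate corollary of Theorem~\ref{thm:polypiabs}, which has already been proved in the excerpt. That theorem states that for every \textnormal{\textbf{PPI}} process $P$, we have $P \sim \semb{\overline{P}}$. Since the inverse translation $\overline{\cdot}$ maps every \textnormal{\textbf{PPI}} process to a polyadic pi-calculus process, we can simply take $Q := \overline{P}$ as the witness for surjectivity.

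Concretely, I would proceed as follows. Given any \textnormal{\textbf{PPI}} process $P$, first check that $\overline{P}$ is a well-defined polyadic pi-calculus process; this is immediate from the definition of $\overline{\cdot}$, which is given by structural recursion on every \textnormal{\textbf{PPI}} syntactic form and returns a polyadic pi-calculus process in each clause (with the $\pass{\emptyframe}$ and empty-$\caseonly$ cases both mapping to $\nil$). Then set $Q := \overline{P}$, and invoke Theorem~\ref{thm:polypiabs} to conclude $P \sim \semb{\overline{P}} = \semb{Q}$, using symmetry of $\sim$ if needed to get the statement in the form $\semb{Q} \sim P$.

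There is no real obstacle here: the content of the lemma has already been established by Theorem~\ref{thm:polypiabs}; this lemma is just repackaging it as a surjectivity statement about $\semb{\cdot}$ modulo strong bisimulation congruence. The proof therefore amounts to a one-line appeal, namely: ``Let $Q = \overline{P}$; then $\semb{Q} = \semb{\overline{P}} \sim P$ by Theorem~\ref{thm:polypiabs} and symmetry of $\sim$. \qed''
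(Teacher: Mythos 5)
Your proof is correct, and the appeal to Theorem~\ref{thm:polypiabs} is legitimate: that theorem is established before and independently of this lemma, the inverse translation $\bmes{\cdot}$ is total on well-formed \textbf{PPI} agents (every well-formed input prefix must have the form $\lin{x}{\ve y}\vect{\ve z}.P$ with $\{\ve y\}=\{\ve z\}$ because $\vars(\vect{\ve z})=\Set{\ve z}$, and all other syntactic forms are covered by a clause of $\bmes{\cdot}$, with $\pass{\emptyframe}$ and the empty $\caseonly$ both going to $\nil$), so $Q:=\bmes{P}$ is a genuine polyadic pi-calculus process and $\semb{Q}\sim P$ follows by symmetry of $\sim$. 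The paper takes a different-looking but mathematically identical route: it proves the lemma by a fresh structural induction on $P$, with an existentially quantified induction hypothesis (``for each $P'_i$ there is $Q'_i$ with $\semb{Q'_i}\sim P'_i$'') and an inner induction on the number of $\caseonly$ branches using Lemma~\ref{lemma:flatten-case} --- which reproduces, clause for clause, the induction already carried out for Theorem~\ref{thm:polypiabs}, merely without naming the witness as $\bmes{P}$. Your version buys economy by not duplicating that induction; the paper's version is self-contained and does not commit to $\bmes{P}$ as the witness. Both are sound.
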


\proof
By induction on the well-formed agent $P$.

\begin{proofcasesdescription}

\proofcase{Case $\lin{x}{\ve{y}}\langle\ve{y}\rangle.P'$} By induction there is
$Q'$ such that $\semb{Q'} \sim P'$. Let $Q = x(\vec{y}).Q'$. Then
$\semb{Q} = \semb{x(\vec{y}).Q'} = \lin{x}{\ve{y}}\langle\ve{y}\rangle.\semb{Q'}
\sim \lin{x}{\ve{y}}\langle\ve{y}\rangle.P'=P$.

\proofcase{Case $\overline{x}\langle \vec{y} \rangle.P'$} By induction there is $Q'$ such that
$\semb{Q'} \sim P'$. Let $Q = \overline{x}\langle \vec{y} \rangle.Q'$. Now
$\semb{Q} = \overline{x}\langle \vec{y} \rangle.\semb{Q'} \sim
\overline{x}\langle \vec{y} \rangle.P'=P$.

\proofcase{Case $P\pll P'$} By induction there are $Q',Q''$ such that
$\semb{Q'}\sim P$ and $\semb{Q''}\sim P'$. Then let $Q = Q'\pll Q''$, obtaining
$\semb{Q} = \semb{Q'}\pll\semb{Q''} \sim P\pll P'=P$.

\proofcase{Case $(\nu x)P$} By induction there is $Q'$ such that $\semb{Q'}\sim P$.
Let $Q = \nu x Q'$. Then $\semb{Q} = (\nu x)\semb{Q'} \sim (\nu x)P$.

\proofcase{Case $!P$} By induction there is $Q'$ such that $\semb{Q'}\sim P$.
Let $Q = {!Q'}$. Then $\semb{Q} = {!\semb{Q'}} \sim {!P}$.

\proofcase{Case $\pass{\emptyframe}$} Let $Q = \nil$. Then $\semb{Q} = \nil \sim
\pass{\emptyframe}$.

\proofcase{Case $\caseonly\;\ci{\ve{\varphi}}{\ve{P'}}$} 
The induction hypothesis IH${}_{\caseonly}$ is that for every $P'_i$ there is $Q'_i$ such that
$\semb{Q'_i} \sim P'_i$. The proof proceeds by induction on the length of
$\ve\varphi$.
    \begin{proofcasesdescription}
    \proofcase{Base case} Let $Q = \nil$, then $\semb{Q} = \nil \sim \caseonly$.
    \proofcase{Induction step} At this step, we get the following IH
        \[\semb{Q''}\sim
            \caseonly\;\ci{\varphi_1}{P_1}\casesep\dots\casesep\ci{\varphi_n}{P_n}\]
        We need to find $\semb{Q}$ such that
        \[\semb{Q}\sim
            \caseonly\;\ci{\varphi_1}{P_1}\casesep\dots\casesep\ci{\varphi_n}{P_n}
            \casesep\ci{\varphi_{n+1}}{P_{n+1}}
        \]
        By IH${}_{\caseonly}$ for $P'_{n+1}$ we get $\semb{Q'_{n+1}} \sim P_{n+1}$.
        We proceed by case analysis on $\varphi_{n+1}$.
        \begin{proofcasesdescription}
        \proofcase{Case $\varphi_{n+1} = \top$} Let $Q = Q'' + Q'_{n+1}$. Then
            \[\begin{array}{rcl}
            \semb{Q} &=& \caseonly\;\ci{\top}{\semb{Q''}}\casesep\ci{\top}{\semb{Q'_{n+1}}} \\
                &\sim& \caseonly\;\ci{\top}{(\caseonly\;\ci{\varphi_1}{P_1}\casesep\dots\casesep\ci{\varphi_n}{P_n})}\\
                &    & \casesep\ci{\top}{\semb{Q'_{n+1}}} \\
                &\sim& \caseonly\;\ci{\top}{(\caseonly\;\ci{\varphi_1}{P_1}\casesep\dots\casesep\ci{\varphi_n}{P_n})}\\
                &    & \casesep\ci{\top}{P_{n+1}} \\
                &\sim& \text{(by Lemma \ref{lemma:flatten-case})} \\
                &    & \caseonly\;\ci{\varphi_1}{P_1}\casesep\dots\casesep\ci{\varphi_n}{P_n}\\
                &    & \casesep\ci{\top}{P_{n+1}} \\
            \end{array} \]

        \proofcase{Case $\varphi_{n+1} = x = y$} Let $Q = Q'' + [x=y]Q'_{n+1}$. Then
            \[\begin{array}{rclr}
            \semb{Q} &=& \caseonly\;\ci{\top}{\semb{Q''}}\casesep\ci{\top}{\semb{[x=y]Q'_{n+1}}} \\
                &\sim& \caseonly\;\ci{\top}{(\caseonly\;\ci{\varphi_1}{P_1}\casesep\dots\casesep\ci{\varphi_n}{P_n})}\\
                &    & \casesep\ci{\top}{(\caseonly\;\ci{x = y}{\semb{Q'_{n+1}}})} \\
                &\sim& \caseonly\;\ci{\top}{(\caseonly\;\ci{\varphi_1}{P_1}\casesep\dots\casesep\ci{\varphi_n}{P_n})}\\
                &    & \casesep\ci{\top}{(\caseonly\;\ci{x = y}{P_{n+1}})} \\
                &\sim& \text{(by Lemma \ref{lemma:flatten-case})} \\
                &    & \caseonly\;\ci{\varphi_1}{P_1}\casesep\dots\casesep\ci{\varphi_n}{P_n}\\
                &    & \casesep\ci{\top}{(\caseonly\;\ci{x = y}{P_{n+1}})} \\
                &\sim& \text{(by permuting and applying Lemma \ref{lemma:flatten-case})} \\
                &    & \caseonly\;\ci{\varphi_1}{P_1}\casesep\dots\casesep\ci{\varphi_n}{P_n}\casesep\ci{x=y}{P_{n+1}}
            \end{array} \]
            This case concludes the proof. \qed
        \end{proofcasesdescription}
    \end{proofcasesdescription}
\end{proofcasesdescription}

\subsection{Polyadic Synchronisation Pi-Calculus}

In this section, we include the full proofs of Section~\ref{sec:polySynchPi}.
We use definitions and results for polyadic synchronisation pi-calculus, ${}^{e}\pi$,
by Carbone and Maffeis~\cite{carbone.maffeis:expressive-power}.

We give an explicit definition of encoding function defined in
Example~\ref{sec:polySynchPi}.

\begin{defi}[Polyadic synchronisation pi-calculus to {\bf PSPi}]$ $\\
Agents:
\[
\begin{array}{rcl}
\semb{\ve{x}(y).P} &=& \lin{\langle\ve{x}\rangle}{y}y.\semb{P} \\
\semb{\ve{x}\langle y\rangle.P} &=& \out{\langle\ve{x}\rangle}{y}.\semb{P} \\
\semb{P\pll Q} &=& \semb{P}\pll\semb{Q} \\
\semb{(\nu x) P} &=& (\nu x)\semb{P} \\
\semb{!P} &=& !\semb{P} \\
\semb{0} &=& 0 \\

\semb{\Sigma_i\alpha_i . P_i} &=& \caseonly\;\ci{\top_i}\semb{\alpha_i.P_i} \\
\end{array}
\]
Actions:
\[\begin{array}{rcl}
\semb{\vec{x}\langle \nu c\rangle} &=& \boutlabel{\langle\vec{x}\rangle}{c}{c} \\
\semb{\vec{x}\langle c\rangle} &=& \out{\langle\vec{x}\rangle}{c} \\
\semb{\tau} &=& \tau \\
\semb{\vec{x}(y)} &=&  \text{undefined}  \\
\end{array}\]
\end{defi}

\begin{defi}[{\bf PSPi} to Polyadic synchronisation pi-calculus]
\[\begin{array}{rcl}
\bmes{\pass{\emptyframe}} &=& \nil \\
\bmes{\nil} &=& \nil \\
\bmes{!P} &=& !\bmes{P} \\
\bmes{(\nu x)P} &=& (\nu x)\bmes{P}\\
\bmes{P\pll Q} &=& \bmes{P}\pll\bmes{Q}\\
\bmes{\langle\ve{a}\rangle y.P} &=& \overline{a}\langle y\rangle . \bmes{P}\\
\bmes{\lin{\ve{x}}{y}{y}.P}  &=& \overline{x}(y).\bmes{P}\\
\bmes{\tau.P} &=& \tau.\bmes{P} \\
\bmes{\caseonly\;\ci{\top}{\alpha_i.P_i}} &=& \Sigma_i \bmes{\alpha_i.P_i} \\
\end{array}\]
\end{defi}

\begin{lem}
\label{lemma:polySynchStructCong}
If $P \equiv Q$ then $\semb{P} \sim \semb{Q}$
\end{lem}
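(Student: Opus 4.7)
The plan is to proceed by induction on the derivation of $P \equiv Q$, splitting the cases into three groups: congruence closure, structural axioms involving $\pll$ and $\nu$, and structural axioms involving $+$. Most cases reduce to results already established in the paper; only the sum axioms require new bisimulation arguments.

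For the congruence closure, reflexivity, symmetry, and transitivity follow immediately from $\sim$ being an equivalence. For the compatibility rules, since $\semb{\cdot}$ is a simple homomorphism, each ${}^e\pi$ constructor maps to a psi context built from the corresponding prefixes and operators; compatibility then follows from Theorem~\ref{thm:congruence-congruence}, which tells us that $\sim$ is a congruence, together with the obvious congruence property of $\sim$ under the $\caseonly$ construct used for the encoding of $+$. Alpha-conversion is transparent, since both ${}^e\pi$ and psi-calculi identify alpha-equivalent processes.

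For the axioms involving only $\pll$, $\mathbf{0}$, and $\nu$, the homomorphic encoding translates each axiom into one of the clauses of Definition~\ref{def:structcong}: commutativity and associativity of $\pll$, the right-unit law $P \equiv P \pll \mathbf{0}$, and scope extrusion $Q \pll (\nu a)P \equiv (\nu a)(Q \pll P)$ when $a \freshin Q$ all appear verbatim. Freshness is preserved by $\semb{\cdot}$ (equivariance combined with the fact that each syntactic name occurrence in $P$ corresponds to a syntactic occurrence in $\semb{P}$), so the side conditions carry over. Each of these is then settled by Theorem~\ref{thm:sortstructcong}, which states that $\sim_\Psi$ is complete with respect to psi structural congruence. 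The garbage-collection axiom $(\nu x)P \equiv P$ when $x \freshin P$ is slightly special: it does not appear literally in Definition~\ref{def:structcong}, but is easily discharged by exhibiting the bisimulation $\{(\Psi, (\nu x)R, R) : x \freshin R\} \cup \mathcal{I}$, which is standard.

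The interesting cases are the monoidal axioms for $(+, \mathbf{0})$, since $+$ is encoded as $\caseonly$ and these axioms do not correspond directly to any psi structural rule. For commutativity $P + Q \equiv Q + P$, I would exhibit the bisimulation that relates $\caseonly\;\ci{\top}{\semb{P}}\casesep\ci{\top}{\semb{Q}}$ to $\caseonly\;\ci{\top}{\semb{Q}}\casesep\ci{\top}{\semb{P}}$ (together with the identity on derivatives), noting that the \textsc{Case} rule can fire either branch in either order. For associativity $(P + Q) + R \equiv P + (Q + R)$, Lemma~\ref{lemma:flatten-case} lets us flatten both sides to $\caseonly\;\ci{\top}{\semb{P}}\casesep\ci{\top}{\semb{Q}}\casesep\ci{\top}{\semb{R}}$ (up to $\sim$), modulo a further commutativity step. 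For the unit axiom $P + \mathbf{0} \equiv P$, I would use the bisimulation witnessing $\caseonly\;\ci{\top}{\semb{P}}\casesep\ci{\top}{\nil} \sim \semb{P}$: transitions via the $\semb{P}$ branch are matched directly, and the $\nil$ branch contributes no transitions to match.

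The main obstacle is ensuring Lemma~\ref{lemma:flatten-case} suffices for the associativity of $+$ in its full generality; since that lemma handles only a specific shape of nested $\caseonly$, a short additional bisimulation may be needed to reach the flattened form from $\caseonly\;\ci{\top}{\semb{P}}\casesep\ci{\top}{(\caseonly\;\ci{\top}{\semb{Q}}\casesep\ci{\top}{\semb{R}})}$. Otherwise, the proof is entirely a bookkeeping exercise over the rules generating~$\equiv$.
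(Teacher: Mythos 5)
Your proof is correct and is essentially the paper's own argument unfolded: the paper simply observes that $\{(P,Q) : \semb{P} \sim \semb{Q}\}$ satisfies the generating axioms of $\equiv$ and is a process congruence, then appeals to minimality of $\equiv$, which is exactly what your induction on the derivation verifies case by case. The details you supply (Theorem~\ref{thm:sortstructcong} for the $\pll$/$\nu$ axioms, congruence of $\sim$ for closure, and Lemma~\ref{lemma:flatten-case} plus small ad hoc bisimulations for the $(+,\mathbf{0})$ monoid laws and garbage collection of restriction) are the right ingredients and are left implicit in the paper.
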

\proof
  The relation $\mathcal{R} = \{(P,Q): \semb{P}\sim\semb{Q}\}$ satisfies 
the axioms defining $\equiv$ and is also a process congruence. Since $\equiv$
is the least such congruence, ${\equiv} \subseteq \mathcal{R}$.
\qed

\proof[Proof of Lemma~\ref{lem:polySynchTransition}] $ $
\begin{enumerate}

\item By induction on the derivation of $P'$, avoiding $z$.
  \begin{proofcasesdescription}
    \proofcase{\textsc{Prefix}} Here $\trans{\Sigma_i \vec{x}_i(y_i). P_i}{\vec{x}_i(y_i)}{P_i}$. We have that
      \[\begin{array}{rcl}
      \semb{\Sigma_i \vec{x}_i(y_i). P_i} & = &
      \caseonly\;\ci{\top}{\lin{\langle \vec{x}\rangle}{y_1}{y_1}.\semb{P_1}}\casesep \\
            & & \cdots \casesep\ci{\top}{\lin{\langle \vec{x}\rangle}{y_i}{y_i}.\semb{P_i}}
        \end{array}\]

       Since $\Match(z,\langle y_i\rangle,y_i)=\{z\}$, we can use the
        \textsc{Case} and \textsc{In} rules to derive the transition
      \[\begin{array}{l}
          \caseonly\;\ci{\top}{\lin{\langle \vec{x}_1\rangle}{y_1}{y_1}.\semb{P_1}}\casesep \\
                    \,\cdots\casesep\ci{\top}{\lin{\langle \vec{x}_i\rangle}{y_i}{y_i}.\semb{P_i}} %
            \quad\goesto{\inn{\langle\vec{x}\rangle}{z}}\quad %
          \semb{P_i}\lsubst{z}{y_i}
        \end{array} \]
      Finally, we have $P'' = \semb{P_i}\lsubst{z}{y_i}$ and use reflexivity of $\sim$ to conclude this case.

    \proofcase{\textsc{Bang}} Here $\trans{P\parop !P}{\vec{x}(y)}{P'}$ and by
    induction,
    $\trans{\semb{P}\parop!\semb{P}}{\inn{\langle\vec{x}\rangle}{z}}{P''}$ with
    $P'' \sim \semb{P'}\lsubst{z}{y}$. By rule \textsc{Rep}, we also have
    that $\trans{!\semb{P}}{\inn{\langle\vec{x}\rangle}{z}}{P''}$.

    \proofcase{\textsc{Par}} Here $\trans{P}{\vec{x}(y)}{P'}$, $y \freshin Q$
    and by induction, $\trans{\semb{P}}{\inn{\langle\vec{x}\rangle}{z}}{P''}$
    with $P'' \sim \semb{P'}\lsubst{z}{y}$. Using the \textsc{Par} rule we
    derive
    $\trans{\semb{P}\parop\semb{Q}}{\inn{\langle\vec{x}\rangle}{z}}{P'\parop\semb{Q}}$.
    Since $\sim$ is closed under $|$, $P''\parop\semb{Q} \sim
    \semb{P'}\lsubst{z}{y}\parop\semb{Q}$. Finally, since $y \freshin Q$,
    $\semb{P'}\lsubst{z}{y}\parop\semb{Q} = \semb{P'\parop Q}\lsubst{z}{y}$.

    \proofcase{\textsc{Struct}} Here $P \equiv Q$, $\trans{Q}{\vec{x}(y)}{Q'}$
    and $Q' \equiv P'$. By induction we obtain $Q''$ such that
    $\trans{\semb{Q}}{\inn{\langle\vec{x}\rangle}{z}}{Q''}$ where $Q'' \sim
    \semb{Q'}\lsubst{z}{y}$. By Lemma~\ref{lemma:polySynchStructCong},
    $\semb{P} \sim \semb{Q}$ and $\semb{Q'} \sim \semb{P'}$, and by expanding the definition
    of $\sim$, we obtain $\semb{Q'}\lsubst{z}{y} \sim \semb{P'}\lsubst{z}{y}$. Since
    $\semb{P} \sim \semb{Q}$ and
    $\trans{\semb{Q}}{\inn{\langle\vec{x}\rangle}{z}}{Q''}$, there exists $P''$
    such that $\trans{\semb{P}}{\inn{\langle\vec{x}\rangle}{z}}{P''}$ and $Q''
    \sim P''$. By using the transitivity of $\sim$, we conclude $P'' \sim \semb{P'}\lsubst{z}{y}$.

    \proofcase{\textsc{Res}} Here $\trans{P}{\vec{x}(y)}{P'}$, $a \neq y$, $a
    \neq z$ and $a \freshin \vec{x}$. By induction,
    $\trans{\semb{P}}{\inn{\langle\vec{x}\rangle}{z}}{P''}$ with $P'' \sim
    \semb{P'}\lsubst{z}{y}$. We can then
    derive $\trans{(\nu a)\semb{P}}{\inn{\langle\vec{x}\rangle}{z}}{(\nu
    a)P''}$. Since $\sim$ is closed under restriction, $(\nu a)P'' \sim
    (\nu a)(\semb{P'}\lsubst{z}{y})$. Finally, $a$ is sufficiently fresh to show
    that $(\nu a)(\semb{P'}\lsubst{z}{y}) = ((\nu a)\semb{P'})\lsubst{z}{y}$

  \end{proofcasesdescription}

  \item By induction on the derivation of $P''$, avoiding $y$.
  \begin{proofcasesdescription}
    \proofcase{\textsc{Par}} 
        Here $\trans{\semb{P}}{\inlabel{\vect{\ve{x}}}z}{P''}$, $y
        \freshin P,Q$, and by induction $\trans{P}{\vec{x}(y)}{P'}$ where
        $\semb{P'\{z/y\}} = P''$. By \textsc{Par} using $y \freshin Q$, we derive
        $\trans{P\parop Q}{\vec{x}(y)}{P'\parop Q}$. Finally, we note that since $y
        \freshin Q$, $\semb{(P'\parop Q)\{z/y\}} = P'' \parop \semb{Q}$.

    \proofcase{\textsc{Case}} Here $\trans{P_C}{\inlabel{\vect{\ve{x}}}z}{P''}$, where
        $P_C = \caseonly\;{\ci{\ve{\varphi}}{\ve{Q}}}$ is in the range of
        $\semb{\cdot}$. Hence $P_C$ must be the encoding of some prefix-guarded sum,
        i.e., $P_C = \semb{\Sigma_i\alpha_i.P_i} =
        \caseonly\;{\ci{\top}{\semb{\alpha_1}.\semb{P_1}}\casesep\dots\casesep\ci{\top}{\semb{\alpha_i}.\semb{P_i}}}$.
        By transition inversion, we can deduce that for some $j$, $\alpha_j =
        \vec{x}(y)$ and $\semb{P_j}\lsubst{z}{y} = P''$. By the \textsc{Prefix} rule,
        $\trans{\Sigma_i\alpha_i.P_i}{\vec{x}(y)}{P_j}$.

    \proofcase{\textsc{Out}} A special case of \textsc{Case}.

    \proofcase{\textsc{Rep}}  Here $\trans{\semb{P}\parop
        !\semb{P}}{\inlabel{\vect{\ve{x}}}z}{P''}$. By induction
        $\trans{P\parop !P}{\vec{x}(y)}{P'}$ where $\semb{P'\{z/y\}} = P''$. 
        Using the \textsc{Bang} rule, we derive $\trans{!P}{\vec{x}(y)}{P'}$.

    \proofcase{\textsc{Scope}} Here
        $\trans{\semb{P}}{\inn{x}{\langle\vec{z}\rangle}}{P''}$, $y \freshin P,Q$ and $a
        \freshin \vec{x},y,z$. By induction $\trans{P}{\vec{x}(y)}{P'}$ with
        $\semb{P'\{z/y\}} = P''$. Since $a\freshin \vec{x},y,z$, we obtain
        $\trans{(\nu a)P}{\vec{x}(y)}{(\nu a)P'}$ by the \textsc{Res} rule. Finally,
        $\semb{((\nu a)P')\{z/y\}} = (\nu a)P''$.\qed
  \end{proofcasesdescription}
\end{enumerate}

We give a proof for the strong operational correspondence.
\proof[Proof of Theorem~\ref{thm:polySynchTransition}] $ $
\begin{enumerate}

\item By induction on the derivation of $P'$. 
In case of input rule \textsc{eIn}, we apply Lemma~\ref{lem:polySynchTransition}~(1).
The other interesting cases are:

\begin{proofcasesdescription}
    \proofcase{\textsc{Comm}} Here $\trans{P}{\overline{\vec{x}}\langle
    y\rangle}{P'}$ and $\trans{Q}{\vec{x}(z)}{Q'}$. By induction, $\trans{\semb
    P}{\out{\langle\vec{x}\rangle}{y}}{P''}$ where $P'' \sim \semb{P'}$ and
    by Lemma~\ref{lem:polySynchTransition}~(1), $\trans{\semb
    Q}{\inn{\langle\vec{x}\rangle}{y}}{Q''}$ such that $\semb{Q'}\lsubst{y}{z}
    \sim Q''$. The \textsc{Com} rule lets us derive the transition
      \[
      \trans{\semb{P}\parop\semb{Q}}
            {\tau}
            {P'' \parop Q''}
      \]
      To complete the induction case, we note that $(\nu y)(P'' \parop Q'')
      \sim \semb{(\nu y)(P' \parop Q'\{y/z\})}$

    \proofcase{\textsc{Close}} Here $\trans{P}{\overline{\vec{x}}\langle\nu
    y\rangle}{P'}$ and $\trans{Q}{\vec{x}(y)}{Q'}$. We assume $y \freshin Q$;
    if not, $y$ can be $\alpha$-converted so that this holds. By induction,
    $\trans{\semb P}{\boutlabel{\langle\vec{x}\rangle}{y}{y}}{P''}$ where $P''
    \sim \semb{P'}$ and by Lemma~\ref{lem:polySynchTransition}~(1), $\trans{\semb
    Q}{\inn{\langle\vec{x}\rangle}{y}}{Q''}$ such that $\semb{Q'}\lsubst{y}{y}
    = \semb{Q'} \sim Q''$. The \textsc{Com} rule lets us derive the
    transition
      \[
      \trans{\semb{P}\parop\semb{Q}}
            {\tau}
            {(\nu y)(P'' \parop Q'')}
      \]
      To complete the induction case, we note that $(\nu y)(P'' \parop Q'')
      \sim \semb{(\nu y)(P' \parop Q')}$

    \proofcase{\textsc{Open}} Here $\trans{P}{\overline{\vec{x}}\langle
    y\rangle}{P'}$ with $y \neq x$, and by induction, $\trans{\semb
    P}{\out{\langle\vec{x}\rangle}{y}}{P''}$ where $P'' \sim \semb{P'}$. By
    \textsc{Open}, we derive $\trans{(\nu y)\semb P}{\boutlabel{\langle\vec{x}\rangle}{y}{y}}{P''}$.

  \end{proofcasesdescription}
\item By induction on the derivation of $P''$. The cases not shown are similar to Lemma~\ref{lem:polySynchTransition}~(2).
  \begin{proofcasesdescription}
    \proofcase{\textsc{Com}} Here $\trans{\semb P}{\boutlabel{\langle\vec{x}\rangle}{\vec{y'}}{y}}{P''}$, $\trans{\semb Q}{\inn{\langle\vec{x}\rangle}{y}}{Q''}$ and $y' \freshin Q$. Either $\vec{y'} = \epsilon$ or $\vec{y'} = y$; we proceed by case analysis.
      \begin{enumerate}
        \item If $\vec{y'} = \epsilon$, we have
$\trans{P}{\overline{\vec{x}}\langle y\rangle}{P'}$ where $\semb{P'} = P''$ by
induction and, by Lemma~\ref{lem:polySynchTransition}~(2),
$\trans{Q}{\vec{x}(z)}{Q'}$ where $\semb{Q'\{y/z\}} = Q''$. The \textsc{Comm}
rule then lets us derive $\trans{P\parop Q}{\tau}{P'\parop Q'\{y/z\}}$.
        \item If $\vec{y'} = y$, we have $\trans{P}{\overline{\vec{x}}\langle
\nu y\rangle}{P'}$ where $\semb{P'} = P''$ by induction and, by Lemma~\ref{lem:polySynchTransition}~(2), 
$\trans{Q}{\vec{x}(y)}{Q'}$ where $\semb{Q'\{y/y\}} =
\semb{Q'} = Q''$. The \textsc{Close} rule then lets us derive $\trans{P\parop
Q}{\tau}{(\nu y)(P'\parop Q')}$.
      \end{enumerate}
    \proofcase{\textsc{Open}} Here $\trans{\semb P}{\out{\langle\vec{x}\rangle}{y}}{P''}$ with $y \neq x$. By induction, $\trans{P}{\overline{\vec{x}}\langle y\rangle}{P'}$ where $\semb{P'} = P''$. By rule \textsc{Open}, $\trans{(\nu y)P}{\overline{\vec{x}}\langle\nu y\rangle}{P'}$.\qed
  \end{proofcasesdescription}
\end{enumerate}

We give the full abstraction result for this calculus. The definition of
congruence for polyadic synchronisation pi-calculus can be found in
\cite{carbone.maffeis:expressive-power} on page 6.

\begin{thm}
For all ${}^e\pi$ processes $P$ and $Q$, $P \sim Q$ iff $\semb{P} \sim \semb{Q}$
\end{thm}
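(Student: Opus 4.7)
The plan is to establish the biconditional by constructing candidate bisimulations in both directions, relying on the strong operational correspondence of Theorem~\ref{thm:polySynchTransition}, the structural-congruence preservation of Lemma~\ref{lemma:polySynchStructCong}, and the fact that in psi-calculi structural congruence is included in strong bisimilarity (Theorem~\ref{thm:sortstructcong}).

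First I would prove a substitution-commutativity lemma analogous to Lemma~\ref{lem:ppi-substitution}: for any ${}^e\pi$ process $P$ and name renaming $\{z/y\}$, $\semb{P\{z/y\}} = \semb{P}\lsubst{z}{y}$. This goes by structural induction on $P$ and is needed to match up input transitions, which in the early operational semantics $\mathrel{{\apitransarrow{}}{}^{e}}$ carry a free substituted name and therefore correspond to free input actions in \textbf{PSPI}.

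For the forward direction ($P \sim Q \Rightarrow \semb{P} \sim \semb{Q}$) I would use the candidate relation $\mathcal{R} = \{(\emptyframe, R, S) : \exists P, Q.\ R \equiv \semb{P} \wedge S \equiv \semb{Q} \wedge P \sim Q\}$, closing both components under psi-calculus structural congruence so that the residuals produced by Theorem~\ref{thm:polySynchTransition}(1), which are only equal up to $\equiv$, remain in the relation. Static equivalence and the extension clause of Definition~\ref{def:strongbisim} are immediate since $\assertions=\{\emptyframe\}$, and symmetry follows from symmetry of $\sim$. For simulation, given $\emptyframe \frames \trans{R}{\alpha}{R'}$, I would transfer the transition along $R \equiv \semb{P}$ to $\semb{P}$ using ${\equiv}\subseteq{\sim}$, then apply Theorem~\ref{thm:polySynchTransition}(2) to obtain $\transPrim{P}{\alpha'}{P'}$ with $\semb{\alpha'}=\alpha$ and $\semb{P'}=R'$, use $P\sim Q$ to produce $\transPrim{Q}{\alpha'}{Q'}$ with $P'\sim Q'$, and finally apply Theorem~\ref{thm:polySynchTransition}(1) to lift to a matching psi-transition of $S$ into some $S'$ with $S'\equiv\semb{Q'}$, placing the resulting triple back in $\mathcal{R}$. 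For the reverse direction I would dually show that $\mathcal{S} = \{(P,Q) : \semb{P}\sim\semb{Q}\}$ is an early bisimulation in ${}^e\pi$: given $\transPrim{P}{\alpha'}{P'}$, apply Theorem~\ref{thm:polySynchTransition}(1) to obtain a psi-transition of $\semb{P}$, match it against $\semb{P}\sim\semb{Q}$, and pull back through Theorem~\ref{thm:polySynchTransition}(2) to extract the required ${}^e\pi$-transition of $Q$; closure under name substitutions for congruence then follows from the substitution-commutativity lemma above together with the definition of psi strong bisimulation congruence.

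The main obstacle is the asymmetry between the two parts of Theorem~\ref{thm:polySynchTransition}: part (1) equates residuals only up to psi structural congruence, while part (2) equates them syntactically. This forces us to either close the candidate relation under $\equiv$ on both sides (as above) or to appeal repeatedly to ${\equiv}\subseteq{\sim}$ inside the simulation argument. A secondary subtlety is that bound-output labels in psi-calculi are ordered sequences of extruded names whereas in ${}^e\pi$ they are sets, so when matching bound-output transitions we may need to permit reordering of the extruded binders via a lemma analogous to Lemma~\ref{lem:bnReordering}. Once these two points are dispensed with, the bisimulation verifications themselves are routine applications of the operational correspondence.
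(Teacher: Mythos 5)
Your proposal is correct and takes essentially the same route as the paper: both directions are handled by exhibiting the candidate relations $\{(\emptyframe,\semb{P},\semb{Q}) : P\sim Q\}$ and $\{(P,Q) : \semb{P}\sim\semb{Q}\}$ and discharging the simulation clauses via the operational correspondence, with the substitution-commutativity fact used to close under renamings. The only divergences are minor: the paper routes input challenges through the late-input Lemma~\ref{lem:polySynchTransition} (with its ``for all $z$'' quantification) rather than folding them into Theorem~\ref{thm:polySynchTransition}, and it absorbs the mismatch between $\equiv$-residuals and syntactic residuals by an implicit up-to-$\sim$ step; note that your closure of the candidate relation under $\equiv$ alone is slightly too weak for the same purpose, since transferring a transition along ${\equiv}\subseteq{\sim}$ only returns a $\sim$-related (not $\equiv$-related) derivative, so you should close under $\sim$ or argue up to $\sim$ as the paper does --- and the binder-reordering worry is vacuous here, as ${}^e\pi$ bound outputs extrude a single name.
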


\proof
$\mathcal{R} = \Set{(P,Q): \semb{P} \sim \semb{Q}}$ is an early congruence in
the polyadic synchronisation pi-calculus; if $P\RelR Q$ then
\begin{enumerate}
  \item If $\trans{P}{\vec{x}(y)}{P'}$ and $\semb{P} \sim \semb{Q}$, since
      $\mathcal{R}$ is equivariant, we can assume that $y\freshin P,Q$ without
      loss of generality. Fix $z$. By Lemma~\ref{lem:polySynchTransition} (1),
      $\trans{\semb{P}}{\inn{\langle\vec{x}\rangle}{z}}{P''}$ where $P'' \sim
      \semb{P'}\lsubst{z}{y} = \semb{P'\{z/y\}}$. Hence, since $\semb{P} \sim
      \semb{Q}$, $\trans{\semb{Q}}{\inn{\langle\vec{x}\rangle}{z}}{Q''}$ where
      $P'' \sim Q''$.  Hence, by Lemma~\ref{lem:polySynchTransition} (2)
      using $y \freshin Q$, $\trans{Q}{\vec{x}(y)}{Q'}$ where $\semb{Q'\{z/y\}}
      = Q''$. By transitivity, $\semb{P'\{z/y\}} \sim \semb{Q'\{z/y\}}$.

  \item If $\trans{P}{\alpha}{P'}$ and $\semb{P} \sim \semb{Q}$, since
      $\mathcal R$ is equivariant, we can assume that $\bn{\alpha}\freshin P,Q$
      without loss of generality. By Theorem~\ref{thm:polySynchTransition} (1),
      we have that $\trans{\semb P}{\semb\alpha}{P''}$ with $P'' \sim
      \semb{P'}$. Hence, since $\semb{P}\sim\semb{Q}$ and
      $\bn{\alpha}\freshin Q$, there is a $Q''$ such that $\trans{\semb
      Q}{\semb\alpha}{Q''}$ and $Q'' \sim P''$. By
      Theorem~\ref{thm:polySynchTransition} (2), there is $Q'$ such that
      $\trans{Q}{\alpha}{Q'}$ and $\semb{Q'} = Q''$. By transitivity,
      $\semb{P'} \sim \semb{Q'}$.
\end{enumerate}

Symmetrically, we show that 
$\mathcal{R} = \{(\emptyframe,\semb{P},\semb{Q}): P \sim Q\}$ 
is a congruence in \textbf{PSPI}:
\begin{proofcasesdescription}
  \proofcase{Static equivalence} Trivial since there is only a unit assertion.
  \proofcase{Symmetry} By symmetry of $\sim$
  \proofcase{Simulation} Here $\trans{\semb{P}}{\alpha'}{P''}$ and $P \sim Q$. We proceed by case analysis on $\alpha'$:
    \begin{enumerate}
        \item If $\alpha' = \inn{\langle\vec{x}\rangle}{z}$, then by Lemma~\ref{lem:polySynchTransition} (2) and a sufficiently fresh $y$, $\trans{P}{\vec{x}(y)}{P'}$ where $\semb{P'\{z/y\} } = P''$. Since $P \sim Q$, there exists $Q'$ such that $\trans{Q}{\vec{x}(y)}{Q'}$ and $P'\{z/y\} \sim Q'\{z/y\}$.
          Hence, by Lemma~\ref{lem:polySynchTransition} (1), $\trans{\semb{Q}}{\inn{\langle\vec{x}\rangle}{z}}{Q''}$ where $Q'' \sim \semb{Q'}\lsubst{z}{y} = \semb{Q'\{z/y\}}$. We have that $P'' = \semb{P'\{z/y\}}\RelR\semb{Q'\{z/y\}} \sim Q''$, which suffices.
      \item If $\alpha'$ is not an input, since $\mathcal{R}$ is equivariant, we can assume that $\bn{\alpha'}\freshin P,Q$ without loss of generality. Since $\trans{\semb{P}}{\alpha'}{P''}$, by Theorem~\ref{thm:polySynchTransition} (2) we have that $\trans{P}{\alpha}{P'}$ where $\semb{\alpha} = \alpha'$ and $\semb{P'} = P''$. Since $P \sim Q$, there is $Q'$ such that $\trans{Q}{\alpha}{Q'}$ and $P' \sim Q'$. By Theorem~\ref{thm:polySynchTransition} (1), $\trans{\semb Q}{\semb \alpha}{Q''}$, where $Q'' \sim \semb{Q'}$. Hence $P'' = \semb{P'}\RelR\semb{Q'} \sim Q''$, which suffices.
    \end{enumerate}
  \proofcase{Extension of arbitrary assertion} Trivial since there is only a unit assertion.
  \qed
\end{proofcasesdescription}

\begin{lem}
\label{lemma:pspi-to-psi-surjective}
$\semb{\cdot}$ is surjective up to $\sim$ on the set of case-guarded processes, that is, for every case-guarded $P$
there is a $Q$ such that $\semb{Q} \sim P$.
\end{lem}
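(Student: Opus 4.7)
The plan is to proceed by structural induction on $P$, following closely the pattern of the proof of Lemma~\ref{lemma:ppi-to-psi-surjective}. The homomorphic cases (nil, parallel composition, restriction, replication, and the unit assertion $\pass{\emptyframe}$) are handled routinely by invoking the induction hypothesis on subterms and using congruence of $\sim$. For a bare input prefix $\lin{\vect{\ve{x}}}{y}y.P$ (respectively, output prefix $\out{\vect{\ve{x}}}{y}.P$) I would apply the induction hypothesis to $P$ to obtain $Q'$ with $\semb{Q'} \sim P$, and take as preimage the single-prefix sum $\vec{x}(y).Q'$ (respectively, $\vec{x}\langle y\rangle.Q'$) in ${}^e\pi$; the encoding is then bisimilar to the original by closure of $\sim$ under prefixes.

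The only substantive case is $\caseonly$, which I would attack by an inner induction on the number of branches $n$. Since the conditions in \textbf{PSPI} are drawn from $\{\top,\bot\}$ and $P$ is case-guarded, every $\top$-guarded branch is a prefix (input or output). The base case $n=0$ is immediate since $\caseonly \sim \nil = \semb{0}$. For the inductive step, consider the added branch $\ci{\varphi_{n+1}}{P_{n+1}}$. If $\varphi_{n+1} = \bot$, I would first establish the auxiliary bisimilarity
\[
\caseonly\;\ci{\varphi_1}{P_1}\casesep\dots\casesep\ci{\varphi_n}{P_n}\casesep\ci{\bot}{P_{n+1}}
\;\sim\;
\caseonly\;\ci{\varphi_1}{P_1}\casesep\dots\casesep\ci{\varphi_n}{P_n},
\]
which follows because no transition can ever arise from the $\bot$-guarded branch (as $\emptyframe \not\vdash \bot$), so the inner induction hypothesis applies directly. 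If $\varphi_{n+1} = \top$, case-guardedness forces $P_{n+1}$ to be either $\out{\vect{\ve{x}}}{y}.R$ or $\lin{\vect{\ve{x}}}{y}y.R$; I would apply the outer induction hypothesis to $R$ to obtain a preimage $R'$, and then take the preimage of the whole case to be the polyadic synchronisation pi-calculus sum built by adding the summand $\vec{x}\langle y\rangle.R'$ or $\vec{x}(y).R'$ to the sum obtained from the inner induction hypothesis.

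The key algebraic manipulation is identical in spirit to the corresponding case of Lemma~\ref{lemma:ppi-to-psi-surjective}: one uses Lemma~\ref{lemma:flatten-case} to nest the inner sum under a fresh $\top$-guarded branch, applies the inner induction hypothesis inside that branch (using congruence of $\sim$ under $\caseonly$ from Theorem~\ref{thm:congruence-congruence}), and then flattens again to obtain the target shape. The main obstacle is the careful bookkeeping in these repeated applications of Lemma~\ref{lemma:flatten-case}, together with justifying the auxiliary bisimilarity for $\bot$-guarded branches; the latter is proved by exhibiting a candidate relation that pairs the $n+1$-branch process with its $n$-branch counterpart, with simulation in both directions reduced to routine case analysis on which branch fires---never the $\bot$-guarded one.
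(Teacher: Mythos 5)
Your proposal is correct and follows essentially the same route as the paper's proof: structural induction with an inner induction on the number of $\caseonly$ branches, case analysis on whether the added guard is $\bot$ or $\top$, and repeated use of Lemma~\ref{lemma:flatten-case} together with the congruence properties of $\sim$ to reassemble the sum. If anything you are slightly more careful than the paper, which asserts without justification that a $\bot$-guarded branch can be dropped and that the preimage of a prefixed branch can be "investigated" to have prefixed form, whereas you justify the former with an explicit candidate relation and sidestep the latter by applying the induction hypothesis directly to the prefix's continuation.
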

\proof
By induction on the well-formed agent $P$.

\begin{proofcasesdescription}
\proofcase{Case $\lin{\langle\ve{x}\rangle}{y}y.P'$} It is valid to consider only
this form, since $\{y\}\in\vars(y)$. The IH is for some $Q'$, $\semb{Q'}\sim P'$.
Let $Q = \ve{x}(y).Q'$. Then $\semb{Q} = \lin{\langle\ve{x}\rangle}{y}y.\semb{Q'}
    \sim \lin{\langle\ve{x}\rangle}{y}y.P'$.

\proofcase{Case $\out{\langle\ve{x}\rangle}{y}.P'$} From IH, we get for some $Q'$,
$\semb{Q'}\sim P'$. Let $Q = \ve{x}\langle y\rangle.Q'$. Then $\semb{Q} =
\out{\langle\ve{x}\rangle}{y}.\semb{Q'} \sim \out{\langle\ve{x}\rangle}{y}.P'$.

\proofcase{Case $P' \pll P''$}  From IH, for some $Q',Q''$, we have $\semb{Q'}\sim
P'$ and $\semb{Q''}\sim P''$. Let $Q = Q'\pll Q''$. Then $\semb{Q} = \semb{Q'}\pll\semb{Q''}
\sim P'\pll P''$.

\proofcase{Case $(\nu x)P'$} Let $Q = \nu x Q'$, then by the induction hypothesis
$\semb{Q} = (\nu x)\semb{Q'} \sim (\nu x)P'$.

\proofcase{Case $!P'$} Let $Q = !Q'$ ($Q'$ from IH). $\semb{Q} ={} !\semb{Q'} \sim {}!P'$.

\proofcase{Case $\nil$} Then $\semb{\nil} = \nil \sim \nil$.

\proofcase{Case $\pass{\emptyframe}$} Then $\semb{\nil} = \nil \sim \pass{\emptyframe}$.

\proofcase{Case $\caseonly\;\ci{\ve{\varphi}}{\ve{P'}}$} For induction hypothesis
IH${}_{\caseonly}$, we have for every $i$ there is $Q'_i$ such that
$\semb{Q'_i} \sim P'_i$. The proof proceeds by induction on the length of
$\ve\varphi$.
    \begin{proofcasesdescription}
    \proofcase{Base case} Let $Q = \nil$, then $\semb{Q} = \nil \sim \caseonly$.
    \proofcase{Induction step} In this case, we get the following IH
        \[\semb{Q''}\sim
            \caseonly\;\ci{\varphi_1}{P_1}\casesep\dots\casesep\ci{\varphi_n}{P_n}\]
        We need to show that there is some $\semb{Q}$ such that
        \[\semb{Q}\sim
            \caseonly\;\ci{\varphi_1}{P_1}\casesep\dots\casesep\ci{\varphi_n}{P_n}
            \casesep\ci{\varphi_{n+1}}{P_{n+1}} = P
        \]
        First, we note that IH${}_{\caseonly}$ holds for every $i$ and in particular
        $i = n+1$, thus we get $\semb{Q'_{n+1}} \sim P_{n+1}$.
        Second, we note that $\varphi_{n+1}$ has two forms, thus we proceed by case
        analysis on $\varphi_{n+1}$.
        \begin{proofcasesdescription}
        \proofcase{Case $\varphi_{n+1} = \bot$} Let $Q = Q''$. Then
            \[\begin{array}{rcl}
            \semb{Q} &=& \semb{Q''} \\
                &\sim& \caseonly\;\ci{\varphi_1}{P_1}\casesep\dots\casesep\ci{\varphi_n}{P_n} \\
                &\sim&   %
                       \caseonly\;\ci{\varphi_1}{P_1}\casesep\dots\casesep\ci{\varphi_n}{P_n}\casesep\ci{\bot}{P_{n+1}}\\
            \end{array} \]
            We conclude the case.

        \proofcase{Case $\varphi_{n+1} = \top$}
        From the assumption, we know that $P_{n+1}$ is of form $\alpha.P'_{n+1}$ and that
        $\semb{Q'_{n+1}} \sim \alpha.P'_{n+1}$. By investigating the
        construction of $Q'_{n+1}$ we can conclude that $Q'_{n+1} = \alpha.Q''_{n+1}$
        where $\semb{Q''_{n+1}} \sim P'_{n+1}$.
        The agent from IH $Q''$ is either $\nil$, or prefixed agent, or a mixed sum.

        In case $Q'' = \nil$, let $Q = Q'_{n+1}$, then $\semb{Q} = \semb{Q'_{n+1}} \sim P$.

        In case $Q''$ is prefixed agent, let $Q = Q'' + Q'_{n+1}$. Since $Q''$ and $Q'_{n+1}$ are
        prefixed, $Q$ is well formed.
        Then $\semb{Q} = \caseonly\;\ci{\top}{\semb{Q''}}\casesep\ci{\top}{\semb{Q'_{n+1}}}
             \sim \caseonly\;\ci{\varphi_1}{P_1}\casesep\dots\casesep\ci{\varphi_n}{P_n}\casesep\ci{\top}{P_{n+1}}$.

        In case $Q''$ is a sum, let $Q = Q'' + Q'_{n+1}$. Since $Q'_{n+1}$ is
        guarded, $Q$ is well formed. Then
            \[\begin{array}{rclr}
            \semb{Q} &=& \caseonly\;\ci{\top}{\semb{Q''}}\casesep\ci{\top}{\semb{Q'_{n+1}}} \\
                &\sim& \caseonly\;\ci{\top}{(\caseonly\;\ci{\varphi_1}{P_1}\casesep\dots\casesep\ci{\varphi_n}{P_n})}\\
                &    & \casesep\ci{\top}{\semb{Q'_{n+1}}} \\
                &\sim& \text{(by Lemma \ref{lemma:flatten-case})} \\
                &    & \caseonly\;\ci{\varphi_1}{P_1}\casesep\dots\casesep\ci{\varphi_n}{P_n}\\
                &    & \casesep\ci{\top}{\semb{Q'_{n+1}}} \\
                &\sim& \caseonly\;\ci{\varphi_1}{P_1}\casesep\dots\casesep\ci{\varphi_n}{P_n}\\
                &    & \casesep\ci{\top}{P'_{n+1}} \\
            \end{array} \]

            This concludes the proof.\qed
        \end{proofcasesdescription}
    \end{proofcasesdescription}

\end{proofcasesdescription}

\begin{lem}
\label{lemma:pspi-to-psi-injective}
$\semb{\cdot}$ is injective, that is, for all $P, Q$,
if $\semb{P} = \semb{Q}$ then $P = Q$.
\end{lem}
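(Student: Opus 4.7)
The plan is to prove the statement by structural induction on $P$ with simultaneous case analysis on $Q$, exploiting the observation that every top-level constructor of ${}^e\pi$ is mapped by $\semb{\cdot}$ to a syntactically distinguishable top-level shape in \textbf{PSPI}. The equalities involved are understood modulo alpha-equivalence on both sides.

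First I would establish an inversion principle: if $\semb{P}$ has a particular shape then $P$ must have a corresponding shape. The cases for $\nil$, parallel composition, restriction and replication are immediate because the encoding is homomorphic there. The input case $\semb{\vec{x}(y).P'} = \lin{\langle\vec{x}\rangle}{y}y.\semb{P'}$ is characterised by the pattern being a single name equal to its binder; similarly $\semb{\vec{x}\langle y\rangle.P'} = \out{\langle\vec{x}\rangle}{y}.\semb{P'}$ is the unique output shape in the image. Sums are characterised by being the only images of the form $\caseonly\;\ci{\top}{\semb{\alpha_1.P_1}}\casesep\cdots\casesep\ci{\top}{\semb{\alpha_n.P_n}}$ where each branch is $\top$-guarded with body an encoded prefix process. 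Because these shapes are pairwise disjoint, the top-level constructor of $Q$ is forced by that of $P$.

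For each case, I would then extract the subterms on both sides, check they are equal as psi-calculus terms (by inversion of the encoding clause in question) and apply the induction hypothesis. For the bound-name cases (input and restriction), alpha-equivalence on \textbf{PSPI} lets us rename the binders on $Q$ to match those on $P$; the routine fact that $\semb{\cdot}$ commutes with renaming of free names (so that $\semb{Q'\{y/z\}} = \semb{Q'}\{y/z\}$) reduces the general case to the case of identical binders, and the IH then delivers alpha-equivalent subprocesses, hence alpha-equivalent $P$ and $Q$.

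The main obstacle will be the sum case: I need to argue that a $\caseonly$ in the image of $\semb{\cdot}$ determines its preimage sum, modulo the conventional identifications on sums in ${}^e\pi$ (commutativity and associativity of $+$, and $0$ as neutral element). Since $\caseonly$ statements in psi are length-indexed vectors of conditioned branches, matching branch-by-branch and applying the IH to each body recovers the equality of the summands; the analysis of $\alpha_i$ as either input or output prefix reuses the input and output cases above. Provided one consistently treats sums modulo their structural laws on both sides, injectivity then follows.
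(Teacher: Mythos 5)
Your proposal is correct and takes essentially the same route as the paper, whose entire proof is ``By induction on $P$ and $Q$ while inspecting all the possible cases''; your write-up merely makes explicit the shape-disjointness of the images of the constructors that the paper leaves implicit. One small remark: since $\Sigma_i\alpha_i.P_i$ is encoded as an \emph{ordered} $\caseonly$ vector, equality of the two case statements already forces branch-by-branch syntactic equality of the summands, so the quotient by commutativity and associativity of $+$ that you invoke in the sum case is unnecessary --- and, taken literally, would weaken the conclusion from equality to equality modulo those laws.
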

\proof
By induction on $P$ and $Q$ while inspecting all the possible cases.
\qed

\subsection{Value-passing CCS}
\label{sec:value-passing-ccs-proofs}

This section contains the full proofs of the results found in Section~\ref{sec:vpCCS}
for the value-passing CCS.

\begin{lem}\label{lemma:CCSnoBoundOut}
If $P$ is a {\bf VPCCS} process such that $\trans{P}{\boutlabel{M}{\ve{x}}N}{P''}$ then $\ve{x} = \epsilon$
\end{lem}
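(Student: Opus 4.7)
The plan is to proceed by induction on the derivation of the transition $\trans{P}{\boutlabel{M}{\ve{x}}N}{P''}$. Since \textsc{Out} initially produces $\ve{x} = \epsilon$ and the rules \textsc{Par}, \textsc{Scope}, \textsc{Case}, \textsc{Rep} and \textsc{Com} do not extend $\ve{x}$ in the conclusion, the only rule that could yield $\ve{x} \neq \epsilon$ is \textsc{Open}. It therefore suffices to prove that \textsc{Open} is never applicable in a well-formed \textbf{VPCCS} derivation.

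To make the induction go through, I would strengthen the statement to a simultaneous invariant: whenever a well-formed \textbf{VPCCS} process $P$ admits a transition $\Psi \frames \trans{P}{\boutlabel{M}{\ve{x}}N}{P''}$, the object $N$ contains no names of sort $\kw{chan}$, i.e.\ $\n(N) \cap \nameset[\kw{chan}] = \emptyset$. The base case is \textsc{Out}: well-formedness of $\out{M}{N}.P$ requires $\sort{M} \CanSend \sort{N}$, and since ${\CanSend} = \{(\kw{chan},\kw{exp}),(\kw{chan},\kw{value})\}$ the object has sort either $\kw{exp}$ or $\kw{value}$. By the design of \textbf{VPCCS}, values are closed and expressions contain only expression variables (sort $\kw{exp}$), so $\n(N) \cap \nameset[\kw{chan}] = \emptyset$ in both cases. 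The structural rules leave the object $N$ unchanged and do not add bound names, so the invariant propagates directly from their premises.

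For the \textsc{Open} case, the derivation concludes from a premise $\Psi \frames \trans{P_0}{\boutlabel{M}{\ve{a}}N}{P_0'}$ applied to $(\nu b) P_0$ with $b \in \n(N)$. Well-formedness of $(\nu b) P_0$ forces $\sort{b} \in \Restrictionsorts = \{\kw{chan}\}$, so $b \in \nameset[\kw{chan}]$. But the induction hypothesis applied to the premise yields $\n(N) \cap \nameset[\kw{chan}] = \emptyset$, contradicting $b \in \n(N)$. This rules out \textsc{Open}, so $\ve{x} = \epsilon$ throughout.

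The main (though not deep) obstacle is verifying that the object invariant is indeed preserved by every rule, which hinges on the precise \textbf{VPCCS} design: channel names have their own sort, ${\CanSend}$ forbids transmitting them, and substitutions triggered by \textsc{In} only replace expression variables by values, neither of which can introduce channel-sorted names. Preservation of well-formedness (Theorem~\ref{thm:wfPres}) ensures that these sort constraints propagate to all sub-derivations, so the invariant travels with the induction and immediately yields $\ve{x} = \epsilon$.
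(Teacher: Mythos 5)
Your proof is correct and takes essentially the same route as the paper, which also argues by induction on the derivation and rules out \textsc{Open} by observing that the restricted name must have sort $\kw{chan}$ while the transmitted object, being of sort $\kw{exp}$ or $\kw{value}$, cannot contain channel-sorted names. Your version merely makes explicit the invariant ($\n(N)\cap\nameset[\kw{chan}]=\emptyset$) that the paper's one-line argument leaves implicit.
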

\proof
  By induction on the derivation of $P'$. Obvious in all cases except
\textsc{Open}, where we derive a contradiction since only values can be
transmitted and yet only channels can be restricted - hence the name $a$ is
both a name and a value.
\qed

We prove strong operational correspondence using the implicit translation from value-passing CCS to CCS
of Milner \cite[Section 2.6, p. 56]{milner:cac}.
If $L$ is a set of labels, we write $L \freshin \alpha$ to mean that
for every $\ell \in L$ there is no $v$ such that $\alpha = \ell_v$ or $\alpha = \overline\ell_v$.

\proof[Proof of Theorem \ref{thm:CCStrans}] $ $
  \begin{enumerate}
    \item By induction on the derivation of $P'$.
      \begin{proofcasesdescription}

        \proofcase{\textsc{Act}} We have that $\trans{\alpha.P}{\alpha}{P}$.
        Since $\alpha.P$ is a closed value-passing CCS agent, $\alpha$ cannot be a free input.
        Thus, $\alpha$ is an output action $\alpha = \overline{x}(v)$ for some
        $x$ and $v$.  The \textsc{Out} rule then admits the derivation
        $\semb{\overline{x}(v).P} = \trans{\out{x}{v}.\semb{P}}{\out{x}{v}}{\semb{P}}$.

        \proofcase{\textsc{Sum}} There are two cases to consider: either
            $\Sigma_i P_i$ is the encoding of an input, or a summation.
          \begin{enumerate}
            \item If it is an encoding of an input $\Sigma_i P_i = x(y).P = \Sigma_v x(v). P\{v/y\}$,
                    then the action $\alpha$ must be the free input action $x(v)$ for some
                    value $v$.  Thus, for each $v$, we can
                    derive $\semb{x(y).P} = \trans{\lin{x}{y}{y}.\semb{P}}{\inn{x}{v}}{\semb{P\{v/y\}}}$ using the
                    $\textsc{In}$ rule.

            \item Otherwise it is a summation. We assume $\trans{\Sigma_i P_i}{\alpha}{P'}$.
                From induction hypothesis, we have $\trans{P_i}\alpha{P'}$, and
                \[
                    \trans{\semb{P_i}}{\semb{\alpha}}{\semb{P'}}
                \]
                for any $i$. By using this and the \textsc{Case} rule, we derive
                \[
                    \semb{\Sigma_i P_i} = \trans{\caseonly\;{\ci{\top}{\semb{P_1}}\casesep\cdots
                                                 \casesep\ci{\top}{\semb{P_i}}}}{\alpha}{\semb{P'}}
                \] as required.
          \end{enumerate}

        \proofcase{\textsc{Com1}} Here $\trans{P}{\alpha}{P'}$, and by induction
        $\trans{\semb{P}}{\semb{\alpha}}{\semb{P'}}$. The \textsc{Par} rule admits
        derivation of the transition
        $\trans{\semb{P}\parop\semb{Q}}{\semb{\alpha}}{\semb{P'}\parop\semb{Q}}$, 
        as, by using Lemma~\ref{lemma:CCSnoBoundOut}, freshness side condition is vacuous.

        \proofcase{\textsc{Com2}} Symmetric to \textsc{Com1}.

        \proofcase{\textsc{Com3}} Here $\trans{P}{\alpha}{P'}$ and
        $\trans{Q}{\overline{\alpha}}{Q'}$. Since $\alpha$ is in the range of $\widehat
        \cdot$, there are $x$ and $v$ such that $\alpha = x(v)$ and $\overline{\alpha}
        = \overline{x}(v)$ (or vice versa, in which case read the next sentence
        symmetrically). By the induction hypotheses,
        $\trans{\semb{P}}{\inn{x}{v}}{\semb{P'}}$ and
        $\trans{\semb{Q}}{\out{x}{v}}{\semb{Q'}}$.
        Then
        $\trans{\semb{P}\parop\semb{Q}}{\tau}{\semb{P'}\parop\semb{Q'}}$ by the
        \textsc{Com} rule.

        \proofcase{\textsc{Res}} 
        Here $\trans{P \;\backslash\; L}{\alpha}{P'\;\backslash\; L}$ with 
        $L \freshin \alpha$.
        Hence $\overrightarrow{L} \freshin \semb{\alpha}$. 
        By induction $\trans{\semb{P}}{\semb{\alpha}}{\semb{P'}}$. 
        We use the \textsc{Res} rule $|L|$ times to derive 
        $\trans{(\nu \overrightarrow{L})\semb{P}}{\semb{\alpha}}{(\nu \overrightarrow{L})\semb{P'}}$.

        \proofcase{\textsc{Rep}} Here $\trans{P\parop !P}{\alpha}{P'}$. 
        By induction $\trans{\semb{P}\parop !\semb{P}}{\semb{\alpha}}{\semb{P'}}$. 
        By the \textsc{Rep} rule $\trans{!\semb{P}}{\semb{\alpha}}{\semb{P'}}$
      \end{proofcasesdescription}

    \item By induction on the derivation of $P'$.
      \begin{proofcasesdescription}
        \proofcase{\textsc{In}} Here
        $\trans{\lin{x}{y}{y}.\semb{P}}{\inn{x}{v}}{\semb{P\{v/y\}}}$. We match this by
        deriving $\trans{x(y).P}{x(v)}{P\{v/y\}}$ using the
        \textsc{Act} and \textsc{Sum} rules, where $\semb{x(y).P} = \lin{x}{y}{y}.\semb{P}$.

        \proofcase{\textsc{Out}} Here
        $\trans{\out{x}{v}.\semb{P}}{\out{x}{v}}{\semb{P}}$. We match this by deriving
        $\trans{\overline{x}(v).P}{\overline{x}(v)}{P}$ using the
        \textsc{Act} rule.

        \proofcase{\textsc{Com}} Here $\trans{\semb{P}}{\boutlabel{x}{\vec{y}}{v}}{P''}$, $\trans{\semb{Q}}{\inn{x}{v}}{Q''}$. By Lemma~\ref{lemma:CCSnoBoundOut}, $\vec{y} = \epsilon$, and by induction, $\trans{P}{\overline{x}(v)}{P'}$ and $\trans{Q}{x(v)}{Q'}$, where $\semb{P'} = P''$ and $\semb{Q'} = Q''$. Using the \textsc{Com3} rule we derive $\trans{P\parop Q}{\tau}{P'\parop Q'}$
        \proofcase{\textsc{Par}} Straightforward.
        \proofcase{\textsc{Case}} Our case statement can either be the encoding of either a summation or an $\mathbf{if}$ statement. We proceed by case analysis:
          \begin{enumerate}
            \item Here $\trans{\semb{P_j}}{\alpha'}{P''}$. By induction,
            $\trans{P_j}{\alpha}{P'}$ where $\semb{\alpha} = \alpha'$ and $P'' = \semb{P'}$. By \textsc{Sum},
            $\trans{\Sigma_i P_i}{\alpha}{P'}$.

            \item Here $\trans{\semb{P}}{\alpha'}{P''}$ and $\emptyframe\vdash
            b$. By induction, $\trans{P}{\alpha}{P'}$ where $\semb{\alpha} = \alpha'$ and
            $\semb{P'} = P''$. Since $b$ evaluates to $\true$,
            $\trans{\mathbf{if}\;b\;\mathbf{then}\;P}{\alpha}{P'}$.

          \end{enumerate}
        \proofcase{\textsc{Rep}} Straightforward.
        \proofcase{\textsc{Scope}} Here $\trans{\semb{P}}{\alpha'}{P''}$ with $x \sharp \alpha'$ and by induction, $\trans{P}{\alpha}{P'}$ where $\alpha' = \semb{\alpha}$ and $P'' = \semb{P'}$. Hence we can derive $\trans{P\;\backslash\;\{x\}}{\alpha}{P'\;\backslash\;\{x\}}$ by the \textsc{Res} rule.
        \proofcase{\textsc{Open}} Impossible, by Lemma~\ref{lemma:CCSnoBoundOut}.\qed
      \end{proofcasesdescription}
  \end{enumerate}

\newcommand{\etalchar}[1]{$^{#1}$}
\newcommand{\noopsort}[1]{}

\end{document}

